\newtheorem{theorem}{Theorem}
\newtheorem{lemma}{Lemma}
\newtheorem{corollary}{Corollary}
\newtheorem{definition}{Definition}
\newtheorem{protocol}{Protocol}
\def\@fnsymbol#1{\ensuremath{\ifcase#1\or \dagger\or *\or \ddagger\or
\mathsection\or \mathparagraph\or \|\or **\or \dagger\dagger
\or \ddagger\ddagger \else\@ctrerr\fi}}
\begin{document}
\title{Requirements for a processing-node quantum repeater on a real-world fiber grid}

\author{Guus Avis}
    \thanks{These authors contributed equally.}
    \affiliation{QuTech, Delft University of Technology, Lorentzweg 1, 2628 CJ Delft, The Netherlands}
    \affiliation{Kavli Institute of Nanoscience, Delft University of Technology, Lorentzweg 1, 2628 CJ Delft, The Netherlands}
\author{Francisco Ferreira da Silva}
    \thanks{These authors contributed equally.}
    \affiliation{QuTech, Delft University of Technology, Lorentzweg 1, 2628 CJ Delft, The Netherlands}
    \affiliation{Kavli Institute of Nanoscience, Delft University of Technology, Lorentzweg 1, 2628 CJ Delft, The Netherlands}
\author{Tim Coopmans}
    \affiliation{QuTech, Delft University of Technology, Lorentzweg 1, 2628 CJ Delft, The Netherlands}
    \affiliation{Kavli Institute of Nanoscience, Delft University of Technology, Lorentzweg 1, 2628 CJ Delft, The Netherlands}
\author{Axel Dahlberg}
    \affiliation{QuTech, Delft University of Technology, Lorentzweg 1, 2628 CJ Delft, The Netherlands}
    \affiliation{Kavli Institute of Nanoscience, Delft University of Technology, Lorentzweg 1, 2628 CJ Delft, The Netherlands}
\author{Hana Jirovsk\'{a}}
    \affiliation{QuTech, Delft University of Technology, Lorentzweg 1, 2628 CJ Delft, The Netherlands}
    \affiliation{Kavli Institute of Nanoscience, Delft University of Technology, Lorentzweg 1, 2628 CJ Delft, The Netherlands}
\author{David Maier}
    \affiliation{QuTech, Delft University of Technology, Lorentzweg 1, 2628 CJ Delft, The Netherlands}
    \affiliation{Kavli Institute of Nanoscience, Delft University of Technology, Lorentzweg 1, 2628 CJ Delft, The Netherlands}
\author{Julian Rabbie}
    \affiliation{QuTech, Delft University of Technology, Lorentzweg 1, 2628 CJ Delft, The Netherlands}
    \affiliation{Kavli Institute of Nanoscience, Delft University of Technology, Lorentzweg 1, 2628 CJ Delft, The Netherlands}
\author{Ariana Torres-Knoop}
    \affiliation{SURF Utrecht,  Postbus 19035, 3501 DA Utrecht, The Netherlands}
\author{Stephanie Wehner}
    \thanks{Corresponding author: \href{mailto:s.d.c.wehner@tudelft.nl}{s.d.c.wehner@tudelft.nl}}
    \affiliation{QuTech, Delft University of Technology, Lorentzweg 1, 2628 CJ Delft, The Netherlands}
    \affiliation{Kavli Institute of Nanoscience, Delft University of Technology, Lorentzweg 1, 2628 CJ Delft, The Netherlands}
\date{\today} 

\begin{abstract}
We numerically study the distribution of entanglement between the Dutch cities of Delft and Eindhoven realized with a processing-node quantum repeater
and determine minimal hardware requirements for verifiable blind quantum computation using color centers and trapped ions.
Our results are obtained considering restrictions imposed by a real-world fiber grid and using detailed hardware-specific models.
By comparing our results to those we would obtain in idealized settings we show that
simplifications lead to a distorted picture of hardware demands, particularly on memory coherence and photon collection.
We develop general machinery suitable for studying arbitrary processing-node repeater chains using NetSquid, a discrete-event simulator for quantum networks.
This enables us to include time-dependent noise models and simulate repeater protocols with cut-offs, including the required classical control communication.
We find minimal hardware requirements by solving an optimization problem using genetic algorithms on a high-performance-computing cluster.
Our work provides guidance for further experimental progress, and showcases limitations of studying quantum-repeater requirements in idealized situations.
\end{abstract}


\maketitle

\section{Introduction} \label{sec:introduction}
Quantum communication unlocks network applications that are provably impossible to realize using only classical communication.
One striking example is secure communication using quantum key distribution~\cite{ekert1991quantum, bennett2020quantum}, but many other applications are already known.
Examples of these are secret sharing~\cite{hillery1999quantum} and clock synchronization~\cite{komar2014quantum}.
Several stages of quantum network development have been identified~\cite{wehner2018quantum}, where a higher stage of network development offers the potential to execute ever more advanced quantum network applications, at the expense of making higher demands on the end nodes running applications, as well as on the network that connects the end nodes.

Efficiently distributing quantum states over long distances is an outstanding technological challenge. 
Direct photon transmission that is used to carry quantum information over optical fibers is subject to a loss that is exponential in the length of the fiber.
Quantum repeaters~\cite{briegel1998quantum, dur1999quantum} promise to enable quantum communication over global distances, mitigating the loss in the fiber by the introduction of intermediary nodes.
A variety of different repeater platforms have been proposed (see, e.g.,~\cite{munro2015inside,muralidharan2016optimal}) including repeaters featuring quantum memories such as atomic ensembles~\cite{sangouard2011quantum, duanLongdistanceQuantumCommunication2001} 
or processing nodes~\cite{rozpedek2019near, duan2010colloquium, uphoff2016integrated} that are capable not only of storing quantum information but also of performing quantum gates.
Examples of processing nodes include trapped ions~\cite{monroe2013, reiserer2015a}, neutral atoms~\cite{reiserer2015a, langenfeld2021a}, or color centers such as nitrogen-vacancy (NV), silicon-vacancy (SiV) or tin-vacancy (SnV) centers in diamond~\cite{ruf2021quantum}.
Despite proof-of-principle demonstrations of repeater nodes~\cite{bhaskar2020experimental, langenfeld2021a}, as well as entanglement swapping via an intermediary processing node~\cite{pompili2021realization}, at present no quantum repeater has been realized that bridges long distances.

Part of the challenge in building quantum repeaters is that their hardware requirements remain largely unknown.
Extensive studies have been conducted to estimate such requirements both analytically (see, e.g.,~\cite{amirloo2010quantum, asadi2018quantum, bernardes2011rate, borregaard2015heralded, bruschi2014repeat, chen2007fault, collins2007multiplexed, guha2015rate, hartmann2007role, jiang2009quantum, nemoto2016photonic, razavi2009quantum, razavi2006long, simon2007quantum, vinay2017practical, wu2020near, sangouard2007long, sangouard2008robust, borregaard2020one, luong2016overcoming, rozpedek2018parameter, rozpedek2019near, vanloock2020, kamin2022}), as well as using numerical simulations (see, e.g.,~\cite{abruzzo2013quantum, brask2008memory, muralidharan2014ultrafast, pant2017rate, ladd2006hybrid, van2006hybrid, zwerger2017quantum, jiang2007fast, wu2021b}).
While greatly informative in helping us understand minimal hardware requirements needed to bridge long distances, they have mostly been conducted in idealized settings where all repeaters are equally spaced, and one assumes a uniform loss of typically $0.2$ dB km$^{-1}$ on each fiber segment (exceptions are~\cite{luong2016overcoming, rozpedek2018parameter, rozpedek2019near}).
Furthermore, with few exceptions~\cite{guha2015rate, zwerger2017quantum, rozpedek2018parameter, rozpedek2019near, kamin2022}, such studies only provide rough approximations of time-dependent noise, and do not take into account platform-specific physical effects such as noise on the memory qubits during entanglement generation on NV centers~\cite{kalb2018dephasing} or collective Gaussian dephasing in ion traps (see Figure~\ref{fig:delft_eindhoven_setup}).

\section{Results} \label{sec:results} \label{sec:discussion}
\subsection{Summary of Results}
\begin{figure*}[!ht]
\centering
\includegraphics[width=1.5\columnwidth]{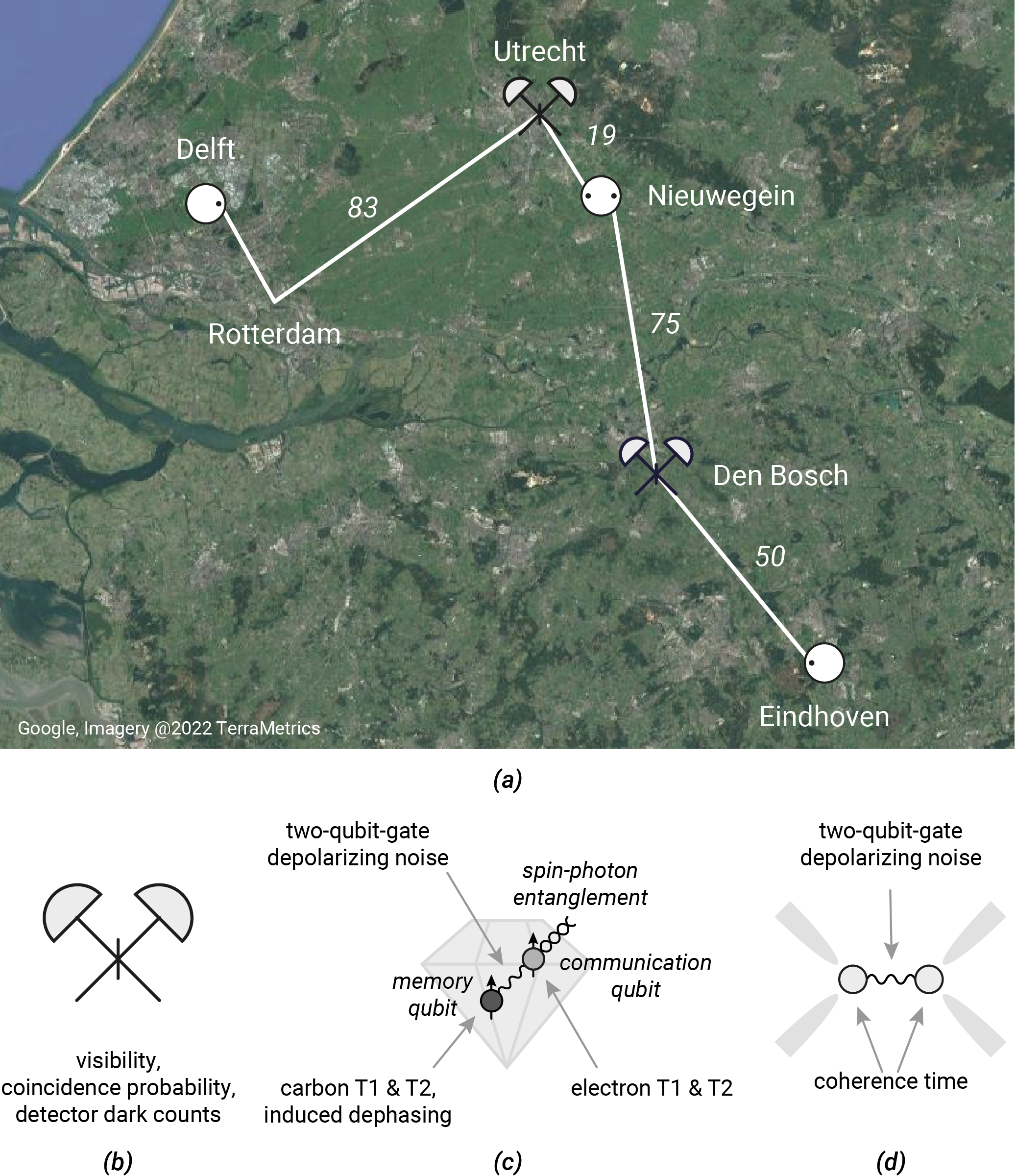}
\caption{
Investigated setup.
\textbf{(a)} Satellite photo of the Netherlands overlaid with a depiction of the hypothetical one-repeater connection between the Dutch cities of Delft and Eindhoven that we investigate.
The white circles represent processing nodes, connected to each other and to heralding stations through fiber drawn in white.
The black dots within the processing nodes represent qubits (the distinction between communication and memory qubits is not represented here).
The placement of nodes and heralding stations is constrained by the fiber network, and their position on the figure roughly approximates their actual geographic location.
All distances are given in kilometers, with a total fiber distance between Delft and Eindhoven of 226.5 km.
\textbf{(b)} Heralding station.
Photons emitted by a processing node travel through the optical fiber and are interfered at a beam splitter.
Photon detection heralds entanglement between processing nodes.
This process is affected by the overall probability that emitted photons are detected, the coincidence probability, i.e., the probability that photons arrive in the same time window, the imperfect indistinguishability of the photons as measured by the visibility and dark counts in the detector.
\textbf{(c)} Color center in diamond, one of the processing nodes we investigate.
We consider an optically-active electronic spin used as a communication qubit, and a carbon spin used as a memory qubit.
Decoherence in both qubits is modeled through amplitude damping and phase damping channels with characteristic times $T_1$ and $T_2$, respectively.
These are different for the two qubits.
The existence of an always-on interaction between the qubits allows for the execution of two-qubit gates, but also means that entangling attempts with the communication qubit induce noise on the memory qubit.
\textbf{(d)} Ion trap, the other processing node we investigate.
We consider two optically active ions trapped in an electromagnetic field generated by electrodes, whose energy levels are used as qubits.
The ions interact through their collective motional modes, which enables the implementation of two-qubit gates.
They are subject to collective Gaussian dephasing noise characterized by a coherence time.
}
\label{fig:delft_eindhoven_setup}
\end{figure*}

Here, we present the first study that takes into account time-dependent noise, platform-specific noise sources and classical control communication, as well as constraints imposed by a real-world fiber network, and optimizes over parameters of the repeater protocols used to generate entanglement.
Our investigation is conducted using fiber data from SURF, an organization that provides connectivity to educational institutions in the Netherlands.
Specifically, we will consider a network path connecting the Dutch cities of Delft and Eindhoven, separated by 226.5 km of optical fiber (see Figure~\ref{fig:delft_eindhoven_setup} (a)). 
In placing equipment, we restrict ourselves to SURF locations, which leads to the repeater being located closer to Delft than to Eindhoven.
Intermediary stations used for heralded entanglement generation (see Figure~\ref{fig:delft_eindhoven_setup} (b)) cannot be placed equidistantly from both nodes either, as is generally assumed in idealized studies.
We emphasize that we restrict ourselves to existing infrastructure, and therefore do not investigate the possibility of altering the fiber links.
We direct the interested reader to related work which focuses on determining hardware requirements while taking into account how many repeaters to use and their placement~\cite{da2023requirements}.

We consider the case where the network path is used to support an advanced quantum application, namely Verifiable Blind Quantum Computation (VBQC)~\cite{leichtle2021verifying}, with a client located in Eindhoven and a powerful quantum-computing server located in Delft.
We chose VBQC because since their introduction blind-quantum-computing protocols have attracted a lot of interest, being widely cited as one of the principal future applications of quantum networks (see, e.g.,~\cite{fitzsimons2017unconditionally, morimae2012blind, huang2017experimental, gheorghiu2015robustness, dunjko2012blind, broadbent2009universal, barz2012demonstration, leichtle2021verifying}).
While it is true that VBQC is somewhat unique in that it is highly asymmetrical in terms of the resources it requires from client and server, it is representative for many other quantum-networking applications in that it requires multiple live qubits.
Additionally, the noise resilience of the specific VBQC protocol we consider~\cite{leichtle2021verifying} makes it particularly suitable to study the performance of such applications in the presence of hardware imperfections.
Specifically, we consider the smallest instance of VBQC, where two entangled pairs are generated between the client and the server.
Such entanglement is used to send qubits from the client to the server.
We show in Appendix \ref{app:target_metric} that this can be done through remote state preparation~\cite{bennett2001remote}.
To set the requirements of our quantum-network path, we impose that its hardware must be good enough to execute VBQC with the largest acceptable error rate~\cite{leichtle2021verifying}.
This demand can be translated to requirements on the fidelity and rate at which entanglement is produced.
Both depend on the lifetime of the server's memory, as the server needs to be able to wait until both qubit states have been generated before it can begin processing.
Additionally, the requirements on the fidelity and rate can also be understood as the fidelity and rate at which we can deterministically teleport unknown data qubits between the client and the server.
Therefore, while our investigation focuses on VBQC, our results can also be interpreted from the perspective of quantum teleportation.

In our study, we obtain the following results, described in more detail below:
First, we investigate the \emph{minimal hardware requirements} that are
needed to realize target fidelities and rates that allow executing VBQC using our network path. 
These correspond to the minimal improvements over state-of-the-art hardware parameters that enable meeting the targets.
Specifically, we consider parameters measured for networked color centers (specifically, for NV centers in diamond)~\cite{bernien2013heralded, hensen2015loophole, kalb2017entanglement, humphreys2018deterministic, pompili2021realization, hermans2022qubit, abobeih2018one, bradley2019ten} and ion traps~\cite{tirepeater, krutyanskiy2023, krutyanskiy2019light, schupp_interface_2021, krutyanskiy2017, myerson2008, roos2006, tiprivate}.
We find that considerable improvements are needed even to bridge relatively modest distances, with our study also shining light on which parameters require significantly more improvement than others.
To obtain this result, we have built an extensive simulation framework on top of the discrete event simulator NetSquid~\cite{coopmans2021netsquid}, which includes models of color centers (specifically adapted from NV centers in diamond), ion traps, a general abstract model applicable to all processing nodes, as well as different schemes of entanglement generation. 
Our framework can be readily re-configured to study other network paths of this form, including the ability to configure other types of processing-node hardware, or entanglement-generation schemes.
Being able to simulate the Delft-Eindhoven path, we then perform parameter optimization based on genetic algorithms to search for parameter improvements that minimize a cost function (see Section~\ref{sec:methods} for details) on SURF's high-performance-computing cluster Snellius.

Second, we examine the \emph{absolute minimal requirements} for all parameters in our models (for color centers and ion traps), if all other parameters are set to their perfect value (except for photon loss in fiber).
We observe that the minimal hardware requirements impose higher demands on each individual parameter than the absolute minimal requirements.
This highlights potential dangers in trying to maximize individual parameters without taking into account global requirement trade-offs.
However, somewhat surprisingly, we find that the absolute minimal requirements are typically of the same order of magnitude as the minimal requirements, and can therefore still be valuable as a first-order approximation.
Our results are obtained using the same NetSquid simulation, by incrementally increasing the value of a parameter until the target requirements are met. 

Finally, we investigate whether the idealized network paths usually employed in the repeater literature would lead to significantly different minimal hardware 
improvements. Specifically, in such idealized setups all repeaters and heralding stations are equally spaced, all fibers are taken to have 0.2 dB km$^{-1}$ attenuation, and the models employed for the processing-node hardware are largely platform-agnostic.
We find that considering real-world network topologies such as the SURF grid imposes significantly more stringent demands. 

Let us now be more precise about the setup of our network path, as well as the requirements imposed by VBQC:
\\

\subsection{Quantum-Network Path}
The network path we consider consists of three processing nodes that are assumed to all have the same hardware.
That is, the stated hardware requirements are sufficient for all nodes and we do not differentiate between the three nodes.
On an abstract level, all processing nodes have at least one so-called communication qubit, which can be used to generate entanglement with a photon.
The repeater node in the middle (Nieuwegein, Figure~\ref{fig:delft_eindhoven_setup} (a)) has two qubits available (at least one of which is a communication qubit) that it can use to simultaneously hold entanglement with the node in Delft, as well as the one in Eindhoven.
Once entanglement has been generated with both Delft and Eindhoven, the repeater node may perform an entanglement swap~\cite{briegel1998quantum} in order to create end-to-end entanglement between Delft and Eindhoven (see Figure~\ref{fig:protocol_figure}).
On processing nodes, such a swap can be realized deterministically, i.e., with success probability $1$, since it can be implemented using quantum gates and measurements on the processor.
We note that even when the gates and measurements are noisy the swap remains deterministic, although it will induce noise on the resulting entangled state.

\begin{figure}[!htpb]
\centering
\includegraphics[width=\columnwidth]{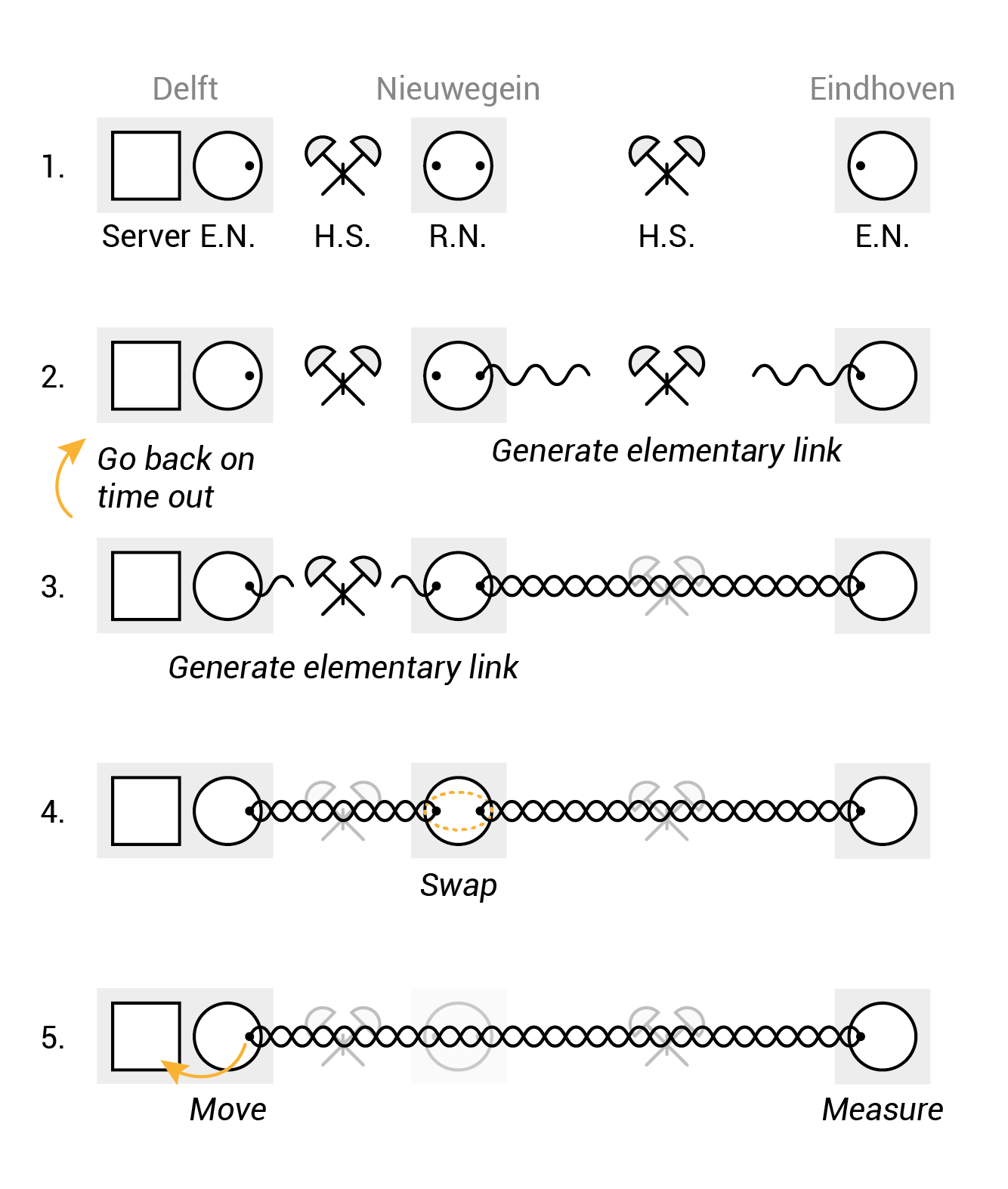}
\caption{Protocol executed in the setup we investigate.
\textbf{1.} No entanglement is shared \textit{a priori}.
E.N. stands for End Node, R.N. stands for Repeater Node and H.S. stands for Heralding Station.
\textbf{2.} Entanglement generation attempts begin along the longer link, which connects the repeater node to the Eindhoven node.
\textbf{3.} After entanglement has been established along the longer link, attempts for entanglement generation along the shorter link start. In case this takes longer than a given cut-off time, the previously generated entanglement is discarded and we go back to \textbf{2}.
\textbf{4.} After entanglement is generated on both links, the repeater node performs an entanglement swap, creating an end-to-end entangled state.
\textbf{5.} The Delft node maps its half of the state to a powerful quantum-computing server, while the Eindhoven node measures its half.
}
\label{fig:protocol_figure}
\end{figure}

For all types of processing nodes, we here assume the repeater to act sequentially~\cite{rozpedek2018parameter} due to hardware restrictions.
That is, it can only generate entanglement with one of the other two nodes at a time.
To minimize the memory requirements at the repeater node (Nieuwegein), we will always first produce entanglement with the farthest node (Eindhoven).
Once this entanglement has been produced, the repeater generates
entanglement with the closest node (Delft).
To combat the effect of memory decoherence, entangled qubits are discarded after a cut-off time~\cite{rozpedek2018parameter}.
This means that if entanglement between Delft and Nieuwegein is not produced within a specific time window following the successful generation of entanglement between Nieuwegein and Eindhoven, all entanglement is discarded and we restart the protocol by regenerating entanglement between Nieuwegein and Eindhoven.
Classical communication is used to initiate entanglement generation between nodes and notify all nodes when swaps or discards are performed.

We consider three types of processing nodes (see Figure~\ref{fig:delft_eindhoven_setup} (c) and (d)): (1) color centers, specifically modeled on NV centers in diamond, (2) ion traps and (3) a general abstract model applicable to all processing nodes.
Let us now provide more specific details on each of these models required for the parameter analysis below.

(1) NV centers are a prominent example of color centers for which significant data is available from quantum-networking experiments~\cite{bernien2013heralded, hensen2015loophole, kalb2017entanglement, humphreys2018deterministic, pompili2021realization, hermans2022qubit}. Here,
the color center's optically-active electronic spin is employed as a communication qubit. The second qubit is given by the long-lived spin state of a Carbon-13 atom, which is coupled to the communication qubit and used as a memory qubit.
Our color-center model accounts for the following:
\begin{itemize}
    \item Restricted topology, with one optically-active communication qubit and one memory qubit (note however that larger registers have been realized, for example in~\cite{bradley2019ten});
    \item Restricted gate set, with arbitrary rotations on the communication qubit, Z-rotations on the memory qubit and a controlled rotation gate between the two qubits;
    \item Depolarizing noise in all gates, bit-flip noise in measurement;
    \item Qubit decoherence in memory modeled through amplitude damping and dephasing channels with decay times $T_1$ and $T_2$ (we consider the experimentally-realized times of $T_1 = 1$ hour (10 hours) and $T_2 = 0.5$ s (1 s) for the communication (memory) qubit~\cite{abobeih2018one, bradley2019ten, hermans2022qubit});
    \item Induced dephasing noise on the memory qubit whenever entanglement generation using the communication qubit is attempted~\cite{kalb2018dephasing, pompili2021realization}.
\end{itemize}
The efficiency of the photonic interface in NV centers is limited to 3\% due to the zero-phonon line (ZPL).
It is likely that executing VBQC using the path we investigate will require overall photon detection probabilities higher than 3\%.
Little data is presently available for other color centers (SiV, SnV).
We hence focus on the NV model, but do allow a higher emission probability, which could be achieved either by using a color center with a more favorable ZPL (65-90\% for SiV~\cite{ruf2021quantum}, 57\% for SnV~\cite{ruf2021quantum}), or by placing the NV in a cavity~\cite{ruf2021resonant}. 
More details about our color-center model, and a validation of the model against experimental data for NV centers, can be found in Appendix \ref{sec:appendix_setup}.

(2) Trapped ions are charged atoms suspended in an electromagnetic trap, the energy levels of which can be used as qubits.
Our trapped-ion model accounts for the following:
\begin{itemize}
    \item Two identical, optically active ions in a trap;
    \item Restricted gate set as described in~\cite{schindlerQuantumInformationProcessor2013}, with arbitrary single-qubit Z rotations, arbitrary collective rotations around axes in the XY plane, and an entangling M\o lmer-Sørensen gate~\cite{molmer1999multiparticle};
    \item Depolarizing noise in all gates, bit-flip noise in measurement;
    \item Qubit decoherence modeled as collective Gaussian dephasing, with a characteristic coherence time~\cite{zwerger2017quantum};
    \item Off-resonant scattering that adds a random delay to the emission time of photons, which is counteracted using a tunable coincidence time window (as captured by a toy model introduced in Appendix \ref{app:time_windows}).
\end{itemize}
More details about our trapped-ion model, and a validation of the model against experimental data, can be found in Appendix \ref{sec:appendix_setup}.

(3) We further investigate an abstract, platform-agnostic processing-node model.
This model accounts for depolarizing noise in all gates and in photon emission, as well as amplitude-damping and phase-damping noise in the memory.
It does not account for any platform-specific restrictions on topology, gate set or noise sources.
Later on, we show that using the abstract model instead of hardware-specific models leads to an inaccurate picture of minimal hardware requirements.
Even so, the abstract model can be valuable to study systems for which hardware-specific models are as of yet unavailable.
Additionally, we note that the smaller number of hardware parameters in the abstract model as compared to the hardware-specific models means that the parameter space can be explored more efficiently, making it easier to, e.g., find minimal hardware parameters.

To entangle two processing nodes, one can use different schemes for entanglement generation, and we here consider the so-called single-click~\cite{cabrillo1999creation} and double-click schemes~\cite{barrett2005efficient}.
Both of these start with two distant nodes generating matter-photon entanglement and sending the photon to a heralding station.
In the single (double)-click protocol, matter-matter entanglement is heralded by the detection of one (two) photons after interference.
The trapped-ion nodes we investigate perform only double-click entanglement generation as single-click entanglement generation has not been realized for the type of trapped-ion devices we consider, i.e., trapped ions in a cavity.
The color-center nodes and abstract nodes perform both single and double click.
Our entanglement-generation models account for the following physical effects:
\begin{itemize}
    \item Emission of the photon in the correct mode, modeled through a loss channel;
    \item Imperfect photon emission modeled through a depolarizing channel;
    \item Capture of the photon into the fiber, modeled through a loss channel;
    \item Photon frequency conversion, modeled through a loss channel (as a first-order approximation, we assume this is a noiseless process);
    \item Photon attenuation in fiber, modeled through a loss channel;
    \item Photon delay in fiber;
    \item Photon detection at the detector, modeled through a loss channel;
    \item Detector dark counts;
    \item Photon arrival at the detector at different times;
    \item Imperfect photon indistinguishability.
\end{itemize}
While photon attenuation losses depend on the characteristics (such as the length) of the fiber that is used to deploy a quantum network, the other losses depend only on the quantum hardware that is used.
For convenience we collect all the hardware-related losses into a single parameter, called the photon detection probability excluding attenuation losses.

The hardware parameters used in our models are based on quantum-networking experiments with NV centers (single-click~\cite{kalb2017entanglement, humphreys2018deterministic, pompili2021realization, hermans2022qubit} and double-click~\cite{bernien2013heralded, hensen2015loophole}), and trapped ions (double-click~\cite{krutyanskiy2023}).
\\

\subsection{Blind Quantum Computation}
Having discussed our modeling of the path between Delft and Eindhoven, we turn to the end nodes.

Both end nodes are processing nodes.
The end node in Eindhoven takes the role of client in the VBQC protocol.
In Delft, there is not only an end node, but also a powerful quantum-computing server.
After entanglement is established by the end node in Delft it transfers its half of the entangled state to this server.
The client in Eindhoven simply measures its half of the entangled state.
The Delft scenario is similar to the setting investigated in~\cite{vardoyan2022quantum}, where the authors consider an architecture in which a node contains two NV centers, one of them used for networking and the other for computing.
Here, we make some simplifying assumptions that allow us to focus on the network path:
we take the state transfer process to be instantaneous and noiseless, and assume that the computing node is always available to receive the state.
Further, we assume that the quantum gates performed by the server are noiseless and instantaneous, and that their qubits are subject to depolarizing noise with memory coherence time $T$ = 100 s.
Because of these assumptions, the requirements we find are limited primarily by imperfections in the network path itself rather than in the computing node.

We investigate hardware requirements on three processing nodes (two end nodes and one repeater node) so that a client in Eindhoven can perform 2-qubit VBQC, a particular case of the protocol described in~\cite{leichtle2021verifying}, using the Delft server.
In this protocol, the client prepares qubits at the server, which are then used to perform either computation or test rounds.
In test rounds, the results of the computation returned by the server are compared to expected results.
The protocol is only robust to noise if the noise does not cause too large an error rate.
The protocol is shown in~\cite{leichtle2021verifying} to remain correct if the \emph{maximal} probability of error in a test round can be upper-bounded by $25\%$.
We prove in Appendix \ref{app:target_metric} that the protocol is still correct if the \emph{average} probability of error in a test round can be upper-bounded by $25\%$.
We further prove in the same section that if the entangled pairs distributed by the network path can be used to perform quantum teleportation at a given rate and quality, the protocol can be executed successfully.
Namely, this is true if the average fidelity at which unknown pure quantum states can be teleported using the entangled pairs distributed by the network path ($F_\text{tel}$) and the entangling rate $R$ satisfy a specific bound.
We note that this bound takes into account potential jitter in the delivery of entanglement (i.e., the fact that the time required to generate entanglement, and hence the time entangled states need to be stored in memory, can fluctuate around its expected value).
We consider two distinct pairs of $F_\text{tel}$ and $R$ that satisfy this bound as our target metrics, namely:
\begin{itemize}
\item
Target 1: $F_\text{tel} = 0.8717$, $R = 0.1$ Hz,
\item
Target 2: $F_\text{tel} = 0.8571$, $R = 0.5$ Hz.
\end{itemize}
The choice of these specific values was motivated by the fact that there is no fidelity $F_\text{tel} \leq 1$ for $R \approx 0.014$ Hz such that the VBQC condition is satisfied, therefore all target rates should satisfy $R > 0.014$ Hz, preferably with some margin to avoid trivial solutions.
Additionally, Target 1 is achievable using either the single-click or double-click protocol and using either one or zero repeaters on the fiber path under consideration, given sufficient hardware improvements.
In contrast, Target 2 is achievable only using the single-click protocol and one repeater (see also Appendices \ref{appendix:sec_repeaterless} and \ref{sec:appendix_single_double}).
This suggests that the difference between the two targets is large enough to lead to significantly different results.

The derivation of this bound assumes that the client prepares qubits at the server by first generating them locally and then transmitting them to the server using quantum teleportation.
We note that alternatively the remote-state-preparation protocol~\cite{bennett2001remote} can be used, which will likely be more feasible in a real experiment as it requires fewer quantum operations by the client.
In Appendix \ref{app:target_metric} we describe a way how the VBQC protocol \cite{leichtle2021verifying} can be performed using remote state preparation.
Note however that we have not investigated the security of the protocol in this case.
We show that under the assumption that local operations are noiseless, quantum teleportation and remote state preparation lead to the exact same requirements on the network path.
Thus, in case the target is met, VBQC can be successfully executed using either quantum teleportation or remote state preparation.
Lastly, we note that there is a linear relation between the average teleportation fidelity $F_\text{tel}$ and the fidelity of the entangled pair~\cite{horodeckiGeneralTeleportationChannel1999}.
\\

\subsection{Minimal Hardware Requirements.}
Here, we aim to find the smallest improvements over current hardware to generate entanglement enabling VBQC.
These are shown at the bottom of Figure~\ref{fig:table_and_figures_requirements} for color centers (b) and trapped ions (c).
In the table in Figure~\ref{fig:table_and_figures_requirements} (a) we show a selection of the actual values for the minimal hardware requirements (the set of parameters representing the smallest improvement over state-of-the-art parameters, see Section~\ref{sec:methods} for details on how we determine this), as well as the absolute minimal requirements (the minimal value for each parameter assuming that every other parameter except for photon loss in fiber is perfect).
All the parameters are explained in Section \ref{sec:methods}, and their state-of-the-art values that we consider are given in Table \ref{tab:baseline_parameters}.
\begin{figure*}[!ht]
\centering
\includegraphics[width=.95\textwidth]{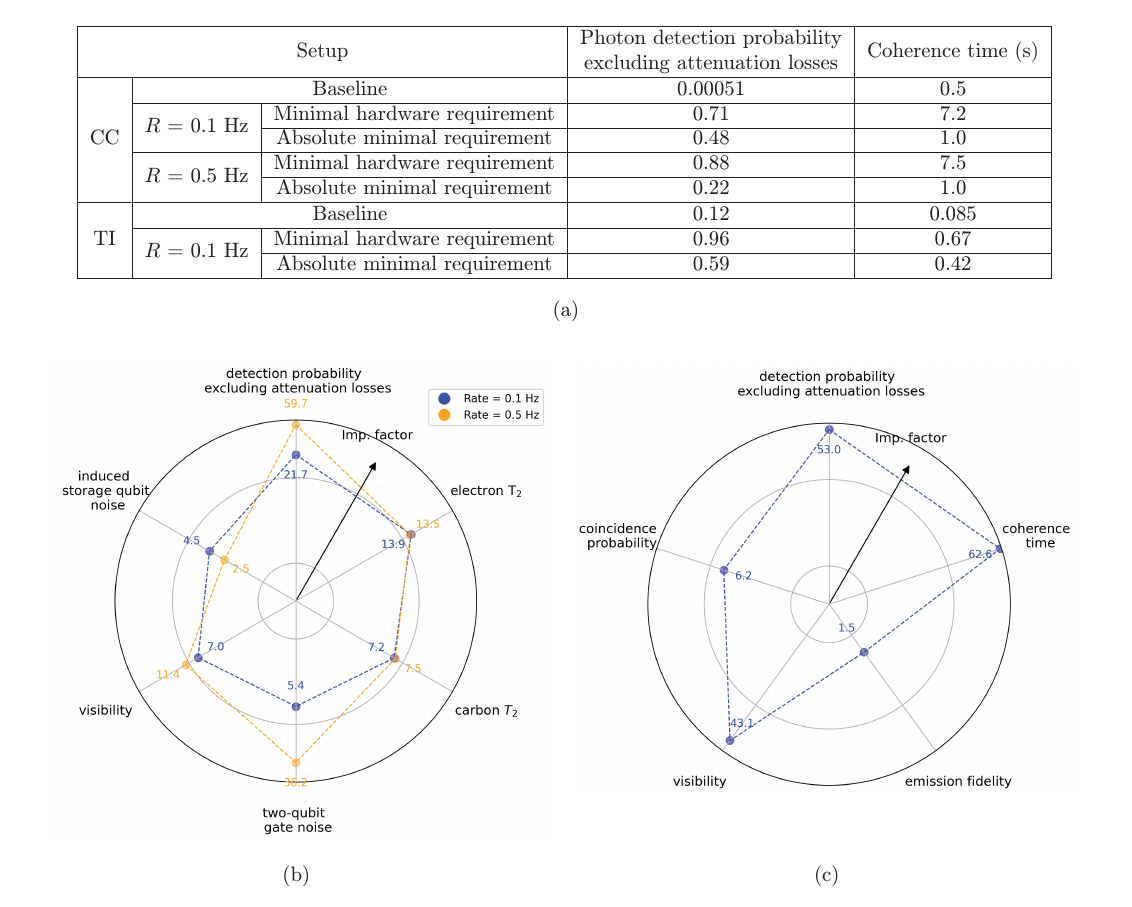}
\caption{
Improvements required to connect the Dutch cities of Delft and Eindhoven using color-center (CC) and trapped-ion (TI) repeaters for an entanglement-generation rate of 0.1 Hz and an average teleportation fidelity of 0.8717 (Target 1) and a rate of 0.5 Hz and average teleportation fidelity of 0.8571 (Target 2).
\textbf{(a)} The values that are required for the photon detection probability excluding attenuation losses and the coherence time.
The baseline parameter values have been demonstrated in state-of-the-art experiments.
The absolute minimal requirements are the required parameter values assuming that there are no other sources of noise or loss with the exception of fiber attenuation.
The coherence-time values in the table are the communication-qubit dephasing time for CC and the collective dephasing time for TI (see Section \ref{sec:methods} for an explanation of these parameters).
The TI requirements are for running a double-click entanglement-generation protocol.
The CC requirements are for running a double-click protocol for Target 1, and a single-click protocol for Target 2.
We note that all the minimal requirements found have a photon detection probability excluding attenuation losses above 30\%, the current state-of-the-art value for frequency conversion~\cite{krutyanskiy2017}.
\textbf{(b-c)} Directions along which hardware must be improved to connect the Dutch cities of Delft and Eindhoven using respectively a CC or a TI  repeater.
The further away the line is from the center towards a given parameter, the larger improvement that parameter requires.
Improvement is measured in terms of the ``improvement factor'', which tends to infinity as a parameter tends to its perfect value (see Section \ref{sec:methods} for the definition).
In both plots a logarithmic scale is used.
The origin of the plots corresponds to an improvement factor of 1, i.e., no improvement with respect to the state of the art.
\textbf{(b)} (CC), the blue (orange) line corresponds to the minimal requirements for Target 1 (Target 2).
Improvement is depicted for the following parameters, clockwise from the top: photon detection probability excluding attenuation losses in fiber, dephasing time of the communication qubit, dephasing time of the memory qubit, noise in the two-qubit gate, visibility of photon interference and dephasing noise induced on memory qubits when entanglement generation is attempted.
\textbf{(c)} (TI), the line corresponds to the minimal requirements for Target 1.
Improvement is depicted for the following parameters, clockwise from the top: photon detection probability excluding attenuation losses in fiber, qubit collective dephasing coherence time, spin-photon emission fidelity, visibility of photon interference and probability that two emitted photons coincide at the detection station.
All parameters are explained in Section \ref{sec:methods}, and their state-of-the-art values that are being improved upon are given in Table~\ref{tab:baseline_parameters}.
}
\label{fig:table_and_figures_requirements}
\end{figure*}

The minimal color-center hardware requirements for Target 1 (blue line in Figure~\ref{fig:table_and_figures_requirements} (b)) correspond to the usage of a double-click protocol, as we found that this allows for laxer requirements than using a single-click protocol.
On the other hand, the minimal requirements for Target 2 (orange line in Figure~\ref{fig:table_and_figures_requirements} (b)) correspond to the usage of a single-click entanglement-generation protocol.
This is because achieving Target 2 in the setup we studied is not possible at all with a double-click protocol even if every parameter except for photon loss in fiber is perfect.
Therefore, and since we do not model single-click entanglement generation with trapped ions, Figure~\ref{fig:table_and_figures_requirements} (c) depicts only the requirements for trapped ions to achieve Target 1.

We thus find that in the setup we investigated performance targets with relatively higher fidelity and lower rate are better met by using a double-click protocol.
On the other hand, higher rates can only be achieved with single-click protocols.
This was to be expected, as (a) states generated with single-click protocols are inherently imperfect, even with perfect hardware and (b) the entanglement-generation rate of double-click protocols scales poorly with both the distance and the detection probability due to the fact that two photons must be detected to herald success.
\\

\subsection{Absolute Minimal Requirements.}
We now aim to find the minimal parameter values that enable meeting the targets, if the only other imperfection were photon loss in fiber.
These are the absolute minimal requirements, presented in the table at the top of Figure~\ref{fig:table_and_figures_requirements}.
We observe that while there is a gap between them and the minimal hardware requirements, it is perhaps surprisingly small.
For example, the minimal photon detection probability excluding attenuation losses required to achieve Target 1 with color centers is roughly 1.5 times larger than the corresponding absolute minimal requirement.
However, both requirements represent a three order of magnitude increase with respect to the state-of-the-art, which makes a factor of 1.5 seem small in comparison.

We remark on the feasibility of achieving the minimal hardware requirements for color centers.
NV centers, on which we have based the state-of-the-art parameters used in this work, are the color center that has been most extensively used in quantum-networking experiments (see~\cite{ruf2021quantum} for a review).
As discussed in Section~\ref{sec:results}, the efficiency of the photonic interface in this system is limited to 3\% due to the zero-phonon line.
Both targets we investigated place an absolute minimal requirement on the photon detection probability excluding attenuation losses above this value.
Improving the photonic interface of NV centers beyond the limit imposed by the zero-phonon line is only possible through integration of the NV center into a resonant cavity~\cite{ruf2021resonant}.
Alternatively, other color centers with a more efficient photonic interface could be considered as alternatives for long-distance quantum communication~\cite{ruf2021quantum}.
\\

\subsection{Hardware Requirements in Simplified Settings.}
Since we made use of real-life fiber data and elaborate, platform-specific hardware models, the results above would be difficult to obtain analytically.
For instance, collective Gaussian dephasing in ion traps could be challenging to analyze.
Analytical results are however attractive, as they provide a more intuitive picture of the problem at hand.
In order to find them, an approach commonly taken in the literature is to simplify the setup under study so that it becomes analytically tractable.
A usual simplification is to assume what we name the \textit{standard scenario}, in which nodes and heralding stations are equally spaced, and where the fiber attenuation is 0.2 dB km$^{-1}$ throughout.
Another common simplification is to consider simplified physical models for the nodes and the entangled states they generate (see, among others,~\cite{jiang2007optimal, coopmans2022improved, borregaard2020one, sangouard2011quantum}).
In order to investigate how hardware requirements change if such simplifications are used, we now apply our methodology to these two simplified situations and compare the resulting hardware requirements with the ones for our setup.
We hope to understand whether considering these setups leads to similar results, indicating that the simplifying approach is a good one, or if doing so paints an unrealistic picture of the hardware requirements, which would favor our approach.

\subsubsection{Effect of Existing Fiber Networks on Hardware Requirements}
We investigate how the hardware requirements in the standard scenario differ from the fiber-network-based setup.
We thus present in Figure~\ref{fig:standard_vs_surf_double_click_nv} a comparison of the hardware requirements for color centers in the two situations.
In both cases, we consider double-click entanglement generation, targeting an entanglement-generation rate of 0.1 Hz and an average teleportation fidelity of 0.8717.
\begin{figure}[!htpb]
\centering
\includegraphics[width=\columnwidth]{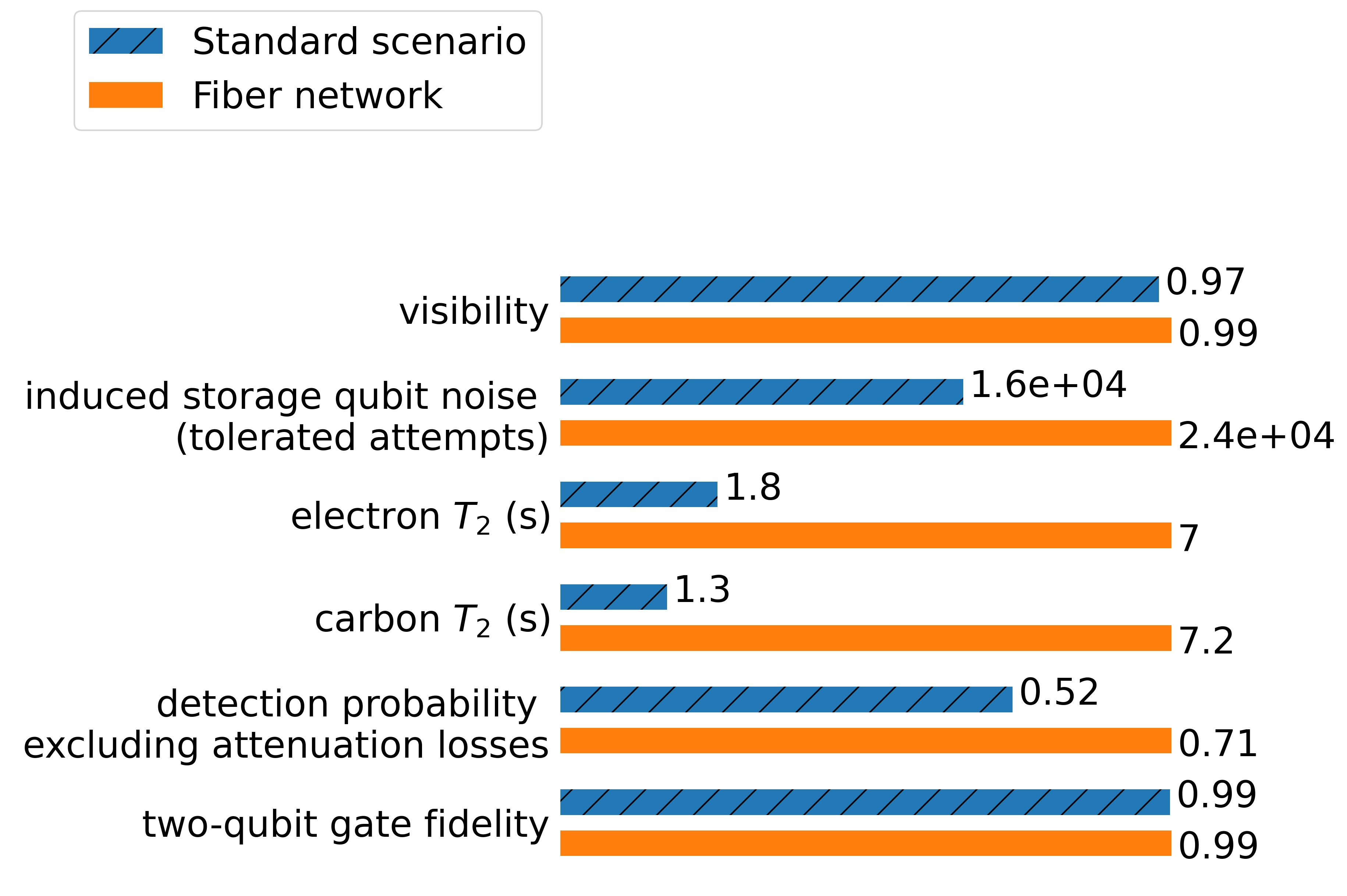}
\caption{Hardware requirements for connecting the Dutch cities of Delft and Eindhoven using a color center repeater performing double-click entanglement generation on an actual fiber network (blue) and assuming the standard scenario (orange, dashed).
Requirements are for achieving an entanglement-generation rate of 0.1 Hz and an average teleportation fidelity of 0.8717.
Parameters shown are, from top to bottom: visibility of photon interference, dephasing noise induced on memory qubits when entanglement generation is attempted, dephasing time of communication qubit, dephasing time of memory qubit, photon detection probability excluding attenuation losses in fiber and two-qubit gate fidelity.
}
\label{fig:standard_vs_surf_double_click_nv}
\end{figure}
Significant improvements over the state-of-the-art are required in both scenarios, but the magnitude of these improvements would be understated in case one were to consider the standard scenario and ignore existing fiber infrastructure.
For example, doing so would lead to underestimating the required coherence time of the memory qubits by a factor of four.
More broadly, we see that the improvement required is larger in the fiber-network scenario for (i) the photon detection probability excluding attenuation losses and (ii) memory parameters (coherence times and tolerance to entanglement-generation attempts).
Both of these results can be explained by the fact that when a real-world fiber network is considered there is more attenuation and the nodes are not evenly spaced.
As a consequence, better photonic interfaces are required to achieve similar rates, and states likely spend a longer time in memory, necessitating longer coherence times.
This emphasizes the need for considering limitations imposed by existing fiber infrastructure when estimating requirements on repeater hardware.

\subsubsection{Effect of Platform-Specific Modeling on Hardware Requirements}
Finally, we look into how the hardware requirements are affected if the processing nodes are modeled in a simplified, platform-agnostic way.
We thus compare the hardware requirements for color-center and trapped-ion repeaters with those for a platform-agnostic abstract model for a quantum repeater.
This is a simple processing-node model that accounts for generic noise sources such as memory decoherence and imperfect photon indistinguishability, but does not take platform-specific considerations such as restricted topologies into account.
For more details on the platform-agnostic abstract model, see Appendix \ref{appendix:sec_abstract_model}.
We consider double-click entanglement generation in the fiber-network-based setup, targeting an entanglement-generation rate of 0.1 Hz and an average teleportation fidelity of 0.8717.

To perform the comparison, we proceed as follows: (i) map the state-of-the-art hardware parameters to abstract-model parameters, (ii) run the optimization process for the platform-specific model and the abstract model in order to find the minimal hardware requirements for both, (iii) map the obtained platform-specific hardware requirements to the abstract model and (iv) compare them to the hardware requirements obtained by running the optimization process for the abstract model.
The results of this comparison can be seen in Figure~\ref{fig:compare_abstract_hardware_requirements}.
\begin{figure*}[!tbp]
    \centering
    \subfloat[\centering Color center.]{{\includegraphics[width=\columnwidth]{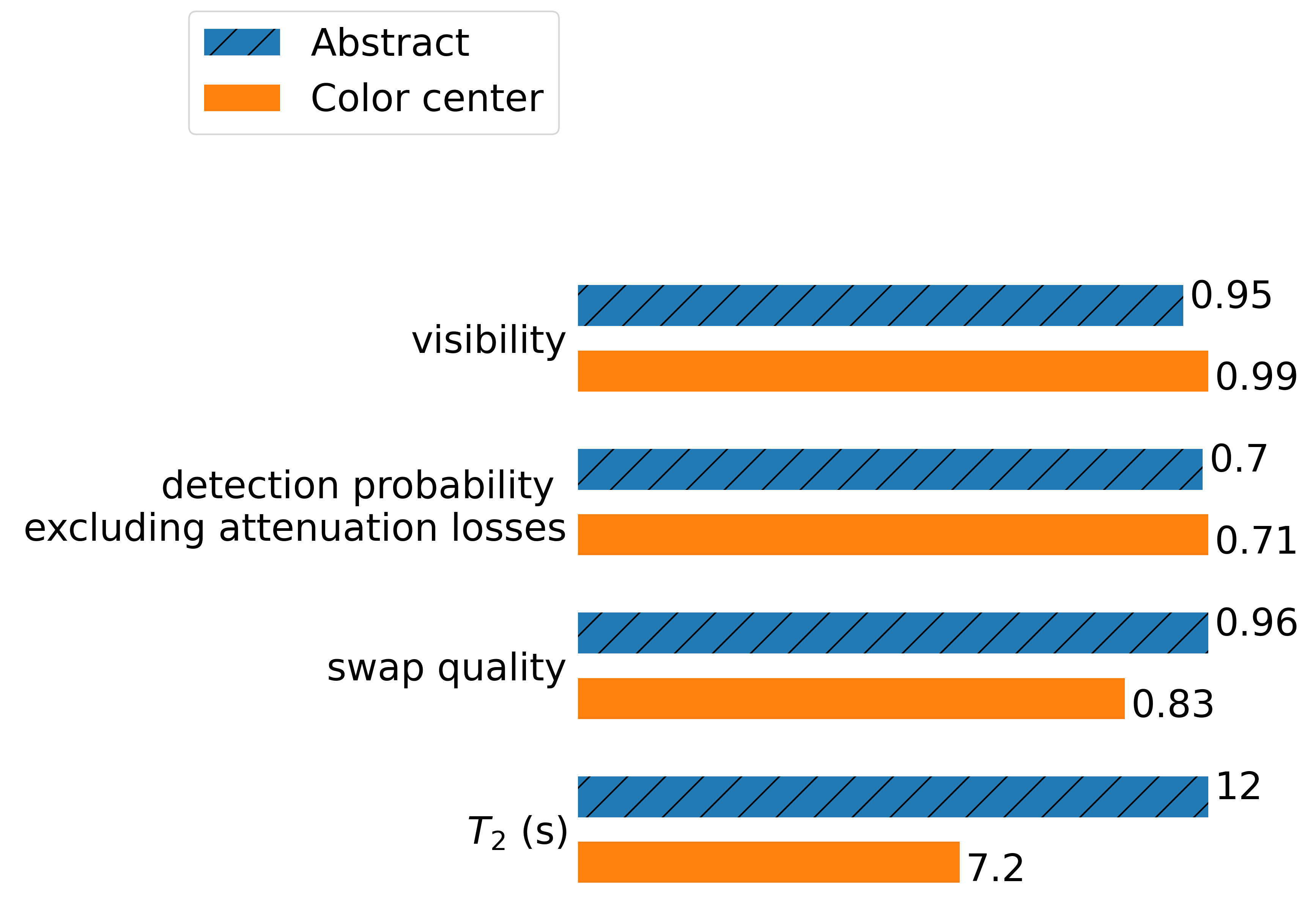}}}
    \subfloat[\centering Ion trap.]{{\includegraphics[width=\columnwidth]{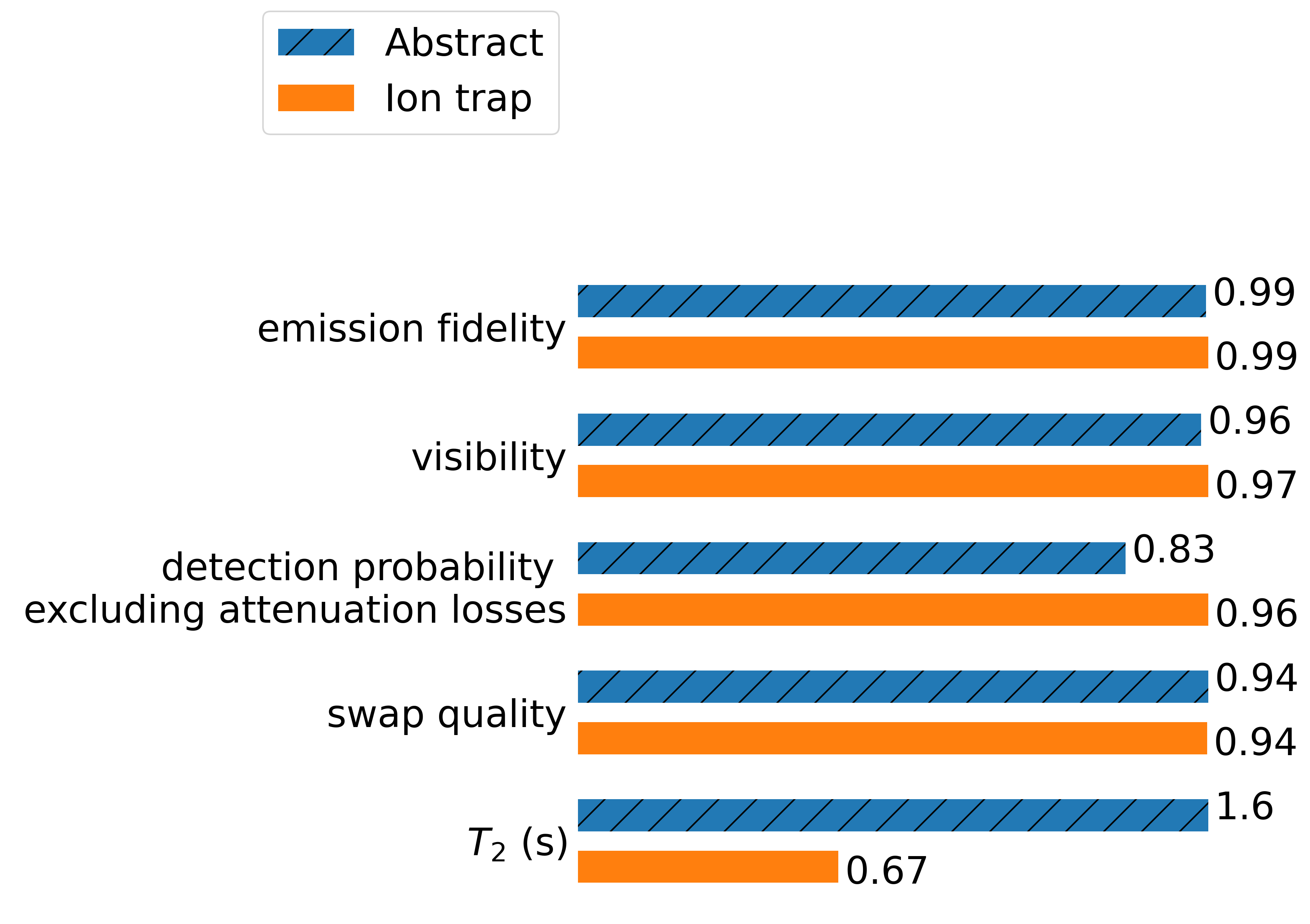}}}
    \caption{
	Comparison of hardware requirements for connecting the Dutch cities of Delft and Eindhoven using a repeater performing double-click entanglement generation considering a simple abstract model and more detailed \textbf{(a)} color center and \textbf{(b)} ion trap models.
	Requirements are for achieving an entanglement-generation rate of 0.1 Hz and an average teleportation fidelity of 0.8717.
	Parameters shown are, from top to bottom: spin-photon emission fidelity (trapped ion only), visibility of photon interference, photon detection probability excluding attenuation losses in fiber, fidelity of entanglement swap and qubit coherence time.
}
    \label{fig:compare_abstract_hardware_requirements}
\end{figure*}
The hardware requirements are significantly different for the abstract model and for the trapped-ion and color-center models.
This can be explained by the greater simplicity of the abstract model.
Take coherence time as an example.
The communication and memory qubits of color centers decohere at different rates, a complexity which is not present in the abstract model.
Therefore, improving the coherence time in the abstract model has a bigger impact than improving a given coherence time in the color center model.
This means that in the abstract model it is comparatively cheaper to achieve the same performance by improving the coherence time rather than other parameters.
The fact that memory noise in trapped ions is modeled differently than in the abstract model (the trapped-ion memory noise is Gaussian, arising from a collective dephasing process. See Equations~\ref{eq:gaussian_dephasing_1} and~\ref{eq:gaussian_dephasing_2}) could also explain the difference in the requirements for the coherence times seen in that case.
\\

\subsection{Entanglement Without a Repeater.}
We note that one of the set of targets we investigated, namely an entanglement-generation rate of 0.1 Hz and an average teleportation fidelity of 0.8717, could also be achieved in the setup we investigated without using a repeater node if a single-click entanglement-generation protocol were employed.
Furthermore, the hardware improvements required would be more modest in this case than if a repeater were used.
For more details on this, see Appendix \ref{appendix:sec_repeaterless}.
\\

\subsection{Outlook.}
In order to design and realize real-world quantum networks, it is important to determine minimal hardware requirements in more complex scenarios such as heterogeneous networks with multiple repeaters and end nodes.
The method presented in this work is well suited for this.
Furthermore, it would be valuable to investigate what limitations the assumptions we have made in our modeling place on our results.
For example, we did not consider the effects of fiber dispersion.
These effects could hamper entanglement generation and hence affect the minimal hardware requirements.
Even though preliminary investigations suggest that these effects might be small, quantifying them would represent a step forward in determining realistic minimal repeater-hardware requirements.
Another interesting open question is what effect the use of entanglement-distillation protocols (see~\cite{dur2007entanglement} for a review) would have on the minimal hardware requirements.

\section{Methods} \label{sec:methods}
In this section we elaborate on our approach for determining the minimal and absolute minimal hardware requirements for processing-node repeaters to generate entangled states enabling VBQC.
\\

\textbf{Conditions on Network Path to Enable VBQC.} In our setup, a client wishes to perform 2-qubit VBQC, a particular case of the protocol described in~\cite{leichtle2021verifying}, on a powerful remote server whose qubits are assumed to suffer from depolarizing noise with coherence time $T = 100$~s.
We further assume that the computation itself is perfect, with the only imperfections arising from the network path used to remotely prepare the qubits.
This protocol is shown in~\cite{leichtle2021verifying} to be robust to noise, remaining correct if the \textit{maximal} probability of error in a test round can be upper-bounded by $25\%$.
We argue in Appendix \ref{app:target_metric} that the protocol is still correct if the \textit{average} probability of error in a test round can be upper-bounded by $25\%$, as long as we assume that the error probabilities are independent and identically distributed across different rounds of the protocol.
This is the case for the setup studied here, as the state of the network is fully reset after entanglement swapping takes place at the repeater node.
This condition, together with the assumption on the server's coherence time, can be used to derive bounds on the required average teleportation fidelity and entanglement-generation rate, as shown in Appendix \ref{app:target_metric}.
\\

\subsection{Average Teleportation Fidelity.}
We use the average teleportation fidelity $F_\text{tel}$ that can be obtained with the teleportation channel $\Lambda_\sigma$ arising from the end-to-end entangled state $\sigma$ generated by the network we investigate as a target metric:
\begin{equation}
F_\text{tel}(\sigma) \equiv \int_{\psi} \expectationvalue{\Lambda_\sigma(\ketbra{\psi})}{\psi} d \psi,
\end{equation}
where the integral is taken over the Haar measure.
See Appendix \ref{sec:appendix_teleportation_fidelity} for more details.
\\

\subsection{Hardware Improvement for VBQC as an Optimization Problem.} We want to find the minimal hardware requirements that achieve a given average teleportation fidelity $F_{\text{target}}$ and entanglement-generation rate $R_{\text{target}}$. 
We restate this as a constrained optimization problem: we wish to minimize the hardware improvement, while ensuring that the performance constraints are met.
These constraints are relaxed through scalarization, resulting in a single-objective problem in which we aim to minimize the sum of the hardware improvement and two penalty terms, one for the rate target and one for the teleportation fidelity target.
The resulting cost function is given by
\begin{align}
\begin{split}
    C &= w_1 \Big(1 + \left(F_{\text{target}} - F_{\text{tel}}\right)^2\Big)\Theta\left(F_{\text{target}} - F_{\text{tel}}\right) \\
    &+ w_2\Big(1 + \left(R_{\text{target}} - R\right)^2\Big)\Theta\left(R_{\text{target}} - R\right) \\
    &+w_3H_{\text{C}}\left(x_1,...,x_{\text{N}}\right),
\end{split}
\label{eq:total_cost_function}
\end{align}
where $H_{\text{C}}$ is the hardware cost associated with parameter set $\{x_1,...,x_{\text{N}}\}$, $w_{\text{i}}$ are the weights of the objectives, $\Theta$ is the Heaviside function and $F_{\text{tel}}$ and $R$ are the average teleportation fidelity and entanglement-generation rate achieved by the parameter set, respectively.
The hardware cost function $H_C$ maps sets of hardware parameters to a cost that represents how large of an improvement over the state of the art the set requires.
To compute this consistently across different parameters we use no-imperfection probabilities, as done in~\cite{coopmans2021netsquid} (where they are called no-error probabilities).
A parameter is improved by a factor $k$, called the \textit{improvement factor}, if its corresponding no-imperfection probability $p_\text{ni}$ becomes $\sqrt[k]{p_\text{ni}}$.
For example, if the error probability of a gate is 40\%, its probability of no-imperfection is 0.6.
After improving it by a factor of 4 the no-imperfection probability becomes $\sqrt[4]{0.6} \approx 0.88$, corresponding to an error probability of approximately 12\%.
The hardware cost associated with a set of hardware parameters is the sum of the respective improvement factors, i.e.,
\begin{equation}
H_{\text{C}}\left(x_1,...,x_{\text{N}}\right) = \sum_{i=1}^N \frac{\ln{\{p_\text{ni}(b_{\text{i}})\}}}{\ln{\{p_\text{ni}(x_{\text{i}})\}}},
\end{equation}
where $p_\text{ni}(x_{\text{i}})$ is the no-imperfection probability corresponding to the value $x_{\text{i}}$ of parameter $i$ and $p_\text{ni}(b_i)$ is the no-imperfection probability corresponding to the baseline value $b_{\text{i}}$ of parameter $i$.
We have here for concreteness used natural logarithms, but the hardware cost is invariant to changes in the logarithms' bases.
We note that these improvement factors are the quantities shown in Figure~\ref{fig:table_and_figures_requirements}.
The weights $w_i$ are chosen such that the first two terms are larger than the last one for near-term parameters, guaranteeing that the set of parameters minimizing $C$ meets performance targets.
We are then effectively restricted to the region of parameter space in which the performance constraints are satisfied, as all points corresponding to near-term parameters in this region have a lower cost than points outside it.
The problem then becomes one of minimizing the hardware cost in this region.
We have verified that the expected values of the average teleportation fidelity and entanglement-generation rate of the parameter sets found meet the constraints, thus enabling VBQC conditional on our assumptions.
Our method guarantees that the set of parameters found is 'minimal' in the sense that making any of the parameters worse would result in the target not being met.
However, we note that there exist many such solutions, and if specific knowledge is available about how hard it is to improve particular parameters, the cost function could be adapted to pick out minimal parameter sets that may be easier to attain.
An example of this is the efficiency of the NV center's photonic interface, which is limited to 3\% due to the ZPL.
Going beyond this limit requires integration into a cavity, which carries with it a host of challenges~\cite{ruf2021resonant, ruf2021quantum}.
One could then modify the cost function to make improving the efficiency of the photonic interface beyond 3\% more expensive than improving other parameters.
However, as it is challenging to accurately estimate the hardness associated with specific improvements and, furthermore, the hardness may depend on the specific expertise available within a given research group, we have refrained from making such estimates.
\\

\subsection{Optimization Parameters.}
Using the methodology described later on in this section, we perform an optimization over both protocol and hardware parameters.
First we enumerate the protocol parameters:
\begin{itemize}
    \item Cut-off time, the time after which a stored qubit is discarded;
    \item Bright-state parameter (single-click entanglement generation only), the fraction of a matter qubit's superposition state that is optically active;
    \item Coincidence time window (double-click entanglement generation with ion traps only), the maximum amount of time between the detection of two photons for which a success is heralded.
          We model the effect of the coincidence time window using a toy model, see Appendix \ref{app:time_windows}.
\end{itemize}
Second, we enumerate the hardware parameters:
\begin{itemize}
    \item The Hong-Ou-Mandel visibility \cite{hongMeasurementSubpicosecondTime1987} is a measure for the indistinguishability of interfering photons and is defined by \cite{bouchard2020}
    \begin{equation}
    1 - \frac {C_\text{min}} {C_\text{max}}.
    \end{equation}
    Here $C_\text{min}$ is the probability (coincidence count rate) that two photons that are interfered on a 50:50 beamsplitter are detected at two different detectors when the indistinguishability is optimized (as is the case when using interference to generate entanglement),
    while $C_\text{max}$ is the same probability when the photons are made distinguishable.
    \item The probability of double excitation is the probability that two photons are emitted instead of one in entanglement generation with color centers;
    \item The induced memory qubit noise is the dephasing suffered by the memory qubit when the communication qubit is used to attempt entanglement generation.
The number given for this parameter in Table~\ref{tab:baseline_parameters} corresponds to the number of electron spin pumping cycles after which the Bloch vector length of the memory qubit in the state $(\ket{0} + \ket{1}) / \sqrt{2}$ in the $X - Y$ plane of the Bloch sphere has shrunk to $1/\text{e}$ when the communication qubit has bright-state parameter $0.5$~\cite{kalb2018dephasing};
    \item The interferometric phase uncertainty is the uncertainty in the phase acquired by the two interfering photons when they travel through the fiber in single-click entanglement generation with color centers;
    \item The photon detection probability excluding attenuation losses is the probability that a photon is detected given that emission was attempted, and assuming that the fiber length is negligible, i.e., considering every form of photon loss (including coupling to fiber) except the length-dependent attenuation loss in fiber;
    \item Every gate is parameterized by a depolarizing-channel fidelity;
    \item  For color centers, $T_1$ and $T_2$ are the characteristic times of the time-dependent amplitude damping and phase damping channels affecting the qubits, and are different for the communication and memory qubits.
The effect of the amplitude (phase) damping channel after time $t$ is given by Equation~\eqref{eq:amplitude_damping_main} (\eqref{eq:phase_damping_main})
\begin{align}
\begin{split}
\rho &\rightarrow \left(\ketbra{0}{0} + \sqrt{e^{-t/T_1}}\ketbra{1}{1}\right) \rho \\& \left(\ketbra{0}{0} + \sqrt{e^{-t/T_1}}\ketbra{1}{1}\right)^{\dagger} \\&+ \sqrt{1 - e^{-t/T_1}}\ketbra{0}{1} \rho \left(\sqrt{1 - e^{-t/T_1}}\ketbra{0}{1}\right)^{\dagger}
\label{eq:amplitude_damping_main}
\end{split}
\end{align}
\begin{align}
\begin{split}
\rho &\rightarrow \left(1 - \frac{1}{2}\left(1 - e^{-t/T_2}e^{-t/(2T_1)}\right)\right)\rho \\ &+ \frac{1}{2}\left(1 - e^{-t/T_2}e^{-t/(2T_1)}\right) Z\rho Z;
\label{eq:phase_damping_main}
\end{split}
\end{align}
    \item For ion traps, the coherence time characterizes the time-dependent collective Gaussian dephasing process that the qubits undergo, which is given by~\cite{zwerger2017quantum}:
\begin{equation}
\rho \to \int_{- \infty}^\infty K_{\text{r}} \rho K_{\text{r}}^\dagger p(r) dr,
\label{eq:gaussian_dephasing_1}
\end{equation}
where
\begin{equation}
K_{\text{r}} =  \exp(-\text{i} r \frac t \tau \sum_{j=1}^n Z_\text{j}),
\end{equation}
$Z_\text{j}$ denotes a Pauli $Z$ acting on qubit $j$, $n$ is the total number of ions in the trap, $\tau$ the coherence time and $t$ the storage time, and
\begin{equation}
p(r) = \frac{1}{\sqrt{2\pi}}\text{e}^{-\text{r}^2/2};
\label{eq:gaussian_dephasing_2}
\end{equation} 
    \item The noise on matter-photon emission is parameterized by a depolarizing-channel fidelity (i.e., the matter-photon state directly after emission is a mixture between a maximally entangled state and a maximally mixed state);
    \item The dark-count probability is the probability that a detection event is registered at a detector without a photon arriving.
\end{itemize}
The state-of-the-art values we use for the hardware parameters are shown in Table~\ref{tab:baseline_parameters}.
For more details on how the effects of the different hardware parameters are included in our models, see Appendix \ref{sec:appendix_setup}.
We note that some of the hardware parameters we consider in fact conceal trade-offs. 
For example, the probability of getting a double excitation when using color centers to emit photons can to an extent be tuned.
In this case, a lower probability of double excitation would come at the cost of getting fewer events.
However, optimizing over all such trade-offs is beyond the scope of this work.
\begin{table*}[!htpb]
\begin{tabular}{|c|c|c|}
\hline
                                                                                                             & \multicolumn{1}{c|}{Color center}                        & \multicolumn{1}{c|}{Ion trap}                            \\ \hline
Visibility                                                                                                   & 0.9~\cite{hermans2022qubit}                 & 0.89~\cite{krutyanskiy2023}                \\ \hline
Probability of double excitation                                                                             & 0.06~\cite{hermans2022qubit}                & -             \\ \hline
\begin{tabular}[c]{@{}c@{}}Induced memory qubit noise\\ (entanglement attempts until dephasing)\end{tabular} & 5300~\cite{hermans2022qubit}                & -             \\ \hline
Interferometric phase uncertainty (rad)                                                                     & 0.21~\cite{hermans2022qubit}                & -             \\ \hline
\begin{tabular}[c]{@{}c@{}}Photon detection probability \\ excluding attenuation losses \end{tabular}                                                                & $5.1\times 10^{-4}$~\cite{hermans2022qubit} & 0.111~\cite{schupp_interface_2021, krutyanskiy2017, krutyanskiy2019light}              \\ \hline
Two-qubit gate fidelity                                                                                              & 0.97~\cite{kalb2017entanglement}  & 0.95~\cite{tirepeater}  \\ \hline
Two-qubit gate duration                                                                                              & 500 $\mu$s~\cite{kalb2017entanglement}    & \newline 107 $\mu$s~\cite{tirepeater}    \\ \hline
Communication $T_1$                                                                                             & 1 h~\cite{abobeih2018one}           & -             \\ \hline
Communication $T_2$                                                                                             & 0.5 s~\cite{hermans2022qubit}         & -             \\ \hline
Memory $T_1$                                                                                                    & 10 h~\cite{bradley2019ten}          & -             \\ \hline
Memory $T_2$                                                                                                    & 1 s~\cite{bradley2019ten}           & -             \\ \hline
Coherence time                                                                                               & -             & 85 ms~\cite{tirepeater}         \\ \hline
Matter-photon emission fidelity                                                                                       & 1~\cite{pfaff2014unconditional}            & 0.99~\cite{stute_tunable_2012}   \\ \hline
Matter-photon emission duration                                                                                       & 3.8 $\mu$s~\cite{pompili2021realization}            & 50 $\mu$s~\cite{tirepeater, schupp_interface_2021}     \\ \hline
Dark count probability                                                                                       & $1.5\times 10^{-7}$~\cite{hermans2022qubit}              & $1.4\times 10^{-5}$~\cite{krutyanskiy2019light} \\ \hline
\end{tabular}
\caption{State-of-the-art color center and trapped-ion hardware parameters.
For the trapped-ion parameters, a detection time window of 17.5 $\mu$s and a coincidence time window of 0.5 $\mu$s are assumed (see Appendix \ref{sec:appendix_setup} for more details).
All fidelities are depolarizing-channel fidelities.
A dash (``-'') indicates that a value would not be well defined (for instance, there is no $T_1$ or $T_2$ time defined for trapped ions, while there is no coherence time defined for color centers).
We note that not all of these parameter values have been realized in a single experiment.}
\label{tab:baseline_parameters}
\end{table*}

\subsection{Evaluating Hardware Quality.} In order to minimize the cost function $C$, we require an efficient way of evaluating the performance attained by each parameter set.
We do this through simulation of end-to-end entanglement generation using NetSquid.
The full density matrix of the states generated, as well as how long their generation took in simulation time are recorded and used to compute the average teleportation fidelity and rate of entanglement generation.
Since entanglement generation is a stochastic process, multiple simulation runs are performed in order to collect representative statistics.
\\

\subsection{Framework for Simulating Quantum Repeaters.}
In our NetSquid simulation framework, we have implemented hardware models for color centers, trapped ions and a platform-agnostic abstract model.
This includes the implementation of different circuits for entanglement swapping and moving states for each platform, conditioned on their respective topologies and gate sets.
Additionally, we have implemented both single and double-click entanglement-generation protocols.
In order to combine these different building blocks that are required to simulate end-to-end entanglement distribution, we define services that each have a well-defined input and output but can have different implementations.
For example, the entanglement-generation service can either use the single-click or double-click protocol, and entanglement swapping can be executed on either color center or trapped-ion hardware.
End-to-end entanglement generation is then orchestrated using a link-layer protocol (inspired on the one proposed in~\cite{dahlberg2019link}) that makes calls to the different services, agnostically of how the services are implemented.
This allows us to use the same protocol for each different configuration of the simulation.
Switching between configurations in our simulation framework then only requires editing a human-readable configuration file.
The modularity of the simulation framework would make it simple to investigate further hardware platforms and protocols.

The link-layer protocol is itself an implementation of the link-layer service defined in~\cite{dahlberg2019link}.
From a user perspective, this simplifies using the simulation as all that needs to be done to generate entanglement is make a call to the well-defined link-layer service, without any knowledge of the protocol that implements the service.
In this work, the link-layer protocol is the one for a single sequential repeater illustrated in Figure \ref{fig:protocol_figure}.
However, the protocols included in our simulation code are able to simulate entanglement generation on chains of an arbitrary number of (sequential) repeaters that use classical communication to negotiate when to generate entanglement and that implement local cut-off times.
\\

\subsection{Finding Minimal Hardware Improvements.} In order to find the sets of parameters minimizing the cost function $C$, we employ the optimization methodology introduced in~\cite{da2021optimizing}, which integrates genetic algorithms and NetSquid simulations.
A genetic algorithm is an iterative optimization method, which initiates by randomly generating a population consisting of many sets of parameters, also known as individuals.
These are then evaluated using the NetSquid simulation and the cost function, and a new population is bred through mutation and crossover of individuals in the previous population.
The process then iterates, with better-performing individuals being more likely to propagate to further iterations.
For further details on the optimization methodology employed, see Appendix \ref{sec:appendix_optimization} and~\cite{da2021optimizing}.

This methodology is computationally intensive, so we execute it on the Snellius supercomputer.
We use one node of the Snellius supercomputer, which contains 128 2.6 GHz cores and a total of 256 GiB of memory.
Based on previously observed data reported in~\cite{da2021optimizing}, we employ a population size of 150 evolving for 200 generations.
The simulation is run 100 times for each set of parameters, as we have empirically determined that this constitutes a good balance between accuracy and computation time.
The time required for the procedure to conclude is hardware, protocol and parameter dependent, but we have observed that 10 wall-clock hours are typically enough.
We stress that this approach is general, modular and freely available~\cite{da2021optimizing}.
\\

\subsection{Finding Absolute Minimal Hardware Requirements.} In order to find these requirements, which are the minimal parameter values enabling meeting the performance targets if the only other imperfection is photon loss in fiber, we perform a sweep of each parameter, starting at the state-of-the-art value and terminating when the targets are met.
For each value of each parameter, we sweep also over the protocol parameters, i.e., the cut-off time, coincidence time window (for double-click entanglement generation with ion traps) and bright-state parameter (for single-click entanglement generation).

\section{Data availability} \label{sec:data_availability}
The data presented in this work have been made available at \url{https://doi.org/10.4121/19746748}.

\section{Code availability} \label{sec:code_availability}
The code that was used to perform the simulations and generate the plots in this paper has been made available at \url{https://gitlab.com/softwarequtech/simulation-code-for-requirements-for-a-processing-node-quantum-repeater-on-a-real-world-fiber-grid}.

\section*{Acknowledgements} \label{sec:acknowledgements}
We thank Dominik Leichtle and Harold Ollivier for useful discussions on VBQC.
We thank Sophie Hermans, Matteo Pompili and Ronald Hanson for useful discussions on state-of-the-art NV-center networking experiments and for sharing their experimental data.
We thank Tracy Northup, Ben Lanyon, Dario Fioretto, Simon Baier, Maria Galli, Viktor Krutyanskii and Markus Teller for useful discussions on modeling trapped-ion devices and for sharing their experimental data and parameters.
We thank Alejandro R.-P. Montblanch, Álvaro Gómez Iñesta, Anders Søndberg Sørensen, Bart van der Vecht, Bethany Davies, Gayane Vardoyan, Mariagrazia Iuliano, and Viktor Krutyanskii for critical reading of the manuscript.
We thank Eva Peet for helping with the design of Figures \ref{fig:delft_eindhoven_setup} and \ref{fig:protocol_figure}.
This work was supported by the QIA project that has received funding from the European Union’s Horizon 2020 research and innovation program under grant Agreement No. 82044.
G.A. was supported by NWO Zwaartekracht QSC 024.003.037.

\section*{Competing interests}
The authors declare no competing financial or non-financial interests.

\section*{Author contributions} \label{sec:author_contributions}
G.A. led the development of hardware models and the simulation of repeater protocols.
F.F.S. led the development and execution of optimizations.
G.A. and F.F.S. devised the target metric and proved the underlying theorems related to verifiable blind quantum computation.
G.A., F.F.S., D.M., T.C., A.D., H.J. and J.R. contributed to the development of the code used in the simulations.
A.T. contributed to the optimal execution of simulations on computing clusters.
G.A., F.F.S. and S.W. wrote the manuscript.
All authors revised the manuscript.
S.W. conceived and supervised the project.

\bibliography{biblio.bib}

\appendix
\onecolumngrid

\section{Setup}
\label{sec:appendix_setup}
In this appendix, we elaborate on our modeling of the setup we study.
We go over the topology of the fiber network we considered, the protocols employed by the repeater nodes, the modeling of the nodes themselves and of entanglement generation.
\subsection{Fiber network and node placement}
Deployment of quantum networks in the real world will likely make use of existent fiber infrastructure~\cite{rabbie2022designing}.
In order to accurately account for this in our investigation of repeater hardware requirements, we used data of SURF's fiber network in our simulation.
SURF is a network provider for education and research institutions in the Netherlands.
The data we have access to consists of the physical location in which nodes are placed, the length of the fibers connecting them, the measured attenuation of each fiber and their dispersion.
We restricted the placement of quantum nodes and heralding stations to existing nodes in the network, and we assumed that they were connected by the shortest length of fiber possible.
We note that in the case we studied this corresponds also to the least overall attenuation.
Although dispersion was not considered in our models, an investigation of its effects would constitute an interesting extension to this work.
There are four nodes in the shortest connection between Delft and Eindhoven in SURF's network, as depicted in Figure~\ref{fig:delft_eindhoven_shortest_path}.
\begin{figure}[!ht]
\centering
\includegraphics[width=0.6\columnwidth]{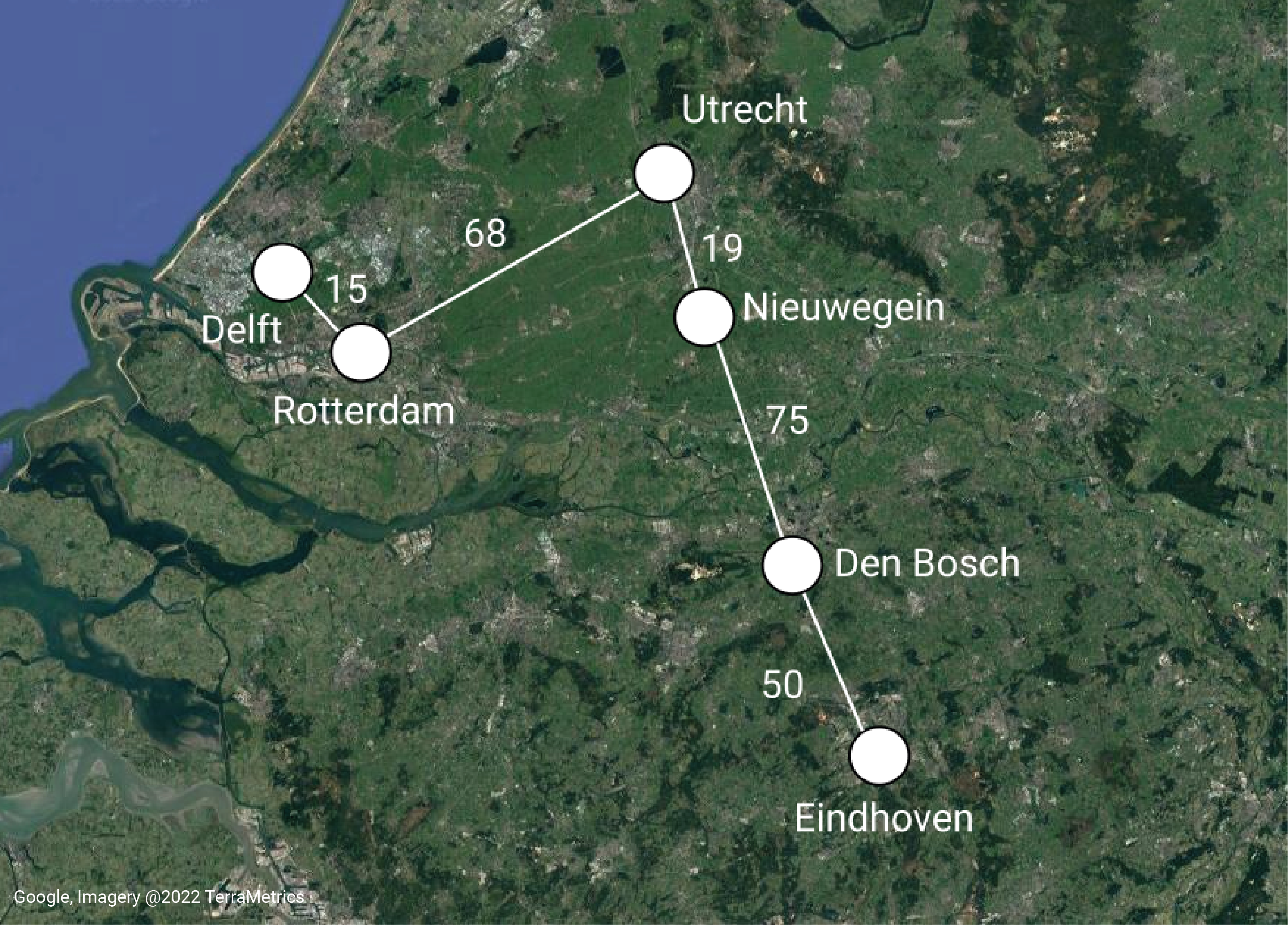}
\caption{Satellite photo of the Netherlands overlaid with depiction of the shortest connection between the Dutch cities of Delft and Eindhoven in SURF's fiber network.
The white circles represent locations where processing nodes and heralding stations can be placed, and are connected to one another through white fibers.
The position of the circles in the figure roughly approximates their actual physical location.
All distances are given in kilometers.}
\label{fig:delft_eindhoven_shortest_path}
\end{figure}
This means we are restricted to placing a single repeater between the end nodes, as a two-repeater setup would require five nodes in total, two for the repeaters and three for the heralding stations.
A single-repeater setup, on the other hand, requires only three nodes, one for the repeater itself and two for heralding stations.
One of the connection's nodes must therefore not be used, and there are two possible choices for how this can be done, as depicted in Figure~\ref{fig:alternative_d_e_paths}.
\begin{figure*}[!tbp]
    \centering
    \subfloat{{\includegraphics[width=0.48\columnwidth]{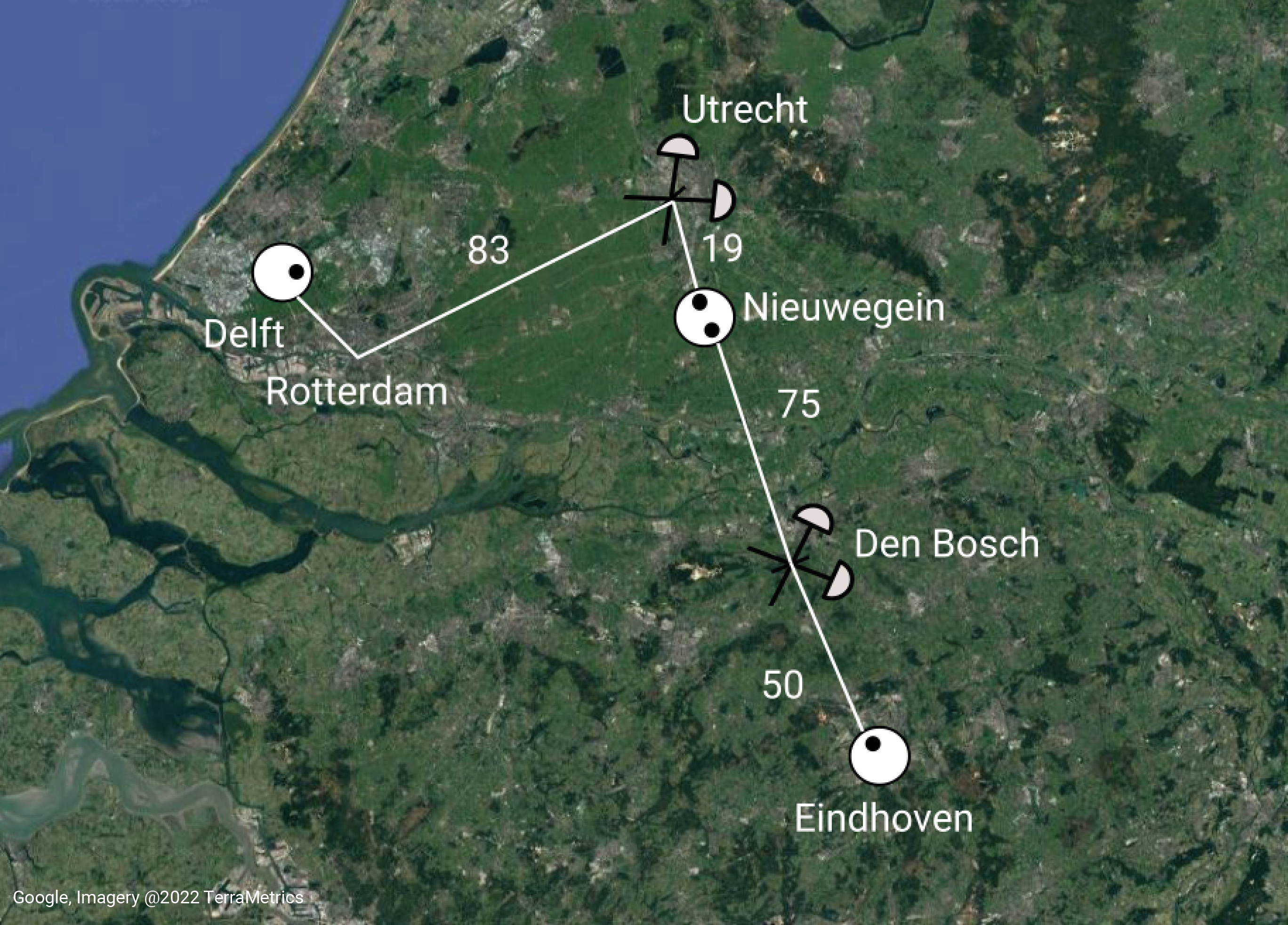}}}
    \qquad
    \subfloat{{\includegraphics[width=0.48\columnwidth]{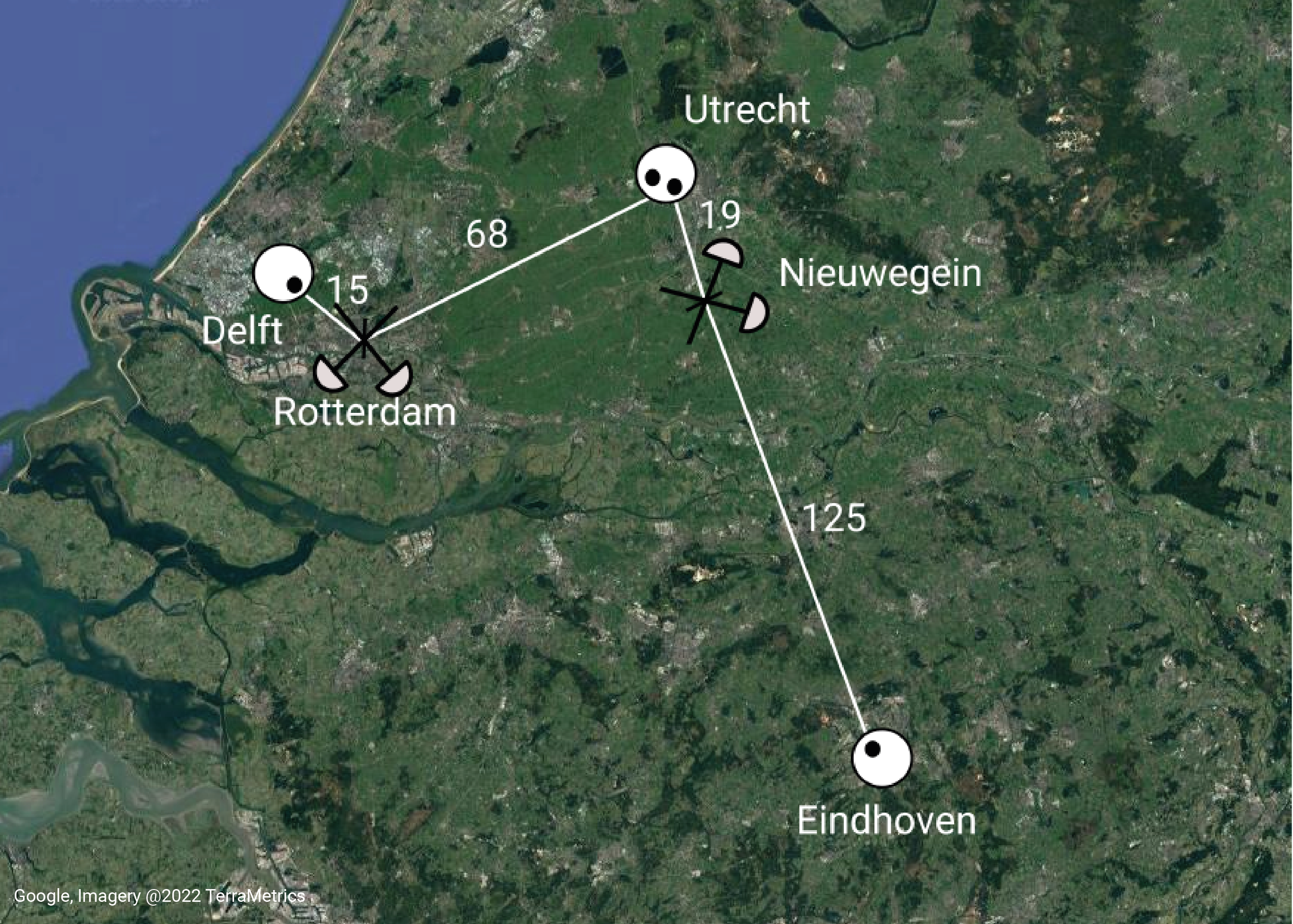}}}
    \caption{Two possible choices for processing node (white circles, black circles within represent qubits) and heralding station placement in the SURF network's shortest connection between the Dutch cities of Delft and Eindhoven.
    In the path on the left, the Rotterdam node is unused, thereby directly connecting the Delft - Rotterdam and Rotterdam - Utrecht links.
    Similarly, the Den Bosch node is unused in the path on the right.
}
    \label{fig:alternative_d_e_paths}
\end{figure*}
We applied our methodology to both of these paths and determined that the one on the left in Figure~\ref{fig:alternative_d_e_paths} requires a smaller improvement over current hardware.
Therefore, all the results presented in the main text pertain to it.
For more details, see Appendix~\ref{appendix:sec_alt_architecture}.
\\ 

\subsection{Repeater protocol}
We now elaborate on the protocol executed by the nodes.
We note that the repeaters we investigate are sequential, which means that they can only generate entanglement with one neighbor at a time.

\begin{enumerate}
    \item A request for end-to-end entanglement generation is placed at one of the end nodes.
    \item This end node sends a classical message through the fiber to the other end node, in order to verify whether it is ready to initiate the entanglement generation protocol.
    \item If that is the case, the second end node sends a confirmation message back, as well as an activation message for the repeater node.
    
    The next step is the generation of elementary link states.
    We begin by generating entanglement on the Eindhoven - Nieuwegein link, which is longest, so as to minimize the time states remain in memory.
    \item The neighboring Eindhoven and Nieuwegein nodes share classical messages sent through the fiber connecting them to ensure that both agree to generate entanglement.
    \item Once they have established agreement, entanglement generation attempts begin and continue until success.
    \item Steps 4 and 5 are repeated by the repeater and the Delft end node.
    \item The repeater performs a Bell-state measurement on the two qubits it holds, thereby creating an entangled state held by the end nodes.
    \item The outcome of this measurement is sent as a classical message to both end nodes.
    \item The end nodes become aware that end-to-end entanglement has been established and perform the appropriate correction on the Bell state.
\end{enumerate}

We also employ a cut-off protocol.
If the generation of the second entangled state lasted longer than a predefined cut-off time, the first state, corresponding to the longer link, is discarded.
Entanglement generation then restarts along the longer link.
\\ 

Such a protocol involving sequential repeaters and a cut-off timer has been studied before, e.g., in \cite{rozpedek2019near}.
The steps above are sufficient to generate one end-to-end entangled state.
If the generation of multiple states had been requested, steps 4-9 would be repeated until enough pairs had been generated.
We note that we do not simulate the application of the Bell-state correction, but instead record which correction should have been applied and handle it in post-processing.
\\ 

Information on how we implemented such a protocol in a scalable and hardware-agnostic fashion can be found in Appendix~\ref{appendix:protocols}.

\subsection{Quantum-computing server}
After the end-to-end entangled state has been generated, we assume the end node in Delft transfers its half of it to a powerful quantum-computing server.
This is a similar setting as the one investigated in~\cite{vardoyan2022quantum}, where the authors consider an architecture in which a node contains two NV centers, one of them used for networking and the other for computing.
We assume that the state transfer process is instantaneous and noiseless and that the server is always available to receive the state.
Additionally, we assume that all quantum gates performed by the server are noiseless and instantaneous, and that qubits stored in the server are subject to depolarizing memory noise with a coherence time of $T$ = 100 s.

\subsection{Processing nodes}
The quantum nodes we investigated are processing nodes, i.e. quantum nodes that are capable of storage and processing of quantum information.
This processing is done through noisy quantum gates.
The specific gate set available to the nodes depends on the particular hardware, but we model the gate noise of all of them with depolarizing channels. Measurements are also noisy, which is captured by a bit-flip channel, i.e. with some probability a $\ket{0}$ ($\ket{1}$) is read as $1$ ($0$).
Furthermore, as already mentioned, all the nodes we investigate are sequential, which means that they can only generate entanglement with one other node at a time.
\\ 

We now elaborate on the details of our modeling for each of the three nodes we study.

\subsection{Color centers}
In Table~\ref{tab:nv_baseline_parameters}, we present the baseline values of all color center hardware parameters relevant to our simulations, as well as references reporting their experimental demonstration.
\\

\begin{table}[!htpb]
\begin{tabular}{|c|c|c|}
\hline
Parameter                                      & Noise & Duration/Time                \\ \hline
Visibility                                     & 0.9~\cite{hermans2022qubit} & -                  \\ \hline
Probability of double excitation               & 0.06~\cite{hermans2022qubit} & -                 \\ \hline
$N_{1/\text{e}}$:
Nuclear dephasing during electron initialization  & 5300~\cite{hermans2022qubit} & -                \\ \hline
Dark count probability                              & $1.5\times 10^{-7}$~\cite{hermans2022qubit} & -  \\ \hline
$\sigma_{\text{phase}}$:
Interferometric phase uncertainty (rad)                        & 0.21~\cite{hermans2022qubit}        & -         \\ \hline
Photon detection probability excluding attenuation losses                                      & $5.1 \times 10^{-4}$~\cite{hermans2022qubit} & - \\ \hline
Spin-photon emission & $F$ = 1~\cite{pfaff2014unconditional} & 3.8 $\mu$s~\cite{pfaff2014unconditional} \\ \hline
Electron readout                                & $F$=0.93(0), 0.995(1)~\cite{hermans2022qubit} &  3.7 $\mu$s~\cite{humphreys2018deterministic}  \\ \hline
Carbon initialization                 & $F$=0.99~\cite{bradley2019ten}       &  300 $\mu$s~\cite{cramer2016repeated}       \\ \hline
Carbon Z-rotation                     & $F$=0.999~\cite{taminiau2014universal}&  20 $\mu$s~\cite{taminiau2014universal}              \\ \hline
Electron-carbon controlled X-rotation & $F$=0.97~\cite{kalb2017entanglement}  &  500 $\mu$s~\cite{kalb2017entanglement}             \\ \hline
Electron initialization               & $F$=0.995~\cite{bradley2019ten}    &  2 $\mu$s~\cite{reiserer2016robust}          \\ \hline
Electron single-qubit gate            & $F$=0.995~\cite{hermans2022qubit}     & 5 ns~\cite{kalb2017entanglement}               \\ \hline
Electron $T_1$                                   &   -   & 1 hour~\cite{abobeih2018one}         \\ \hline 
Electron $T_2$                                    & -      &    0.5 s~\cite{hermans2022qubit}     \\ \hline
Carbon $T_1$                                      & -    &   10 hours~\cite{bradley2019ten}      \\ \hline
Carbon $T_2$                                      & -         &  1 s~\cite{bradley2019ten}       \\ \hline
\end{tabular}
\caption{Baseline color center hardware parameters.}
\label{tab:nv_baseline_parameters}
\end{table}

Color center nodes are modeled with a star topology, with the communication qubit in the middle.
The memory qubits can all interact with the communication qubit, but not with one another.
The communication qubit owes its name to the fact that it is optically active, which means it can be used for light-matter entanglement generation.
The spin states of the memory qubits are long-lived, so they are typically used for information storage.
We model memory decoherence in color center qubits through amplitude damping and phase damping channels with $T_1$ and $T_2$ lifetimes.
The effect of the amplitude (phase) damping channel after time $t$ is given by Equation~\eqref{eq:amplitude_damping} (\eqref{eq:phase_damping}).
\begin{equation}
\rho \rightarrow \left(\ketbra{0}{0} + \sqrt{e^{-t/T_1}}\ketbra{1}{1}\right) \rho \left(\ketbra{0}{0} + \sqrt{e^{-t/T_1}}\ketbra{1}{1}\right)^{\dagger} + \sqrt{1 - e^{-t/T_1}}\ketbra{0}{1} \rho \left(\sqrt{1 - e^{-t/T_1}}\ketbra{0}{1}\right)^{\dagger}
\label{eq:amplitude_damping}
\end{equation}
\begin{equation}
\rho \rightarrow \left(1 - \frac{1}{2}\left(1 - e^{-t/T_2}e^{-t/(2T_1)}\right)\right)\rho + \frac{1}{2}\left(1 - e^{-t/T_2}e^{-t/(2T_1)}\right) Z\rho Z
\label{eq:phase_damping}
\end{equation}
\\ 

The $T_1$ and $T_2$ lifetimes of the communication qubit are different from those of the memory qubits.
An entangling gate is available in the form of a controlled X-rotation between the communication qubit and each memory qubit.
Furthermore, arbitrary single-qubit rotations can be implemented on the communication qubit.
\\ 

The constrained topology and gate set of the color center place some limitations on the quantum circuits to be executed.
First of all, the typical Bell-state-measurement circuit must be adapted, as depicted in Figure 17 (d) of the Supplementary Information of~\cite{coopmans2021netsquid}.
Furthermore, since only the communication qubit can be used to generate light-matter entanglement, the repeater node must move its half of the first entangled state it generates from the communication qubit to a memory qubit in order to free it up to generate the second entangled state.
The circuit for this move operation can be seen in Figure 17 (a) of the Supplementary Information of~\cite{coopmans2021netsquid}.
\\ 

Finally, we note that it might be advantageous for the end node in Eindhoven, which generates entanglement with the repeater first and then has to wait, to map its half of the elementary link entangled state from the communication qubit to the memory qubit while it waits for the repeater node to generate entanglement with the Delft node.
Note that this has nothing to do with the fact that the end node in Delft transfers its qubit to the powerful quantum-computing server after end-to-end entanglement is established.
There is however a trade-off: while mapping the state means that it will be held in a qubit with a longer coherence time, it will also undergo extra decoherence due to the noise in the gates that constitute the circuit for the move operation.
We have investigated this trade-off by applying our methodology to the two situations, and found that not mapping requires a smaller improvement over current hardware.
Therefore, the color center results shown in the main text pertain to the situation in which the Eindhoven node does not map its half of the entangled state to a memory qubit.
We must however note that this finding is specific to both the topology we study and the baseline hardware quality we consider.
For more details on the comparison between mapping and not mapping, see Appendix~\ref{appendix:sec_move_no_move}.
\\

The color center hardware model we employed builds on previous work~\cite{rozpedek2019near,dahlberg2019link}, and its NetSquid implementation has been validated against experiments~\cite{coopmans2021netsquid}.
This includes the model for the processor as well as for the entangled states generated through a single-click protocol.
The main novelty introduced in this work regarding color center modelling is a model for the entangled states generated through the Barrett-Kok protocol~\cite{barrett2005efficient}.
This is essentially the model introduced in Appendix~\ref{appendix:sec_double_click_model}, with the addition of induced dephasing noise.
This addition accounts for the fact that every entanglement generation attempt induces dephasing noise on the color center's memory qubits~\cite{kalb2018dephasing}.
We simulate this effect using a dephasing channel.
The dephasing probability $p$, accumulated after possibly multiple entanglement generation attempts, is given by Equation~\eqref{eq:accumulated_dephasing_prob}.
\begin{equation}
p = \frac{1 - (1 - 2p_{\text{single}})^k}{2}.
\label{eq:accumulated_dephasing_prob}
\end{equation}
In this equation, $p_{\text{single}}$ is the probability of dephasing after a entanglement generation attempt and $k$ is the number of required entanglement generation attempts.
In our simulations, we apply a dephasing channel of parameter $p$ twice after entanglement has been successfully generated, to reflect the fact that each attempt requires the emission of two photons.
$p_{\text{single}}$ can be related to $N_{1/\text{e}}$, the number of electron spin pumping cycles after which the Bloch vector length of a nuclear spin in the state $(\ket{0} + \ket{1})/\sqrt{2}$ in the $X - Y$ plane of the Bloch sphere has shrunk to $1/\text{e}$ when the electron spin state has bright-state parameter $\alpha = 0.5$, through Equation~\eqref{eq:prob_dephasing_n1e}.
\begin{equation}
p_{\text{single}} = (1 - \alpha)\left(1 - e^{-1/N_{1/\text{e}}}\right).
\label{eq:prob_dephasing_n1e}
\end{equation}
$N_{1/\text{e}}$ can in turn be experimentally determined, and $N_{1/\text{e}} = 5300$ for state-of-the-art color center experiments~\cite{pompili2021realization}.
\\

The double-click model is the only component of our color center simulations that had not yet been compared to experimental data.
With this in mind, we validated it against the experiment reported in~\cite{bernien2013heralded}.
There, the authors demonstrated heralded entanglement generation between two color centers separated by three meters using the Barrett-Kok protocol.
After establishing entanglement, measurements of the two entangled qubits were performed to investigate whether the outcomes were correlated as expected.
This was repeated for the X and Z bases, and for the states $\ket{\Psi^\pm} = 1/\sqrt{2}\left(\ket{01} \pm \ket{10}\right)$.
We replicated this setup using our color center NetSquid model and ran the experiment 10000 times per measurement basis in order to gather relevant statistics.
The results of this validation are shown in Figure~\ref{fig:validate_nv_double_click}.
\begin{figure}[!ht]\centering
\subfloat[$\ket{\Psi^+}$ measured in X basis.]{\includegraphics[width=0.45\textwidth]{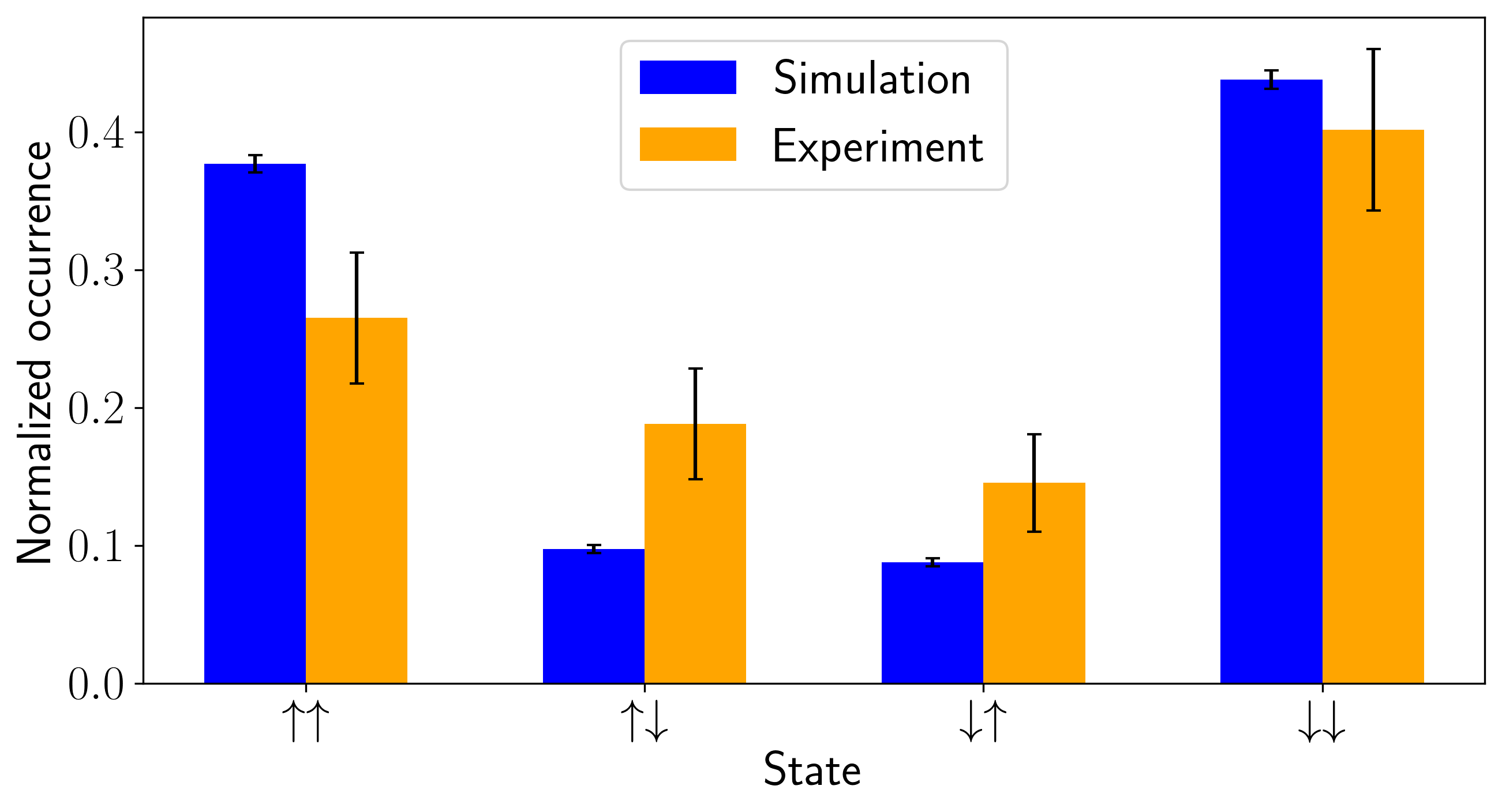}}
\subfloat[$\ket{\Psi^-}$ measured in X basis.]{\includegraphics[width=0.45\textwidth]{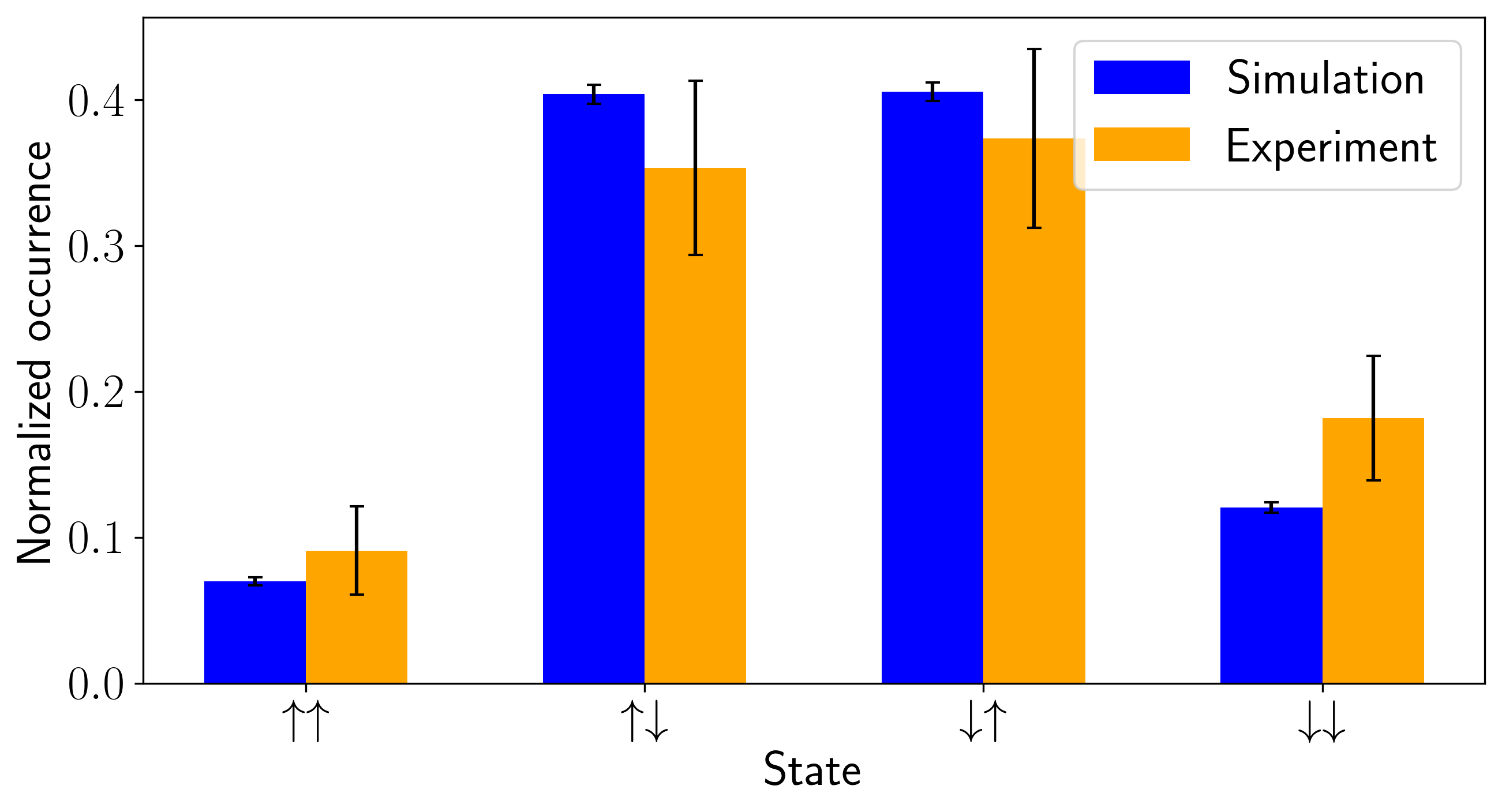}}
\\
\noindent
\subfloat[$\ket{\Psi^+}$ measured in Z basis.]{\includegraphics[width=0.45\textwidth]{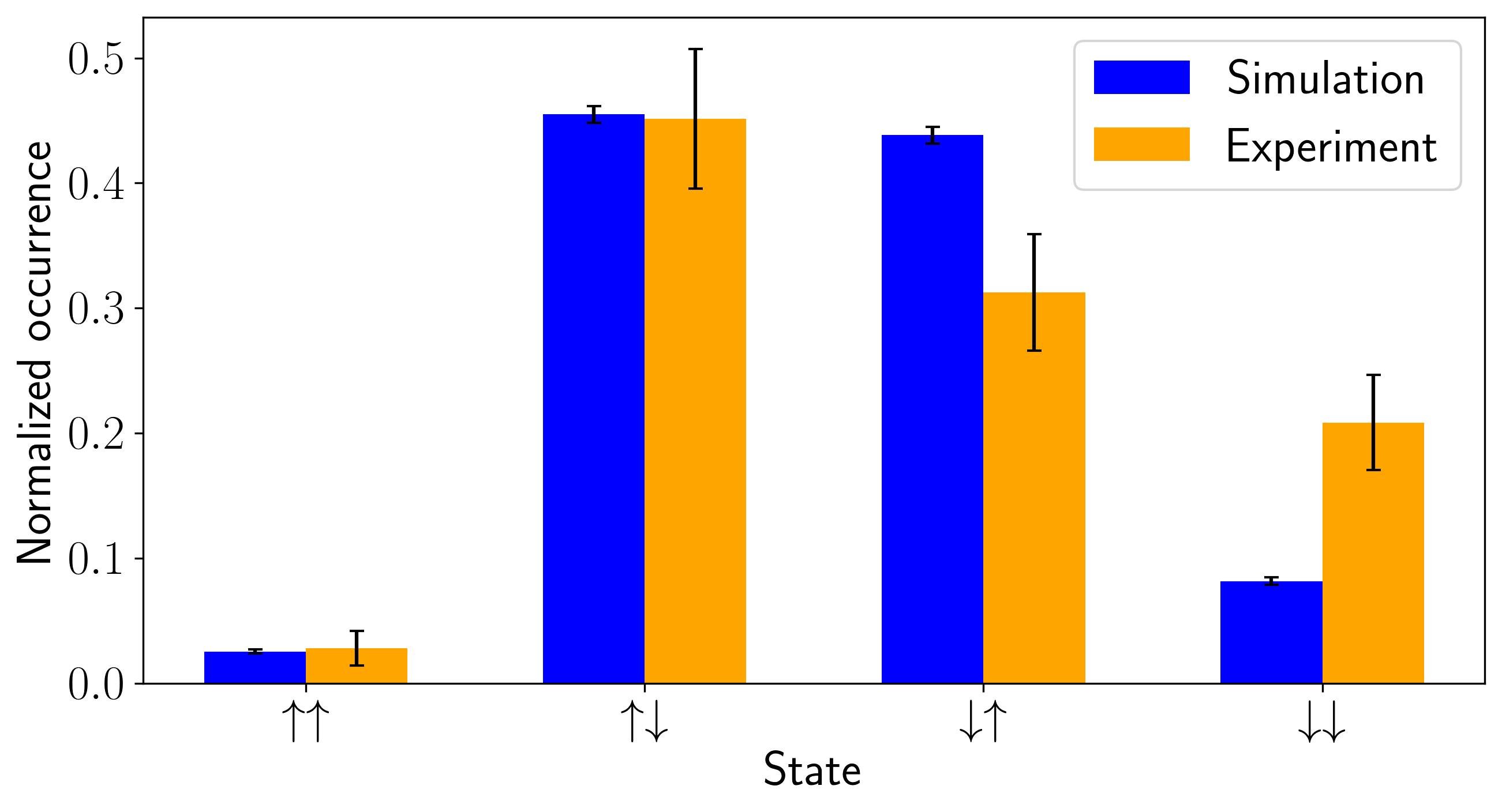}}
\subfloat[$\ket{\Psi^-}$ measured in Z basis.]{\includegraphics[width=0.45\textwidth]{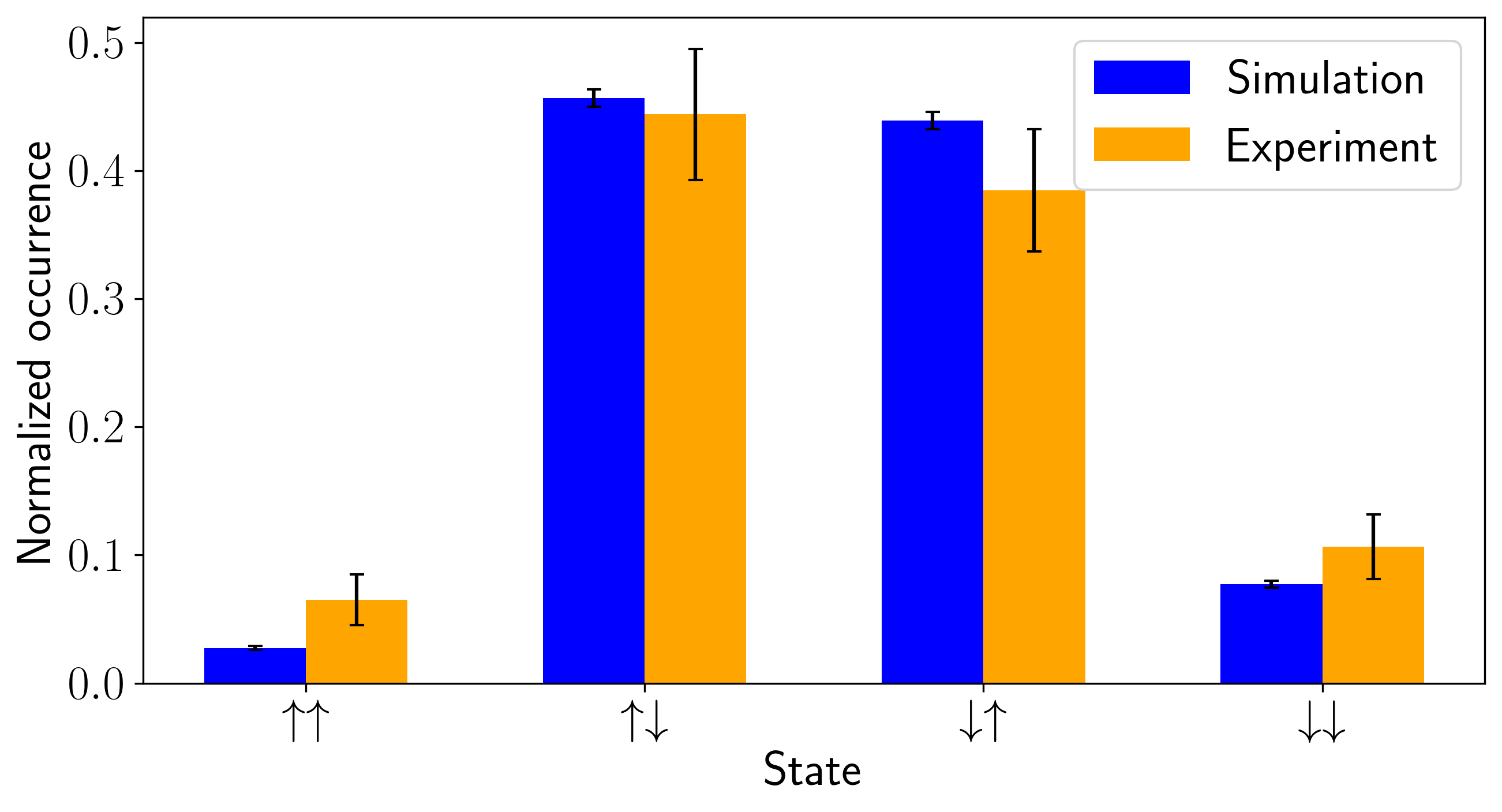}}
\caption{Comparison of measurement outcomes of the entangled state generated using the Barrett-Kok protocol in the experiment described in~\cite{bernien2013heralded} and our simulation of the same scenario.
The plots on the left (right) correspond to the case in which the state $\ket{\Psi^+} = 1/\sqrt{2}\left(\ket{01} + \ket{10}\right)$ ($\ket{\Psi^-} = 1/\sqrt{2}\left(\ket{01} - \ket{10}\right)$) is generated.
The plots above (below) show the outcomes when measuring in the X (Z) basis.
The error bars depict the standard error of the mean.
Anti-correlation of the spin states is expected for every plot except for the one in the top left, for which we expect to see a correlation.
The smaller dimension of the simulation error bars can be attributed to the number of executions of the protocol, which was of the order of 10000 per plot.
This is two orders of magnitude more than what was performed experimentally.}
\label{fig:validate_nv_double_click}
\end{figure}
The results obtained with our simulation model broadly replicate the experimental results, although they do not lie within the statistical error bars.
Overall, the simulation results are closer to the ideal case of perfect (anti-)correlation.
This can be explained by the fact that our model for the double-click states is quite simple and hardware-agnostic, ignoring noise sources such as the probability of double photon emission.
Further, the number of experimental data points is small, of the order of a total of 100 events for each of the plots in the figure.
Nonetheless, considering the simplicity of the model, we believe that the level of agreement is satisfactory.

\subsection{Trapped ions}
\label{sec:setup_ti}
In Table~\ref{tab:ti_baseline_parameters}, we present the baseline values of all trapped-ion hardware parameters relevant to our simulations, as well as references to the articles reporting their experimental demonstration.
\\

\begin{table}[!htpb]
\begin{tabular}{|c|c|c|}
\hline
Parameter                                      & Noise & Duration/Time                \\ \hline
Visibility                                     & 0.89 \cite{krutyanskiy2023}  & -                  \\ \hline
Dark count probability                              & $1.4\times 10^{-5}$ \cite{krutyanskiy2019light}& -  \\ \hline
Photon detection probability excluding attenuation losses                                      & 0.111 \cite{schupp_interface_2021, krutyanskiy2017, krutyanskiy2019light} & - \\ \hline
Ion-photon emission & $F$ = 0.99 \cite{stute_tunable_2012}  & 50 $\mu$s \cite{tirepeater, schupp_interface_2021} \\ \hline
Readout                                & $F$=0.999(0), 0.99985(1) \cite{myerson2008} &  1.5 ms \cite{krutyanskiy2023}  \\ \hline
Initialization                 & $F$=0.999 \cite{roos2006}      &  36 $\mu$s \cite{tirepeater}      \\ \hline
Z-rotation                     & $F$=0.99 \cite{tiprivate}&  26.6 $\mu$s \cite{tiprivate}              \\ \hline
M\o lmer-Sørensen gate & $F$=0.95 \cite{tirepeater} &  107 $\mu$s \cite{tirepeater}             \\ \hline
Coherence time                                   &   -   & 85 ms \cite{tirepeater}        \\ \hline 
\end{tabular}
\caption{Baseline trapped-ion hardware parameters.
A detection time window of 17.5 $\mu$s is assumed.
For the visibility, a coincidence time window of 0.5 $\mu$s is assumed (see Appendix \ref{sec:setup_ti} for further explanation).
The photon detection probability excluding attenuation losses includes a 30\% efficiency factor for quantum frequency conversion \cite{krutyanskiy2017}.
It is based on a detection efficiency of ~0.43 for a 46(1) MHz drive laser and a detection time window of 17.5 $\mu$s \cite{schupp_interface_2021}.
However, the number from \cite{schupp_interface_2021} is based on a detector efficiency of 0.87(2) for photons at 854 nm.
The detection efficiency at telecom frequency would instead be ~0.75 using superconducting nanowire detectors \cite{krutyanskiy2019light},
giving an additional conversion factor of ~0.75/0.87.
The dark count probability is based on a 0.8 Hz dark count rate for telecom superconducting nanowire detectors \cite{krutyanskiy2019light} multiplied by 17.5 $\mu$s.
The ion-photon emission fidelity has been corrected for the 1.5\% infidelity due to dark counts in \cite{stute_tunable_2012}.
The initialization duration includes time for cooling sequences and repumping (3 ms of cooling for 230 photon generation attempts on one ion, with 40 $\mu$s for repumping and optical pumping in 30 of the attempts and 20 $\mu$s in 200 of the attempts, averaging at $\sim36\mu$s per attempt \cite{tiprivate}).
}
\label{tab:ti_baseline_parameters}
\end{table}

In this work, we present for the first time a NetSquid model for trapped-ion nodes in quantum networks.
The trapped-ion nodes we model are based on the state of the art for trapped ions in a cavity, which consists of $^{40}$Ca$^+$ ions in a linear Paul trap
\cite{barrosDeterministicSinglephotonSource2009, casaboneHeraldedEntanglementTwo2013, kellerContinuousGenerationSingle2004, krutyanskiy2019light, meraner_indistinguishable_2020, schindlerQuantumInformationProcessor2013, schupp_interface_2021, stuteIonPhotonQuantum2012, stute_tunable_2012, walker2020, walker2018, zwerger2017quantum}
(we note that promising results have also been achieved for trapped ions without cavities, these systems are however not considered in this work
\cite{stephensonHighRateHighFidelityEntanglement2020, vanleent2021, crocker2019a, inlek2017, nadlinger2021}.
In our model they have all-to-all connectivity, their qubits all have the same coherence time and can all be used to generate light-matter entanglement.
However, the node can only generate entanglement with one remote node at a time.
\\

Decoherence in $n$ trapped-ion qubits is modeled through a collective Gaussian dephasing channel that has the following effect on the $n$-qubit state $\rho$ \cite{zwerger2017quantum}:
\begin{equation}
\label{eq:collective_gaussian_dephasing}
\rho \to \int_{- \infty}^\infty K_r  \rho K_r^\dagger p(r) dr,
\end{equation}
where
\begin{equation}
K_r =  \exp(-\text{i} r \frac t \tau \sum_{j=1}^n Z_j),
\end{equation}
$Z_j$ denotes a Pauli $Z$ acting on qubit $j$, $\tau$ the coherence time and $t$ the storage time, and
\begin{equation}
p(r) = \frac{1}{\sqrt{2\pi}}\text{e}^{-r^2/2}.
\end{equation} 
This can be read as follows: the qubits dephase because they undergo Z-rotations at an unknown constant rate of $-2r$ per coherence time $\tau$.
This is modeled by sampling the Gaussian distribution for the dephasing rate, $p(r)$, for each ion trap each time its state is reset.
The qubits are then time-evolved by applying unitary rotations in accordance with the sampled value for $r$.
The baseline value $\tau = 85$ ms included in Table \ref{tab:ti_baseline_parameters} is obtained from \cite{tirepeater}.
However, the value for the coherence time reported there is $62 \pm 3$ ms.
The reason for this discrepancy is a difference in convention.
To see this, we can evaluate Equation \eqref{eq:collective_gaussian_dephasing} for $n=1$, i.e., for a single qubit.
In that case, we find
\begin{equation}
\rho \to \lambda \rho + (1 - \lambda) Z \rho Z,
\end{equation}
where
\begin{equation}
\lambda = \frac 1 2 \left( 1 - \text{e}^{- 2 \left( \frac t \tau \right)^2} \right).
\end{equation}
The single-qubit dephasing model used in \cite{tirepeater} instead has
\begin{equation}
\rho \to \lambda' \rho + (1 - \lambda') Z \rho Z,
\end{equation}
where
\begin{equation}
\lambda' = \frac 1 2 \left( 1 - \text{e}^{- \left( \frac t {\tau'} \right)^2} \right).
\end{equation}
Here, $\tau'$ is the coherence time in their model.
The models are exactly equivalent for $\tau = \sqrt 2 \tau'$.
Therefore, the reported value $\tau' = 62 \pm 3$ ms corresponds to $\tau = 88 \pm 4$ ms.
The value we use, $\tau = 85$ ms, represents a conservative interpretation of the result presented in \cite{tirepeater}.
Our model for the storage of quantum states in ionic qubits has been validated against experimental data from \cite{tirepeater}.
In this experiment, ion-photon entanglement is created with one ion in a two-ion device.
Next, ion-photon entanglement is created with the other ion every 330$\mu$s.
Our simulation results are compared to the experimental results in Figure \ref{fig:ti_memory}.
\\
\begin{figure}[!ht]
\centering
\includegraphics[width=0.6\columnwidth]{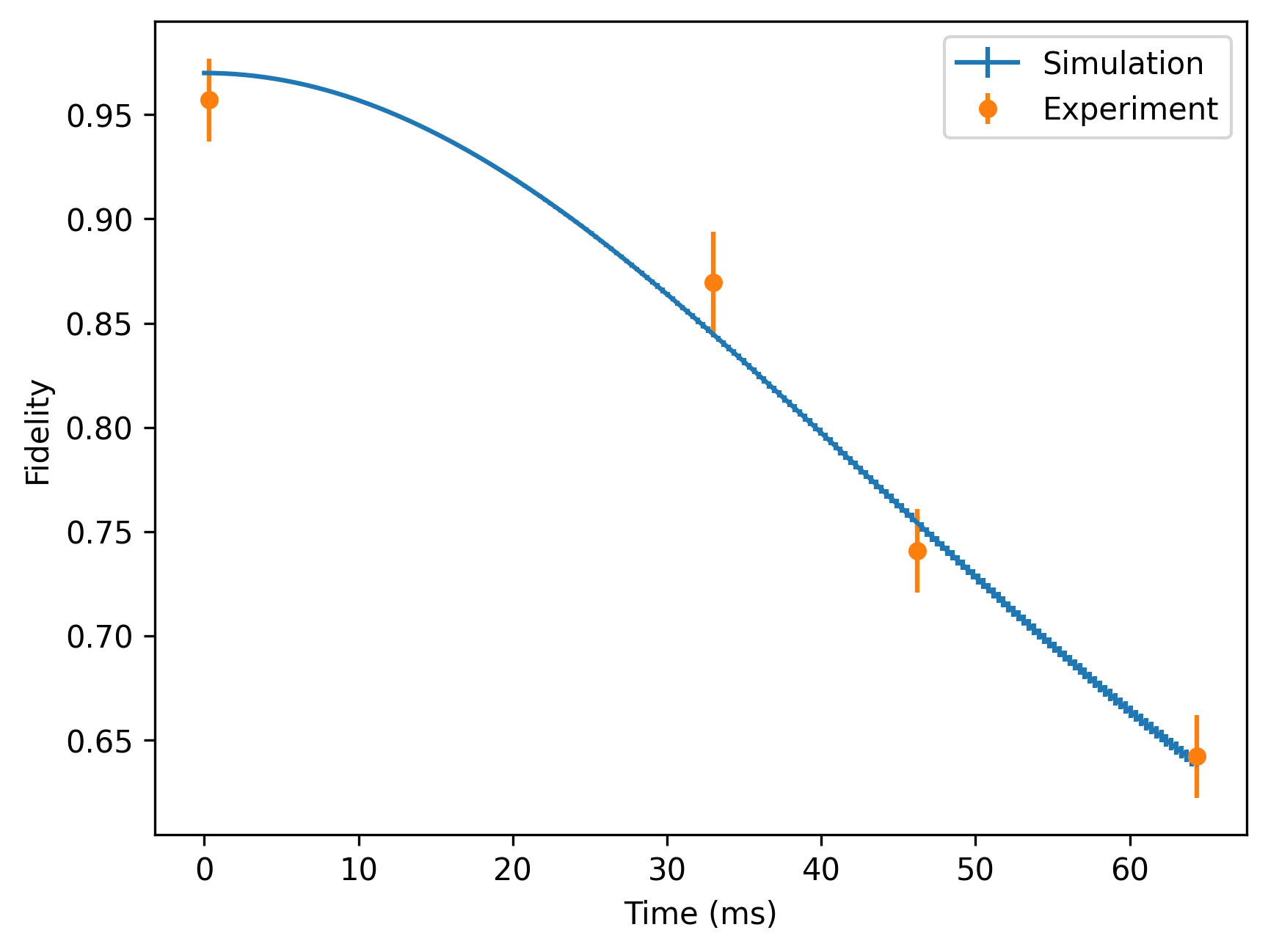}
\caption{
Validation of our trapped-ion decoherence model against an experiment in \cite{tirepeater}.
In the experiment, a trap with two ions first emits a photon entangled with the first ion, and then keeps emitting new photons entangled with the second ion every 330 $\mu$s.
The figure shows the evolution of the fidelity to the perfect Bell state of the state shared by the first ion and the photon entangled to it as a function of time.
Error bars of the simulation results represent the standard error of the mean and are sometimes hard to distinguish because of their size.
The simulation has been conducted using the baseline coherence time $\tau = 85$ ms, which is the value obtained from \cite{tirepeater}.
The ion-photon emission fidelity has been set to $F=0.97$ to tune the fidelity at time zero such that good agreement between the simulation and the experiment is obtained.
All other parameters have been set to their perfect values.
}
\label{fig:ti_memory}
\end{figure}

The entangling gate available to the trapped-ion qubits as we model them is the Mølmer-Sørensen gate~\cite{molmer1999multiparticle}.
The gate set also includes arbitrary single-qubit Z-rotations and collective rotations around a tunable axis in the XY plane~\cite{schindlerQuantumInformationProcessor2013}.
The Bell-state-measurement circuit is implemented as a Z-rotation of angle $\pi/4$ for one qubit and -$\pi/4$ for the other, a Mølmer-Sørensen gate and a measurement of both qubits in the computational basis.
All gates are modeled as a perfect gate followed by depolarizing channels on all partaking qubits.
\\

Just as for color centers, entanglement generation through the Barrett-Kok protocol is modeled using the model introduced in Appendix \ref{appendix:sec_double_click_model}.
A difference with color centers, however, is that the photons emitted by ions are typically temporally impure due to off-resonant scattering \cite{meraner_indistinguishable_2020}, resulting in low Hong-Ou-Mandel visibility and hence entangled-state fidelity.
This can be counteracted by using a stringent detection time window and by imposing a coincidence time window.
A click pattern is then only heralded as a success in case both clicks fall within the detection time window and the time between the two clicks does not exceed the coincidence time window.
The detection time window and coincidence time window can be tuned to increase the visibility, but at the cost of having a smaller success probability.
In order to account for the effect of the detection time window, we employ a toy model for the temporal state of photons emitted from trapped-ion devices.
This toy model does not accurately represent the true state of the emitted photons, but as we show in Figure \ref{fig:ti_coincidence_time}, it can be used to capture the trade-off between success probability and visibility well.
In this toy model, we model photons as mixtures of pure photons emitted at different times.
The pure photons have one-sided exponential wavefunctions, and the emission time is also distributed according to a one-sided exponential.
Under these assumptions, the detection probability, coincidence probability and visibility can all be exactly calculated as a function of the in total two parameters that describe these two exponentials.
These calculations are performed in Appendix \ref{app:time_windows}, and the results can be used in conjunction with the model in Appendix \ref{appendix:sec_double_click_model} to calculate the success probability and state.
To show that this model can be used to capture the success probability and visibility with good accuracy, we have performed a joint least-square procedure for the detection-time probability density function, the coincidence probability and the visibility to match the two free parameters to the data presented in \cite{meraner_indistinguishable_2020}.
This data has been produced by emitting two photons from the same trapped-ion device, frequency converting these photons, and then making them interfere.
In Figure \ref{fig:ti_coincidence_time} (a), we show the resulting theoretical results and compare them to the experimental results.
\\

\begin{figure}[!ht]
\subfloat[][]{
\includegraphics[width=0.49\textwidth]{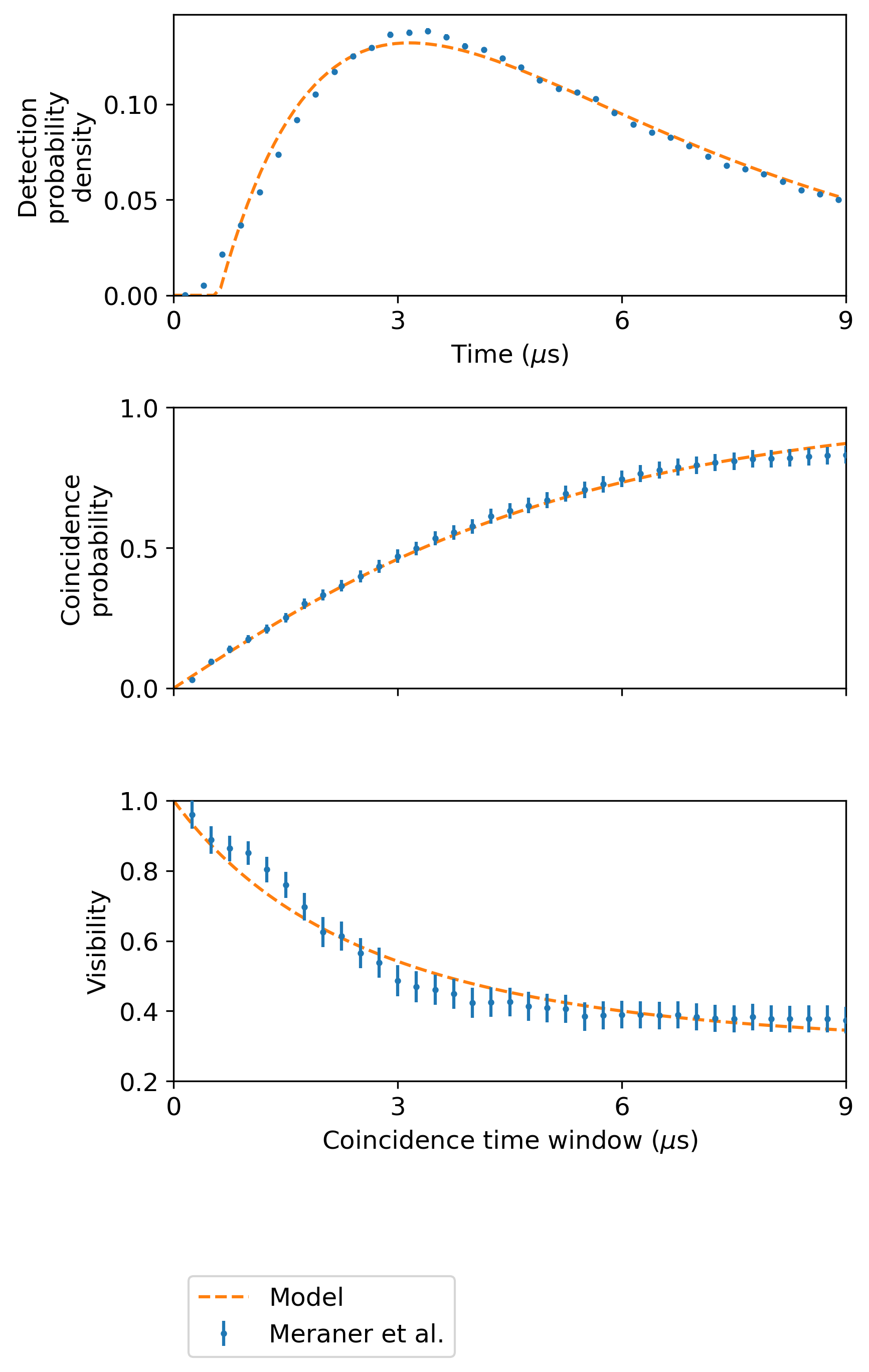}
}
\subfloat[][]{
\includegraphics[width=0.49\textwidth]{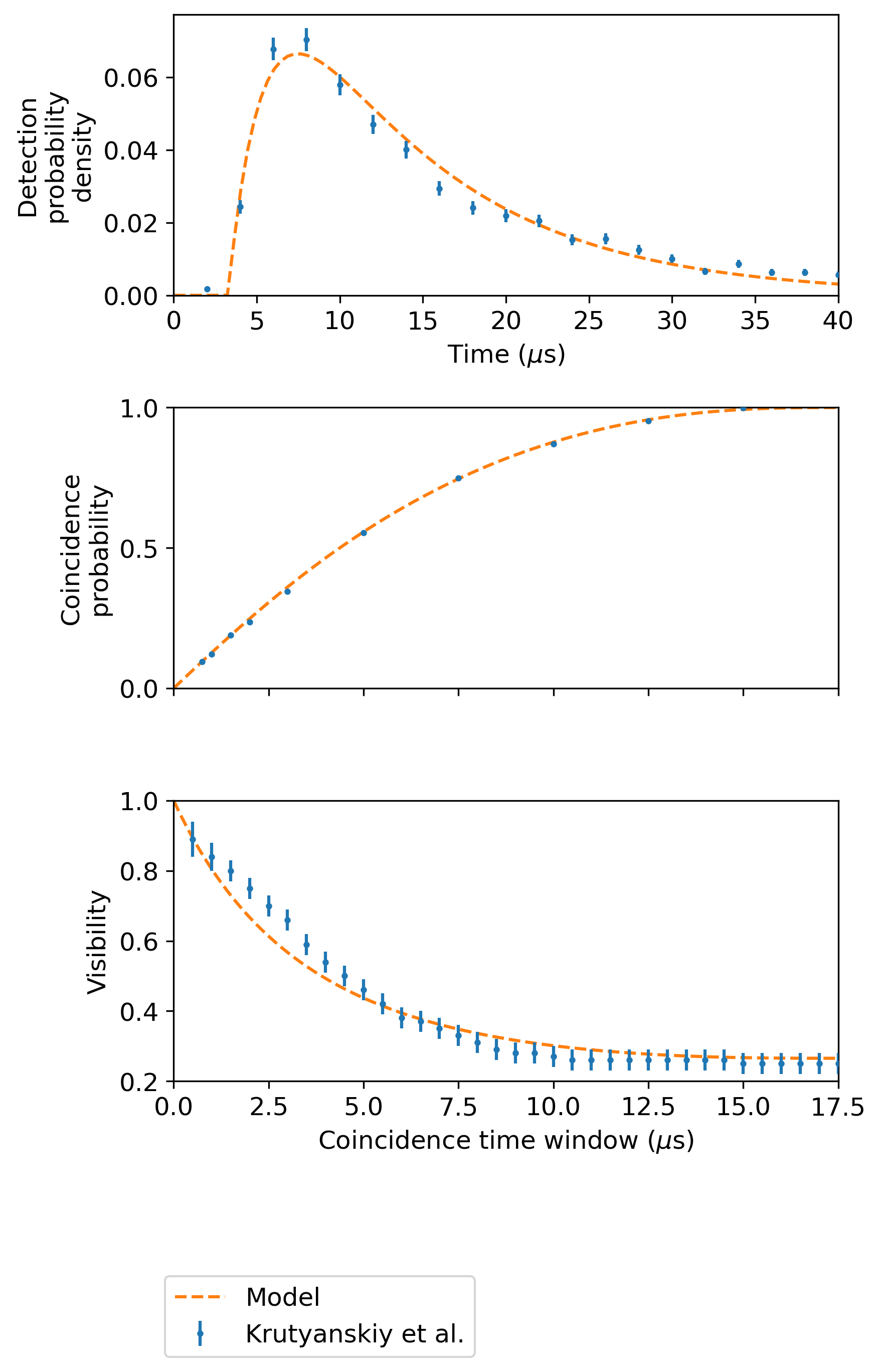}
}
\centering
\caption{
Comparison between data from two different experiments and the toy model introduced in Appendix \ref{app:time_windows}.
In both experiments, the detection probability density and coincidence probability were not conditioned on the successful detection of two photons.
To account for this, we have multiplied both the detection-probability-density data and the coincidence-probability-density data by a different overall scaling factor.
Both the scaling factors, the parameters of the two exponentials describing the photon state and an offset for the detection probability density have been determined using a least-squares procedure.
The least-squares procedure has been performed jointly for the three data sets corresponding to the same experiment by summing the square errors of all three.
Here, the largest weight has been given to the detection probability density ($10^6$), the second largest to the visibility ($10^5$), and the smallest to the coincidence probability (1).
\textbf{(a)} Comparison to data from Meraner et al. \cite{meraner_indistinguishable_2020}.
In the experiment, the Rabi pulse was terminated after approximately 9 $\mu$s, therefore we have only compared the first nine $\mu$s.
Because of the terminated pulse the detection probability density falls to zero at approximately 12 $\mu$s.
Therefore, effectively the entire wave packet is detected.
To reproduce this in our model, we have not implemented a detection time window (or equivalently, have set the detection time window to infinite).
The fitted half-life times of the exponentials representing the wave function and emission time are 2.40 $\mu$s and 2.76 $\mu$s respectively.
\textbf{(b)} Comparison to data from Krutyanskiy et al. \cite{krutyanskiy2023}
We base the modeling for the visibility and coincidence probability of ion traps in this paper on the fit shown here.
A detection time window of 17.5 $\mu$s was used in the experiment.
The detection-probability-density data used here corresponds to ``node A'' from \cite{krutyanskiy2023}.
The fitted half-life times of the exponentials representing the wave function and emission time are 3.01 $\mu$s and 6.79 $\mu$s respectively.
}
\label{fig:ti_coincidence_time}
\end{figure}

Instead of basing the parameters we use in our simulations on \cite{meraner_indistinguishable_2020}, we base them on data for the interference of photons emitted by two distinct ion traps \cite{krutyanskiy2023}, as this more accurately represents the scenario we investigate in this study.
We determine again the two parameters that describe the two exponentials by fitting to the data using the exact same method as above.
The half-life time of the fitted exponentials representing the wave function and the emission time were found to be 3.01 $\mu$s and 6.79 $\mu$s respectively, with the fits and the data shown in Figure \ref{fig:ti_coincidence_time} (b).
This data has been taken using a detection time window of 17.5 $\mu$s.
Therefore, for consistency, we use a fixed detection time window of 17.5 $\mu$s throughout our simulations, and hence the parameters shown in Table \ref{tab:ti_baseline_parameters} (such as, e.g., the photon detection probability excluding attenuation losses and the dark count probability) all assume a detection time window of 17.5 $\mu$s.
On the other hand, the coincidence time window is treated as a freely tunable parameter, allowing for a trade-off between rate and fidelity.
The value for the visibility reported in Tables \ref{tab:baseline_parameters} and \ref{tab:ti_baseline_parameters} was obtained from the model in Appendix \ref{app:time_windows} using the fitted parameters reported above, a detection time window of 17.5 $\mu$s and a coincidence time window of 0.5 $\mu$s.
\\

We note that \cite{meraner_indistinguishable_2020} includes a physically-motivated theoretical model for the trade-off between coincidence probability and visibility as a function of the detection and coincidence time windows.
We have not used their model here as it requires numerical integration to evaluate, while our model can be rapidly evaluated using an analytical closed-form expression.
Additionally, our goal here is not to predict the behaviour of a specific physical system but to accurately represent the trade-off between rate and fidelity without overfitting to experimental data.
Finally, as our toy model does not attempt to closely capture the physics of any individual system, it can be considered to be system agnostic.
It could thus be fitted to different types of photon sources, giving it a potentially broader scope of application.

\subsection{Abstract nodes}
The purpose of the abstract nodes is to provide a general model for processing nodes.
Therefore, their modeling is kept simple and platform-agnostic: there is all-to-all connectivity between the qubits, all of them can be used to generate light-matter entanglement, they all have the same coherence time properties and all quantum gates are available.
The Bell-state measurement circuit implemented by abstract nodes is the usual one: a controlled-NOT gate, followed by a Hadamard on the control qubit and a measurement of both qubits in the computational basis.
We note that this model and its NetSquid implementation are not novel, having first been introduced in~\cite{da2021optimizing}. 
\\

\label{appendix:sec_abstract_model}
In order to quantify the level of accuracy that is sacrificed by considering a model with a higher degree of abstraction, we compare the performance of a single abstract-node repeater in the Delft-Eindhoven path to the equivalent color center and trapped ion setups.
To do so, we require a method of converting hardware parameters from the more in-depth models to the abstract model.
We therefore start by introducing this mapping.

\subsubsection{Color center to abstract model mapping}\label{sec:map_nv_abstract}
The emission fidelity, visibility, dark count probability and probability of photon detection excluding attenuation losses are mapped without change from the color center model to the abstract model.
An entanglement swap in an NV platform consists of one-qubit gates on both carbon and electron, two-qubit gates and measurement and initialization of the electron (see Figure 17 in Supplementary Note 5 of~\cite{coopmans2021netsquid} for an image of the circuit).
Imperfections in gates and initialization are modelled by depolarizing channels in the NV model, while the measurement error is modelled by probabilistic bit flips.
In mapping to the abstract model we approximate the measurement error as a depolarizing channel.
All the errors associated to the operations in the circuit are then multiplied together to obtain a single parameter $s_{\text{q}}$.
$1-s_{\text{q}}$ is used to parameterize a depolarizing channel applied after a perfect Bell-state measurement.
The action of this depolarizing channel on a given state $\rho$ as a function of $s_{\text{q}}$ is given by Equation~\eqref{eq:depolarize_swap_quality}, from which we can see that $s_{\text{q}}$ is a measure of the quality of an entanglement swap.
\begin{equation}
    \phi(\rho, s_{\text{q}}) = \left(\frac{1 + 3s_{\text{q}}}{4}\right)\rho + \frac{1 - s_{\text{q}}}{4}(X\rho X + Y\rho Y+ Z\rho Z).
\label{eq:depolarize_swap_quality}
\end{equation}
In our color center model, the coherence times of the carbon spins are different from those of the electron spin.
This subtlety is lost in the abstract model, where we take the coherence time of all qubits to be the same as the carbon spins'. 
Other dephasing processes such as induced dephasing~\cite{kalb2018dephasing}, which are present in our color center model, are ignored in the abstract model.
In Table~\ref{tab:abstract_from_nv_baseline_parameters}, we present the abstract model parameters obtained from the color center baseline hardware parameters as shown in Table~\ref{tab:nv_baseline_parameters}.
\\

\begin{table}[!htpb]
\begin{tabular}{|c|c|c|}
\hline
Parameter                                      & Noise & Duration/Time                \\ \hline
Visibility                                     & 0.9 & -                  \\ \hline
Dark count probability                              & $1.5\times 10^{-7}$ & -  \\ \hline
Photon detection probability excluding attenuation losses                                      & $5.1 \times 10^{-4}$ & - \\ \hline
Spin-photon emission & $F$ = 1 & 3.8 $\mu$s \\ \hline
Swap quality                                & 0.83 &  503.7 $\mu$s  \\ \hline
$T_1$                                      & -    &   10 hours      \\ \hline
$T_2$                                      & -         &  1 s       \\ \hline
\end{tabular}
\caption{Baseline abstract model hardware parameters mapped from color center baseline shown in Table~\ref{tab:nv_baseline_parameters}.}
\label{tab:abstract_from_nv_baseline_parameters}
\end{table}

Having introduced the process by which we map color center parameters to the abstract model, we now proceed with the results of validating the abstract model against the NV model.
To do so, the following steps were taken: (i) define the values of the baseline hardware parameters for the more in-depth model and map them to the abstract model following the procedure described above, thus obtaining the corresponding abstract model baseline, (ii) run the simulation, (iii) improve both baselines using the improvement factor technique introduced in Section~\ref{sec:methods} and (iv) repeat steps (ii) and (iii) for improvement factors in the desired range. 
\\

This analysis is done both for single and double-click entanglement generation, as we simulated color center repeaters running both protocols.
\subsubsection{Color center validation}\label{sec:validate_nv_abstract}
In Figure~\ref{fig:validate_abstract_nv} we show the results of the validation for the abstract model against the NV model, for single-click (top) and double-click (bottom) entanglement generation.
\\
\begin{figure}[!htpb]
    \centering
    \subfloat[\centering Single-click.]{{\includegraphics[width=\columnwidth]{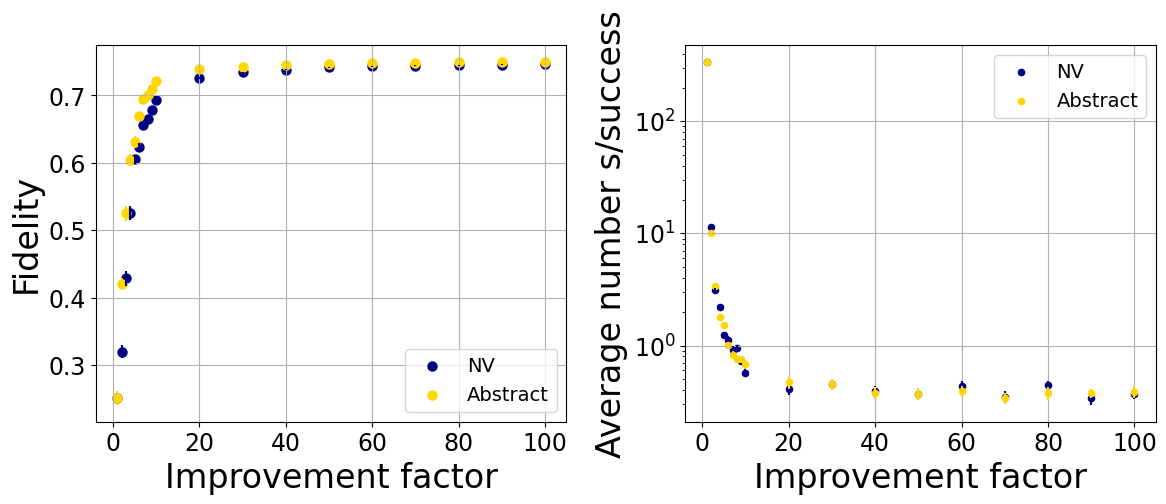}}}
    \qquad
    \subfloat[\centering Double-click.]{{\includegraphics[width=\columnwidth]{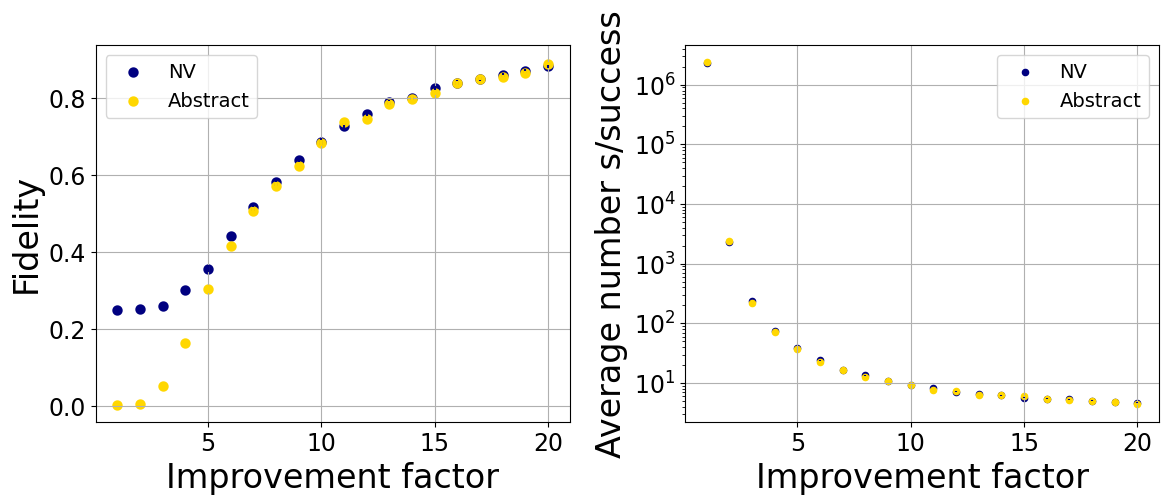}}}
    \caption{Performance of color center nodes and abstract nodes on the Delft - Eindhoven setup, with single-click (top) and double-click (bottom) entanglement generation.
The leftmost point on both plots corresponds to the baseline hardware values.
The points to the right were obtained by uniformly improving the hardware over this baseline.
The error bars represent the standard error of the mean and are often smaller than the markers.
``Average number s/success'' is the average number of seconds per entangled pair that is succesfully distributed.
}
    \label{fig:validate_abstract_nv}
\end{figure}

The agreement is similar for both protocols.
The rate of entanglement generation, shown on the plots on the right side, is identical for both models.
The only source of difference timing-wise is in how long it takes to perform an entanglement swap, with color center taking slightly longer due to its more complex circuit.
However, the low success probability of generating entanglement means that many attempts are required, rendering the time devoted to local operations negligible.
Since the time taken per entanglement generation attempt is equal in both models, it is to be expected that the rate is identical.
\\

For small improvement factors, there is a sizeable gap in the average teleportation fidelity achievable in each model, as shown on the plots on the left side.
This fidelity is significantly larger for the abstract model.
We conjecture that this is due to sources of noise that are present in the NV model but not in the abstract model.
These include induced dephasing noise, probability of double photon excitation and deviations in interferometric phase, the last two being single-click specific.
As parameters improve, the magnitude of these noise sources drops, and so does the gap between the fidelity achieved by the two setups.
\\

Overall, the abstract model captures the behavior of the more in-depth NV model reasonably well.
However, it does result in a more optimistic picture regarding the parameter quality required to achieve certain fidelity targets.
For example, in the abstract model with double-click entanglement generation, an improvement factor of $5$ suffices to reach an average teleportation fidelity of $0.7$.
This same target requires an improvement factor of $7$ in the NV model.
This supports the need for detailed hardware models, which take platform-specific limitations and sources of noise into account.
\subsubsection{Trapped ion to abstract model mapping}
The visibility, dark count probability, photon detection probability excluding attenuation losses and spin-photon emission parameters are mapped without change from the trapped ion model to the abstract model.
The process by which the swap quality parameter is obtained is identical to the one described in Appendix~\ref{sec:map_nv_abstract}.
There is a notable difference between how memory decoherence is accounted for in the two models.
In our trapped ion model, states stored in memory suffer from collective dephasing and no relaxation is considered.
Our abstract model, on the other hand, considers a $T_1$, $T_2$ memory noise model, as empirically it has been found to fit well to a large variety of physical systems.
When mapping from trapped ion parameters to abstract model parameters, we take the abstract model's $T_2$ to be given by the collective dephasing coherence time of the trapped ion and we set $T_1$ to infinity, i.e. we consider no relaxation in the abstract model.
The collective dephasing affecting trapped ion qubits follows a Gaussian shape, whereas the dephasing in the abstract model follows a simple exponential.
In Table~\ref{tab:abstract_from_ti_baseline_parameters}, we present the abstract model parameters obtained from the trapped ion baseline hardware parameters as shown in Table~\ref{tab:ti_baseline_parameters}.
\begin{table}[!htpb]
\begin{tabular}{|c|c|c|}
\hline
Parameter                                      & Noise & Duration/Time                \\ \hline
Visibility                                     & 0.89 & -                  \\ \hline
Dark count probability                              & $1.5\times 10^{-5}$ & -  \\ \hline
Photon detection probability excluding attenuation losses                                      & 0.0288 & - \\ \hline
Spin-photon emission & $F$ = 0.99 & 50 $\mu$s \\ \hline
Swap quality                                & 0.94 &  1.91 ms \\ \hline
$T_1$                                      & -    &   -      \\ \hline
$T_2$                                      & -         &  6 ms       \\ \hline
\end{tabular}
\caption{Baseline abstract model hardware parameters mapped from trapped ion baseline shown in Table~\ref{tab:ti_baseline_parameters}.}
\label{tab:abstract_from_ti_baseline_parameters}
\end{table}

\subsubsection{Trapped ion validation}
In this section, we investigate how well the simpler abstract model captures the behavior of the trapped ion model.
We do this considering only double-click entanglement generation, as this was the only entanglement generation protocol we considered when performing trapped ion simulations.
\\

\begin{figure}[!htpb]
\centering
\includegraphics[width=\columnwidth]{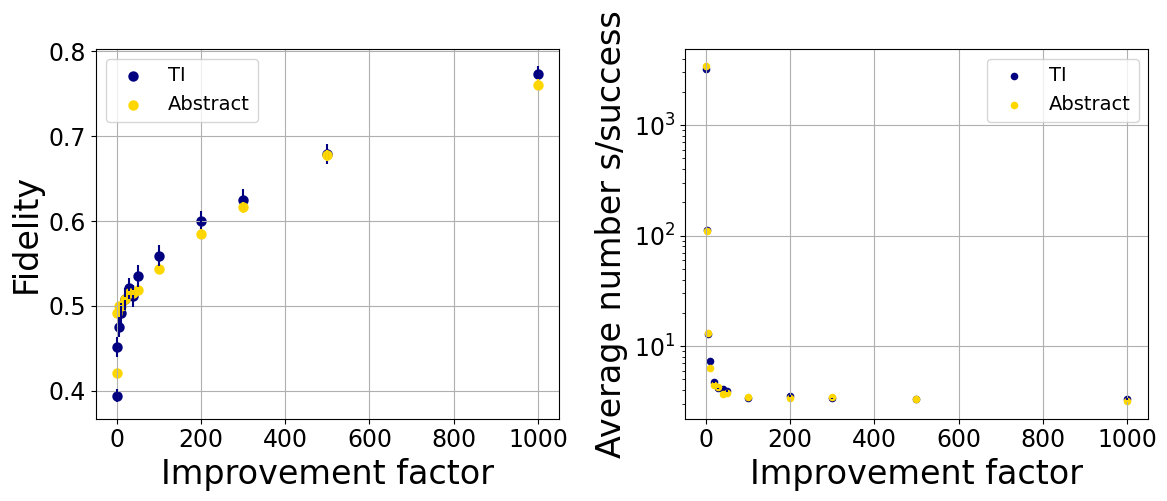}
\caption{Performance of trapped ion nodes and abstract nodes on the Delft - Eindhoven setup, double-click entanglement generation.
The leftmost point on both plots corresponds to the baseline hardware values.
The points to the right were obtained by uniformly improving the hardware over this baseline.
The error bars represent the standard error of the mean.}
\label{fig:validate_abstract_ti}
\end{figure}

In Figure~\ref{fig:validate_abstract_ti} we show the results of the validation of the abstract model against the trapped ion model.
The agreement in terms of the entanglement generation rate is perfect, with the rates overlapping for all values of the improvement factor.
This is to be expected, since the end-to-end entanglement generation time is dominated by the time spent attempting to generate elementary links, and each attempt takes the same amount of time in both models.
The average teleportation fidelity follows the same trend for both models, starting at very low values for current hardware parameters and quickly rising as hardware parameters are improved.
We note that for low improvement factors, the abstract model achieves a higher fidelity.
The opposite seems to be true for high improvement factors, although there the difference is small and does not exceed one error bar.
This can be explained by the Gaussian nature of the trapped ion dephasing.
In the trapped ion model, the probability of a state stored in memory dephasing over a given period of time $t$ is $1 - e^{-t^2/T^2}$, with $T$ being the coherence time.
In the abstract model, this probability is $1 - e^{-t/T}$.
This means that for $t/T < 1$, the probability of error for trapped ions is smaller, while the opposite is true for $t/T > 1$.
At low values of the improvement factor, the success probability of entanglement generation is small, as are coherence times.
Therefore, the time a state is expected to stay in memory is likely larger than the coherence time, and we expect that the error rate is higher in the trapped ion model.
As parameters improve, it becomes more likely that states remain in memory for periods of time smaller than the coherence time, which is the regime in which the error rate is higher in the abstract model.
This in line with what is observed in Figure~\ref{fig:validate_abstract_ti}.
Overall, the agreement is better than what was observed in Appendix~\ref{sec:validate_nv_abstract}.
There, owing to noise sources present in the color center model that were ignored in the abstract model, the latter performed better than the former.
No noise sources were ignored when mapping from the trapped ion model to the abstract model, so this better agreement was to be expected.
We conclude that the abstract model captures the behavior of the more detailed trapped ion model almost perfectly in the setup we considered.

\subsection{Entanglement generation}
\label{sec:setup_ent_gen}
For near-term parameters, the success probability of entanglement generation is very low.
This means that many entanglement generation attempts are required, and that a simulation of this process would spend most of its time simulating failed attempts.
This is computationally very inefficient, so we instead perform entangled state insertion, through a process we call \textit{magic}~\cite{netsquid-magic}.
This process was first introduced in \cite{dahlberg2019link}.
\\

Magic works as follows: once two nodes have decided to generate entanglement together, we sample from a geometric distribution in order to determine how many attempts would have been required to succeed.
The success probability of this geometric distribution is limited by the product of the probabilities of emitting the photon in the correct mode, capturing it into the fiber, frequency-converting it, transmitting it through the fiber and detecting it at the detector.
Furthermore, imperfections such as the imperfect indistinguishability of interfering photons and detector dark counts also impact the success probability.
Their effect depends on whether a single-click or double-click protocol is used.
\\

The elapsed time for the entanglement generation process is given by the product of the sampled number of required attempts and the duration of one attempt, which is in turn given by the sum of the emission time and the photon travel time.
\\

The state generated is given by an analytical model which is different for single and double-click entanglement generation.
For more details, see Appendix~\ref{appendix:sec_double_click_model}.

\section{Target metric} \label{app:target_metric}

In this appendix, we explain the target metric used in this paper.
As discussed in the main text, there are two conditions on end-to-end entanglement distribution that define the target.
The first is on the average fidelity with which qubits can be teleported using the generated entangled states,
and the second is on the rate at which such states are generated.
The target values for the teleportation fidelity and entangling rate are chosen such that the quantum link would be able to support Verifiable Blind Quantum Computation (VBQC) \cite{leichtle2021verifying} when the server consists of a powerful quantum computer with a coherence time of 100 seconds.
We show that if the targets are met, the client would be able to execute VBQC by preparing states at the powerful quantum computer using either quantum teleportation or remote state preparation
(for remote state preparation, see Appendix \ref{sec:RSP}).
\\

The following results presented in this appendix are novel:
\begin{itemize}
\item
the constraint equation that, when solved, guarantees VBQC is feasible (Theorems \ref{thm:vbqc_constraint} and \ref{thm:rsp_works});
\item
the extension of the noise robustness theorem in \cite{leichtle2021verifying} to guarantee that VBQC is feasible when the \emph{average} error probability can be bounded instead of the \emph{maximum} error probability, assuming that the error probabilities across different rounds are independent and identically distributed (Theorem~\ref{thm:vbqc_relaxation} and Appendix~\ref{app:noise_robustness});
\item
a modified version of the VBQC protocol \cite{leichtle2021verifying} that is based on remote state preparation instead of qubit transmission (Protocol \ref{prot:vbqc_with_rsp}) and a proof that, in the absence of local noise, it is equivalent to the original protocol where some effective quantum channel is used for qubit transmission (thereby guaranteeing that the correctness of the original protocol is inherited; we note that we have not otherwise investigated the security of this protocol) (Theorem \ref{thm:rsp_channel}).
\end{itemize}

\subsection{Teleportation fidelity}
\label{sec:appendix_teleportation_fidelity}
We consider the following quantum-teleportation protocol \cite{leichtle2021verifying}.
A one-qubit information state $\rho$ is teleported using a two-qubit resource state $\sigma$ shared by two parties.
A Bell-state measurement is performed between the qubit holding the information state and one of the qubits in the resource state.
If the outcome of the measurement corresponds to Bell state
\begin{equation}
\ket{\Phi_{ij}} \equiv X^iZ^j \ket {\Phi^+}
\end{equation}
with $\ket {\Phi^+} = \tfrac 1 {\sqrt 2} (\ket{00} + \ket{11})$ then the Pauli correction $X^i Z^j$ is performed on the one remaining qubit.
Executing this protocol results in transmitting the information state through the teleportation channel $\Lambda_\sigma$.
\begin{definition} \label{def:teleportation_channel}
Teleportation channel.
The teleportation channel associated with the two-qubit state $\sigma$ is given by the single-qubit quantum channel
\begin{equation}
\Lambda_\sigma\Big(\rho\Big) \equiv \sum_{i,j} \left( X^iZ^j \otimes \bra{\Phi_{ij}} \right) \left( \sigma \otimes \rho \right) \left( X^iZ^j \otimes \ket{\Phi_{ij}} \right).
\end{equation}
\end{definition}
We note that if $\sigma = \ketbra{\Phi_{00}}$ then $\Lambda_\sigma$ is the identity map.
The average teleportation fidelity corresponding to the resource state $\sigma$ is given by $F_\textnormal{tel}(\sigma)$.
\begin{definition}
Average teleportation fidelity.
The average teleportation fidelity associated with the two-qubit state $\sigma$ is given by
\begin{equation} \label{eq:F_tel}
F_\textnormal{tel}(\sigma) \equiv \int_{\psi} \expectationvalue{\Lambda_\sigma(\ketbra{\psi})}{\psi} d \psi,
\end{equation}
where the integral is over the Haar measure.
\end{definition}
We note that by the Haar measure, we here mean the uniform measure over single-qubit quantum states, i.e. the uniform measure on the unit sphere in $\mathcal C^2$.
It is the unique measure that is invariant under unitary transformations \cite{watrous2018}.
\\

Finally, we note that if the sender and receiver agree on a unitary $U$,
then teleportation can also be executed as follows.
First, the sender applies $U$ to the information state.
Second, the sender teleports the resulting information state to the receiver.
Last, the receiver applies the unitary $U^\dagger$ to undo the original unitary and obtain the information state.
The qubit is then transmitted through a rotated teleportation channel.
\begin{definition} \label{def:rotated_teleportation_channel}
Rotated teleportation channel.
The rotated teleportation channel associated with the two-qubit state $\sigma$ and the unitary $U$ is given by
\begin{equation}
\Lambda_{\sigma, U}(\rho) = U^\dagger \Lambda_\sigma(U \rho U^\dagger) U
\end{equation}
\end{definition}
We remark that the average teleportation fidelity is not affected by the introduction of the unitary $U$ because of the invariance of the Haar measure, i.e.
\begin{equation}
F_\textnormal{tel}(\sigma) = \int_{\psi} d \psi \expectationvalue{\Lambda_{\sigma,U}(\ketbra{\psi})}{\psi},
\end{equation}
Using a unitary to turn a teleportation channel into a rotated teleportation channel can be advantageous when not every state on the Bloch sphere needs to be transmitted with equal fidelity, and $\sigma$ is such that not all states can be transmitted with equal fidelity.
By applying the unitary $U$, the Bloch sphere can potentially be rotated in such a way to make states for which high-fidelity transmission is desirable coincide with states that can be transmitted at high fidelity.

\subsection{Requirements from VBQC}

We consider the scenario where two nodes are connected using the one-repeater quantum connection studied in this work.
These two nodes use the entanglement generated by this quantum connection to perform VBQC.
Specifically, the first node (the client) utilizes VBQC to execute a two-qubit computation on the quantum processor of the second node (the server) in a verified and blind fashion.
It is assumed that the server is able to execute gates without noise and has a coherence time of 100 seconds.
Our target metric is chosen such that it guarantees that the quantum connection is able to support this protocol.
\\

A single round of the VBQC protocol involves the preparation of two qubits by the client at the server,
and the execution of a series of quantum gates and measurements on those qubits by the server.
The client can use the remote-state-preparation protocol \cite{bennet2001remote} to use one entangled state to prepare one qubit at the server.
Some rounds are computation rounds, the results of which are sent classically by the server to the client.
All other rounds are test rounds.
In a test round, some of the qubits transmitted to the server are traps;
if the server tries to measure these qubits or performs another operation than the one specified by the client, this will become apparent from the returned computation results.
However, tampering by the server is indistinguishable from noise.
Only if noise is within certain bounds can the protocol be performed successfully.
\\

This defines minimum requirements on the quantum connection used by the client to prepare the qubits at the server.
First, the fidelity at which states can be prepared needs to be large enough.
Second, the rate at which they can be prepared needs to be large enough as well.
The reason for this is that after the first qubit is prepared at the server, it will undergo memory decoherence while waiting for the second qubit to be prepared.
\\

Specifically, we consider the case of depolarizing memory.
\begin{definition} \label{def:depolar_mem}
Depolarizing memory.
If a single-qubit quantum state $\rho$ is stored in a depolarizing memory with coherence time $T$ for a time $t$,
it is subjected to a depolarizing channel
\begin{equation}
\mathcal D_\textnormal{p}(\rho) = p \rho + (1 - p) \frac{\mathbf 1} 2
\end{equation}
where the depolarizing parameter $p$ is given by
\begin{equation}
p = e^{- \frac t T}.
\end{equation}
\end{definition}
The minimum requirements are then defined by the following theorem.
\begin{theorem} \label{thm:vbqc_constraint}
Requirements on entanglement generation for VBQC.
Assume a quantum link generates the two-qubit state $\sigma$ between a client and a server with average rate $R$, and that the distribution times are independent and identically distributed.
Furthermore, assume that qubits at the server are stored in a depolarizing memory with coherence time $T$. 
Lastly, assume that all local operations are noiseless and instantaneous.
If the client prepares qubits at the server using the rotated teleportation channel $\Lambda_{\sigma, U}$ for some unitary $U$,
then a unitary $U$ exists such that the VBQC protocol proposed in \cite{leichtle2021verifying}, for a two-qubit deterministic quantum computation, can be executed in a way that is composably secure with exponentially small $\epsilon$ if
\begin{equation} \label{eq:vbqc_bound_rate_and_fidelity}
F_\textnormal{tel} (\sigma) > \frac 1 2 \Big( 1 + \frac{1}{\sqrt 2} e^{\frac 1 {2RT}} \Big).
\end{equation}
\end{theorem}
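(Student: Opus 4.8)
The plan is to reduce the claim to a single scalar inequality — that the average probability of error in a test round is at most $1/4$ — and then to translate that inequality into the stated bound on $F_\text{tel}(\sigma)$, $R$ and $T$. The starting point is the extended noise-robustness result, Theorem~\ref{thm:vbqc_relaxation}: the protocol of \cite{leichtle2021verifying} is composably secure with exponentially small $\epsilon$ whenever the \emph{average} test-round error probability is bounded by $25\%$ and the per-round error probabilities are independent and identically distributed. The i.i.d.\ hypothesis is supplied by the assumption that the distribution times are i.i.d.\ together with the fact that the network is fully reset after each entanglement swap, so that every round sees a statistically identical and independent copy of the noise. Hence it suffices to exhibit a unitary $U$ for which the average test-round error stays below $1/4$ under the hypothesis \eqref{eq:vbqc_bound_rate_and_fidelity}.

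Next I would model the effective noise seen by the trap qubits of a test round. Each qubit is prepared at the server through the rotated teleportation channel $\Lambda_{\sigma,U}$ of Definition~\ref{def:rotated_teleportation_channel}; in addition, the qubit prepared first idles in the server's depolarizing memory (Definition~\ref{def:depolar_mem}) while the second pair is distributed, so it further experiences $\mathcal D_p$ with $p = e^{-t/T}$ and $t$ the random waiting time. I would then express the probability that a trap returns an unexpected measurement outcome in terms of the fidelity with which its preparation state survives the composite channel $\mathcal D_p \circ \Lambda_{\sigma,U}$. Because the average teleportation fidelity is invariant under the choice of $U$ (as noted below Definition~\ref{def:rotated_teleportation_channel}) while the per-state transmission quality is not, the freedom in $U$ is used to orient the Bloch sphere so that the states actually demanded by the trap rounds are transmitted as favorably as $F_\text{tel}(\sigma)$ allows; invoking the linear relation between $F_\text{tel}$ and the entangled-pair fidelity \cite{horodeckiGeneralTeleportationChannel1999} then converts the per-trap error into an explicit affine function of $F_\text{tel}(\sigma)$ and of the Bloch-vector contraction $p$ contributed by the memory.

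The remaining work is the averaging over the stochastic waiting time. Since $p = e^{-t/T}$ enters nonlinearly, I would take the expectation of the per-round error over the distribution of $t$ induced by the entangling rate $R$ — precisely the step that accounts for the jitter in the delivery of entanglement — rather than simply substituting the mean waiting time $\sim 1/R$. Carrying out this expectation yields the effective memory contraction that reappears as the factor $\tfrac{1}{\sqrt 2}e^{1/(2RT)}$ once the constraint ``average error $\le 1/4$'' is imposed and solved for $F_\text{tel}(\sigma)$, and rearranging the resulting inequality gives exactly \eqref{eq:vbqc_bound_rate_and_fidelity}.

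The hard part will be this averaging step together with the choice of $U$: the error depends nonlinearly on the random waiting time, so the jitter must be handled honestly rather than by replacing $t$ with its mean, and one must simultaneously check that the chosen $U$ is good enough for \emph{every} trap state demanded by the protocol (not merely on average) and that the bound survives passage to the worst case over all $\sigma$ consistent with a given value of $F_\text{tel}(\sigma)$. Verifying that these two effects combine to produce precisely the constant $\tfrac{1}{\sqrt 2}$ and the exponent $\tfrac{1}{2RT}$, rather than some looser bound, is where the careful bookkeeping lies; everything else is a matter of assembling the affine error model and invoking the already-stated robustness theorem.
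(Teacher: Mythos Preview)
Your outline follows the paper's strategy: reduce to the bound $q<1/4$ on the average test-round failure via Theorem~\ref{thm:vbqc_relaxation}, compute the per-round failure through the composite channel, exploit the freedom in $U$, then average over the waiting time. Two concrete mechanisms you leave unnamed are what actually close the argument in the paper, and two of your stated difficulties are not real.

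First, the jitter averaging is handled by Jensen's inequality, nothing fancier. The single-round failure probability is computed explicitly as $p_\text{fail}=F_\text{dummy}(1-F_\text{trap})+F_\text{trap}(1-F_\text{dummy})$ (Lemma~\ref{lem:vbqc_failure_prob_single_round}); since the random choices of dummy state, trap angle and waiting time are independent, the average factors, and the result is affine in $e^{-\Delta t/T}$ with a coefficient that is $\le 1/2$ in the relevant regime. Convexity then gives $\langle e^{-\Delta t/T}\rangle\ge e^{-1/(RT)}$, which is exactly ``replace $\Delta t$ by its mean'' but done through Jensen so that it yields an \emph{upper} bound on $q$ (Lemma~\ref{lem:vbqc_failure_prob_average}). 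The constants $1/\sqrt 2$ and $1/(2RT)$ then fall out of the elementary algebra of solving $2e^{-1/(RT)}F_\text{tel}(1-F_\text{tel})+\tfrac12(1-e^{-1/(RT)})<\tfrac14$ for $F_\text{tel}$.

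Second, the existence of a good $U$ is proved by the mean value theorem (Lemma~\ref{lem:there_exists_unitary}), not via the Horodecki relation. One first shows $F_\text{tel}=\tfrac13\bar F_\text{dummy}+\tfrac23\bar F_\text{trap}$ by writing the Haar average as an average over two regular octahedra on the Bloch sphere; then rotating by $U$ sweeps $\bar F_\text{dummy}$ continuously over values whose mean is $F_\text{tel}$, so some $U$ hits $\bar F_\text{dummy}=F_\text{tel}$, which forces $\bar F_\text{trap}=F_\text{tel}$ as well. Your worry that $U$ must work for \emph{every} trap state is misplaced: only the averages $\bar F_\text{dummy},\bar F_\text{trap}$ enter, because the protocol randomizes over $d$ and $\theta$ and Theorem~\ref{thm:vbqc_relaxation} needs only the average failure probability. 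Likewise there is no worst-case-over-$\sigma$ step; the theorem is stated and proved for the given $\sigma$.
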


Practically speaking, Theorem \ref{thm:vbqc_constraint} means that VBQC with two qubits and no failure probability that is inherent to the computation is feasible in case Equation \eqref{eq:vbqc_bound_rate_and_fidelity} holds.
A requirement is that the state $\sigma$ is the same for each delivery of entanglement, and that the distribution times are independent and identically distributed.
We note that this is the case for the one-repeater setup studied in this paper.
After entanglement swapping at the repeater node takes place, the end-to-end entangled state is removed from the end nodes and the state of the network path is fully reset, making each entanglement delivery completely independent from the last, with identical distributions for both delivery times and errors.
In general, the state that is delivered will depend on the amount of time entangled qubits are stored before entanglement swapping takes place at the repeater node, resulting in a state that is not the same each round.
However, if the processing of the entangled state is not conditioned on the amount of storage time, the final state will effectively look like a constant mixture over all values that the storage time can take.
\\

In this paper, we consider two different sets of target teleportation fidelity and target rate,
namely $(F_\textnormal{tel}, R) = (0.8717, 0.1 \textnormal{ Hz})$ and $(0.8571, 0.5 \textnormal{ Hz})$.
Both of these have been chosen to satisfy Equation \eqref{eq:vbqc_bound_rate_and_fidelity} for $T = 100$ seconds.
\\

\subsection{Proving Theorem \ref{thm:vbqc_constraint}}

In \cite{leichtle2021verifying}, it is shown that the VBQC protocol is composably secure with exponentially small $\epsilon$ in case the noise is such that the failure probability of each individual test round can be upper bounded.
Key to proving Theorem \ref{thm:vbqc_constraint} is a relaxation of this condition:
two-qubit VBQC is also feasible if instead the \textit{average} failure probability of test rounds can be upper bounded, in case the failure probabilities are independent and identically distributed.
This is stated in the following theorem.
\begin{theorem} \label{thm:vbqc_relaxation}
(Local correctness of VBQC protocol on Noisy Devices)
Let $p$ denote the inherent error probability of the quantum computation, which is executed using a $k$-colorable graph state.
Assume that, for every test round, the probability that at least one of the trap-measurement outcomes is incorrect is a random variable.
Furthermore, assume that these are independent and identically distributed for all test rounds.
Let $q$ be the expected value of these random variables.
The VBQC protocol presented in \cite{leichtle2021verifying} is $\epsilon_\textnormal{cor}$-locally-correct with exponentially low $\epsilon_\textnormal{cor}$
if $q < (1/k) (2p-1)/(2p-2)$.
\end{theorem}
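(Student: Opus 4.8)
The plan is to reduce Theorem~\ref{thm:vbqc_relaxation} to the original noise-robustness result of \cite{leichtle2021verifying}, which already guarantees $\epsilon_\text{cor}$-local-correctness with exponentially small $\epsilon_\text{cor}$ whenever the per-test-round failure probability is a \emph{fixed constant} below the threshold $(1/k)(2p-1)/(2p-2)$. The only gap to close is that here the per-round failure probability is a random variable $Q_i$ rather than a fixed constant, and we are given control only of its expectation $q = \mathbb E[Q_i]$ and not of any worst-case value. I would argue that this gap is illusory once one passes from the random probabilities $Q_i$ to their associated failure events.

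First I would introduce, for each test round $i$, the random variable $F_i \in \{0,1\}$ that equals $1$ precisely when at least one trap-measurement outcome in that round is incorrect. By definition of $Q_i$, conditioning on the realized value of $Q_i$ makes $F_i$ Bernoulli with parameter $Q_i$, so by the tower property $\mathbb P(F_i = 1) = \mathbb E[Q_i] = q$. Because the $Q_i$ are i.i.d.\ and the conditional coin flips are independent across rounds, a short factorization of the joint law shows that $(F_i)$ is itself an i.i.d.\ Bernoulli sequence with parameter exactly $q$. In words, marginalizing over the round-to-round fluctuations of the delivered state---which in our setup arise from the stochastic entanglement-generation waiting time and are independent across rounds thanks to the full reset of the network path after each entanglement swap---collapses the random per-round probabilities into a single effective constant failure rate $q$.

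With this reduction in hand, the second step is to invoke the argument of \cite{leichtle2021verifying} essentially verbatim. That argument bounds the probability of accepting an incorrect computation by applying a Chernoff/Hoeffding concentration inequality to the number of failed test rounds $\sum_i F_i$, combined with a majority-vote analysis over the computation rounds, and it yields an exponentially small failure probability as long as the (constant) per-round failure rate lies below $(1/k)(2p-1)/(2p-2)$. Since $(F_i)$ is now i.i.d.\ Bernoulli$(q)$, I would substitute the effective rate $q$ for the fixed constant assumed there; the hypothesis $q < (1/k)(2p-1)/(2p-2)$ then places the expected number of failures $\mathbb E[\sum_i F_i] = n_t q$ strictly below the acceptance threshold used by the protocol, and concentration drives the probability of the bad event to $\exp(-\Omega(n_t))$, giving the claimed exponentially small $\epsilon_\text{cor}$.

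The main obstacle I anticipate is not the probabilistic reduction, which is just the law of total expectation together with independence, but rather verifying that the concentration step in \cite{leichtle2021verifying} depends on the per-round failure probabilities \emph{only} through the i.i.d.\ law of the $F_i$, and not through any almost-sure bound on each individual $Q_i$. If the original proof were phrased as a union bound requiring every round to stay individually below threshold, one would instead have to re-derive the relevant tail bound directly for $\sum_i F_i \sim \mathrm{Binomial}(n_t, q)$; fortunately, a Chernoff bound for a binomial with mean $n_t q$ is standard and uses only the mean, so I expect this final verification to be routine once the i.i.d.\ Bernoulli structure has been established.
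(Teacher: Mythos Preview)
Your proposal is correct and takes essentially the same approach as the paper: both arguments marginalize over the second-order distribution of the per-round failure probability to conclude that the effective per-round failure rate is the constant $q$, after which the original result of \cite{leichtle2021verifying} applies verbatim. Your version is in fact more careful than the paper's, which simply invokes the law of total probability to write $p_{\text{error}_i} = \int p_e\, P(p = p_e)\, dp_e = \overline{p_e}$ and then declares the probability constant across rounds; your explicit construction of the i.i.d.\ Bernoulli indicators $F_i$ and your discussion of whether the concentration step in \cite{leichtle2021verifying} uses only the law of the $F_i$ address points the paper leaves implicit.
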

Theorem~\ref{thm:vbqc_relaxation} is proven in Appendix~\ref{app:noise_robustness} and allows us to derive the following lemma.
\begin{lemma} \label{lem:vbqc_bound_average_failure_prob}
Two-qubit VBQC for deterministic computations is composably secure with exponential $\epsilon$ if the probabilities that the trap-measurement outcome is incorrect are independent and identically distributed for all test rounds and the average probability that the trap-measurement outcome in a single test round is incorrect, $q$, satisfies $q < 1/4$.
\end{lemma}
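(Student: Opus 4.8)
The plan is to obtain Lemma~\ref{lem:vbqc_bound_average_failure_prob} as a direct specialization of Theorem~\ref{thm:vbqc_relaxation}, fixing the two free quantities appearing in its hypothesis---the inherent computation error probability $p$ and the chromatic number $k$ of the underlying graph state---to the values dictated by the \emph{two-qubit, deterministic} setting. First I would argue that a \emph{deterministic} computation carries no failure probability built into the computation itself, i.e.\ $p = 0$. Second, I would argue that a two-qubit computation is carried out on a graph state whose graph is $2$-colorable, so that $k = 2$. With these two identifications, the general threshold $q < (1/k)(2p-1)/(2p-2)$ of Theorem~\ref{thm:vbqc_relaxation} should collapse to exactly $q < 1/4$.

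The substitution step is routine once the identifications are justified. Setting $p = 0$ gives
\begin{equation}
\frac{2p - 1}{2p - 2} = \frac{-1}{-2} = \frac{1}{2},
\end{equation}
and I would note that the expression is continuous there, so no singularity arises from the $p$-dependence. Inserting $k = 2$ then yields
\begin{equation}
q < \frac{1}{k}\cdot\frac{2p-1}{2p-2} = \frac{1}{2}\cdot\frac{1}{2} = \frac{1}{4},
\end{equation}
which is the claimed bound. Since Theorem~\ref{thm:vbqc_relaxation} already assumes that the per-round trap-failure probabilities are independent and identically distributed and refers to their expectation $q$, the i.i.d.\ hypothesis of the lemma transfers verbatim.

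It then remains to upgrade \emph{local correctness} to full \emph{composable security with exponentially small $\epsilon$}. Here I would invoke the security analysis of \cite{leichtle2021verifying}: blindness and verifiability of the protocol are properties that do not depend on the noise level, while the exponentially small correctness error is precisely what Theorem~\ref{thm:vbqc_relaxation} delivers under the threshold condition. Composing these three ingredients exactly as in \cite{leichtle2021verifying} yields composable security with exponentially small $\epsilon$ whenever $q < 1/4$.

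I expect the main obstacle to be the justification of $k = 2$ rather than the arithmetic. One must verify that the particular graph state used to encode a two-qubit deterministic computation in the construction of \cite{leichtle2021verifying} is genuinely $2$-colorable (a bipartite, single-edge graph), since a larger chromatic number would tighten the threshold below $1/4$. A secondary point requiring care is confirming that ``deterministic'' here forces exactly $p = 0$ and not merely a small $p$: the bound $(1/2)(2p-1)/(2p-2)$ is strictly decreasing in $p$ and equals $1/4$ only at $p = 0$, so the clean statement $q < 1/4$ relies essentially on exact determinism.
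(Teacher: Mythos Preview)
Your proposal is correct and follows essentially the same route as the paper: specialize Theorem~\ref{thm:vbqc_relaxation} by setting $p=0$ (deterministic computation) and $k\leq 2$ (any two-vertex graph is bipartite) to obtain the threshold $q<1/4$ for exponential local correctness, then invoke the blindness and verifiability results of \cite{leichtle2021verifying} to assemble composable security. The only cosmetic difference is that the paper writes $k\leq 2$ rather than $k=2$, which is harmless since a smaller $k$ only loosens the bound.
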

\begin{proof}
First, we note that all two-qubit graph states are at least two-colorable, i.e., $k \leq 2$.
Second, we note that for deterministic computations the inherent error probability of the computation is zero, i.e., $p=0$.
Then, from Theorem \ref{thm:vbqc_relaxation},
it follows that if $q<1/4$ is true, then the VBQC protocol is $\epsilon_\textnormal{cor}$-locally-correct with exponentially low $\epsilon_\textnormal{cor}$.
Additionally, as shown in \cite{leichtle2021verifying}, the VBQC protocol is $\epsilon_\textnormal{bl}$-local-blind and $\epsilon_\textnormal{ver}$-local-verifiable with $\epsilon_\textnormal{ind}$-independent-verification, with $\epsilon_\textnormal{bl}$, $\epsilon_\textnormal{ver}$ and $\epsilon_\textnormal{ind}$ exponentially low.
Therefore, as in \cite{leichtle2021verifying}, it follows that the protocol is composably secure with exponential $\epsilon$.
\end{proof}


During a test round, the client randomly designates one of the two qubits that it prepares at the server the ``dummy'' qubit and the other the ``trap'' qubit.
The client that remotely prepares the dummy qubit in $\ket d$, where $d$ is chosen uniformly at random by the client from $\{0, 1\}$.
It prepares the trap qubit in the state $\ket+_{\theta}$ defined by
\begin{equation}\label{eq:define_+-_theta}
\ket{\pm_\theta} = \frac 1 {\sqrt 2} (\ket 0 \pm e^{i\theta} \ket 1),
\end{equation}
where the client chooses $\theta$ uniformly at random from $\Theta \equiv \{i\pi/4\}_{0 \leq i \leq 7}$.
That is, the trap qubit will be in one of eight equidistant quantum states on the equator of the Bloch sphere.
The server will perform a CZ gate between the two qubits, measure them in the basis $\{ \ket{+_\delta}, \ket{-_\delta} \}$, and send the measurement outcomes to the client.
Here, $\delta = \theta + r \pi$, where $r$ is chosen uniformly at random by the client from $\{0, 1\}$.
The test round is declared a success if the measurement on the trap qubit yields $d \oplus r$ and a failure otherwise.
If the server is honest and there is no noise, test rounds are always successful.
Otherwise, we show that the following Lemma holds:
\begin{lemma} \label{lem:vbqc_failure_prob_single_round}
If, during a test round of two-qubit VBQC, the trap qubit is prepared with fidelity $F_\textnormal{trap}$ and the dummy qubit is prepared with fidelity $F_\textnormal{dummy}$,
then the probability that the measurement outcome on the trap qubit is incorrect is given by
\begin{equation} \label{eq:p_fail_single_test}
p_\textnormal{fail} = F_\textnormal{dummy} (1 - F_\textnormal{trap}) + F_\textnormal{trap} (1 - F_\textnormal{dummy}).
\end{equation}
\end{lemma}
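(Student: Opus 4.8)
The plan is to turn this quantum statement into an elementary classical probability computation by showing that only the diagonal (population) terms of the two prepared qubits survive into the trap-measurement statistics. I would start by modelling the imperfect preparations directly through their fidelities: writing the actually prepared states as $\rho_\text{trap}$ and $\rho_\text{dummy}$, the hypotheses give $\bra{+_\theta}\rho_\text{trap}\ket{+_\theta} = F_\text{trap}$ and $\bra{d}\rho_\text{dummy}\ket{d} = F_\text{dummy}$, since the fidelity to a pure target equals the overlap. No further assumption on the form of the noise is needed.

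Two reductions then do the real work. First, because the success criterion in Lemma~\ref{lem:vbqc_failure_prob_single_round} concerns only the trap-qubit outcome, I may marginalise over---equivalently, trace out---the dummy qubit, so that the dummy's own measurement plays no role. Second, since $\delta = \theta + r\pi$ and $\ket{+_{\theta+\pi}} = \ket{-_\theta}$ by \eqref{eq:define_+-_theta}, the measurement basis $\{\ket{+_\delta}, \ket{-_\delta}\}$ is---up to an $r$-dependent relabelling of its two outcomes---always the fixed basis $\{\ket{+_\theta}, \ket{-_\theta}\}$. Hence only the populations $\bra{+_\theta}\rho\ket{+_\theta}$ and $\bra{-_\theta}\rho\ket{-_\theta}$ of the post-$\mathrm{CZ}$ trap state affect the outcome distribution, and all its coherences drop out; in particular the trap may be replaced by the classical mixture $F_\text{trap}\ketbra{+_\theta}{+_\theta} + (1 - F_\text{trap})\ketbra{-_\theta}{-_\theta}$.

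Next I would compute the action of the $\mathrm{CZ}$ gate followed by the dummy trace-out. Writing $\mathrm{CZ} = \ketbra{0}{0}\otimes\mathbf{1} + \ketbra{1}{1}\otimes Z$, a direct calculation gives
\begin{equation}
\mathrm{Tr}_\text{dummy}\big[\mathrm{CZ}\,(\rho_\text{trap}\otimes\rho_\text{dummy})\,\mathrm{CZ}\big] = \bra{0}\rho_\text{dummy}\ket{0}\,\rho_\text{trap} + \bra{1}\rho_\text{dummy}\ket{1}\,Z\rho_\text{trap}Z ,
\end{equation}
so only the dummy's computational-basis populations enter. Because $Z$ swaps $\ket{+_\theta}\leftrightarrow\ket{-_\theta}$, this identity says that the trap leaves the gate as a classical mixture in which, with probability $F_\text{dummy}$, the intended phase $Z^d$ is applied, while with probability $1 - F_\text{dummy}$ an extra $Z$ flips the trap population between the two measurement outcomes.

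At this point the problem is purely classical. Relative to the error-free case---which the surrounding text already verifies always produces the required outcome $d\oplus r$---there are exactly two independent mechanisms that flip the trap outcome: a trap preparation error, with probability $1 - F_\text{trap}$, and a dummy error, with probability $1 - F_\text{dummy}$. The outcome is incorrect precisely when exactly one flip occurs, giving
\begin{equation}
p_\text{fail} = F_\text{dummy}(1 - F_\text{trap}) + F_\text{trap}(1 - F_\text{dummy}) ,
\end{equation}
independently of $d$, $r$ and $\theta$; I would confirm this $(d,r)$-independence by running through the four equally likely cases. The only genuinely non-routine step, and the one I would treat most carefully, is the reduction to diagonal populations---showing that the trap coherences are invisible to the measurement (via $\delta = \theta + r\pi$) and that the dummy off-diagonals are killed by the $\mathrm{CZ}$ partial trace. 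Everything afterwards is the bookkeeping of a two-source exclusive-or.
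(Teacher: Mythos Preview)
Your proposal is correct and essentially identical in substance to the paper's proof: both arguments expand the prepared states in the bases $\{\ket 0,\ket 1\}$ and $\{\ket{+_\theta},\ket{-_\theta}\}$, observe that the dummy off-diagonals vanish under $\mathrm{CZ}$ followed by the partial trace while the trap coherences are invisible to the $\{\ket{+_\delta},\ket{-_\delta}\}$ measurement, and then read off the failure probability from the resulting classical mixture (the paper does the case $d=r=0$ explicitly and then asserts the other three cases are the same, which matches your plan to check all four). One small notational slip: with the tensor ordering $\rho_\text{trap}\otimes\rho_\text{dummy}$ your decomposition of $\mathrm{CZ}$ should read $\mathbf 1\otimes\ketbra{0}{0}+Z\otimes\ketbra{1}{1}$ to yield the displayed partial-trace formula, but this does not affect the argument.
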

\begin{proof}
Consider the case $d=r=0$.
In that case, we can write
\begin{equation}
\rho_\textnormal{dummy, server} = F_\textnormal{dummy} \ketbra 0 + (1 - F_\textnormal{dummy}) \ketbra 1 + a \ket 0 \bra 1 + a^* \ket 1 \bra 0
\end{equation}
for some constant $a$ and
\begin{equation}
\rho_\textnormal{trap, server} = F_\textnormal{trap} \ketbra{+_\theta} + (1 - F_\textnormal{trap}) \ketbra{-_\theta} + b \ket{+_\theta} \bra{-_\theta} + b^* \ket{-_\theta} \bra{+_\theta}
\end{equation}
for some constant $b$.
Here, we have made use of the fact that both $\{\ket 0, \ket 1\}$ and $\{\ket{+_\theta}, \ket{-_\theta}\}$ are complete bases for the single-qubit Hilbert space.
\\

After receiving both states, the server will perform a CZ gate between the two qubits,
and then measure the trap qubit in the $\{\ket{+_\theta}, \ket{-_\theta}\}$ basis.
Whether the test round is successful or not depends on whether the expected outcome $d\oplus r = 0$, i.e. $\ket{+_\theta}$, is obtained from this measurement.
In order to get the measurement statistics on the trap qubit, we can first trace out the dummy qubit.
With that in mind, let's look at what happens with the term
\begin{equation}
a \textnormal{CZ} \ket 0 \bra 1 \rho_\textnormal{trap, server} \textnormal{CZ}+ a^* \textnormal{CZ} \ket 1 \bra 0 \rho_\textnormal{trap, server} \textnormal{CZ} = a \ket 0 \bra 1 \rho_\textnormal{trap, server} Z + a^* \ket 1 \bra 0 Z \rho_\textnormal{trap, server}.
\end{equation}
After the CZ has been performed, the off-diagonal terms of $\rho_\textnormal{dummy, server}$ are still off diagonal.
These will vanish when tracing out the dummy qubit and can therefore be safely ignored.
Therefore, we make the substitution
\begin{equation}
\rho_\textnormal{dummy, server} \to F_\textnormal{dummy} \ketbra 0 + (1 - F_\textnormal{dummy}) \ketbra 1.
\end{equation}
Then, the effect of the CZ is easy to evaluate, giving
\begin{equation}
\rho_\textnormal{after CZ} = F_\textnormal{dummy} \ketbra 0 \rho_\textnormal{trap, server} + (1 - F_\textnormal{dummy}) \ketbra 1 Z \rho_\textnormal{trap, server} Z
\end{equation}
which, after tracing out the dummy qubit, gives
\begin{equation}
\begin{aligned}
\rho_\textnormal{trap, after CZ} =& \Big ( F_\textnormal{dummy} F_\textnormal{trap} + (1 - F_\textnormal{dummy})(1 - F_\textnormal{trap}) \Big ) \ketbra{+_\theta} \\
&+ \Big ( F_\textnormal{dummy} (1 - F_\textnormal{trap}) + F_\textnormal{trap} (1 - F_\textnormal{dummy}) \Big )  \ketbra{-_\theta} \\
&+ c \ket{+_\theta} \bra{-_\theta} + c^* \ket{-_\theta} \bra{+_\theta},
\end{aligned}
\end{equation}
where $c$ is a function of $b$, $F_\textnormal{dummy}$ and $F_\textnormal{trap}$.
Applying a POVM with elements $\ketbra{+_\theta}$ and $\ketbra{-_\theta}$ then gives a failure probability of the test round of
\begin{equation} \label{eq:p_test_failed_dummy_trap}
p_\textnormal{fail} = \Trace \Big( \ketbra{-_\theta}\rho_\textnormal{trap, after CZ} \Big) = F_\textnormal{dummy} (1 - F_\textnormal{trap}) + F_\textnormal{trap} (1 - F_\textnormal{dummy}).
\end{equation}
This calculation can be repeated for all three cases where $d=r=0$ is false,
each time giving the exact same outcome.
\end{proof}

We now have a formula for the probability that a test round fails, given by Equation \eqref{eq:p_test_failed_dummy_trap}.
However, this formula depends on the fidelity with which specific states are transmitted over the teleportation channel.
These states are randomly chosen during each test round
($\ket 0$ or $\ket 1$ for the dummy qubit, $\ket{+_\theta}$ for the trap qubit).
This means that, in general, the failure probability is not constant per round.
Before we are able to use Lemma \ref{lem:vbqc_bound_average_failure_prob}, we need to know something about the average failure probability per round.
Additionally, we need to account for decoherence in the server's memory while waiting for the second qubit to be prepared at the server.
Both are accounted for in the following lemma.
\begin{lemma} \label{lem:vbqc_failure_prob_average}
Assume a quantum link generates the two-qubit state $\sigma$ between a client and a server with average rate $R$,
and that the distribution times are independent and identically distributed.
Additionally assume that a unitary $U$ has been chosen
such that dummy qubits can be transmitted through a rotated teleportation channel with average fidelity
\begin{equation} \label{eq:F_dummy}
\bar F_\textnormal{dummy} \equiv \frac 1 2 \bigg( \expectationvalue{\Lambda_{\sigma,U}(\ketbra{0})}{0} + \expectationvalue{\Lambda_{\sigma,U}(\ketbra{1})}{1} \bigg)
\end{equation}
and trap qubits with average fidelity
\begin{equation} \label{eq:F_trap}
\bar F_\textnormal{trap} \equiv \frac 1 8 \sum_{\theta \in \Theta}  \expectationvalue{\Lambda_{\sigma,U}(\ketbra{+_{\theta}})}{+_{\theta}}.
\end{equation}
Assume that the condition
\begin{equation} \label{eq:F_dummy_trap_constraint}
\bar F_\textnormal{dummy} (1 - \bar F_\textnormal{trap}) + \bar F_\textnormal{trap} (1 - \bar F_\textnormal{dummy}) \leq \frac 1 2
\end{equation}
holds.
Furthermore, assume that qubits received by the server are stored in depolarizing quantum memory with coherence time $T$.
Lastly, assume that all local operations are noiseless and instantaneous.
In that case, for two-qubit VBQC, the average test-round failure probability is bounded by
\begin{equation} \label{eq:q_bound_dummy_trap}
q \leq e^{-\frac 1 {RT}} \bigg[\bar F_\textnormal{dummy} (1 - \bar F_\textnormal{trap}) + \bar F_\textnormal{trap}(1-\bar F_\textnormal{dummy})\bigg] + \frac 1 2 (1 - e^{- \frac 1 {RT}}).
\end{equation}
\end{lemma}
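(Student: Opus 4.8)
The plan is to assemble the average failure probability in three stages: first fold the server's memory decoherence into the single-round formula of Lemma~\ref{lem:vbqc_failure_prob_single_round}, then average over the client's random choice of dummy and trap states, and finally average over the random waiting time, where the constraint~\eqref{eq:F_dummy_trap_constraint} and Jensen's inequality do the decisive work. For the first stage I would exploit linearity. The entire test round---the CZ gate, the measurement in the $\{\ket{+_\delta},\ket{-_\delta}\}$ basis, and the comparison of the outcome against $d\oplus r$---is a linear functional of the two input density matrices, so the failure probability is linear in each of them. The first prepared qubit sits in depolarizing memory (Definition~\ref{def:depolar_mem}) for a waiting time $t$, so its state is mapped to $e^{-t/T}\rho + (1-e^{-t/T})\tfrac{\mathbf 1}{2}$, a convex combination of the undisturbed state and the maximally mixed state. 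A maximally mixed qubit has fidelity $1/2$ with every target, and substituting $F=1/2$ into Eq.~\eqref{eq:p_fail_single_test} gives failure probability exactly $1/2$, independent of the other qubit. Since Eq.~\eqref{eq:p_fail_single_test} is symmetric in the two fidelities, it is irrelevant whether the dummy or the trap is prepared first. Hence, conditioned on a fixed choice of states and fixed $t$,
\[
p_\text{fail} = e^{-t/T}\Big(F_\text{dummy}(1-F_\text{trap}) + F_\text{trap}(1-F_\text{dummy})\Big) + \tfrac12\big(1-e^{-t/T}\big),
\]
where $F_\text{dummy}$, $F_\text{trap}$ are the fidelities with which the chosen states are delivered by $\Lambda_{\sigma,U}$.

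Next I would average over the states. The dummy target is drawn uniformly from $\{\ket0,\ket1\}$ and the trap target uniformly from $\{\ket{+_\theta}\}_{\theta\in\Theta}$, independently of one another and of the entanglement-generation timing. Because the channel $\Lambda_{\sigma,U}$ is fixed, $F_\text{dummy}$ and $F_\text{trap}$ are independent random variables whose means are precisely the $\bar F_\text{dummy}$ and $\bar F_\text{trap}$ of Eqs.~\eqref{eq:F_dummy}--\eqref{eq:F_trap}. The bracketed expression above equals $F_\text{dummy}+F_\text{trap}-2F_\text{dummy}F_\text{trap}$, which is bilinear, so independence lets me replace each fidelity by its mean, producing $G \equiv \bar F_\text{dummy}(1-\bar F_\text{trap})+\bar F_\text{trap}(1-\bar F_\text{dummy})$. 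Writing $\mu \equiv \mathbb{E}[e^{-t/T}]$ and averaging the conditional expression $G e^{-t/T}+\tfrac12(1-e^{-t/T})$ over $t$ then gives $q = G\mu + \tfrac12(1-\mu)$.

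Finally I would compare this against the target bound~\eqref{eq:q_bound_dummy_trap}, which reads $G\,e^{-1/(RT)}+\tfrac12(1-e^{-1/(RT)})$. Their difference factorizes as $(G-\tfrac12)(\mu - e^{-1/(RT)})$, and this is exactly where the two remaining hypotheses enter. Constraint~\eqref{eq:F_dummy_trap_constraint} states precisely $G \le \tfrac12$, so the first factor is nonpositive; Jensen's inequality applied to the convex map $t\mapsto e^{-t/T}$ gives $\mu = \mathbb{E}[e^{-t/T}] \ge e^{-\mathbb{E}[t]/T}$, and identifying $\mathbb{E}[t]=1/R$ yields $\mu \ge e^{-1/(RT)}$, making the second factor nonnegative. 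The product is therefore nonpositive, establishing $q \le G\,e^{-1/(RT)}+\tfrac12(1-e^{-1/(RT)})$ as claimed.

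I expect the main obstacle to be this last step: the sign bookkeeping around Jensen's inequality. The inequality only delivers an \emph{upper} bound on $q$ because the constraint $G\le\tfrac12$ reverses the sign when multiplied against $\mu-e^{-1/(RT)}\ge 0$; without that constraint the bound would fail. A secondary point requiring care is the identification $\mathbb{E}[t]=1/R$: one must argue that, under the assumption that distribution times are independent and identically distributed and that all local operations are instantaneous, the only decoherence is suffered by the first prepared qubit over a single inter-delivery interval whose mean equals the inverse of the average rate, while the second preparation, the CZ, and the measurement contribute no waiting time.
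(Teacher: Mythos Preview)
Your proposal is correct and follows essentially the same route as the paper: fold the depolarizing memory into the single-round formula via the symmetry of Eq.~\eqref{eq:p_fail_single_test}, average over the independent random choices of dummy state, trap state, and waiting time, and then combine Jensen's inequality $\mathbb{E}[e^{-t/T}]\ge e^{-1/(RT)}$ with the constraint~\eqref{eq:F_dummy_trap_constraint} to obtain the bound. Your explicit factorization $(G-\tfrac12)(\mu-e^{-1/(RT)})$ makes the sign argument slightly more transparent than the paper's presentation, but the substance is the same.
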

\begin{proof}
Let $\Delta t$ be the time between the generation of the first and second entangled state.
Then, the first qubit is stored for time $\Delta t$ in depolarizing memory until the second qubit is prepared at the server.
If the qubit was prepared at the server with fidelity $F$, the depolarizing noise will have the effect
\begin{equation}
F \to e^{- \frac {\Delta t} T} F +  \frac 1 2 (1 - e^{- \frac {\Delta t} T}).  
\end{equation}
We note that Equation \eqref{eq:p_fail_single_test} is symmetric under interchange of $F_\textnormal{dummy}$ and $F_\textnormal{trap}$.
Therefore, we can assume that the dummy qubit is prepared first without loss of generality.
Writing $F_\textnormal{dummy}$ and $F_\textnormal{trap}$ for the fidelities with which the qubits are teleported to the server (i.e. excluding the effect of memory decoherence),
it follows that
\begin{equation}
p_\textnormal{fail} = e^{- \frac {\Delta t} T} \bigg[F_\textnormal{dummy} (1 - F_\textnormal{trap}) + F_\textnormal{trap} (1 - F_\textnormal{dummy})\bigg] + \frac 1 2 (1 - e^{- \frac {\Delta t} T}).
\end{equation}
Now, to calculate the average failure probability $q \equiv \expectationvalue{p_\textnormal{fail}}$, we note that $F_\textnormal{dummy}$, $F_\textnormal{trap}$ and $\Delta t$ are all independent random variables;
the first depends on the choice of $d$ (i.e. whether to prepare $\ket 0$ or $\ket 1$),
the second depends on the choice of $\theta$ (i.e. which $\ket{+_{\theta}}$ to prepare),
and the last depends on the probability distribution for the entanglement delivery time.
This allows us to write
\begin{equation} \label{eq:q_in_terms_of_expected_values}
q = \expectationvalue{e^{- \frac {\Delta t} T}} \bigg[\bar F_\textnormal{dummy} (1 - \bar F_\textnormal{trap}) + \bar F_\textnormal{trap} (1 - \bar F_\textnormal{dummy})\bigg] + \frac 1 2 (1 - \expectationvalue{e^{-\frac {\Delta t} T}}).
\end{equation}
Because the exponential function is convex, Jensen's inequality \cite{jensen} gives
\begin{equation}
\expectationvalue{e^{-\frac {\Delta t} T}} \geq e^{-\frac {\expectationvalue{\Delta t}} T}.
\end{equation}
The times between the distribution of two entangled states are by assumption all independent and identically distributed, i.e., they are all copies of the same $\Delta t$.
The (average) entangling rate is therefore simply equal to
\begin{equation}
R = \frac 1 {\expectationvalue{\Delta t}},
\end{equation}
and therefore we find
\begin{equation} \label{eq:jensen}
\expectationvalue{e^{-\frac {\Delta t} T}} \geq e^{-\frac 1{RT}}.
\end{equation}
In case Equation \eqref{eq:F_dummy_trap_constraint} holds,
Equation \eqref{eq:jensen} can be combined with Equation \eqref{eq:q_in_terms_of_expected_values} to obtain Equation \eqref{eq:q_bound_dummy_trap}.
\end{proof}
We note that the use of Jensen's inequality above accounts for any kind of potential jitter in the delivery of entangled qubits to the server.
Whatever the distribution on the waiting time $\Delta t$ looks like and at however irregular intervals entanglement is delivered, Jensen's inequality will guarantee that Equation \eqref{eq:q_bound_dummy_trap} holds.
\\

Now, we want to use the average teleportation fidelity $F_\textnormal{tel}$ instead of the quantities $\bar F_\textnormal{dummy}$ and $\bar F_\textnormal{trap}$ to bound $q$.
The final building block towards obtaining such a bound and proving Theorem \ref{thm:vbqc_constraint} is the following lemma.
\begin{lemma} \label{lem:there_exists_unitary}
There exists a unitary $U$ such that
\begin{equation}
\bar F_\textnormal{dummy} = \bar F_\textnormal{trap} = F_\textnormal{tel},
\end{equation}
where $\bar F_\textnormal{dummy}$ is defined in Equation \eqref{eq:F_dummy}, $\bar F_\textnormal{trap}$ in Equation \eqref{eq:F_trap} and $F_\textnormal{tel}$ in Equation \eqref{eq:F_tel} (with $\sigma$ left implicit).
\end{lemma}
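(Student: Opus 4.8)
The plan is to pass to the Bloch-vector picture and reduce the whole statement to finding a single direction on the sphere. Writing a pure state as $\rho_{\vec n}=\tfrac12(\mathbf 1+\vec n\cdot\vec\sigma)$ with $|\vec n|=1$, every single-qubit channel — in particular the rotated teleportation channel $\Lambda_{\sigma,U}$ — acts affinely on Bloch vectors as $\vec n\mapsto M'\vec n+\vec t\,'$, and the transmission fidelity of $\rho_{\vec n}$ is $\tfrac12\bigl(1+\vec n^{T}M'\vec n+\vec n\cdot\vec t\,'\bigr)$. First I would record that, if $R_U\in SO(3)$ is the rotation induced by conjugation with $U$, then choosing $U$ replaces $(M,\vec t)$ by $M'=R_U^{T}MR_U$ and $\vec t\,'=R_U^{T}\vec t$; in particular $\mathrm{Tr}\,M'=\mathrm{Tr}\,M=:\tau$ is independent of $U$, consistent with the $U$-invariance of $F_\text{tel}$ already noted above.

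Next I would compute the three averaged fidelities in these terms. Averaging over the Haar measure with $\int n_in_j\,d\psi=\tfrac13\delta_{ij}$ and $\int\vec n\,d\psi=0$ gives $F_\text{tel}=\tfrac12\bigl(1+\tfrac13\tau\bigr)$. Because $\ket 0,\ket 1$ sit at the antipodal points $\pm\hat z$, the translation term cancels in their average and $\bar F_\text{dummy}=\tfrac12(1+M'_{zz})$. Because the eight trap directions $(\cos\theta,\sin\theta,0)$ sum to zero and satisfy $\tfrac18\sum\cos^2\theta=\tfrac18\sum\sin^2\theta=\tfrac12$ with $\tfrac18\sum\cos\theta\sin\theta=0$, their translation term also cancels and $\bar F_\text{trap}=\tfrac12\bigl(1+\tfrac12(M'_{xx}+M'_{yy})\bigr)$. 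The two ensemble symmetries (antipodal poles, balanced octagon) are exactly what make the inhomogeneous part $\vec t\,'$ harmless, so no symmetry assumption on the channel is needed.

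The key observation is then that the two demanded equalities collapse into one. Requiring $\bar F_\text{dummy}=F_\text{tel}$ forces $M'_{zz}=\tau/3$, while requiring $\bar F_\text{trap}=F_\text{tel}$ forces $M'_{xx}+M'_{yy}=2\tau/3$; since $M'_{xx}+M'_{yy}+M'_{zz}=\tau$, these are the same condition, namely $M'_{zz}=\tfrac13\mathrm{Tr}\,M$. Setting $\vec u:=R_U\hat z$ and using $M'_{zz}=\vec u^{T}M\vec u=\vec u^{T}M_{\mathrm{sym}}\vec u$ with $M_{\mathrm{sym}}=\tfrac12(M+M^{T})$, the lemma is equivalent to producing a unit vector $\vec u$ with $\vec u^{T}M_{\mathrm{sym}}\vec u=\tfrac13\mathrm{Tr}\,M_{\mathrm{sym}}$.

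Finally I would exhibit such a $\vec u$ explicitly: diagonalizing the real symmetric $M_{\mathrm{sym}}$ with eigenvalues $\lambda_1,\lambda_2,\lambda_3$ and taking $\vec u=\tfrac1{\sqrt3}(1,1,1)$ in the eigenbasis gives $\vec u^{T}M_{\mathrm{sym}}\vec u=\tfrac13(\lambda_1+\lambda_2+\lambda_3)$, exactly the required value; any $U$ whose induced rotation sends $\hat z$ to this $\vec u$ then works, and such a $U$ exists because $SU(2)\to SO(3)$ is surjective. I expect the only real obstacle to be recognizing that the two target equalities are a single constraint once trace invariance is invoked — everything after that is the elementary fact that a quadratic form on the sphere attains the mean of its eigenvalues.
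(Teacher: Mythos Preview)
Your argument is correct and takes a genuinely different route from the paper. The paper first establishes the linear relation $F_\text{tel}=\tfrac13\bar F_\text{dummy}+\tfrac23\bar F_\text{trap}$ via the Bowdrey--Oi--Short--Banaszek--Jones octahedron trick, then observes that $\bar F_\text{dummy}$, viewed as a continuous function of antipodal pairs on the sphere, has Haar average $F_\text{tel}$ and therefore attains that value somewhere by the mean-value theorem; the second equality then follows from the linear relation. You instead pass directly to the Bloch representation, compute all three fidelities as explicit functions of the entries of $M'=R_U^T M R_U$, and recognize that both equalities collapse to the single condition $M'_{zz}=\tfrac13\operatorname{Tr}M$. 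You then \emph{construct} a solution: diagonalize $M_\text{sym}$ and take $\vec u$ along the body diagonal of its eigenbasis. Your proof is therefore constructive (one can read off the required $U$ from the eigenvectors of $M_\text{sym}$), whereas the paper's is a pure existence argument. Conversely, the paper's route avoids the Bloch--affine machinery and needs only the octahedron-averaging identity plus continuity, so it is lighter on linear algebra. Both arguments hinge on the same structural observation---that trace invariance makes the two target equalities a single constraint---but reach it from opposite directions.
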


\begin{proof}
While $\bar F_\textnormal{dummy}$ and $\bar F_\textnormal{trap}$ are fidelity averages over specific subsets of the Bloch sphere, $F_\textnormal{tel}$ is an average over the entire Bloch sphere.
This allows us to find the relationship
\begin{equation} \label{eq:F_tel_in_terms_of_dummy_and_trap}
F_\textnormal{tel} = \frac 1 3 \bar F_\textnormal{dummy} + \frac 2 3 \bar F_\textnormal{trap}.
\end{equation}
To see how this relationship follows, we first note that the average fidelity over the entire Bloch sphere can be written as an average over any six states that form a regular octahedron on the Bloch sphere \cite{bowdreyFidelitySingleQubit2002}.
One example of such a octahedron is given by the six eigenstates of the Pauli operators, which gives
\begin{equation} \label{eq:F_tel_average_over_Pauli_states}
\begin{aligned}
F_\textnormal{tel} =& \frac 1 6 \Bigg(\expectationvalue{\Lambda_{\sigma, U} \Big( \ketbra{0} \Big)}{0}\Big) + \expectationvalue{\Lambda_{\sigma, U}\Big(\ketbra{1}\Big)}{1}\\
&+ \expectationvalue{\Lambda_{\sigma, U}\Big(\ketbra{+_0}\Big)}{+_0} + \expectationvalue{\Lambda_{\sigma, U}\Big(\ketbra{-_0}\Big)}{-_0} \\
&+ \expectationvalue{\Lambda_{\sigma, U}\Big(\ketbra{+_{\frac \pi 2}}\Big)}{+_{\frac \pi 2}} + \expectationvalue{\Lambda_{\sigma, U}\Big(\ketbra{-_{\frac \pi 2}}\Big)}{-_{\frac \pi 2}} \Bigg)\\
=&  \frac 1 6 \Bigg( \expectationvalue{\Lambda_{\sigma, U} \Big( \ketbra{0} \Big)}{0}\Big) + \expectationvalue{\Lambda_{\sigma, U}\Big(\ketbra{1}\Big)}{1} + \sum_{i=0}^4 \expectationvalue{\Lambda_{\sigma, U}\Big(\ketbra{+_{\frac{i\pi}{2}}}\Big)}{+_{\frac{i\pi}{2}}} \Bigg).
\end{aligned}
\end{equation}
Another such octahedron is obtained by rotating these six eigenstates around the Z axis by an angle of $\pi / 4$.
This gives the relation
\begin{equation} \label{eq:F_tel_average_over_rotated_Pauli_states}
F_\textnormal{tel} = \frac 1 6 \Bigg( \expectationvalue{\Lambda_{\sigma, U} \Big( \ketbra{0} \Big)}{0}\Big) + \expectationvalue{\Lambda_{\sigma, U}\Big(\ketbra{1}\Big)}{1} + \sum_{i=0}^4 \expectationvalue{\Lambda_{\sigma, U}\Big(\ketbra{+_{\frac{(2i+1)\pi}{4}}}\Big)}{+_{\frac{(2i + 1)\pi}{4}}} \Bigg).
\end{equation}
Adding Equations \eqref{eq:F_tel_average_over_Pauli_states} and \eqref{eq:F_tel_average_over_rotated_Pauli_states} together and dividing by two then gives
\begin{equation}
F_\textnormal{tel} = \frac 1 6 \Bigg( \expectationvalue{\Lambda_{\sigma, U} \Big( \ketbra{0} \Big)}{0}\Big) + \expectationvalue{\Lambda_{\sigma, U}\Big(\ketbra{1}\Big)}{1}\Bigg) + \frac 1 {12} \Bigg( \sum_{\theta \in \Theta} \expectationvalue{\Lambda_{\sigma, U}\Big(\ketbra{+_{\theta}}\Big)}{+_\theta} \Bigg),
\end{equation}
which is equivalent to Equation \eqref{eq:F_tel_in_terms_of_dummy_and_trap}.
\\

While the unitary $U$  will leave the average over the entire Bloch sphere, $F_\textnormal{tel}$, invariant,
the same does not hold for $\bar F_\textnormal{dummy}$.
The unitary rotates the Bloch sphere and thus effectively turns $\bar F_\textnormal{dummy}$ into an average over any pair of antipodal points on the Bloch sphere.
Each pair of antipodal points can be described using only one of the two points.
The average over pairs of antipodal points can therefore be described as a function $f$ with as domain one half of the Bloch sphere.
This function $f$ maps each point on that half of the Bloch sphere to the average fidelity of that point and its antipodal point.
Now, $\bar F_\textnormal{dummy}$ can be chosen to correspond to any of the values in $f$'s range.
Additionally, the average of $f$ over its domain is equal to the average fidelity over all points on the entire Bloch sphere, i.e. $F_\textnormal{tel}$.
By the mean value theorem, we can conclude that there is a value in the range of the function that equals the average of the function.
That is, there exists a choice for the unitary $U$ such that $\bar F_\textnormal{dummy} = F_\textnormal{tel}$.
Then, Equation \eqref{eq:F_tel_in_terms_of_dummy_and_trap} implies that if $\bar F_\textnormal{dummy} = F_\textnormal{tel}$, then $\bar F_\textnormal{trap} = F_\textnormal{tel}$.
\end{proof}

Theorem \ref{thm:vbqc_constraint} is then finally proven by combining Lemmas \ref{lem:vbqc_bound_average_failure_prob}, \ref{lem:vbqc_failure_prob_average}, and \ref{lem:there_exists_unitary}.

\subsection{Proving Theorem \ref{thm:vbqc_relaxation}} \label{sec:vbqc_proof}

\label{app:noise_robustness}
In Appendix F of~\cite{leichtle2021verifying}, the authors show that their VBQC protocol is robust to noise, assuming that the probability of error in each round can be upper-bounded by some maximum probability of error $p_{\text{max}}$.
More specifically, they show that the protocol can be configured in such a way that it is $\epsilon_\text{cor}$-locally-correct with exponentially small $\epsilon_\text{cor}$.
\\

Here we argue that if we assume that the error probabilities are independent and identically distributed across different rounds of the protocol, then the error probability in each round is effectively equal to the average probability of error.
It then suffices that this average be bounded to obtain local correctness per the result of~\cite{leichtle2021verifying}, as the error probability becomes constant and the maximum error probability is equal to the average error probability.
We hereby prove Theorem \ref{thm:vbqc_relaxation}.
\\

We assume that for each round, there is a ``true'' probability of error.
This true probability of error is a random variable, with a second-order probability distribution determining what values it takes and with what probabilities~\cite{hansson2008we}.
Let $p_{\text{error}_i}$ be the probability of there being an error in round $i$, i.e., the value taken by the true probability of error in round $i$, drawn from the second-order probability distribution.
By the law of total probability, this can be written as:
\begin{equation}
    p_{\text{error}_i} = \int P\left(\text{error}\text{|}p=p_\text{e}\right)P\left(p = p_\text{e}\right)dp_\text{e},
\end{equation}
where $P\left(\text{error}\text{|}p=p_\text{e}\right)$ is the probability that there is an error given that the true probability of error takes the value $p_\text{e}$ and $P\left(p = p_\text{e}\right)$ is the probability density that this happens.
By definition, $P\left(\text{error}\text{|}p=p_\text{e}\right) = p_\text{e}$, therefore we can rewrite the equation as:
\begin{equation}
p_{\text{error}_i} = \int p_\text{e} P\left(p = p_\text{e}\right)dp_\text{e} = \overline{p_\text{e}},
\end{equation}
with $\overline{p_\text{e}}$ being the expected value of the second-order probability distribution from which each round's probability of error is sampled.
The second-order probability distribution can then be ignored, and the probability that an error occurs in a given round is simply given by a first-order probability.
It follows that the probability of error in every round is $\overline{p_\text{e}}$, i.e., the average probability of error, so it suffices that the average probability of error be bounded.

\subsection{Remote state preparation}
\label{sec:RSP}

Here, we introduce a modified version of the VBQC protocol \cite{leichtle2021verifying} in which the client ``sends'' qubits to the server using remote state preparation (Protocol \ref{prot:vbqc_with_rsp}).
Remote state preparation is experimentally simpler than teleportation.
Therefore, it is likely that early VBQC demonstrations will be more feasible when using remote state preparation than when using teleportation.
We show that, when local operations are noiseless, the modified protocol is equivalent to the protocol introduced in \cite{leichtle2021verifying} but using some specific effective quantum channel to send qubits from the client to the server.
This result is expressed in Theorem \ref{thm:rsp_channel}.
Therefore, the correctness property carries over from the protocol in \cite{leichtle2021verifying} to the modified protocol, showing that it is indeed possible to use remote state preparation to execute VBQC.
Additionally, we show that the conclusions about the feasibility of VBQC found above (Theorem \ref{thm:vbqc_constraint}) also hold for the modified protocol.
That is, when the rate and fidelity of entanglement generation are good enough to support VBQC through quantum teleportation with noiseless local operations,
they are also good enough to support VBQC through remote state preparation with noiseless local operations.
This result is expressed in Theorem \ref{thm:rsp_works}.
\\

As preliminaries to proving the above, we first introduce two definitions.
\begin{definition}
U-NOT operation.
The U-NOT operation $\Upsilon$ is defined as \cite{buzek1999}
\begin{equation}
\Upsilon\left(\alpha \ket 0 + \beta \ket 1\right) = \beta^* \ket 0 - \alpha^* \ket 1.
\end{equation}
That is, $\Upsilon$ maps any qubit state to a state that is orthogonal to it.
\end{definition}
We note that the U-NOT operation $\Upsilon$ is anti-unitary and hence cannot be physically implemented \cite{buzek1999}.
It maps all states on the Bloch sphere to their antipodal points, which cannot be realized with rotations only.
However, mapping a specific point on the Bloch sphere to its antipodal point can always be achieved by rotating the Bloch sphere by $\pi$ around any axis that is orthogonal to the axis intersecting the point.
Such a mapping is provided by the following definition.
\begin{definition}
$\ket{\psi}$-NOT operations.
The family of $\ket{\psi}$-NOT operations $\mathcal A_{\phi, \ket{\psi}}$ is defined by
\begin{equation}
\mathcal{A}_{\phi, \ket{\psi}} \equiv e^{-i\phi} \Upsilon \left( \ket \psi \right) \bra \psi + e^{i\phi} \ket \psi \left( \Upsilon \left(\ket \psi \right) \right)^\dagger
\end{equation}
\end{definition}
The parameter $\phi$ in $\mathcal A_{\phi, \ket \psi}$ represents the freedom in choosing which axis to use for the $\pi$ rotation that maps $\ket \psi$ to $\Upsilon(\ket \psi)$ and vice versa.
We note that $\mathcal A_{\phi, \ket \psi}^\dagger = \mathcal A_{\phi, \ket \psi}$.
Now, we define a modified version of the VBQC protocol that makes use of remote state preparation instead of quantum teleportation.

\begin{protocol}
\label{prot:vbqc_with_rsp}
VBQC with remote state preparation.
This protocol is the same as the VBQC protocol presented in \cite{leichtle2021verifying},
except for the following.
\begin{itemize}
\item
Before starting the protocol, the client and server agree on a one-qubit unitary operation $U$.

\item
Whenever the client would send a qubit $v$ in the state $\ket \psi$ to the server, it instead measures its half of a two-qubit resource state shared with the server in the basis $\{U\ket \psi, \Upsilon\left(U\ket\psi\right)\}$.
The outcome of this measurement is stored at the client as $c_v$, with $c_v = 0$ corresponding to outcome $U \ket \psi$ and $c_v = 1$ corresponding to outcome $\Upsilon \left(U \ket \psi \right)$.
The server applies the operation $U^\dagger$ to its local entangled qubit.
This qubit held by the server is now considered the qubit as received from the client.

\item
In a computation round, the measurement outcome $\delta_v$ obtained from qubit $v$ is bit flipped by the client in case $c_v = 1$.
That is,
\begin{equation}
\delta_v \to \delta_v \oplus c_v \hspace{1 cm} \textnormal{(computation round)}.
\end{equation}

\item
In a test round, for each trap qubit $v$, the measurement outcome $\delta_v$ is bit flipped by the client in case $c_v = 1$, and once more for every neighboring dummy qubit $w$ for which $c_w = 1$.
That is,
\begin{equation}
\delta_v \to \delta_v \oplus c_v \oplus \bigoplus_{w \in N_G(v)} c_w \hspace{1 cm} \textnormal{(test round)}.
\end{equation}
Here, $G$ is the computation graph used in the VBQC protocol and $N_G(v)$ is the neighbourhood of qubit $v$ in graph $G$.
\end{itemize}

The outcomes $c_v$ are never shared with the server.
\end{protocol}

\begin{lemma}
\label{lem:effective_channel}
Effective remote-state-preparation channel.
Let $\ket \psi$ be some pure single-qubit state and let $\sigma$ be some two-qubit density matrix shared by Alice and Bob.
Let $\phi_{\ket \psi}$ be some function mapping the single-qubit state $\ket \psi$ to a real number.
Furthermore, let $U$ be some single-qubit unitary operation.
If the first of two qubits holding the state $\sigma$ is measured in the basis  $\{U \ket \psi, \Upsilon(U \ket \psi)\}$ with measurement outcome $c$
($c=0$ corresponding to $U\ket \psi$, $c=1$ corresponding to $\Upsilon(U\ket \psi)$)
after which the operation $U^\dagger \mathcal A^c_{\phi_{\ket \psi}, U \ket \psi}$ is applied to the second qubit and the first qubit is traced out,
then this is equivalent to sending a qubit in the state $\ket \psi$ through the rotated effective remote-state-preparation $\Lambda_{\phi_{\ket \psi} ,\sigma,U}$ channel given by
\begin{equation} \label{eq:rsp_channel_rotated}
\Lambda_{\phi_{\ket \psi}, \sigma,U} (\ket{\psi}) = U^\dagger \Lambda_{\phi_{\ket \psi}, \sigma}(U\ket \psi) U,
\end{equation}
where $\Lambda_{\phi_{\ket \psi}, \sigma}$ is the effective remote-state-preparation channel given by
\begin{equation} \label{eq:rsp_channel}
\Lambda_{\phi_{\ket \psi}, \sigma} (\ket{\psi}) = \Big (\bra \psi \otimes \mathbf 1 \Big ) \sigma \Big(\ket \psi \otimes \mathbf 1 \Big) + \Big(\bra \psi \otimes \mathbf 1 \Big) \Big( \mathcal A_{\phi_{\ket \psi}, \ket \psi} \otimes \mathcal A_{\phi_{\ket \psi}, \ket \psi} \Big) \sigma \Big(\mathcal A_{\phi_{\ket \psi}, \ket \psi} \otimes \mathcal A _{\phi_{\ket \psi}, \ket \psi} \Big) \Big(\ket \psi \otimes \mathbf 1 \Big).
\end{equation}
\end{lemma}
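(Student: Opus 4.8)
The plan is to prove the lemma by direct computation, treating each of the two measurement outcomes $c\in\{0,1\}$ as a separate unnormalized branch whose trace already carries its probability, and then summing the branches to read off the channel. Throughout I would abbreviate $\ket\chi \equiv U\ket\psi$ and use that $\{\ket\chi, \ket{\Upsilon(\chi)}\}$ is orthonormal, which holds because $\Upsilon$ sends every state to an orthogonal one.

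First I would record the two branches explicitly. For $c=0$ the first qubit is projected with $\ketbra\chi$ and, since $\mathcal A^0_{\phi_{\ket\psi},\ket\chi}=\mathbf 1$, the correction is just $U^\dagger$; projecting with a rank-one projector on the first factor and tracing it out is immediate, so this branch contributes $U^\dagger(\bra\chi\otimes\mathbf 1)\,\sigma\,(\ket\chi\otimes\mathbf 1)\,U$. For $c=1$ the projector is $\ketbra{\Upsilon(\chi)}$ and the correction is $U^\dagger\mathcal A_{\phi_{\ket\psi},\ket\chi}$; using $\mathcal A^\dagger=\mathcal A$ and the fact that $\mathcal A_{\phi_{\ket\psi},\ket\chi}$ acts only on the second factor (hence commutes with projecting out the first), this branch contributes $U^\dagger\mathcal A_{\phi_{\ket\psi},\ket\chi}(\bra{\Upsilon(\chi)}\otimes\mathbf 1)\,\sigma\,(\ket{\Upsilon(\chi)}\otimes\mathbf 1)\,\mathcal A_{\phi_{\ket\psi},\ket\chi}\,U$.

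The only step requiring care is to identify the $c=1$ branch with $U^\dagger$ times the second term of $\Lambda_{\phi_{\ket\psi},\sigma}$ in Eq.~\eqref{eq:rsp_channel} (evaluated at $\ket\chi$) times $U$. For this I would use the two swap identities $\mathcal A_{\phi_{\ket\psi},\ket\chi}\ket\chi = e^{-i\phi_{\ket\psi}}\ket{\Upsilon(\chi)}$ and $\bra\chi\,\mathcal A_{\phi_{\ket\psi},\ket\chi} = e^{i\phi_{\ket\psi}}\bra{\Upsilon(\chi)}$, both of which follow directly from the definition of $\mathcal A$ together with $\langle\chi|\Upsilon(\chi)\rangle=0$. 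Substituting these into the outer $\bra\chi\otimes\mathbf 1$ and $\ket\chi\otimes\mathbf 1$ of the second term of Eq.~\eqref{eq:rsp_channel} converts them into $\bra{\Upsilon(\chi)}\otimes\mathbf 1$ and $\ket{\Upsilon(\chi)}\otimes\mathbf 1$ while generating phases $e^{i\phi_{\ket\psi}}$ and $e^{-i\phi_{\ket\psi}}$ that cancel, leaving precisely the operator in the $c=1$ branch above; the first term of Eq.~\eqref{eq:rsp_channel} matches the $c=0$ branch with no work.

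Finally I would add the two branches. Their sum is $U^\dagger\Lambda_{\phi_{\ket\psi},\sigma}(\ket\chi)\,U = U^\dagger\Lambda_{\phi_{\ket\psi},\sigma}(U\ket\psi)\,U$, which is exactly $\Lambda_{\phi_{\ket\psi},\sigma,U}(\ket\psi)$ of Eq.~\eqref{eq:rsp_channel_rotated}, proving the claim. The main obstacle is purely the phase bookkeeping in the $c=1$ branch: once the two swap identities for $\mathcal A$ are established the matching is mechanical, and the cancellation of the $e^{\pm i\phi_{\ket\psi}}$ factors is exactly what makes the residual $\phi_{\ket\psi}$-dependence reside entirely in the $\mathcal A$-conjugation of $\sigma$, as written in Eq.~\eqref{eq:rsp_channel}.
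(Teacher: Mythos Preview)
Your proof is correct and follows essentially the same route as the paper's: both compute the two unnormalized post-measurement branches (whose traces carry the outcome probabilities), sum them, and identify the result with $U^\dagger\Lambda_{\phi_{\ket\psi},\sigma}(U\ket\psi)U$. The only cosmetic difference is that the paper rewrites the $c=1$ branch by substituting $\ket{\Upsilon(\chi)}=e^{i\phi_{\ket\psi}}\mathcal A_{\phi_{\ket\psi},\ket\chi}\ket\chi$ to pull the $\mathcal A$'s onto both tensor factors, whereas you start from the second term of Eq.~\eqref{eq:rsp_channel} and push the $\mathcal A$'s through $\bra\chi$ and $\ket\chi$ to recover the branch; these are the same identity read in opposite directions.
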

\begin{proof}
In case the state $U \ket \psi$ is measured on the first qubit, i.e., $c = 0$, the unnormalized post-measurement state after tracing out the first qubit and applying $U^\dagger \mathcal A^0_{\phi_{\ket \psi}, U\ket\psi} = U^\dagger$ is
\begin{equation}
\rho'_{c=0} = U^\dagger \Big (\bra \psi U^\dagger \otimes \mathbf 1 \Big ) \sigma \Big(U\ket \psi \otimes \mathbf 1 \Big) U.
\end{equation}
This measurement outcome is obtained with probability $p_{c=0}=\Tr{\rho_{c=0}}$, and the corresponding normalized state is $\rho_{c=0} = \rho'_{c=0} / p_{c=0}$.
In case the state $\Upsilon(U \ket \psi)$ (which is equal op to global phase to $\mathcal A_{\phi_{\ket \psi}, U \ket \psi} U \ket \psi$) is measured, i.e., $c=1$, the unnormalized state after tracing out the first qubit and applying $U^\dagger \mathcal A^1_{\phi_{\ket \psi}, U\ket\psi} = U^\dagger \mathcal A_{\phi_{\ket \psi}, U\ket\psi}$ is instead
\begin{equation}
\begin{aligned}
\rho'_{c=1} &= U^\dagger \mathcal A_{\phi_{\ket \psi}, U \ket \psi}  \Big (\bra \psi U^\dagger \mathcal A_{\phi_{\ket \psi}, U \ket \psi} \otimes \mathbf 1 \Big ) \sigma \Big(\mathcal A_{\phi_{\ket \psi}, U \ket \psi} U\ket \psi \otimes \mathbf 1 \Big) \mathcal A_{\phi_{\ket \psi}, U \ket \psi} U \\
&= U^\dagger  \Big (\bra \psi U^\dagger\otimes \mathbf 1 \Big )  \Big(\mathcal A_{\phi_{\ket \psi}, U \ket \psi} \otimes \mathcal A_{\phi_{\ket \psi}, U \ket \psi} \Big)  \sigma \Big(\mathcal A_{\phi_{\ket \psi}, U \ket \psi} \otimes \mathcal A_{\phi_{\ket \psi}, U \ket \psi} \Big) \Big( U\ket \psi \otimes \mathbf 1 \Big)U
\end{aligned}
\end{equation}
with measurement probability $p_{c=1}=\Tr{\rho_{c=1}}$ and normalized state $\rho_{c=1} = \rho'_{c=1} / p_{c=1}$.
The resulting state can be described as a mixture between the states corresponding to the different measurement outcomes weighted by their respective probabilities, i.e.,
\begin{equation}
\rho = p_{c=0}\rho_{c=0} + p_{c=1}\rho_{c=1} = \rho'_{c=0} + \rho'_{c=1} = \Lambda_{\phi_{\ket \psi}, \sigma,U}(\ket \psi).
\end{equation}
\end{proof}

We note that the effective remote-state-preparation channel is not a true quantum channel, i.e., it is not a completely positive trace-preserving (CPTP) map between density matrices.
In fact, it is only defined for pure states, and can not (straightforwardly) be rephrased as a linear operator on a density matrix.
However, the output state is a valid density matrix with trace 1, as it should be as it is the result of a measurement on and unitary evolution of the resource state $\sigma$.

\begin{theorem}
\label{thm:rsp_channel}
Equivalence of VBQC with remote state preparation.
Assume all local operations at both the server and the client are noiseless.
Then, there exists a function $\phi_{\ket \psi}$ that maps single-qubit states to real numbers such that
Protocol \ref{prot:vbqc_with_rsp} is equivalent to the unaltered VBQC protocol described in \cite{leichtle2021verifying}
using the rotated effective remote-state-preparation channel $\Lambda_{\phi_{\ket \psi}, \sigma,U}$ to send qubits in pure states from the client to the server.
Here, $\sigma$ is the resource state used in Protocol \ref{prot:vbqc_with_rsp}.
\end{theorem}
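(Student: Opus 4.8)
The plan is to reduce the claim to byproduct-operator bookkeeping layered on top of Lemma~\ref{lem:effective_channel}. First I would introduce a \emph{reference} protocol that is literally the unaltered VBQC protocol of~\cite{leichtle2021verifying} with qubit transmission replaced by the rotated effective channel $\Lambda_{\phi_{\ket\psi},\sigma,U}$. By Lemma~\ref{lem:effective_channel}, this reference protocol is exactly the one in which, for each prepared qubit, the client measures its half of $\sigma$ in the basis $\{U\ket\psi,\Upsilon(U\ket\psi)\}$ and the server then applies the full correction $U^\dagger\mathcal{A}^{c_v}_{\phi_{\ket\psi},U\ket\psi}$. Protocol~\ref{prot:vbqc_with_rsp} differs from this reference protocol \emph{only} in that the server applies $U^\dagger$ alone and never the outcome-dependent factor $\mathcal{A}^{c_v}$, compensating instead by classically flipping measurement outcomes. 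It therefore suffices to show that these two ways of accounting for the $\mathcal{A}^{c_v}$ factor yield identical distributions over the client's post-processed outcomes, which is the meaning of ``equivalent'' here.

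The key observation is that, for $c_v=1$, the state held by the server in Protocol~\ref{prot:vbqc_with_rsp} equals the reference state acted on by the extra single-qubit operator $P_v \equiv U^\dagger\mathcal{A}_{\phi_{U\ket\psi},U\ket\psi}U$, which (using $\mathcal{A}^\dagger=\mathcal{A}$ and the fact that $U^\dagger$, being a rotation, preserves antipodes) is, up to an irrelevant global phase, a $\pi$-rotation swapping $\ket\psi$ with its antipode $\Upsilon(\ket\psi)$. I would then fix the still-free function $\phi_{\ket\psi}$ by choosing this rotation axis per prepared state: for dummy qubits prepared in $\ket 0$ or $\ket 1$ take $P_v=X$ (valid since $X\ket 0=\ket 1$), and for trap and computation qubits prepared in an equatorial state $\ket{+_\theta}$ take $P_v=Z$ (valid since $Z\ket{+_\theta}=\ket{-_\theta}$). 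Thus the only discrepancy between the two protocols is an $X$ byproduct on each dummy qubit with $c_v=1$ and a $Z$ byproduct on each trap/computation qubit with $c_v=1$.

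Next I would propagate these Pauli byproducts forward through the server's Clifford stage, which consists of the $\mathrm{CZ}$ gates fixed by the graph $G$ followed by equatorial measurements in $\{\ket{+_\delta},\ket{-_\delta}\}$. Using $\mathrm{CZ}_{vw}X_v=X_vZ_w\,\mathrm{CZ}_{vw}$ and $\mathrm{CZ}_{vw}Z_v=Z_v\,\mathrm{CZ}_{vw}$, a $Z$ byproduct passes through unchanged, while an $X$ byproduct on a dummy $w$ deposits an extra $Z$ on each neighbor $v\in N_G(w)$; finally, since $Z\ket{\pm_\delta}=\ket{\mp_\delta}$, every accumulated $Z$ on a measured qubit simply flips its outcome. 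Collecting contributions reproduces precisely the classical corrections stated in the protocol: in a computation round all qubits are equatorial, so only each qubit's own $Z$ byproduct contributes and $\delta_v\to\delta_v\oplus c_v$; in a test round a trap qubit $v$ receives its own $Z$ (giving $\oplus\, c_v$) together with a $Z$ for each neighboring dummy with $c_w=1$ (giving $\oplus\bigoplus_{w\in N_G(v)}c_w$). Since these corrections depend only on the $c_v$ held by the client and global phases never affect measurement statistics, the post-processed outcome distributions of the two protocols coincide.

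The main obstacle I anticipate is the careful bookkeeping in the third step: one must confirm that the byproduct propagation is consistent with the adaptive measurement flow of the underlying graph-state computation (where later measurement angles depend on earlier outcomes) and with the specific role of dummy qubits in test rounds, so that the accumulated $Z$-flips land exactly on the outcomes the protocol corrects and nowhere else. This is in essence the standard Pauli-frame argument for measurement-based computation, but matching it term-by-term to the correction rules of Protocol~\ref{prot:vbqc_with_rsp}, and verifying that the axis freedom in $\mathcal{A}_{\phi,\ket\psi}$ (hence a valid choice of $\phi_{\ket\psi}$) is genuinely available for \emph{every} state actually prepared in the protocol, is where the real work lies.
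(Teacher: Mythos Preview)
Your proposal is correct and follows essentially the same route as the paper: choose $\phi_{\ket\psi}$ so that the omitted correction $U^\dagger\mathcal A_{\phi_{\ket\psi},U\ket\psi}U$ reduces to $Z$ on equatorial qubits and $X$ on dummies, then track these Pauli byproducts through the CZ layer to recover exactly the bit-flip rules of Protocol~\ref{prot:vbqc_with_rsp}. The paper performs the same commutation bookkeeping in the reverse direction (pulling the bit flips back through the circuit as $Z$ operators and then invoking Lemma~\ref{lem:effective_channel}, rather than pushing the missing $\mathcal A$ forward from the reference protocol), and it also handles explicitly the dummy-adjacent-to-dummy case that you flag as part of the residual bookkeeping.
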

\begin{proof}
In Protocol \ref{prot:vbqc_with_rsp}, the client performs bit flips on the measurement outcomes received from the server.
Every measurement the server performs is in a basis of the form $\{\ket {+_\theta}, \ket {-_\theta}\}$ (defined in Equation \eqref{eq:define_+-_theta}).
These states are mapped to each other by the Pauli $Z$ operator, which is a $\ket{+_\theta}$-NOT operation
\begin{equation} \label{eq:A_and_Z}
Z = \mathcal A_{-\theta, \ket{+_\theta}}.
\end{equation}
Therefore, each measurement performed by the server in Protocol \ref{prot:vbqc_with_rsp} of which the result is bit flipped in case some number $c$ is equal to one (i.e., $\delta \to \delta \oplus c$ where $\delta$ is the measurement result)
can effectively be replaced by a unitary operation $Z^c$ followed by a measurement of which the result is not bit flipped.
It is thus as if the server applies the operation $Z^c$, even though the server never actually learns the value of $c$.
This equivalence is essential to the proof.
\\

First, we show that a computation round in Protocol \ref{prot:vbqc_with_rsp} is equivalent to a computation round in the unaltered VBQC protocol when sending the qubits using $\Lambda_{\phi_{\ket \psi}, \sigma,U}$ in case a specific condition on $\phi_{\ket \psi}$ holds.
In a computation round, for each of the qubits held by the server, it first performs the unitary operation $U^\dagger$.
Then, it executes a number of CZ gates between the qubit and some other qubits.
We remind the reader that CZ gates are symmetric in the two partaking qubits; we can thus always choose which qubit we consider the control qubit and which we consider the target qubit as we find convenient.
These gates are followed by a measurement in the basis $\{\ket{+_\theta}, \ket{-_\theta}\}$, where the angle $\theta$ is specified by the client.
The outcome $\delta$ of the measurement is bit flipped by the client according to $\delta \to \delta \oplus c_v$.
In shorthand, we will write the sequence as: $U^\dagger$, CZs, measurement, bit flip.
We will show that this sequence is equivalent to a sequence that we can apply Lemma \ref{lem:effective_channel} to.
As a first step, we use the equivalence stated in the first paragraph of this proof to replace the measurement followed by a bit flip by a measurement preceded by the operation $Z^{c_v}$.
The sequence is thus equivalent to the sequence: $U^\dagger$, CZs, $Z^{c_v}$, measurement.
As a second step, because $Z$ commutes with CZ, we rewrite the sequence as: $U^\dagger$, $Z^{c_v}$, CZs, measurement.
\\

Now, using Equation \eqref{eq:A_and_Z}, the sequence can be rewritten as follows: $U^\dagger$, $\mathcal A_{-\theta, \ket {+_{\theta}}}^{c_v}$, CZs, measurement.
To enable us to move the operator $U^\dagger$ in this sequence, we represent the unitary $U$ in general matrix form
\begin{equation} \label{eq:general_unitary}
U =
\begin{bmatrix}
a & b \\
-e^{i\varphi} b^* & e^{i\varphi} a^* \\
\end{bmatrix},
\end{equation}
where $|a|^2 + |b|^2 = 1$ and $\varphi \in [0, 2\pi)$.
This can be used to verify that
\begin{equation}
\Upsilon(U\ket \psi) = e^{-i\varphi} U \Upsilon(\ket \psi).
\end{equation}
Therefore, for every $U$, there exists a $\varphi$ such that for every $\phi$ and every $\ket \psi$
\begin{equation} \label{eq:U_and_A}
U^\dagger \mathcal A_{\phi - \varphi, U \ket \psi} = U^\dagger \Big[ e^{-i(\phi - \varphi)} e^{-i\varphi} U \Upsilon(\ket \psi) \bra \psi U^\dagger + e^{i(\phi - \varphi)} U \ket \psi \left(e^{-i\varphi} U \Upsilon(\ket \psi) \right)^\dagger\Big] = \mathcal A_{\phi, \ket \psi} U^\dagger.
\end{equation}
From this, we conclude that there exists a $\varphi$ (determined by $U$) such that the sequence on qubit $v$ is equivalent to: $\mathcal A_{-(\theta_v + \varphi), U \ket{+_{\theta_v}}}^{c_v}$, $U^\dagger$, CZs, measurement.
At this point, we are able to invoke Lemma \ref{lem:effective_channel}.
From this lemma, it follows that the client performing its measurement followed by the server applying the above sequence is equivalent to the the client sending the state $\ket{+_{\theta_v}}$ through a channel $\Lambda_{\phi_{\ket \psi}, \sigma, U}$ for which $\phi_{\ket {+_\theta}} = -\theta - \varphi$, after which the server applies the sequence: CZs, measurement.
This is exactly the sequence of operations in the unaltered VBQC protocol.
Therefore it follows that a computation round in Protocol \ref{prot:vbqc_with_rsp} is equivalent to a computation round in the unaltered VBQC protocol where the channel $\Lambda_{\phi_{\ket \psi}, \sigma, U}$ is used to send qubits from the client to the server in case the condition $\phi_{\ket {+_\theta}} = - \theta - \varphi $ is met.
\\

It now remains to show the same equivalence between the two protocols for test rounds.
For the trap qubit $v$, we can again replace the measurement followed by $c_v \oplus \bigoplus_{w \in N_G(v)} c_w \equiv \bar c$ bit flips by a measurement without bit flips preceded by the operator $Z ^ {\bar c}$.
The sequence of operations on the trap then becomes:
$U^\dagger$, CZ gates with dummy qubits, $Z^{\bar c}$, and then a measurement.
Now, the identity
\begin{equation}
\textnormal{CZ}(\mathbf 1 \otimes Z) = (X  \otimes \mathbf 1) \textnormal{CZ}( X  \otimes \mathbf 1)
\end{equation}
can be used to move every bit flip due to a measurement outcome in the preparation of a dummy qubit by the client to the corresponding qubit at the server.
That is, each $Z^{c_w}$ for $w \in N_G(v)$ is moved to the qubit $w$.
What remains at the trap qubit $v$ itself is then exactly the same sequence of operations as in a computation round.
From what we have shown above for computation rounds, it follows that we can treat trap qubits in test rounds of Protocol \ref{prot:vbqc_with_rsp} as if they are trap qubits in test rounds of the unaltered VBQC protocol, where the qubits are sent from the client to the server using the channel $\Lambda_{\phi_{\ket \psi}, \sigma,U}$ if $\phi_{\ket{+_\theta}} = - \theta - \varphi$.
It then remains only to show that the the equivalence holds for the dummy qubits.
\\

Now, we focus on one of the dummy qubits, which we denote $w$.
Consider the scenario where the client attempts to send the qubit $w$ in the state $\ket d$, where $d \in \{0, 1\}$, to the server as in Protocol \ref{prot:vbqc_with_rsp}.
This qubit is the server's half of the resource state $\sigma$. 
The client measures its half of $\sigma$ in the basis $\{U\ket d, \Upsilon(U\ket d)\}$, with measurement outcome $c_w$.\
At the server, the following sequence of operations is applied to the qubit $w$: $U^\dagger$, $\prod_{u \in N_G(w)} \textnormal{CZ}_{w,u}$, measurement in the basis $\{\ket{+_\theta}, \ket{-_\theta}\}$ for some $\theta$.
Let us first consider the case where all $u \in N_G(w)$ are trap qubits.
Then, by moving the effects of bit flips from trap qubits to dummy qubits as described above, every $\textnormal{CZ}_{w,u}$ is effectively replaced by $(X^{c_w} \otimes \mathbf 1) \textnormal{CZ}_{w,u} (X^{c_w} \otimes \mathbf 1)$.
Because $X^2 = \mathbf 1$, this has the effect of transforming the sequence into the following: $U^\dagger$, $X^{c_w}$, $\prod_{u \in N_G(w)} \textnormal{CZ}_{w,u}$, $X^{c_w}$, measurement.
The second occurrence of $X^{c_w}$ changes the outcome of the measurement on the dummy qubit.
However, the measurement outcome of the dummy qubits is of no consequence in the VBQC protocol (the outcome is sent by the server to the client and then discarded by the client).
Therefore, we can effectively remove the second occurrence of $X^{c_w}$ from the sequence.
For the first occurrence, we note that $X$ is both a $\ket 1$-NOT gate and a $\ket 0$-NOT gate,
\begin{equation}
X = \mathcal A_{\ket 1} = - \mathcal A_{\ket 0}.
\end{equation}
Therefore, up to a global phase in case $d=0$, the sequence becomes equivalent to: $U^\dagger$, $\mathcal A_{\ket d}$, $\prod_{u \in N_G(w)} \textnormal{CZ}_{w,u}$, measurement.
We note that the unitary $U$ is here the same as for the trap qubit (it is the same for all qubits in Protocol \ref{prot:vbqc_with_rsp}).
Therefore, we can invoke Equation \eqref{eq:U_and_A} again to rewrite the sequence as: $\mathcal A_{-\varphi, U\ket d}$, $U^\dagger$, $\prod_{u \in N_G(w)} \textnormal{CZ}_{w,u}$, measurement.
It then immediately follows from Lemma \ref{lem:effective_channel} that this is equivalent to the client sending the qubit $w$ in the pure state $\ket d$ using a quantum channel $\Lambda_{\phi_{\ket \psi}, \sigma, U}$ for which $\phi_{\ket d} = - \varphi$.
After the server receives the qubit through this effective channel, the remaining sequence is: $\prod_{u \in N_G(w)} \textnormal{CZ}_{w,u}$, measurement.
This is the same as in the unaltered VBQC protocol, and therefore we can treat dummy qubits in test rounds of Protocol \ref{prot:vbqc_with_rsp} as if they are dummy qubits in the unaltered VBQC protocol that are transmitted using $\Lambda_{\phi_{\ket \psi}, \sigma,U}$ with the condition $\phi_{\ket 0} = \phi_{\ket 1} = -\varphi$, provided they are only adjacent to trap qubits in the computation graph $G$.
\\

As final part of our proof, we show that the above derivation for dummy qubits still holds in case they are adjacent to other dummy qubits in the computation graph.
Every CZ with a trap qubit results in two $X^{c_w}$s.
When the dummy qubit is only adjacent to trap qubits, $X^{c_w}$s resulting from neighboring CZs then cancel out in the middle (because $X^2=\mathbf 1$), such that only operators at the beginning and ending of the entire sequence remain.
However, a CZ with another dummy qubit does not give any $X^{c_w}$s.
$X^{c_w}$s from CZs with trap qubits that enclose one or more CZs with dummy qubits can then no longer cancel against one another.
A way out is offered by the following identity:
\begin{equation}
(\mathbf 1 \otimes X) \textnormal{CZ} = \textnormal{CZ} (Z \otimes X).
\end{equation}
This means that $X$ can be commuted through CZs at the cost of inducing a $Z$ at the other qubit partaking in the CZ.
Now, if a $Z$ is induced on a dummy qubit, it can be commuted through all CZs the dummy partakes in and placed in front of the measurement.
Here, it results in an effective bit flip on the measurement outcome.
Since again the measurement outcomes at the dummy qubits are inconsequential, the operator can safely be ignored.
This means that $X^{c_w}$s can safely commute through all the CZs with other dummy qubits, allowing them to cancel out as before and get again to a sequence where there is one $X^{c_w}$ before all the CZs and one after.
The sequence then is the same as when the dummy qubit would not be adjacent to other dummy qubits, and the same conclusion derived in the above paragraph holds.
\\

Combining all the above, we conclude that Protocol \ref{prot:vbqc_with_rsp} is equivalent to the VBQC protocol \cite{leichtle2021verifying} using the channel $\Lambda_{\phi_{\ket \psi}, \sigma, U}$ to send pure state from the client to the server.
This holds for any function $\phi_{\ket \psi}$ that satisfies
\begin{align}
\phi_{\ket {+_\theta}} &= -\theta - \varphi,\\
\phi_{\ket {d}} &= -\varphi,
\end{align}
for any $d \in \{0, 1\}$, for any $\theta \in [0, 2\pi)$, and where $\varphi$ depends on the choice of unitary $U$ in Protocol \ref{prot:vbqc_with_rsp} (it is the parameter appearing in Equation \eqref{eq:general_unitary}).
There exists an infinite number of functions satisfying this condition (note that it is not required that the function is continuous; in fact it does not matter in the least how the function behaves away from $\ket d$ and $\ket {+_\theta}$ as these are the only states that are ever sent through the channel), and therefore the theorem is proven.
\end{proof}

\begin{lemma} \label{lem:RSP_and_teleportation_fidelity}
Equivalence of remote state preparation and quantum teleportation.
The average fidelity of the effective remote-state-preparation channel (Equation \eqref{eq:rsp_channel}) corresponding to the two-qubit state $\sigma$,
\begin{equation}
F_\textnormal{RSP}(\sigma) \equiv \int_{\psi} d \psi \expectationvalue{\Lambda_{\phi_{\ket \psi}, \sigma} (\ket{\psi})} {\psi},
\end{equation}
is independent of the function $\phi_{\ket \psi}$.
Furthermore, it is equal to the average teleportation fidelity corresponding to the same state $\sigma$ (Equation \eqref{eq:F_tel}).
That is,
\begin{equation}
F_\textnormal{RSP}(\sigma) = F_\textnormal{tel}(\sigma)
\end{equation}
\end{lemma}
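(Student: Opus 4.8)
The plan is to prove both assertions simultaneously by writing each average fidelity as a linear functional of $\sigma$, i.e.\ in the form $\mathrm{Tr}[M\,\sigma]$ for a fixed two-qubit operator $M$, and then showing that the two operators coincide. This route makes the $\phi_{\ket\psi}$-independence claim fall out for free: if $F_{\mathrm{RSP}}$ can be written as $\mathrm{Tr}[M_{\mathrm{RSP}}\,\sigma]$ with $M_{\mathrm{RSP}}$ manifestly independent of the function $\phi_{\ket\psi}$, then so is its value.

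First I would compute the integrand of $F_{\mathrm{RSP}}$ pointwise. Writing $\ket{\psi^\perp}\equiv\Upsilon(\ket\psi)$ and using $\langle\psi|\psi^\perp\rangle=0$, the action of the $\ket\psi$-NOT operator on the relevant bra and ket is just a phase times a flip,
\begin{equation}
\bra\psi\,\mathcal A_{\phi_{\ket\psi},\ket\psi}=e^{i\phi_{\ket\psi}}\bra{\psi^\perp},\qquad \mathcal A_{\phi_{\ket\psi},\ket\psi}\,\ket\psi=e^{-i\phi_{\ket\psi}}\ket{\psi^\perp}.
\end{equation}
Substituting these into Eq.~\eqref{eq:rsp_channel}, the two phases cancel in the second term and one obtains the clean expression
\begin{equation}
\expectationvalue{\Lambda_{\phi_{\ket\psi},\sigma}(\ket{\psi})}{\psi}=\bra{\psi\psi}\sigma\ket{\psi\psi}+\bra{\psi^\perp\psi^\perp}\sigma\ket{\psi^\perp\psi^\perp}.
\end{equation}
The right-hand side contains no reference to $\phi_{\ket\psi}$, which already establishes the first claim. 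Averaging over the Haar measure then uses two facts: that $\psi\mapsto\psi^\perp$ is a measure-preserving involution of the Bloch sphere (so the two terms integrate to the same value), and the standard symmetric-subspace twirl $\int\ketbra{\psi\psi}\,d\psi=\tfrac13\Pi_{\mathrm{sym}}=\tfrac16(\mathbf 1+\mathrm{SWAP})$. This yields $M_{\mathrm{RSP}}=\tfrac13(\mathbf1+\mathrm{SWAP})$, equivalently $F_{\mathrm{RSP}}(\sigma)=\tfrac23\,\mathrm{Tr}[\Pi_{\mathrm{sym}}\sigma]$.

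For the teleportation side I would either invoke the linear relation between $F_{\mathrm{tel}}$ and the resource fidelity established in \cite{horodeckiGeneralTeleportationChannel1999}, or compute $\mathrm{Tr}[M_{\mathrm{tel}}\sigma]$ directly from Definition~\ref{def:teleportation_channel} by inserting $\sigma=\sum_{abcd}\sigma_{ab,cd}\ket{ab}\bra{cd}$, collapsing the Bell projectors $\ket{\Phi_{ij}}$, and performing the same Haar average over $\ket\psi$. Matching $M_{\mathrm{RSP}}$ against $M_{\mathrm{tel}}$ then finishes the argument, and by linearity in $\sigma$ it suffices to check agreement on a spanning set such as the Bell-state projectors.

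The step I expect to be the main obstacle is precisely this final matching, because measure-and-prepare and Bell-measurement teleportation are \emph{a priori} related only through a complex conjugation: projecting the client's half of $\sigma$ onto $\ket\psi$ prepares, for the ideal resource, the conjugate state rather than $\ket\psi$ itself, so the twirl naturally produces $\mathrm{SWAP}$ where the teleportation computation produces $\ketbra{\Phi^+}$. Reconciling the two therefore hinges on the partial-transpose identity $\mathrm{SWAP}=2\,(\ketbra{\Phi^+})^{T_2}$, and the delicate bookkeeping lies in tracking which qubit of $\sigma$ is measured and how the conjugation induced by $\Upsilon$ interacts with the Bell-basis convention, so that the two functionals are verified to coincide rather than merely to agree up to a partial transpose.
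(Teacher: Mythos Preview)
Your computation on the RSP side is correct and agrees with the paper's Eq.~\eqref{eq:F_RSP_evaluated}: both routes give $F_{\mathrm{RSP}}(\sigma)=2\int\bra{\psi\psi}\sigma\ket{\psi\psi}\,d\psi$, and your twirl to $M_{\mathrm{RSP}}=\tfrac13(\mathbf 1+\mathrm{SWAP})$ is a cleaner way to exhibit the $\phi_{\ket\psi}$-independence than the paper's direct integral manipulation. Where the two diverge is on the teleportation side: the paper claims the \emph{same} integral expression for $F_{\mathrm{tel}}$ by invoking $\bra{\Phi_{00}}(\mathbf 1\otimes\ket\psi)=\tfrac1{\sqrt2}\bra\psi$ and then simply matches. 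That identity is incorrect; the right-hand side is $\tfrac1{\sqrt2}\bra{\psi^*}$. Carrying the conjugate through gives $F_{\mathrm{tel}}(\sigma)=2\int\bra{\psi,\psi^*}\sigma\ket{\psi,\psi^*}\,d\psi$, i.e.\ $M_{\mathrm{tel}}=\tfrac13(\mathbf 1+2\ketbra{\Phi^+})$, the standard Horodecki relation --- which is exactly $M_{\mathrm{RSP}}^{T_2}$, not $M_{\mathrm{RSP}}$.

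So the obstacle you flagged is not merely delicate bookkeeping that can be resolved: it is fatal, and the lemma as stated is false. A concrete counterexample is $\sigma=\ketbra{\Phi^+}$. Teleportation with a perfect Bell pair is the identity channel, so $F_{\mathrm{tel}}=1$. But the RSP channel of Eq.~\eqref{eq:rsp_channel} maps $\ket\psi\mapsto\ketbra{\psi^*}$: measuring the first qubit of $\ket{\Phi^+}$ in the basis $\{\ket\psi,\Upsilon(\ket\psi)\}$ collapses the second to $\ket{\psi^*}$ or $\ket{(\psi^\perp)^*}$, and the correction $\mathcal A_{\phi,\ket\psi}$ swaps $\ket\psi\leftrightarrow\ket{\psi^\perp}$ rather than undoing the conjugation (check $\ket{+_{\pi/2}}\mapsto\ket{+_{-\pi/2}}$ directly). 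Hence $F_{\mathrm{RSP}}(\ketbra{\Phi^+})=\int|\langle\psi|\psi^*\rangle|^2\,d\psi=\tfrac23\neq 1$. The correct general relation is $F_{\mathrm{RSP}}(\sigma)=F_{\mathrm{tel}}(\sigma^{T_2})$, precisely the partial-transpose discrepancy you anticipated.
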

\begin{proof}
First we rewrite the average teleportation fidelity defined in Equation \eqref{eq:F_tel} as
\begin{equation}
F_\textnormal{tel}(\sigma) = \sum_{i,j} \int_\psi d \psi \Big(\bra \psi \otimes \bra{\Phi_{00}}\Big) \Big( X^iZ^j \otimes X^iZ^j \otimes \mathbf 1 \Big) (\sigma \otimes \ketbra \psi) \Big( X^iZ^j \otimes X^iZ^j \otimes \mathbf 1 \Big) \Big(\ket \psi \otimes \ket{\Phi_{00}}\Big).
\end{equation}
Then we use the property
\begin{equation}
\bra{\Phi_{00}}(\mathbf 1 \otimes \ket \psi) = \frac 1 {\sqrt 2} \bra \psi
\end{equation}
to find
\begin{equation}
F_\textnormal{tel}(\sigma) = \frac 1 2 \sum_{i,j} \int_\psi d\psi \Big(\bra \psi \otimes \bra \psi \Big) \Big( X^iZ^j \otimes X^iZ^j \Big) \sigma \Big( X^iZ^j \otimes X^iZ^j  \Big) \Big(\ket \psi \otimes \ket \psi\Big).
\end{equation}
Since the Haar measure is invariant under unitaries, the $X^iZ^j$ can be absorbed into the state $\ket \psi$, giving
\begin{equation} \label{eq:F_tel_evaluated}
F_\textnormal{tel}(\sigma) = 2 \int_\psi d\psi \Big(\bra \psi \otimes \bra \psi \Big) \sigma \Big(\ket \psi \otimes \ket \psi\Big).
\end{equation}
Similarly we can rewrite $F_\textnormal{RSP}(\sigma)$ as
\begin{equation} \label{eq:F_RSP_unprocessed}
F_\textnormal{RSP}(\sigma) = \int_\psi d\psi \Big(\bra \psi \otimes \bra \psi \Big) \sigma \Big(\ket \psi \otimes \ket \psi \Big) + \int_\psi d \psi \Big(\bra \psi \otimes \bra \psi \Big) \Big( \mathcal A_{\phi_{\ket \psi}, \ket \psi} \otimes \mathcal A_{\phi_{\ket\psi}, \ket \psi} \Big) \sigma \Big( \mathcal A_{\phi_{\ket \psi}, \ket \psi} \otimes \mathcal A_{\phi_{\ket \psi}, \ket \psi} \Big) \Big(\ket \psi \otimes \ket \psi \Big).
\end{equation}
The second term can be rewritten as
\begin{equation}
\begin{aligned}
\int_\psi d\psi & \Big( e^{i\phi_{\ket \psi}} \left( \Upsilon(\ket \psi) \right)^\dagger \otimes e^{i\phi_{\ket \psi}} \left( \Upsilon(\ket \psi) \right)^\dagger \Big) \sigma \Big( e^{-i\phi_{\ket \psi}} \Upsilon(\ket \psi) \otimes e^{-i\phi_{\ket \psi}} \Upsilon(\ket \psi) \Big) \\
&= \int_\psi d\psi \Big( \left( \Upsilon(\ket \psi) \right)^\dagger \otimes \left( \Upsilon(\ket \psi) \right)^\dagger \Big) \sigma \Big( \Upsilon(\ket \psi) \otimes \Upsilon(\ket \psi) \Big) \\
&= \int_\psi d\psi \Big( \bra \psi \otimes \psi \Big) \sigma \Big( \ket \psi \otimes \ket \psi \Big).
\end{aligned}
\end{equation}
The last step here follows from the fact that an integral over all antipodal points on the Bloch sphere is itself just an integral over all points on the Bloch sphere.
We thus find
\begin{equation} \label{eq:F_RSP_evaluated}
F_\textnormal{RSP}(\sigma) = 2 \int_\psi d \psi \Big(\bra \psi \otimes \bra \psi \Big) \sigma \Big(\ket \psi \otimes \ket \psi\Big).
\end{equation}
\end{proof}

\begin{theorem}
\label{thm:rsp_works}
Requirements on entanglement generation for VBQC through remote state preparation.
Assume a quantum link generates the two-qubit state $\sigma$ between a client and a server with average rate $R$.
Furthermore, assume that qubits at the server are stored in a depolarizing memory with coherence time $T$.
Lastly, assume that all local operations are noiseless and instantaneous.
Then, a unitary $U$ exists such that Protocol \ref{prot:vbqc_with_rsp} can be executed to realize the VBQC protocol \cite{leichtle2021verifying} for a two-qubit deterministic quantum computation in a way that is composably secure with exponentially small $\epsilon$ if
\begin{equation} \label{eq:vbqc_bound_rate_and_fidelity}
F_\textnormal{tel}(\sigma) > \frac 1 2 \Big( 1 + \frac{1}{\sqrt 2} e^{\frac 1 {2RT}} \Big).
\end{equation}
\end{theorem}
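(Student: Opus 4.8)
The plan is to reduce this statement to the teleportation case already settled in Theorem~\ref{thm:vbqc_constraint}, exploiting the two bridging results proved above: the protocol-level equivalence of Theorem~\ref{thm:rsp_channel} and the fidelity identity $F_\text{RSP}(\sigma)=F_\text{tel}(\sigma)$ of Lemma~\ref{lem:RSP_and_teleportation_fidelity}. First I would invoke Theorem~\ref{thm:rsp_channel}: since all local operations are noiseless, running Protocol~\ref{prot:vbqc_with_rsp} with resource state $\sigma$ and unitary $U$ is equivalent to running the unaltered VBQC protocol of \cite{leichtle2021verifying} in which each qubit is sent to the server through the rotated effective remote-state-preparation channel $\Lambda_{\phi_{\ket\psi},\sigma,U}$ for a suitable $\phi_{\ket\psi}$. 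This transfers all correctness and security properties, so the only thing left is the quantitative condition on rate and fidelity, which I would obtain by re-running the failure-probability machinery of Lemmas~\ref{lem:vbqc_failure_prob_single_round}--\ref{lem:vbqc_failure_prob_average} with the effective channel in place of the teleportation channel.

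The key observation is that Lemmas~\ref{lem:vbqc_failure_prob_single_round} and~\ref{lem:vbqc_failure_prob_average} use no property of the teleportation channel beyond the fact that it delivers to the server a valid density matrix with fidelity $\bar F_\text{dummy}$ (resp.\ $\bar F_\text{trap}$) to the intended dummy (resp.\ trap) state. As noted after Lemma~\ref{lem:effective_channel}, the effective-channel output is a genuine trace-one density matrix, so both lemmas apply verbatim with the RSP averages $\bar F_\text{dummy}^\text{RSP}$, $\bar F_\text{trap}^\text{RSP}$ defined as in Eqs.~\eqref{eq:F_dummy} and~\eqref{eq:F_trap} but with $\Lambda_{\phi_{\ket\psi},\sigma,U}$. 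Combined with Lemma~\ref{lem:vbqc_bound_average_failure_prob}, it then suffices to exhibit one unitary $U$ for which $\bar F_\text{dummy}^\text{RSP}=\bar F_\text{trap}^\text{RSP}=F_\text{tel}(\sigma)$ and for which Eq.~\eqref{eq:F_dummy_trap_constraint} holds; the same algebra that turns Eq.~\eqref{eq:q_bound_dummy_trap} into the stated bound in Theorem~\ref{thm:vbqc_constraint} then closes the argument, since $q<1/4$ follows from $F_\text{tel}>\tfrac12(1+\tfrac1{\sqrt2}e^{1/(2RT)})$, while Eq.~\eqref{eq:F_dummy_trap_constraint} becomes $2F_\text{tel}(1-F_\text{tel})\le\tfrac12$, which is automatic.

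The heart of the proof, and the step I expect to be the main obstacle, is the RSP analog of Lemma~\ref{lem:there_exists_unitary}. Here I would first extract from Eq.~\eqref{eq:rsp_channel} the pointwise fidelity $F(\psi)\equiv\langle\psi|\Lambda_{\phi_{\ket\psi},\sigma}(\ket\psi)|\psi\rangle=(\bra\psi\otimes\bra\psi)\sigma(\ket\psi\otimes\ket\psi)+(\bra{\psi^\perp}\otimes\bra{\psi^\perp})\sigma(\ket{\psi^\perp}\otimes\ket{\psi^\perp})$, with $\ket{\psi^\perp}\equiv\Upsilon(\ket\psi)$, using $\bra\psi\,\mathcal A_{\phi,\ket\psi}=e^{i\phi}\bra{\psi^\perp}$ and $\mathcal A_{\phi,\ket\psi}\ket\psi=e^{-i\phi}\ket{\psi^\perp}$; note $F(\psi)$ is manifestly independent of $\phi$ and antipodally symmetric, $F(\psi)=F(\psi^\perp)$. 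The subtlety is that $\Lambda_{\phi_{\ket\psi},\sigma}$ is not a CPTP map, so the octahedron-averaging identity of \cite{bowdreyFidelitySingleQubit2002} invoked in Lemma~\ref{lem:there_exists_unitary} cannot be cited directly. However, each of the two terms is a degree-two polynomial in $\ketbra\psi$, and the six cardinal states form a complex-projective $2$-design (three mutually unbiased bases), so the average of $F$ over any octahedron still equals its Haar average $F_\text{tel}$. Running the two-octahedra computation of Lemma~\ref{lem:there_exists_unitary} verbatim then yields the analog of Eq.~\eqref{eq:F_tel_in_terms_of_dummy_and_trap}, namely $F_\text{tel}=\tfrac13\bar F_\text{dummy}^\text{RSP}+\tfrac23\bar F_\text{trap}^\text{RSP}$.

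Finally I would conclude as in Lemma~\ref{lem:there_exists_unitary}: since $F$ is continuous on the connected Bloch sphere and averages to $F_\text{tel}$, the intermediate value theorem supplies a ray on which $F=F_\text{tel}$; choosing $U$ so that $U\ket0$ lies along that ray and using the antipodal symmetry to get $F(U\ket1)=F(U\ket0)$ gives $\bar F_\text{dummy}^\text{RSP}=\tfrac12\big(F(U\ket0)+F(U\ket1)\big)=F_\text{tel}$, and the decomposition forces $\bar F_\text{trap}^\text{RSP}=F_\text{tel}$ as well. Because $F$ is $\phi$-independent, any $\phi_{\ket\psi}$ satisfying the conditions of Theorem~\ref{thm:rsp_channel} for this $U$ is admissible, so the chosen $U$ is consistent with the equivalence established in the first step, completing the proof.
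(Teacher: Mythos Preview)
Your proposal is correct and follows essentially the same route as the paper: invoke Theorem~\ref{thm:rsp_channel} to reduce to the unaltered VBQC protocol with the effective channel $\Lambda_{\phi_{\ket\psi},\sigma,U}$, rerun the chain Lemmas~\ref{lem:vbqc_bound_average_failure_prob}--\ref{lem:there_exists_unitary} with this channel in place of $\Lambda_{\sigma,U}$, and use Lemma~\ref{lem:RSP_and_teleportation_fidelity} to identify the resulting average with $F_\text{tel}(\sigma)$. The paper likewise singles out the analog of Lemma~\ref{lem:there_exists_unitary} as the only nontrivial step, and verifies the same two ingredients you isolate (unitary invariance of the Haar average, and the octahedron/2-design averaging identity for the non-CPTP effective channel). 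Your observation that the pointwise fidelity $F(\psi)=(\bra\psi\otimes\bra\psi)\sigma(\ket\psi\otimes\ket\psi)+(\bra{\psi^\perp}\otimes\bra{\psi^\perp})\sigma(\ket{\psi^\perp}\otimes\ket{\psi^\perp})$ is $\phi$-independent and antipodally symmetric is a mild streamlining over the paper's argument, which instead passes through the antipodal-average function $f$ on half the Bloch sphere before applying the mean value theorem.
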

\begin{proof}
By Theorem \ref{thm:rsp_channel}, there exists a function $\phi_{\ket \psi}$ such that Protocol \ref{prot:vbqc_with_rsp} is equivalent to the VBQC protocol as presented in \cite{leichtle2021verifying} where qubits are transmitted using the channel $\Lambda_{\phi_{\ket \psi},\sigma,U}$.
Therefore, we can simply repeat the proof of Theorem \ref{thm:vbqc_constraint} but with the channel $\Lambda_{\sigma, U}$ replaced by $\Lambda_{\phi_{\ket \psi},\sigma,U}$
This results then exactly in Equation \eqref{eq:vbqc_bound_rate_and_fidelity}, but with $F_\textnormal{tel}(\sigma)$ replaced by the average fidelity over $\Lambda_{\phi_{\ket \psi}, \sigma}$, i.e., $F_\textnormal{RSP}(\sigma)$.
equation \eqref{eq:vbqc_bound_rate_and_fidelity} then follows directly from Lemma \ref{lem:RSP_and_teleportation_fidelity}.
\\

We note that in order to repeat the proof of Theorem \ref{thm:vbqc_constraint} two properties of the effective remote-state-preparation channel need to hold.
Specifically, they need to hold in order to reproduce Lemma \ref{lem:there_exists_unitary}.
These are properties that hold for any linear CPTP map.
$\Lambda_{\phi_{\ket \psi}, \sigma,U}$ however is not linear, but the properties can still be shown to hold.
First, the average fidelity of the channel is invariant under unitary transformations.
That is,
\begin{equation}
\int_\psi d \psi \expectationvalue{\Lambda_{\phi_{\ket \psi},\sigma,U}(\ket \psi)}{\psi} = F_\textnormal{RSP}(\sigma)
\end{equation}
for any unitary $U$.
This follows most evidently from Equation \eqref{eq:F_RSP_evaluated}, where the effect of including a unitary $U$ would be just to replace $\ket \psi \to U \ket \psi$, which leaves the Haar measure invariant.
\\

Second, it needs to be shown that $F_\textnormal{RSP}(\sigma)$ can be evaluated by evaluating the fidelity of $\Lambda_{\textnormal{RSP},\sigma}$ only at six states on the Bloch sphere forming a regular octahedron.
To this end, we use the fact that six states forming a regular octahedron are the union of three mutually unbiased bases and hence form a complex projective 2-design \cite{klappenecker2005}.
Therefore an integral over the Bloch sphere of which the integrand is a second-order polynomial in $\ketbra \psi$ can be replaced by an average over those six states.
We note that this cannot be applied to Equation \eqref{eq:F_RSP_unprocessed} directly, as the dependence of $\mathcal A_{\phi_{\ket \psi}, \ket \psi}$ on $\ket \psi$ means that the integrand is not necessarily a second-order polynomial.
However, it can be applied directly to Equation \eqref{eq:F_RSP_evaluated} to express $F_\textnormal{RSP}(\sigma)$ as an average over the six states.
Below, we show that the resulting expression is the same as taking the average over the six states directly in Equation \eqref{eq:F_RSP_unprocessed}.
\\

An octahedron is made up out of three pairs of antipodal points, so we denote the set of six states $\{\ket {\psi_i}, \Upsilon(\ket {\psi_i})\}$ for $i=0,1,2$.
Then, we can write \eqref{eq:F_RSP_evaluated} as
\begin{equation}
F_\textnormal{RSP}(\sigma) = \frac 1 3 \left( \sum_i \Big(\bra {\psi_i} \otimes \bra {\psi_i} \Big) \sigma \Big(\ket {\psi_i} \otimes \ket {\psi_i} \Big) + \sum_i \Big( (\Upsilon(\ket{\psi_i}))^\dagger \otimes(\Upsilon(\ket{\psi_i}))^\dagger \Big) \sigma \Big( (\Upsilon(\ket{\psi_i}) \otimes (\Upsilon(\ket{\psi_i})\Big) \right).
\end{equation}
It now remains to show that this is the same expression as what one would get from directly averaging the channel fidelity over these six states.
This direct average can be written as
\begin{equation}
\begin{aligned}
&\frac 1 6 \left( \sum_i \bra {\psi_i} \Lambda_{\phi_{\ket \psi}, \sigma}(\ket{\psi_i}) \ket {\psi_i} + \sum_i  (\Upsilon(\ket{\psi_i}))^\dagger \Lambda_{\phi_{\ket \psi}, \sigma}\left(\Upsilon(\ket{\psi_i})\right) \Upsilon(\ket{\psi_i}) \right)\\
=& \frac 1 6 \left( \sum_i \Big(\bra {\psi_i} \otimes \bra {\psi_i} \Big) \sigma \Big(\ket {\psi_i} \otimes \ket {\psi_i} \Big) + \sum_i \Big( (\Upsilon(\ket{\psi_i}))^\dagger \otimes(\Upsilon(\ket{\psi_i}))^\dagger \Big) \sigma \Big( (\Upsilon(\ket{\psi_i}) \otimes (\Upsilon(\ket{\psi_i})\Big) \right) \\
&+ \frac 1 6 \Bigg( \sum_i \Big(\bra {\psi_i} \otimes \bra {\psi_i} \Big) \Big( \mathcal A_{\phi_{\ket {\psi_i}}, \ket {\psi_i}} \otimes \mathcal A_{\phi_{\ket {\psi_i}}, \ket {\psi_i}} \Big)\sigma \Big( \mathcal A_{\phi_{\ket {\psi_i}}, \ket {\psi_i}} \otimes \mathcal A_{\phi_{\ket {\psi_i}}, \ket {\psi_i}} \Big)\Big(\ket {\psi_i} \otimes \ket {\psi_i} \Big) \\
&+ \sum_i \Big( (\Upsilon(\ket{\psi_i}))^\dagger \otimes(\Upsilon(\ket{\psi_i}))^\dagger \Big) \Big( \mathcal A_{\phi_{\ket {\psi_i}}, \ket {\psi_i}} \otimes \mathcal A_{\phi_{\ket {\psi_i}}, \ket {\psi_i}} \Big) \sigma  \Big( \mathcal A_{\phi_{\ket {\psi_i}}, \ket {\psi_i}} \otimes \mathcal A_{\phi_{\ket {\psi_i}}, \ket {\psi_i}} \Big)\Big( (\Upsilon(\ket{\psi_i}) \otimes (\Upsilon(\ket{\psi_i})\Big) \Bigg) \\
=& \frac 1 6 \left( \sum_i \Big(\bra {\psi_i} \otimes \bra {\psi_i} \Big) \sigma \Big(\ket {\psi_i} \otimes \ket {\psi_i} \Big) + \sum_i \Big( (\Upsilon(\ket{\psi_i}))^\dagger \otimes(\Upsilon(\ket{\psi_i}))^\dagger \Big) \sigma \Big( (\Upsilon(\ket{\psi_i}) \otimes (\Upsilon(\ket{\psi_i})\Big) \right) \\
&+ \frac 1 6 \Bigg( \sum_i \Big( e^{-i\phi_{\ket {\psi_i}}} (\Upsilon(\ket {\psi_i}))^\dagger \otimes e^{-i\phi_{\ket {\psi_i}}} (\Upsilon(\ket {\psi_i}))^\dagger \Big) \sigma \Big( e^{i\phi_{\ket {\psi_i}}} \Upsilon(\ket {\psi_i}) \otimes e^{i\phi_{\ket {\psi_i}}} \Upsilon(\ket {\psi_i}) \Big) \\
&+ \sum_i \Big( e^{i\phi_{\ket {\psi_i}}} \bra{\psi_i} \otimes e^{i\phi_{\ket {\psi_i}}} \bra{\psi_i} \Big) \sigma \Big( e^{-i\phi_{\ket {\psi_i}}} \ket{\psi_i} \otimes e^{-i\phi_{\ket {\psi_i}}} \ket{\psi_i} \Big) \Bigg) \\
=& \frac 1 3 \left( \sum_i \Big(\bra {\psi_i} \otimes \bra {\psi_i} \Big) \sigma \Big(\ket {\psi_i} \otimes \ket {\psi_i} \Big) + \sum_i \Big( (\Upsilon(\ket{\psi_i}))^\dagger \otimes(\Upsilon(\ket{\psi_i}))^\dagger \Big) \sigma \Big( (\Upsilon(\ket{\psi_i}) \otimes (\Upsilon(\ket{\psi_i})\Big) \right).
\end{aligned}
\end{equation}
Therefore, we conclude that taking the average of the fidelity over a regular octahedron of $\Lambda_{\phi_{\ket \psi}, \sigma}$ is equivalent to taking the average over the entire Bloch sphere using the Haar measure.
We note that the above argument also holds for $\Lambda_{\phi_{\ket \psi},\sigma,U}$ for any unitary $U$.
\end{proof}

\section{Double-click model}
\label{appendix:sec_double_click_model}

In this appendix, we derive an analytical model for the entangled states created on elementary links when using the double-click protocol,
also known as the  Barrett-Kok protocol \cite{barrettEfficientHighfidelityQuantum2005}.
This model is used as one of the building blocks of our NetSquid simulations, as mentioned in Appendix \ref{appendix:protocols}.
To the best of our knowledge, the analytical model is a novel result.

\subsection{Model assumptions}

The double-click protocol is a protocol for heralded entanglement generation on an elementary link.
First, at each of the two nodes sharing the elementary link (designated A and B), a photon is emitted.
This photon can be in one of two different photonic modes.
For concreteness, we will here assume these two modes are horizontal and vertical polarization ($\ket H$ and $\ket V$, respectively), as is the case for the trapped-ion systems we consider in this work.
However, depending on the hardware platform that is used, they could just as well be some other modes, e.g., different temporal modes (``early'' and ``late''), as is the case for the color-center systems we consider.
Our model does not incorporate any effects specific to the type of modes that are used, and therefore the assumption that the modes are polarization modes is made without loss of generality.
The photon is emitted such that the mode that it is in is maximally entangled with the state of the emitter,
i.e. such that the emitter - photon state after emission is (up to normalization) $\ket{0H} + \ket{1V}$.
Then, the photons emitted at both nodes are sent to a midpoint station.\\

At the midpoint station, the photons from the two different nodes are interfered on a non-polarizing beam splitter.
The two output modes are then passed through a polarizing beam splitter, of which each output mode is impinged on a single-photon detector.
There are thus four single-photon detectors, two corresponding to horizontal polarization, and two corresponding to vertical polarization.
This setup is illustrated in Figure \ref{fig:double_click_setup}.
If a single photon is detected at one of the ``horizontal'' detectors and one at the ``vertical'' detectors,
assuming photons in the same polarization emitted at different nodes are indistinguishable,
the photons are projected on the state $\ket{HV} \pm \ket{VH}$.
This results in the emitters being in the maximally entangled state $\ket{\Psi^{\pm}} = \ket{01} \pm \ket{10}$.
The $+$ state is obtained if the two detectors clicking are located behind the same polarizing beam splitter,
while the $-$ state is obtained if they are located behind different polarizing beam splitters.
Note: if a different type of modes is used, this setup may look slightly different.
For example, in case temporal modes are used, there is no need for polarizing beam splitters and using only two single-photon detectors is sufficient as the different modes can be distinguished based on the time at which they are detected.
\\

\begin{figure}[h]
\includegraphics[width=.7\textwidth]{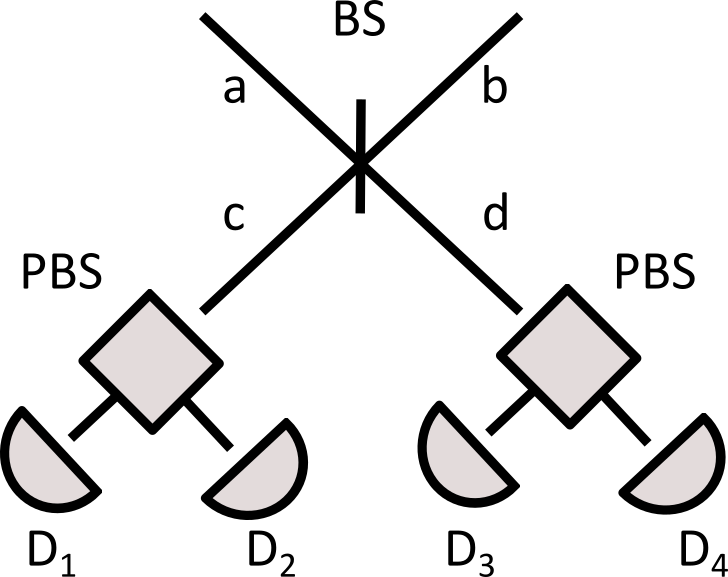}
\centering
\caption{
    Setup of midpoint station in double-click entanglement generation using polarization-encoded photons.
    Two photonic modes (a and b) are interfered on a non-polarizing 50-50 beam splitter (BS).
    The output modes (c and d) are then each led into a separate polarizing beam splitter (PBS).
    Each of the two output modes of each of the two polarizing beam splitters is caught at one of four detectors (D1, D2, D3 and D4).
    }
\label{fig:double_click_setup}
\end{figure}

In our simulations, we use an analytical model to describe the success probability and post-measurement state of the double-click scheme in the presence of several imperfections.
The imperfections included in our model are
\begin{itemize}
    \item{
        Photon loss.
        Due to nonunit collection efficiency of emitters,
        attenuation losses in optical fiber
        and inefficiency of single-photon detectors,
        there is often only a small probability that an emitted photon is not lost before it partakes in the midpoint measurement.
        This is captured by the parameters $p_\textnormal{A}$ and $p_\textnormal{A}$,
        where $p_\textnormal{A}$ ($p_\textnormal{A}$)  denotes the detection probability given that a photon is emitted at node $A$ ($B$).
        These account both for attenuation losses and for the photon detection probability excluding attenuation losses.
    }
    \item{
        Imperfect indistinguishability.
        We assume photons emitted by the different nodes with the same polarization are not perfectly indistinguishable.
        This is captured using the Hong-Ou-Mandel visibility $V$ \cite{hongMeasurementSubpicosecondTime1987, bouchard2020}.
        We assume the visibility is the same between two horizontally polarized photons as between two vertically polarized photons.
    }
    \item{
        Non-photon-number-resolving detectors.
        In our model, we distinguish between the case of photon-number-resolving detectors (case NR) and non-photon-number-resolving detectors (case NNR).
        If the used detectors are NR, when there are two or more photons at the same detector during a single midpoint measurement, all photons are registered individually.
        However, if detectors are NNR, they cannot distinguish between one or more photons.
        This model does not account for the case when photons can sometimes, but not always, be distinguished.
        Such behavior occurs in reality when e.g. two photons can only be resolved if the time between their detections is large enough.
    }
    \item{
        Detector dark counts.
        Sometimes, single-photon detectors report the presence of a photon when there is none.
        We model this using a fixed dark-count probability, $p_\textnormal{dc}$.
        During a midpoint measurement, each single-photon detector gives a single dark count with probability $p_\textnormal{dc}$,
        and gives none with probability $1 - p_\textnormal{dc}$.
        Note that, in reality, for NR detectors, there is also a nonzero probability for multiple dark counts to occur during a single midpoint measurement in the same detector.
        Therefore, for NR detectors, treating dark counts this way will only lead to an approximation.
        The approximation can be expected to be accurate if the probability of multiple dark counts is negligible.
        For NNR detectors, this way of treating dark counts does not lead to an approximation but is perfectly accurate;
        multiple dark counts cannot be distinguished from one dark count, and therefore the probability of two or more dark counts and the probability of one dark count can be safely absorbed into one number, which is $p_\textnormal{dc}$.
    }
    \item{
        Imperfect emission.
        It is possible that, directly after emission,
        the emitter and photon are not in the maximally entangled state $\ket{\phi} = \tfrac 1 {\sqrt{2}} (\ket{0H} + \ket{1V})$.
        To capture this, the state is modelled as a Werner state of the form $ \rho_\textnormal{emit} = q\ketbra{\phi} + (1 - q) \frac{\mathbf 1} 4$.
        For each node, the parameter $q$ is chosen such that $F_\textnormal{em A}$ ($F_\textnormal{em B}$) is the emission fidelity $q + (1 - q) / 4 = \tfrac 1 4 (1 + 3 q)$ at node A (B).
    }
\end{itemize}

\subsection{POVMs}

To derive an analytical model, we notice that the midpoint station effectively implements a single-click midpoint measurement on each of the two different photonic modes (horizontal and vertical) separately.
To make use of this, we write the photonic states as Fock states on the two different modes,
such that $\ket H = \ket 1 _H \ket 0 _V$ and $\ket V = \ket 0 _H \ket 1 _V$.
Distinguishing also between photons arriving from side A and side B,
this allows us to write the pre-measurement state as a state in the Hilbert space
that is obtained from taking the tensor product between the Hilbert spaces of the emitters and the horizontally and vertically polarized photons.
That is, $\mathcal H _\textnormal{pre-measurement} = \mathcal H_\textnormal{A} \otimes \mathcal H_\textnormal{B} \otimes \mathcal H_{H_\textnormal{A}} \otimes \mathcal H_{H_\textnormal{B}} \otimes \mathcal H_{V_\textnormal{A}} \otimes \mathcal H_{V_\textnormal{B}}$.
Since we are not interested in the post-measurement state of the photons,
we can model the measurement as a POVM
\footnote{
    Note that we are interested in the post-measurement state of the emitters.
    However, as long as the state of the photons is traced out immediately after the measurement,
    a POVM is sufficient to accurately determine the post-measurement state.
}.
The POVM elements of the double-click midpoint station can then be derived from the single-click measurement operators as
\begin{equation}
M_\textnormal{double click, $ijkl$}, = \mathbf 1 _\textnormal{A} \otimes \mathbf 1_\textnormal{B} \otimes [M_\textnormal{single click, $ij$}]_{H_\textnormal{A} H_\textnormal{B}} \otimes [M_\textnormal{single click, $kl$}]_{V_\textnormal{A} V_\textnormal{B}}.
\end{equation}
Here, $M_\textnormal{single click, $ij$}$ is the POVM element corresponding to $i$ clicks in the first detector and $j$ clicks in the second detector of a single-click setup.
Thus, keeping in line with the naming of Figure \ref{fig:double_click_setup}, $M_\textnormal{double click, $ijkl$}$ is the POVM element corresponding to $i$ clicks in detector 1, $j$ clicks in detector 3, $k$ clicks in detector 2, and $l$ clicks in detector 4,
such that detectors 1 and 3, and 2 and 4 correspond to the same polarization,
and detector 1 and 2, and 3 and 4, correspond to the same polarizing beam splitter.
The single-click POVM elements can be obtained from Section D.5.2 of the supplemental material of \cite{dahlberg2019link}.
In doing so, we identify the square absolute value of the overlap of the two photon wave functions, $|\mu|^2$ in \cite{dahlberg2019link}, with the Hong-Ou-Mandel visibility $V$.
The reason for this is that these two are the same when both photons are in a pure state \cite{bouchard2020}
(we note that the photons in our model are only mixed in the polarization degree of freedom, the wave packets themselves are pure and therefore we can safely make the substitution).
We modify the single-click POVM elements from \cite{dahlberg2019link} to account for dark counts as follows (dropping the ``single click'' subscript):
\begin{equation}
\begin{aligned}
M_{10}' &= M_{10} (1 - p_\textnormal{dc}) ^ 2 + M_{00} p_\textnormal{dc} (1 - p_\textnormal{dc}), \\
M_{20}' &= M_{20} (1 - p_\textnormal{dc})+ M_{10} p_\textnormal{dc} (1 - p_\textnormal{dc}),
\end{aligned}
\end{equation}
and similarly for $M_{01}'$ and $M_{02}'$.
Note that we have absorbed the POVM element $M_{30}' = M_{20}p_\textnormal{dc} (1 - p_\textnormal{dc})$ into the POVM element $M_{20}'$,
since for neither NR and NNR detectors will the occurrence of two and the occurrence of three detections be discriminated;
for NNR detectors, the different detection events cannot be resolved,
while for NR detectors, both the presence of two and of three detections will lead to heralded failure.
Other POVM elements ($M_{00}'$, $M_{11}'$, $M_{21}'$, ...) are not needed for our analysis,
since having no detection in one of the polarizations, or having two detections at different detectors for one of the modes, is always heralded as a failure.\\

The double-click protocol heralds two different measurement outcomes as success, namely outcome ``detectors behind same polarizing beam splitter'' and outcome ``detectors behind different polarizing beam splitters''.
These two outcomes are henceforth abbreviated ``same PBS'' and ``different PBS''.
To determine the probability of each occurring and the corresponding post-measurement states,
we need to write down the POVM elements corresponding to these two outcomes.
Here, we note that in the case NR, the presence of multiple detections in a single detector is always heralded as a failure,
while in the case NNR, multiple detections cannot be distinguished from a single detection.
This gives the POVM elements (only writing the part acting on $\mathcal H_{H_\textnormal{A}} \otimes \mathcal H_{H_\textnormal{B}} \otimes \mathcal H_{V_\textnormal{A}} \otimes \mathcal H_{V_\textnormal{B}}$)
\begin{equation}
\begin{aligned}
M_\textnormal{same PBS, NR} &= M_{01}' \otimes M_{01}' + M_{10}' \otimes M_{10}', \\
M_\textnormal{different PBS, NR} &= M_{01}' \otimes M_{10}' + M_{10}' \otimes M_{01}', \\
M_\textnormal{same PBS, NNR} &= \sum_{n, m=1, 2}\Big( M_{0n}' \otimes M_{0m}' + M_{n0}' \otimes M_{m0}'\Big),\\
M_\textnormal{different PBS, NNR} &= \sum_{n, m=1, 2}\Big( M_{0n}' \otimes M_{m0}' + M_{0n}' \otimes M_{m0}'\Big).
\end{aligned}
\end{equation}

\subsection{Results without coincidence window}

To derive formulas for the success probability and post-measurement state,
we explicitly calculate the probabilities and post-measurement states of the above POVM elements on the six-qubit space using the symbolic-mathematics Python package SymPy \cite{sympy}.
The corresponding code can be found in the repository holding our simulation code~\cite{delft_eindhoven_code}.
The results are obtained by first initializing Werner states for each node and applying amplitude-damping channels with loss parameter $1 - p_\textnormal{A}$ on the $\mathcal {H}_{H_\textnormal{A}}$ and $\mathcal {H}_{V_\textnormal{A}}$ subspaces and $1 - p_\textnormal{B}$ on the $\mathcal{H}_{H_\textnormal{B}}$ and $\mathcal{H}_{V_\textnormal{B}}$ subspaces.
Then the probability and post-measurement state for both the ``same PBS''  and ``different PBS'' measurement outcomes are calculated from this pre-measurement state in both the cases NR and NNR.
The result can be written as
\begin{equation}
\begin{aligned}
p_\textnormal{double click} =& p_\textnormal{T} + p_{\textnormal{F}1} + p_{\textnormal{F}2} + p_{\textnormal{F}3} + p_{\textnormal{F}4},\\
\rho_\textnormal{double click} =& q_\textnormal{em} \Big( p_\textnormal{T} \ketbra{\Psi^{\pm}} + p_{\textnormal{F}1} \frac{\ketbra{01} + \ketbra{10}} 2 + p_{\textnormal{F}2} \frac{\ketbra{00} + \ketbra{11}} 2 \Big) \\
&+ \Big( (1 - q_\textnormal{em})(p_\textnormal{T} + p_{\textnormal{F}1} + p_{\textnormal{F}2}) + p_{\textnormal{F}3} + p_{\textnormal{F}4} \Big) \frac{\mathbf 1}{4},
\end{aligned}
\end{equation}
where $p_\textnormal{double click}$ is the success probability and $\rho_\textnormal{double click}$ is the unnormalized post-measurement state.
The different constants are defined as
\begin{equation}
\begin{aligned}
q_\textnormal{em} =& \frac 1 9 (4 F_\textnormal{em $A$} - 1) (4 F_\textnormal{em $B$} - 1),\\
p_\textnormal{T} =&
\begin{cases}
    \frac 1 2 p_\textnormal{A} p_\textnormal{B} V (1 - p_\textnormal{dc})^4 \qquad &\textnormal{if NR,}\\
    \frac 1 2 p_\textnormal{A} p_\textnormal{B} V (1 - p_\textnormal{dc})^2 \qquad &\textnormal{if NNR,}\\
\end{cases} \\
p_{\textnormal{F}1} =&
\begin{cases}
    \frac 1 2 p_\textnormal{A} p_\textnormal{B} (1 - V) (1 - p_\textnormal{dc})^4 \qquad &\textnormal{if NR,}\\
    \frac 1 2 p_\textnormal{A} p_\textnormal{B} (1 - V) (1 - p_\textnormal{dc})^2 \qquad &\textnormal{if NNR,}\\
\end{cases} \\
p_{\textnormal{F}2} =&
\begin{cases}
    0 \qquad &\textnormal{if NR,}\\
    \frac 1 2 p_\textnormal{A} p_\textnormal{B} (1 + V) p_\textnormal{dc} (1 - p_\textnormal{dc})^2 \qquad &\textnormal{if NNR,}\\
\end{cases} \\
p_{\textnormal{F}3} =&
\begin{cases}
    2 [p_\textnormal{A} (1 - p_\textnormal{B}) + (1 - p_\textnormal{A}) p_\textnormal{B}] p_\textnormal{dc} (1 - p_\textnormal{dc})^3 \qquad &\textnormal{if NR,}\\
    2 [p_\textnormal{A} (1 - p_\textnormal{B}) + (1 - p_\textnormal{A}) p_\textnormal{B}] p_\textnormal{dc} (1 - p_\textnormal{dc})^2 \qquad &\textnormal{if NNR,}\\
\end{cases} \\
p_{\textnormal{F}4} =& 4 (1 - p_\textnormal{A}) (1 - p_\textnormal{B}) p_\textnormal{dc}^2 (1 - p_\textnormal{dc})^2.\\
\end{aligned}
\end{equation}
Furthermore, the Bell states are defined by
\begin{equation}
\ket{\Psi^\pm} = \frac 1 {\sqrt 2} (\ket {01} \pm \ket {10}).
\end{equation}

The different terms in the equations can be interpreted as corresponding to different possible detection cases,
with $p_i$ being the probability of case $i$ occurring,
and the density matrix that it multiplies with the state that is created in that case.
The different cases are as follows.
\begin{itemize}
    \item{
        \textbf{case T}.
        This is a ``true'' heralded success.
        That is, two photons were detected at the midpoint station (probability $p_\textnormal{A} p_\textnormal{B}$)
        in different polarizations (probability $\tfrac 1 2$),
        and they behaved as indistinguishable photons (i.e. they interfered) (probability $V$).
        Finally, there cannot have been dark counts in any of the detectors,
        except for the detectors at which the photons were detected in the case of non-number-resolving detectors (as this doesn't change the outcome).
        The resulting density matrix is one corresponding to the Bell state $\ket {\Psi^\pm}$ (+ for both detections at the same polarizing beam splitter, - for both detections at different polarizing beam splitters).
    }
    \item{
        \textbf{case F1}.
        This is the first ``false'' heralded success
        (i.e. a false positive; a ``success'' detection pattern is observed without there being a maximally entangled state).
        Again, two photons arrived at the midpoint station and were detected in different polarizations (probability $\tfrac 1 2 p_\textnormal{A} p_\textnormal{B}$).
        However, they did not behave as indistinguishable photons (i.e. they did not interfere) (probability $(1 - V)$).
        Since the photons are in different polarizations, the post-measurement state will be classically anticorrelated $\tfrac 1 2 (\ketbra{01} + \ketbra{10})$.
    }
    \item{
        \textbf{case F2}.
        This is the second ``false'' heralded success.
        Two photons arrived at the midpoint station (probability $p_\textnormal{A} p_\textnormal{B}$) and are detected at the exact same detector.
        Additionally, a dark count occurs, causing a click pattern that is heralded as a success.
        For this, both photons need to be detected in the same polarization (probability $\tfrac 1 2$) and end up at the same detector.
        If they behave as indistinguishable photons (probability $V$), they will bunch together due to Hong-Ou-Mandel interference and will be guaranteed to go to the same detector.
        If they do not behave as indistinguishable photons (probability $1 - V$), there is a $\tfrac 1 2$ probability that they happen to go to the same detector.
        Combining, this gives a factor $V + \tfrac 1 2 (1 - V) = \tfrac 1 2 (1 + V)$.
        Since the photons are detected with the same polarization, the post-measurement state will be classically correlated $\tfrac 1 2 (\ketbra{00} + \ketbra{11})$.
        Note that this case cannot occur when detectors are NR,
        as detecting both photons at the same detector is then heralded as a failure.
    }
    \item{
        \textbf{case F3}.
        This is the third ``false'' heralded success.
        Only one photon arrives at the midpoint station (probability $p_\textnormal{A} (1 - p_\textnormal{B}) + (1 - p_\textnormal{A}) p_\textnormal{B}$),
        and a dark count makes the detector click pattern look like a success.
        For this, either of the two detectors corresponding to the polarization the photon is not detected in must undergo a dark count, while the remaining detectors do not undergo dark counts,
        which occurs with probability $2 p_\textnormal{dc} (1 - p_\textnormal{dc})^n$,
        where $n=3$ in case NR and $n=2$ in case NNR (because then it doesn't matter whether there is a dark count in the detector that detects the photon).
        There is no information about correlation between the photons,
        therefore the post-measurement state is maximally mixed.
    }
    \item{
        \textbf{case F4}.
        This is the fourth and final ``false'' heralded success.
        No photons arrive at the midpoint station (probability $(1 - p_\textnormal{A})(1 - p_\textnormal{B})$),
        and the detector click pattern is created solely by dark counts.
        Since there are four distinct click patterns resulting in a heralded success,
        these dark counts occur with probability $4 p_\textnormal{dc}^2 (1 - p_\textnormal{dc})^2$.
        There is no information about correlation between the photons,
        therefore the post-measurement state is maximally mixed.
    }
\end{itemize}

Finally, to understand the role of the parameter $q_\textnormal{em}$,
note that when either of the initial emitter - photon states is maximally mixed instead of entangled,
there is no correlation between the emitter and the photon.
Therefore, whatever detection event takes place at the midpoint station,
no information about correlation between the emitters is revealed.
The post-measurement state is thus a maximally mixed state in this case.
The probability that both nodes send an entangled photon instead of a maximally mixed state is exactly $q_\textnormal{em}$,
and thus the probability that the post-measurement state is maximally mixed regardless which of the above cases takes place is $1 - q_\textnormal{em}$.

\subsection{Results with coincidence window}

When performing the double-click protocol,
click patterns can be accepted as a success or instead rejected based on the time at which the two detector clicks are registered.
A first reason for this is that each round of the double-click protocol  only lasts a finite amount of time.
That is, there is a detection time window corresponding to each round,
and only clicks occurring during the detection time window can result in a heralded success for that specific round.
If there is a nonzero probability that photons are detected outside of the detection time window,
e.g. because their wave functions are stretched very long,
this can be captured in the model by adjusting the detection probabilities appropriately ($p_\textnormal{A}$ and $p_\textnormal{B}$). \\

However, there can also be a second reason.
Sometimes, it is desirable to implement a coincidence time window.
In this case, when two clicks occur within the correct detectors and within the detection time window,
a success is only heralded if the time between the two clicks is smaller than the coincidence time window.
While this lowers the success probability of the double-click protocol,
it can increase the Hong-Ou-Mandel visibility $V$
(thereby increasing the fidelity of entangled states created using the protocol). \\

To account for protocols that implement a coincidence time window,
we here introduce three new parameters into our model.
\begin{itemize}
    \item
    $p_\textnormal{ph-ph}$,
    the probability that two photon detections that occur within the detection time window occur less than one coincidence time window away from each other.

    \item
    $p_\textnormal{ph-dc}$,
    the probability that a photon detection and a dark count that occur within the detection time window occur less than one coincidence time window away from each other.

    \item
    $p_\textnormal{dc-dc}$,
    the probability that two dark counts that occur within the detection time window occur less than one coincidence time window away from each other.
\end{itemize}
These parameters will be functions of photon-detection-time probability-density functions and the coincidence time window (we calculate them for a simplified model of the photon state in Appendix \ref{app:time_windows})
Then, we make the following adjustments to the above results to account for the coincidence time window:
\begin{equation}
\begin{aligned}
p_\textnormal{T} &\to p_\textnormal{ph-ph} p_\textnormal{T},\\
p_{\textnormal{F}1} &\to p_\textnormal{ph-ph} p_{\textnormal{F}1},\\
p_{\textnormal{F}2} &\to p_\textnormal{ph-dc} p_{\textnormal{F}2},\\
p_{\textnormal{F}3} &\to p_\textnormal{ph-dc} p_{\textnormal{F}3},\\
p_{\textnormal{F}4} &\to p_\textnormal{dc-dc} p_{\textnormal{F}4}.
\end{aligned}
\end{equation}
The reason for this is as follows.
$p_\textnormal{T}$ corresponds to an event where two photons are detected, leading to a heralded success.
When using a coincidence time window, the two photons are only close enough in time to lead to a heralded success with probability $p_\textnormal{ph-ph}$.
The same logic holds for $p_{\textnormal{F}1}$.
Probability $p_{\textnormal{F}3}$ corresponds to a photon detection and a dark count leading to a heralded success;
that now only happens if the photon detection and dark count are within one coincidence time window,
which is exactly $p_\textnormal{ph-dc}$.
And probability $p_{\textnormal{F}4}$ corresponds to a heralded success due to two dark counts.
These dark counts also should not be separated by too much time,
giving a factor $p_\textnormal{dc-dc}$.
Less straightforward to adjust is $p_{\textnormal{F}2}$ in the NNR case.
It corresponds to an event where two photons are detected within the same detector,
but they are not independently resolved.
The probability that the time stamp assigned to this detection is within a coincidence window from a dark count occurring in another detector,
may not be exactly $p_\textnormal{ph-dc}$.
However, we do expect it to be a reasonable approximation,
and therefore we use $p_\textnormal{ph-dc}$ to avoid introducing a fourth new parameter to the model.

\section{Effect of detection and coincidence time windows}
\label{app:time_windows}

In the double-click protocol, success is declared only if there are clicks in two detectors that measure different polarization modes.
These clicks typically occur at random times, and a prerequisite for success is that certain conditions on the detection times are met.
First, in any practical experiment, detection time windows have to be of finite duration.
If a click only occurs after the detection time window closes, it is effectively not detected.
Thus, success is only declared if two clicks occur within the detection time window.
Second, it is sometimes beneficial to also condition success on the time difference between the two clicks.
In that case, a success is only declared if the time between the clicks does not exceed the coincidence time window.
This can help boost the Hong-Ou-Mandel visibility of the photon interference and thereby increase the fidelity of entangled states.
\\

In Appendix \ref{appendix:sec_double_click_model}, we present a model that allows for the calculation of the success probability of the double-click protocol and the two-qubit state that it creates.
The coincidence probabilities between two photons, two dark counts and a photon and a dark count are free parameters in this model, just as the visibility and the photon detection probability.
To accurately account for the detection time window and coincidence time window in this model, these parameters need to be given appropriate values.
In this appendix, we introduce a simplified model for the photon state that allows us to calculate the required values.
We use this simplified model to simulate double-click entanglement generation with trapped-ion devices, as described in Appendix \ref{sec:setup_ti}.
To the best of our knowledge, this is a novel result.
\\

\begin{definition}
Detection time window.
If a detection time window of duration $T > 0$ is used in the double-click protocol,
success is only heralded if both detector clicks occur within the time interval $[0, T]$.
\end{definition}

\begin{definition}
Coincidence time window.
If a coincidence time window of duration $\tau > 0$ is used in the double-click protocol,
success is only heralded if the time between both detector clicks does not exceed $\tau$.
\end{definition}

\begin{definition}
Photon state described by $(p_\textnormal{em}(t), \psi_{t_0}(t))$.
Let $p_\textnormal{em}(t)$ be a function such that
\begin{equation}
\int_0^\infty dt p_\textnormal{em}(t) = 1
\end{equation}
and let $\psi_{t_0}(t)$ be a function such that
\begin{equation}
\int_{t_0}^\infty dt |\psi_{t_0}(t)|^2 = 1.
\end{equation}
Then, $p_\textnormal{em}(t)$ can be interpreted as a probability density function for the photon emission time,
and $\psi(t)$ can be interpreted as the temporal wave function of a photon emitted at $t=0$.
The tuple $(p_\textnormal{em}(t), \psi(t))$ then describes a mixed photon state
\begin{equation}
\rho = \int_0^\infty dt_0 p_\textnormal{em}(t_0) \ketbra{\psi_{t_0}}
\end{equation}
where
\begin{equation}
\ket{\psi_{t_0}} = \int_{t_0}^\infty dt \psi_{t_0}(t) a^\dagger_t \ket 0
\end{equation}
with $a^\dagger_t$ the photon's creation operator at time $t$.
\end{definition}

The temporal impurity of a state described by $(p_\textnormal{em}(t), \psi_{t_0}(t))$ (if $p_\textnormal{em}(t)$ is not a delta function) can reduce the Hong-Ou-Mandel visibility of photons.
The reason for this is that photons that are emitted at very different times have small overlap.
If two photons are detected close together, they were probably not emitted at very different times (depending on their distributions).
Using a coincidence time window is then effectively applying a temporal purification to the photons, allowing for an increase in visibility.

\begin{definition}
Double-exponential photon state $(a, b)$.
The double-exponential photon state described by $(a, b)$,
where both $a$ and $b$ are constants with dimension time$^{-1}$,
is the photon state described by $(p_\textnormal{em}(t), \psi_{t_0}(t))$ where
\begin{equation}
p_\textnormal{em}(t) = a \textnormal{e}^{-at} \Theta(t)
\end{equation}
and
\begin{equation}
\psi_{t_0}(t) =
\sqrt{2b} \textnormal{e}^{-b(t - t_0)} \Theta(t-t_0).
\end{equation}
Here, $\Theta(t)$ is the Heaviside step function.
That is, both the emission-time probability density function and the pure photon wavefunctions are one-sided exponentials.
\end{definition}

In this appendix, we model all photons emitted by processing nodes as having a double-exponential state.
The pure wave functions of photons emitted by spontaneous decay of an excited state to a ground state in a two-level system are described well as one-sided exponentials \cite{kambs2018}.
An example of a system where photons are emitted this way is NV centers \cite{pompili2021realization}.
Similarly, pure wave functions of photons emitted using cavity-enhanced Raman transitions (using a constant Rabi pulse), as is the case for the trapped-ion systems we study in this paper, also look approximately exponential \cite{fioretto2020}.
We note that such trapped-ion systems are exactly the use case in this paper for the simplified model presented here.
For solid-state sources such as color centers, temporal impurity of photons is not a limiting factor \cite{kambs2018}.
However, for cavity-enhanced Raman transitions, off-resonant scattering causes the photon to only be emitted at a random time after a trajectory through the ion-state manifold \cite{meraner_indistinguishable_2020, fioretto2020}.
We model the resulting temporal impurity using the function $p_\textnormal{em}(t)$.
We note that we do not expect this function to be exponential for cavity-enhanced Raman transitions.
For instance, the function should include a $\delta(0)$ delta-function contribution to account for the probability that not a single off-resonant scattering takes place.
However, in the toy model presented here, we will assume $p_\textnormal{em}(t)$ is a one-sided exponential so that we have a model with a small number of parameters in which exact closed-form expressions can be obtained for the relevant quantities.
As shown in Appendix \ref{sec:setup_ti}, this model can be fitted well to experimental data for interference between photons emitted by ion-cavity systems.
\\

\begin{lemma} \label{lem:detection_time_pdf}
Detection-time probability density function.
Consider the case where a photon with double-exponential state $(a, b)$ is emitted directly on a photon detector.
Assume this photon detector is perfect except that it has a possibly nonunit detection efficiency $\eta$ (with a flat response).
The probability density function for the photon being detected at time $t$ is given by
\begin{equation}
p(t) = \frac{2ab \eta}{a - 2 b} \left( \textnormal{e}^{-2bt} - \textnormal{e}^{-at} \right) \Theta(t).
\end{equation}
This probability density function may be subnormalized, as it is also possible that no photon is detected.
\end{lemma}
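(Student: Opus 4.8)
The plan is to treat the detection time as a marginal over the hidden emission time $t_0$. By the definition of the photon state described by $(p_\text{em}(t), \psi_{t_0}(t))$, a photon emitted at time $t_0$ has temporal wave function $\psi_{t_0}(t)$, so the probability density of it being detected at time $t$, conditioned on emission at $t_0$ and on the photon being registered at all, is $|\psi_{t_0}(t)|^2$. A perfect detector with flat response and efficiency $\eta$ registers each incoming photon with probability $\eta$ independently of $t$, contributing only an overall multiplicative factor $\eta$ (this is precisely what allows $p(t)$ to be subnormalized). First I would therefore write the detection-time density as the weighted mixture
\begin{equation}
p(t) = \eta \int_0^\infty dt_0\, p_\text{em}(t_0)\, |\psi_{t_0}(t)|^2 .
\end{equation}

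Next I would substitute the double-exponential forms $p_\text{em}(t_0) = a\,e^{-a t_0}\,\Theta(t_0)$ and $|\psi_{t_0}(t)|^2 = 2b\,e^{-2b(t-t_0)}\,\Theta(t-t_0)$. The two step functions collapse into the single constraint $0 \le t_0 \le t$ (with $t>0$), yielding
\begin{equation}
p(t) = 2ab\eta\, e^{-2bt}\,\Theta(t) \int_0^t dt_0\, e^{(2b-a)t_0} .
\end{equation}
The remaining integral is elementary: for $a \neq 2b$ it equals $\big(e^{(2b-a)t}-1\big)/(2b-a)$, and after reinserting $e^{-2bt}$ and using $e^{-2bt}e^{(2b-a)t}=e^{-at}$ one obtains $p(t) = \frac{2ab\eta}{2b-a}\big(e^{-at}-e^{-2bt}\big)\Theta(t)$. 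Flipping the sign of the denominator via $1/(2b-a) = -1/(a-2b)$ reorders the two exponentials into exactly the form stated in the lemma.

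There is no genuine obstacle here; the computation is a single convolution of two exponentials, and the main thing to do carefully is the bookkeeping of the Heaviside functions (the lower limit $0$ coming from $p_\text{em}$ and the upper limit $t$ from the causal support of $\psi_{t_0}$), the sign of the denominator, and the degenerate case $a=2b$. In that degenerate case the integrand $e^{(2b-a)t_0}$ is constant, the integral reduces to $t$, and $p(t)=2ab\eta\,t\,e^{-at}\Theta(t)$, which is exactly the $a\to 2b$ limit of the general formula, so the closed form holds as a limit everywhere. I would close by noting the consistency check $\int_0^\infty p(t)\,dt = \eta$, which follows immediately from $\int_{t_0}^\infty |\psi_{t_0}(t)|^2\,dt = 1$ and confirms both the interpretation of $\eta$ as the overall detection probability and the subnormalization flagged in the statement.
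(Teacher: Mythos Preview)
Your proposal is correct and follows essentially the same approach as the paper: both arrive at the mixture formula $p(t)=\eta\int_0^\infty dt_0\,p_\text{em}(t_0)\,|\psi_{t_0}(t)|^2$ and then perform the same elementary integral. The only difference is that the paper first derives this formula from an explicit POVM computation (with operators $E'_t=\eta\,a_t^\dagger\ketbra{0}a_t$ and the cyclic property of the trace), whereas you invoke it directly from the Born rule plus marginalization; your added treatment of the degenerate case $a=2b$ and the normalization check $\int p(t)\,dt=\eta$ are nice bonuses absent from the paper.
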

\begin{proof}
A perfect detector implements a POVM with operators $E_t = a^\dagger_t \ketbra 0 a_t$.
Instead, a detector with efficiency factor $\eta$ implements a POVM with operators $E'_t = \eta E_t$ and $F = 1 - \eta$,
where $F$ corresponds to no photon detection taking place.
The probability density that the photon is detected at time $t$ is then the probability density corresponding to the POVM operator $E'_t$, given by
\begin{equation}
p(t) = \Trace (E'_t \rho).
\end{equation}
For a photon state described by $(p_\textnormal{em}(t), \psi_{t_0}(t))$, the density matrix is
\begin{equation}
\rho = \int_0^\infty dt_0 \int_{t_0}^\infty dt_1 \int_{t_0}^\infty dt_2 p_\textnormal{em}(t_0) \psi_{t_0}(t_1) \psi^*_{t_0}(t_1) a^\dagger_{t_1} \ketbra 0 a_{t_2}
\end{equation}
This can be evaluated using the cyclic property of the trace to give
\begin{equation}
\label{eq:detection_time_pdf_for_general_state}
\begin{aligned}
p(t) &=\eta \int_0^\infty dt_0 \int_{t_0}^\infty dt_1 \int_{t_0}^\infty dt_2  p_\textnormal{em}(t_0) \psi_{t_0}(t_1) \psi^*_{t_0}(t_1)\bra 0 a_t a^\dagger_{t_1} \ketbra 0 a_{t_2} a^\dagger_t \ket 0 \\
&=\eta \int_0^\infty dt_0 \int_{t_0}^\infty dt_1 \int_{t_0}^\infty dt_2  p_\textnormal{em}(t_0) \psi_{t_0}(t_1) \psi^*_{t_0}(t_1) \delta(t-t_1) \delta(t-t_2) \\
&=\eta \int_0^\infty dt_0 p_\textnormal{em}(t_0)|\psi_{t_0}(t)|^2.
\end{aligned}
\end{equation}
For a double-exponential photon state $(a, b)$, this becomes
\begin{equation}
\begin{aligned}
p(t) &= 2ab\eta \textnormal{e}^{-2bt} \int_0^\infty dt_0 \textnormal{e}^{-(a - 2 b)t_0} \Theta(t - t_0) \\
&= 2ab\eta \textnormal{e}^{-2bt} \Theta(t) \int_0^t dt_0 \textnormal{e}^{-(a - 2 b)t_0} \\
&= 2ab\eta \textnormal{e}^{-2b} \Theta(t) \frac 1 {a - 2b} (1 - \textnormal{e}^{-(a - 2b) t}) \\
&= \frac {2ab\eta}{a - 2b} \left(\textnormal{e}^{-2bt} - \textnormal{e}^{-at} \right) \Theta(t).
\end{aligned}
\end{equation}
\end{proof}

\begin{definition}
\label{def:coin_prob}
Coincidence probability.
When using a detection time window $T$ and coincidence time window of $\tau$ in the double-click protocol,
the coincidence probability is the probability that given that there are two clicks within the detection time window, 
the clicks are also within one coincidence time window.
\end{definition}

Our goal now is to find the coincidence probability for two double-exponential photons.
This requires us to calculate the probability that two photons arrive within a time $\tau$ of one another,
conditioned on each of the photons being successfully detected within the time interval $[0, T]$.
To this end, we calculate the probability density function for the detection time of a double-exponential photon conditioned on the photon being successfully detected.
This requires us to calculate the detection probability of the photon, i.e., the probability that it is successfully detected within the detection time window.
The detection probability is also an important result in itself, as it is required by the model presented in Appendix \ref{appendix:sec_double_click_model}
(it takes the role of $p_A$ and $p_B$ in this model).

\begin{theorem} \label{thm:prob_photon_in_detection_time_window}
Detection probability.
If a detection time window of duration $T$ is used, then the probability that a photon with double-exponential state $(a, b)$ is detected within the time window is given by
\begin{equation} \label{eq:prob_photon_in_detection_time_window}
p_\textnormal{det}(T) = \eta \bigg[1 - \frac a {a - 2b} \textnormal{e}^{-2bT} + \frac {2b} {a - 2b} \textnormal{e}^{-aT}\bigg].
\end{equation}
\end{theorem}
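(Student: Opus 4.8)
The plan is to obtain $p_\text{det}(T)$ directly by integrating the detection-time probability density function established in Lemma~\ref{lem:detection_time_pdf} over the detection window $[0,T]$. Since $p(t)$ is by construction the (possibly subnormalized) probability density that the photon is detected at time $t$, and being detected within the window means having a detection time in $[0,T]$, the detection probability is simply the cumulative integral
\begin{equation}
p_\text{det}(T) = \int_0^T p(t)\, dt = \frac{2ab\eta}{a-2b} \int_0^T \left( e^{-2bt} - e^{-at} \right) dt,
\end{equation}
where I have inserted the closed form of $p(t)$ from Lemma~\ref{lem:detection_time_pdf} and used that the step function $\Theta(t)$ is identically one on the domain of integration.

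First I would evaluate the two elementary exponential integrals, namely $\int_0^T e^{-2bt}\,dt = (1-e^{-2bT})/(2b)$ and $\int_0^T e^{-at}\,dt = (1-e^{-aT})/a$. Substituting these back and cancelling the factor $2ab$ in the prefactor against the denominators $2b$ and $a$ produced by the integrations gives
\begin{equation}
p_\text{det}(T) = \frac{\eta}{a-2b} \left[ a\left(1-e^{-2bT}\right) - 2b\left(1-e^{-aT}\right) \right].
\end{equation}

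Next I would collect the constant terms inside the bracket: the combination $a-2b$ appears and cancels against the prefactor's denominator, yielding the leading $1$ of the target expression, while the two surviving exponential contributions assemble into $-\tfrac{a}{a-2b}e^{-2bT} + \tfrac{2b}{a-2b}e^{-aT}$. This reproduces Eq.~\eqref{eq:prob_photon_in_detection_time_window} exactly, completing the proof.

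I expect no genuine obstacle here, as the entire argument is a single definite integral followed by routine algebraic simplification, with all the physical modelling already carried out in the derivation of $p(t)$ in Lemma~\ref{lem:detection_time_pdf}. The only point worth a remark is the apparent singularity at $a = 2b$, where both the prefactor and the difference of integrals vanish simultaneously; the expression is well defined there by continuity (equivalently, by integrating the $a=2b$ form of $p(t)$ directly), so the stated formula extends to that case as a removable singularity. I would flag this in passing rather than treat it as a separate case.
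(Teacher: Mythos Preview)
Your proposal is correct and follows essentially the same approach as the paper: integrate the density $p(t)$ from Lemma~\ref{lem:detection_time_pdf} over $[0,T]$, evaluate the two exponential integrals, and simplify. Your remark on the removable singularity at $a=2b$ is a nice addition that the paper does not make explicit.
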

\begin{proof}
$p_\textnormal{det}(T)$ is given by the probability that the photon is detected in the time interval $[0, T]$.
This probability can be calculated from the probability density function
\begin{equation}
\begin{aligned}
p_\textnormal{det}(T) &= \int_0^T dt p(t) \\
&= \frac{2ab\eta}{a-2b} \int_0^T dt \left(\textnormal{e}^{-2bt} - \textnormal{e}^{-at} \right) \\
&= \frac{2ab\eta}{a-2b} \bigg[ \frac 1 {2b} \left( 1 - \textnormal{e}^{-2bT} \right) - \frac 1 a \left(1 - \textnormal{e}^{-aT} \right) \bigg] \\
&= \frac \eta {a - 2b} \bigg[ a \left(1 - \textnormal{e}^{-2bT} \right) - 2b \left(1 - \textnormal{e}^{-aT} \right) \bigg] \\
&= \eta \bigg[1 - \frac a {a - 2b} \textnormal{e}^{-2bT} + \frac {2b} {a - 2b} \textnormal{e}^{-aT} \bigg].
\end{aligned}
\end{equation}
\end{proof}

\begin{corollary} \label{cor:efficiency_no_detection_time_window}
When no detection time window is used, i.e., when the duration of the detection time window $T \to \infty$, then the photon detection probability is equal to the detector's detection efficiency $\eta$.
\end{corollary}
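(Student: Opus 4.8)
The plan is to obtain the corollary as an immediate consequence of Theorem~\ref{thm:prob_photon_in_detection_time_window} by taking the limit $T \to \infty$ in the closed-form expression \eqref{eq:prob_photon_in_detection_time_window}. First I would record that both decay rates are strictly positive: for the double-exponential state $(a,b)$ to be a legitimate photon state, the emission density $p_\text{em}(t) = a e^{-at}\Theta(t)$ must integrate to one over $[0,\infty)$, which forces $a > 0$, and the wave function $\psi_{t_0}(t) = \sqrt{2b}\,e^{-b(t-t_0)}\Theta(t-t_0)$ must be normalized, which forces $b > 0$. Consequently $e^{-2bT} \to 0$ and $e^{-aT} \to 0$ as $T \to \infty$, so the bracketed factor in \eqref{eq:prob_photon_in_detection_time_window} tends to $1$ and $p_\text{det}(T) \to \eta$, which is the claim.

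As an independent cross-check that does not route through the potentially singular prefactor $1/(a-2b)$, I would instead integrate the detection-time density of Lemma~\ref{lem:detection_time_pdf} over the entire positive axis,
\begin{equation}
\lim_{T\to\infty} p_\text{det}(T) = \int_0^\infty dt\, p(t) = \frac{2ab\eta}{a-2b}\left(\frac{1}{2b} - \frac 1 a\right) = \eta,
\end{equation}
using the elementary integrals $\int_0^\infty e^{-2bt}\,dt = 1/(2b)$ and $\int_0^\infty e^{-at}\,dt = 1/a$, which again rely only on $a, b > 0$. The cancellation of the factor $a - 2b$ between the prefactor and the bracket is exactly what produces the clean answer $\eta$.

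The only subtlety, and it is a minor one rather than a genuine obstacle, is the degenerate case $a = 2b$, for which the prefactor in \eqref{eq:prob_photon_in_detection_time_window} is formally singular. Here I would argue by continuity: the detection probability depends analytically on the rates $(a,b)$ through the underlying integrals, so its value at $a = 2b$ is pinned by the limit from nearby points and equals $\eta$ as well; alternatively one may recompute $p(t)$ directly at $a = 2b$, where it takes the form proportional to $t\,e^{-at}$, and check that its integral over $[0,\infty)$ is still $\eta$. The physical content, which I would state as a final sanity check, is that a detector with flat efficiency $\eta$ registers any photon that is actually emitted with probability exactly $\eta$ once the detection window is opened indefinitely, independently of the photon's temporal profile. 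I therefore expect no real difficulty: the whole argument is the limit of an explicit elementary expression.
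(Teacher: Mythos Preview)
Your proposal is correct and takes essentially the same approach as the paper: the paper's proof is the single line ``take $T \to \infty$ in Eq.~\eqref{eq:prob_photon_in_detection_time_window} to find $p_\text{det}(T) \to \eta$,'' which is exactly your first paragraph. Your additional justification that $a,b>0$, the cross-check via direct integration of $p(t)$, and the treatment of the degenerate case $a=2b$ are all correct and more thorough than what the paper provides, but they do not constitute a different route.
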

\begin{proof}
When we take $T \to \infty$ in Equation \eqref{eq:prob_photon_in_detection_time_window}, we find $p_\textnormal{det}(T) \to \eta$.
\end{proof}
This corresponds to the situation when the entire photon is within the detection time window and the only reason why the photon would not be detected is detector inefficiency.
One can think of $p_\textnormal{det}(T) / \eta$ as the ``additional efficiency factor'' due to not capturing the entire photon in the detection time window.

\begin{lemma} \label{lem:detection_time_pdf_conditional}
Conditional detection-time probability density function.
Consider the case where a photon with double-exponential state $(a, b)$ is emitted directly on a photon detector.
Assume this photon detector is perfect except that it has a possibly nonunit detection efficiency $\eta$ (with a flat response).
The probability density function for the photon being detected at time $t$, if the photon is in the double-exponential state $(a, b)$, is given by
\begin{equation}
\label{eq:detection_time_pdf_conditional}
\begin{aligned}
p_T(t) = \Theta(t) \Theta(T-t) \frac{p(t)}{p_\textnormal{det}(T)}.
\end{aligned}
\end{equation}
Unlike $p(t)$, this probability density function is always normalized.
\end{lemma}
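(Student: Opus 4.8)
The plan is to prove this as a direct application of the definition of a conditional probability density, leaning on the two results immediately preceding it. The quantity $p(t)$ from Lemma~\ref{lem:detection_time_pdf} already gives the (generally subnormalized) probability density that the photon is detected at time $t$ without any conditioning, and Theorem~\ref{thm:prob_photon_in_detection_time_window} gives the total probability $p_\text{det}(T) = \int_0^T p(t)\,dt$ that a detection occurs somewhere inside the window $[0,T]$. Conditioning on the event ``a detection is registered within the detection time window'' therefore amounts to restricting the support of $p(t)$ to $[0,T]$ and renormalizing by the probability of that event.

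Concretely, first I would recall that the event whose density we want is ``the photon is detected at time $t$'', and that the conditioning event is ``the photon is detected at some time in $[0,T]$''. Since the latter has positive probability $p_\text{det}(T)$ (assuming $T>0$ and $\eta>0$), the conditional density is well defined and, by the standard rule for conditioning a density on a positive-probability event, equals the joint density of ``detection at $t$ \emph{and} $t\in[0,T]$'' divided by $p_\text{det}(T)$. The joint density is just $p(t)$ multiplied by the indicator of $[0,T]$, which I would write using Heaviside factors as $\Theta(t)\Theta(T-t)\,p(t)$; note that $\Theta(t)$ is already implicit in $p(t)$, so the only genuinely new restriction is the upper cutoff $\Theta(T-t)$. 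This yields Eq.~\eqref{eq:detection_time_pdf_conditional}.

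Finally I would verify normalization, which is precisely what makes $p_T(t)$ a bona fide density unlike $p(t)$. Integrating,
\begin{equation}
\int_{-\infty}^{\infty} p_T(t)\,dt = \frac{1}{p_\text{det}(T)} \int_0^T p(t)\,dt = \frac{p_\text{det}(T)}{p_\text{det}(T)} = 1,
\end{equation}
where the middle integral is exactly the quantity evaluated in Theorem~\ref{thm:prob_photon_in_detection_time_window}. There is no substantive obstacle here: the only points requiring minor care are the justification that conditioning on a positive-probability event produces the restricted-and-renormalized density (a standard fact, valid because $p_\text{det}(T)>0$), and the bookkeeping of the two Heaviside functions so that the support comes out as $[0,T]$ rather than $[0,\infty)$.
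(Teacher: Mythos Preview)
Your proposal is correct and follows essentially the same approach as the paper: both arguments boil down to the conditional-density identity $p_T(t) = f_{X,\,\text{detected in }[0,T]}(t)/p_\text{det}(T)$, with the joint density given by $\Theta(t)\Theta(T-t)p(t)$. The paper is marginally more formal in that it explicitly constructs a random variable $X$ for the detection time (assigning the value $-1$ to the no-detection event so that $f_X$ is a genuine normalized density) and a companion indicator variable $Y_T$, whereas you work directly with the subnormalized $p(t)$; but the content of the argument is identical, and your explicit normalization check is a nice addition the paper omits.
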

\begin{proof}
Let $X$ be the continuous random variable corresponding to the detection time of the photon.
Let it take the value $-1$ if no photon is detected, such that the corresponding probability density function $f_X(x)$ is normalized and $X$ is well-defined as a random variable.
It follows from Corollary \ref{cor:efficiency_no_detection_time_window} that the probability that this happens is $1 - \eta$.
Therefore, the probability density function can then be writen as
\begin{equation}
f_X(x) = (1 - \eta) \delta(x+1) + \Theta(x) p(x).
\end{equation}
Additionally, we define a discrete random variable $Y_T$ which takes the value 1 if the photon is detected within the detection time window $T$ and $0$ if not.
By Theorem \ref{thm:prob_photon_in_detection_time_window}, the probability distribution of $Y_T$ is given by
\begin{equation}
f_{Y_T}(y) =
\begin{cases}
p_\textnormal{det}(T) \, &\textnormal{if $y = 1$} \\
1 - p_\textnormal{det}(T) \, &\textnormal{if $y = 0$.}
\end{cases}
\end{equation}
Note that $X$ and $Y_T$ are not independent random variables.
Now, the conditional probability density function that we are interested in is
\begin{equation} \label{eq:conditional_probability_X_Y}
p_T(x) \equiv f_{X|Y_T}(x, 1) = \frac{f_{X, Y_T}(x, 1)}{f_{Y_T}(1)} = \frac{f_{X, Y_T}(x, 1)}{p_\textnormal{det}(T)}.
\end{equation}
Here, $f_{X, Y_T} (x, y)$ is the mixed joint density of the continuous random variable $X$ and the discrete random variable $Y_T$.
When $X$ takes a value between 0 and $T$, then $Y_T$ takes the value 1 with unit probability.
Otherwise, $Y_T$ takes the value 0 with unit probability.
Therefore,
\begin{equation}
f_{X, Y_T}(x, 1) = \Theta(x) \Theta(T-x) f_X(x) = \Theta(x) \Theta(T-x) p(x).
\end{equation}
Substituting this into Equation \eqref{eq:conditional_probability_X_Y} allows us then finally to write
\begin{equation}
p_T(t) = \Theta(t) \Theta(T-t) \frac{p(t)}{p_\textnormal{det}(T)}.
\end{equation}
\end{proof}

\begin{theorem} \label{thm:coin_prob_ph_ph}
Coincidence probability of two photons.
The coincidence probability for two photon detections,
if both photons are in the double-exponential state $(a, b)$, a detection time window of duration $T$ is used and the coincidence time window of duration $\tau$ is used,
is given by
\begin{equation} \label{eq:coin_prob_ph_ph}
\begin{aligned}
p_\textnormal{ph-ph}(T, \tau) \left(\frac{p_\textnormal{det}(T)}{\eta}\right)^2 =&\frac{a^2}{a^2 - 4b^2}  (1 - \textnormal{e}^{-2b\tau}) -  \frac{4b^2}{a^2 - 4b^2}(1 - \textnormal{e}^{-a\tau})\\
&+ \frac{a^2}{(a - 2b)^2}(1 - \textnormal{e}^{2b\tau})  \textnormal{e}^{-4bT} + \frac{4b^2}{(a-2b)^2}(1 - \textnormal{e}^{a\tau})  \textnormal{e}^{-2aT} \\
&- \frac{4ab}{(a - 2b)^2}\left( 1 - \frac{a \textnormal{e}^{2b\tau} + 2b \textnormal{e}^{a\tau}}{a + 2b} \right) \textnormal{e}^{-(a + 2b) T} .
\end{aligned}
\end{equation}
\end{theorem}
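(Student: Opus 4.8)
The plan is to reduce the coincidence probability to a double integral over two independent detection times and then to a single elementary integral. The two detected photons reach distinct polarization detectors and are emitted and propagated independently, so conditioning on the event that both are detected within the window $[0,T]$ leaves their detection times independent, each distributed according to the conditional density $p_T(t)$ of Lemma~\ref{lem:detection_time_pdf_conditional}. By Definition~\ref{def:coin_prob}, I would therefore write
\begin{equation}
p_\text{ph-ph}(T,\tau) = \int_0^T \int_0^T p_T(t_1)\, p_T(t_2)\, \Theta\big(\tau - |t_1 - t_2|\big)\, dt_1\, dt_2 .
\end{equation}
Substituting $p_T(t) = p(t)/p_\text{det}(T)$ and writing $h(t) \equiv p(t)/\eta = \tfrac{2ab}{a-2b}(e^{-2bt} - e^{-at})$ from Lemma~\ref{lem:detection_time_pdf}, multiplying through by $(p_\text{det}(T)/\eta)^2$ cancels the normalization and turns the left-hand side of the theorem into
\begin{equation}
p_\text{ph-ph}(T,\tau)\left(\frac{p_\text{det}(T)}{\eta}\right)^2 = \int_0^T \int_0^T h(t_1)\, h(t_2)\, \Theta\big(\tau - |t_1 - t_2|\big)\, dt_1\, dt_2 .
\end{equation}

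Next I would exploit the symmetry of the integrand under $t_1 \leftrightarrow t_2$ to fold the band onto the region $t_1 < t_2$ and introduce the antiderivative $G(t) \equiv \int_0^t h(s)\,ds$, which satisfies $G(0) = 0$ and, by Corollary~\ref{cor:efficiency_no_detection_time_window} and Theorem~\ref{thm:prob_photon_in_detection_time_window}, $G(T) = p_\text{det}(T)/\eta$. Assuming the relevant regime $T > \tau$, the inner integral over the strip of width $\tau$ collapses to $G(t_2) - G(\max(0, t_2 - \tau))$; splitting at $t_2 = \tau$ and using $\int_0^T h\,G\,dt = \tfrac12 G(T)^2$ then gives
\begin{equation}
\int_0^T \int_0^T h(t_1)\, h(t_2)\, \Theta\big(\tau - |t_1 - t_2|\big)\, dt_1\, dt_2 = G(T)^2 - 2\int_\tau^T h(t)\, G(t - \tau)\, dt .
\end{equation}

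Finally, the remaining single integral is elementary. Using the closed form $G(t-\tau) = 1 - \tfrac{a}{a-2b} e^{2b\tau} e^{-2bt} + \tfrac{2b}{a-2b} e^{a\tau} e^{-at}$, I would expand $h(t)\, G(t-\tau)$ into pure exponentials $e^{-kt}$ with $k \in \{2b,\, a,\, 4b,\, a+2b,\, 2a\}$, integrate each via $\int_\tau^T e^{-kt}\,dt = (e^{-k\tau} - e^{-kT})/k$, and collect terms. The main obstacle will be the bookkeeping of this last step: the expansion of $G(T)^2$ produces single-exponential terms $e^{-2bT}$ and $e^{-aT}$ that are absent from the claimed formula, so I must verify that they cancel exactly against the upper-limit ($e^{-kT}$) contributions of the single integral, while the lower-limit contributions reorganize into the $\tau$-dependent factors $(1 - e^{\pm 2b\tau})$, $(1 - e^{\pm a\tau})$ and the combined coefficient $1 - (a e^{2b\tau} + 2b e^{a\tau})/(a+2b)$. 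Tracking the denominators $a-2b$, $a+2b$ and $a^2 - 4b^2$ through these cancellations is the only delicate part; everything else is routine integration.
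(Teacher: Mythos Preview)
Your proposal is correct. Both you and the paper start from the same double integral $\iint_{|t_1-t_2|\le\tau} p_T(t_1)p_T(t_2)\,dt_1dt_2$, multiply through by $(p_\text{det}(T)/\eta)^2$, and exploit the $t_1\leftrightarrow t_2$ symmetry. The genuine difference is in how the remaining integral is reduced. The paper keeps both variables explicit, splits the outer integral at $T-\tau$ into two pieces $\int_0^{T-\tau}dt_1\int_{t_1}^{t_1+\tau}dt_2$ and $\int_{T-\tau}^{T}dt_1\int_{t_1}^{T}dt_2$, evaluates each double integral term by term, and then adds. You instead introduce the antiderivative $G$ of $h$, recognise $2\int_0^T hG\,dt=G(T)^2$, and collapse everything to the identity $G(T)^2-2\int_\tau^T h(t)\,G(t-\tau)\,dt$, leaving a single elementary integral. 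Your route is tidier: only one integral to expand, and the cancellation of the $e^{-2bT}$ and $e^{-aT}$ terms between $G(T)^2$ and the upper-limit contributions is transparent (indeed, the coefficients $\mp 2a/(a-2b)$ and $\pm 4b/(a-2b)$ match exactly). The paper's route avoids introducing $G$ but pays for it with two separate multi-term computations that must be recombined. Both are straightforward once set up; the bookkeeping you flag is the same in spirit as the paper's, just organised around one integral instead of two.
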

\begin{proof}
By definition, the coincidence probability is given by
\begin{equation}
\label{eq:coin_prob_ph_ph_def}
p_\textnormal{ph-ph}(T, \tau) = \iint_{|t_1 - t_2| \leq \tau} dt_1 dt_2 p_T(t_1) p_T(t_2).
\end{equation}
By Lemma \ref{lem:detection_time_pdf_conditional}, this implies
\begin{equation}
p_\textnormal{ph-ph}(T, \tau) \left(\frac{p_\textnormal{det}(T)}{\eta}\right)^2 = \frac 1 {\eta^2} \iint_{|t_1 - t_2| \leq \tau} dt_1 dt_2 p(t_1) p(t_2).
\end{equation}
The region of integration is $|t_1 - t_2| \leq \tau$, i.e., $-\tau \leq t_1 - t_2 \leq \tau$.
The integrand is symmetric under the interchange of $t_1$ and $t_2$.
Therefore, the region $0 \leq t_1 - t_2 \leq \tau$ will give exactly the same contribution as $-\tau \leq t_1 - t_2 \leq 0$.
This has the following physical interpretation: the probability of photon 2 arriving a time $\Delta t$ after photon 1 is the same as the probability of photon 1 arriving $\Delta t$ after photon 2.
This can be used to simplify the integral somewhat, giving
\begin{equation}
p_\textnormal{ph-ph}(T, \tau) \left(\frac{p_\textnormal{det}(T)}{\eta}\right)^2 = \frac 2 {\eta^2} \iint_{0 \leq t_1 - t_2 \leq \tau} dt_1 dt_2 p(t_1) p(t_2).
\end{equation}
It follows from Lemma \ref{lem:detection_time_pdf} that each $p(t)$ carries an overall factor $\Theta(t) \Theta(T)$.
This can be absorbed into the integration limits to give
\begin{equation} \label{eq:coin_prob_ph_ph_two_parts}
\begin{aligned}
p_\textnormal{ph-ph}(T, \tau) \left(\frac{p_\textnormal{det}(T)}{\eta}\right)^2 &= \frac 2 {\eta^2} \int_0^T dt_1 \int_{t_1}^{\textnormal{min}(t_1 + \tau, T)} dt_2 p(t_1) p(t_2) \\
&=  2\Bigg( \frac 1 {\eta^2} \int_0^{T - \tau} dt_1 \int_{t_1}^{t_1 + \tau} dt_2 p(t_1) p(t_2) + \frac 1 {\eta^2} \int_{T-\tau}^T dt_1 \int_{t_1}^T dt_2 p(t_1) p(t_2) \Bigg).
\end{aligned}
\end{equation}
We calculate these two integrals one by one, using Lemma \ref{lem:detection_time_pdf}.
The first is
\begin{equation}
\begin{aligned}
\left( \frac{a-2b}{2ab} \right)^2 &\left( \frac 1 {\eta^2} \int_0^{T - \tau} dt_1 \int_{t_1}^{t_1 + \tau} dt_2 p(t_1) p(t_2) \right)\\
=& \int_0^{T-\tau} dt_1 (\textnormal{e}^{-2bt_1} - \textnormal{e}^{-at_1}) \int_{t_1}^{t_1 + \tau}(\textnormal{e}^{-2bt_2} - \textnormal{e}^{-at_2})\\
=& \int_0^{T-\tau} dt_1 (\textnormal{e}^{-2bt_1} - \textnormal{e}^{-at_1}) \left( \frac 1 {2b} \textnormal{e}^{-2bt_1} (1 - \textnormal{e}^{-2b\tau}) - \frac 1 a \textnormal{e}^{-at_1}(1 - \textnormal{e}^{-a\tau}) \right)\\
=& \frac{1 - \textnormal{e}^{-2b\tau}}{2b} \int_0^{T-\tau} dt_1 (\textnormal{e}^{-4bt_1} - \textnormal{e}^{-(a+2b)t_2}) - \frac{1 - \textnormal{e}^{-a\tau}}{a} \int_0^{T-\tau} dt_1 (\textnormal{e}^{-(a+2b)t_1} - \textnormal{e}^{-2at_1})\\
=& \frac{1 - \textnormal{e}^{-2b\tau}}{2b} \left( \frac{1 - \textnormal{e}^{4b\tau} \textnormal{e}^{-4bT}}{4b} - \frac{1 - \textnormal{e}^{(a+2b)\tau} \textnormal{e}^{-(a+2b)T}}{a+2b}\right)\\
&- \frac{1 - \textnormal{e}^{-a\tau}}{a} \left( \frac{1 - \textnormal{e}^{(a+2b)\tau} \textnormal{e}^{-(a+2b)T}}{a+2b} - \frac{1 - \textnormal{e}^{2a\tau} \textnormal{e}^{-2aT}}{2a}\right)\\
=& \frac {a-2b} {a + 2b} \left( \frac  {1 - \textnormal{e}^{-2b\tau}} {8b^2} - \frac{1 - \textnormal{e}^{-a\tau}}{2a^2}\right) +\frac{- \textnormal{e}^{2a\tau} + \textnormal{e}^{a\tau} }{2a^2} \textnormal{e}^{-2aT} \\
&+  \frac{\textnormal{e}^{2b\tau} - \textnormal{e}^{4b\tau}}{8b^2} \textnormal{e}^{-4bT} + \frac 1 {a+2b} \left( \frac{\textnormal{e}^{(a+2b)\tau} - \textnormal{e}^{a\tau}}{2b} + \frac{\textnormal{e}^{(a+2b)\tau} - \textnormal{e}^{2b\tau}}{a} \right) \textnormal{e}^{-(a+2b)T}.
\end{aligned}
\end{equation}
In the last step, terms with the same exponents of $T$ were collected.
Resolving the prefactor gives
\begin{equation} \label{eq:coin_prob_ph_ph_part_1}
\begin{aligned}
&\frac 1 {\eta^2} \int_0^{T - \tau} dt_1 \int_{t_1}^{t_1 + \tau} dt_2 p(t_1) p(t_2) \\
=& \frac {a^2} {2(a^2 - 4b^2)}  (1 - \textnormal{e}^{-2b\tau}) -  \frac {2b^2} {2(a^2 - 4b^2)} (1 - \textnormal{e}^{-a\tau}) + \frac{2b^2}{(a-2b)^2} (- \textnormal{e}^{2a\tau} + \textnormal{e}^{a\tau}) \textnormal{e}^{-2aT} \\
&+  \frac {a^2}{2(a-2b)^2} (\textnormal{e}^{2b\tau} - \textnormal{e}^{4b\tau}) \textnormal{e}^{-4bT} + \frac {2ab} {(a-2b)^2(a+2b)} \left( a (\textnormal{e}^{(a+2b)\tau} - \textnormal{e}^{a\tau}) + 2b (\textnormal{e}^{(a+2b)\tau} - \textnormal{e}^{2b\tau}) \right) \textnormal{e}^{-(a+2b)T}.
\end{aligned}
\end{equation}
The second term is
\begin{equation}
\begin{aligned}
\left( \frac{a-2b}{2ab} \right)^2 &\left( \frac 1 {\eta^2} \int_{T - \tau}^T dt_1 \int_{t_1}^{T} dt_2 p(t_1) p(t_2) \right)\\
=& \int_{T-\tau}^T dt_1 (\textnormal{e}^{-2bt_1} - \textnormal{e}^{-at_1}) \int_{t_1}^T(\textnormal{e}^{-2bt_2} - \textnormal{e}^{-at_2})\\
=& \int_{T-\tau}^T dt_1 (\textnormal{e}^{-2bt_1} - \textnormal{e}^{-at_1}) \left(\frac 1 {2b} (\textnormal{e}^{-2bt_1} - \textnormal{e}^{-2bT}) - \frac 1 a (\textnormal{e}^{-at_1} - \textnormal{e}^{-aT}) \right)\\
=& \frac 1 {2ab} \left( 2b \textnormal{e}^{-aT} - a \textnormal{e}^{-2bT} \right) \int_{T-\tau}^T dt_1  \left(\textnormal{e}^{-2bt_1} - \textnormal{e}^{-at_1} \right) + \frac 1 {2b} \int_{T-\tau}^T dt_1  \textnormal{e}^{-4bt_1}\\
&+ \frac 1 a \int_{T-\tau}^T dt_1  \textnormal{e}^{-2at_1} - \frac{a + 2 b}{2ab}\int_{T-\tau}^T dt_1  \textnormal{e}^{-(a+2b)t_1}\\
=& \frac 1 {2ab} \left( 2b \textnormal{e}^{-aT} - a \textnormal{e}^{-2bT} \right) \left[\frac{\textnormal{e}^{-2bT}}{2b} (\textnormal{e}^{2b\tau} - 1) - \frac{\textnormal{e}^{-aT}}{a} (\textnormal{e}^{a\tau} - 1) \right] + \frac {\textnormal{e}^{-4bT}} {8b^2} (\textnormal{e}^{4b\tau} - 1)\\
&+ \frac {\textnormal{e}^{-2aT}} {2a^2} (\textnormal{e}^{2a\tau} - 1) - \frac{\textnormal{e}^{-(a+2b)T}}{2ab} (\textnormal{e}^{(a+2b)\tau} - 1)\\
=& \frac 1 {2a^2} \left(\textnormal{e}^{2a\tau} - 2 \textnormal{e}^{a\tau} + 1 \right) \textnormal{e}^{-2aT} + \frac 1 {8b^2} \left(\textnormal{e}^{4b\tau} - 2\textnormal{e}^{2b\tau} + 1 \right) \textnormal{e}^{-4bT}\\
&+ \frac 1 {2ab} \left(\textnormal{e}^{a\tau} + \textnormal{e}^{2b\tau} - \textnormal{e}^{(a+2b)\tau} - 1 \right) \textnormal{e}^{-(a+2b)T}.
\end{aligned}
\end{equation}
Again resolving the prefactor, we find
\begin{equation} \label{eq:coin_prob_ph_ph_part_2}
\begin{aligned}
&\frac 1 {\eta^2} \int_{T - \tau}^T dt_1 \int_{t_1}^{T} dt_2 p(t_1) p(t_2) \\
=& \frac{2b^2}{(a-2b)^2}  \left(\textnormal{e}^{2a\tau} - 2\textnormal{e}^{a\tau} + 1\right) \textnormal{e}^{-2aT} + \frac{a^2}{2(a-2b)^2} \left( \textnormal{e}^{4b\tau} -  2 \textnormal{e}^{2b\tau} + 1 \right) \textnormal{e}^{-4bT}\\
&+ \frac{2ab}{(a-2b)^2} \left(\textnormal{e}^{a\tau} + \textnormal{e}^{2b\tau} - \textnormal{e}^{(a+2b)\tau} - 1 \right) \textnormal{e}^{-(a+2b)T}.
\end{aligned}
\end{equation}
Finally, substituting Equations \eqref{eq:coin_prob_ph_ph_part_1} and \eqref{eq:coin_prob_ph_ph_part_2} into Equation \eqref{eq:coin_prob_ph_ph_two_parts} yields Equation \eqref{eq:coin_prob_ph_ph}.
\end{proof}

\begin{theorem}
Coincidence probability of two dark counts.
The coincidence probability for two detector dark counts if the detection time window is $T$ and the coincidence time window is $\tau$ is given by
\begin{equation}
p_\textnormal{dc-dc}(T, \tau) = 1 - \left( \frac{T - \tau}{T} \right)^2,
\end{equation}
assuming that dark counts occur uniformly throughout the detection time window.
\end{theorem}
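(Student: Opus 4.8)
The plan is to treat the two dark-count times as independent random variables, each uniformly distributed on the detection time window $[0,T]$, and to compute the probability that they fall within one coincidence window $\tau$ of each other by elementary geometric probability. The key simplification compared with the photon case (Theorem~\ref{thm:coin_prob_ph_ph}) is that no additional conditioning is required: by assumption each dark count already occurs uniformly throughout $[0,T]$, so the detection-time density is simply $1/T$ on $[0,T]$ and is already normalized. Hence the coincidence probability in the sense of Definition~\ref{def:coin_prob} coincides with $P(|t_1-t_2|\le\tau)$ for independent $t_1,t_2\sim\mathrm{Uniform}[0,T]$.

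First I would write this probability as the double integral of the joint density $1/T^2$ over the band $\{(t_1,t_2)\in[0,T]^2 : |t_1-t_2|\le\tau\}$. Rather than integrate over the band directly, I would compute the complementary event $|t_1-t_2|>\tau$, whose region inside the square $[0,T]^2$ consists of two congruent right triangles (one for $t_1-t_2>\tau$ and one for $t_2-t_1>\tau$). Assuming $0<\tau\le T$, each triangle has two legs of length $T-\tau$ and therefore area $\tfrac12(T-\tau)^2$, so the total complementary area is $(T-\tau)^2$. Dividing by the total area $T^2$ gives
\begin{equation}
P(|t_1-t_2|>\tau) = \left(\frac{T-\tau}{T}\right)^2,
\end{equation}
and subtracting from unity yields the claimed expression for $p_\text{dc-dc}(T,\tau)$.

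There is essentially no hard step here; the computation is elementary geometric probability. The only points requiring care are bookkeeping ones: verifying that the two excluded regions are genuinely disjoint triangles, and confirming that the conditioning built into Definition~\ref{def:coin_prob} is automatically satisfied for uniformly distributed dark counts (so that, unlike in the photon case, no factor of $p_\text{det}$ appears). I would also note the implicit assumption $\tau\le T$; for $\tau\ge T$ every pair of clicks within the window is trivially coincident and $p_\text{dc-dc}=1$, consistent with the excluded area vanishing. This sanity check, together with the limits $\tau\to0$ (giving $p_\text{dc-dc}\to0$) and $\tau\to T$ (giving $p_\text{dc-dc}\to1$), confirms that the formula behaves as expected.
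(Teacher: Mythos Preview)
Your proposal is correct and follows essentially the same approach as the paper: both treat the two dark-count times as independent $\mathrm{Uniform}[0,T]$ variables and compute the area of the coincidence band $|t_1-t_2|\le\tau$ inside the square $[0,T]^2$. The only cosmetic difference is that the paper integrates over the band directly (splitting at $t_1=T-\tau$ and simplifying to $(2T\tau-\tau^2)/T^2$), whereas you pass to the complement and read off the area of the two triangles, which is marginally cleaner.
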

\begin{proof}
Given that there is a dark count within the detection time window $T$, the probability density function for the time at which is occurs is given by
\begin{equation} \label{eq:pdf_dark_count}
d_T(t) = \frac 1 T \Theta(t) \Theta(T-t)
\end{equation}
because we assume the dark counts are uniformly distributed.
The coincidence probability is then given by
\begin{equation}
\begin{aligned}
p_\textnormal{dc-dc}(T, \tau) =& \iint_{|t_1 - t_2| \leq \tau} dt_1 dt_2 d_T(t_1) d_T(t_2) \\
&= \frac{2}{T^2} \int_0^T dt_1 \int_{t_1}^{\textnormal{min}(t_1 + \tau, T)} dt_2\\
&= \frac{2}{T^2} \Bigg(\int_0^{T - \tau} dt_1 \int_{t_1}^{t_1 + \tau} dt_2  + \int_{T-\tau}^T dt_1 \int_{t_1}^T dt_2 \Bigg)\\
&= \frac{2}{T^2} \Bigg(\int_0^{T - \tau} dt_1 \tau +  \int_{T-\tau}^T dt_1 (T-t_1)  \Bigg) \\
&= \frac{2}{T^2} \left( (T-\tau) \tau + T\tau - \frac 1 2 T^2 + \frac 1 2 (T-\tau)^2 \right) \\
&= \frac 1 {T^2} (2T\tau - \tau^2)\\
&= 1 - \left(\frac{T-\tau}{T} \right)^2.
\end{aligned}
\end{equation}
\end{proof}

\begin{theorem} \label{thm:coin_prob_ph_dc}
Coincidence probability of a photon and a dark count.
The coincidence probability for one photon detection and one dark count, if the photon is in the double-exponential state $(a, b)$, the detection time window is $T$ and the coincidence time window is $\tau$ is given by
\begin{equation} \label{eq:coin_prob_ph_dc}
\begin{aligned}
p_\textnormal{ph-dc}(T, \tau) \frac{p_\textnormal{det}(T)}{\eta} = \frac a {2b(a-2b)T} &\Bigg[ 1 + 2b \tau - \textnormal{e}^{-2b\tau} + \textnormal{e}^{-2bT} \left(  1 - 2b \tau - \textnormal{e}^{2b\tau} \right) \Bigg] \\
- \frac {2b} {a(a-2b)T} &\Bigg[ 1 + a \tau - \textnormal{e}^{-a\tau} + \textnormal{e}^{-aT} \left(1 - a \tau - \textnormal{e}^{a\tau} \right) \Bigg].
\end{aligned}
\end{equation}
assuming that dark counts occur uniformly throughout the detection time window.
\end{theorem}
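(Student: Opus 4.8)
The plan is to mirror the derivation of the photon--photon coincidence probability (Theorem \ref{thm:coin_prob_ph_ph}), but with one of the two conditional photon densities replaced by the uniform dark-count density. Following Definition \ref{def:coin_prob}, I would write the photon--dark-count coincidence probability as the joint probability that the photon detection time $t_1$ and the dark-count time $t_2$ lie within one coincidence window of one another, conditioned on both falling inside $[0,T]$:
\begin{equation}
p_\text{ph-dc}(T,\tau) = \iint_{|t_1-t_2|\leq\tau} dt_1\,dt_2\, p_T(t_1)\, d_T(t_2),
\end{equation}
where $p_T$ is the conditional photon detection-time density from Lemma \ref{lem:detection_time_pdf_conditional} and $d_T(t)=\tfrac1T\Theta(t)\Theta(T-t)$ is the uniform dark-count density of Eq.~\eqref{eq:pdf_dark_count}. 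Substituting $p_T(t_1)=\Theta(t_1)\Theta(T-t_1)\,p(t_1)/p_\text{det}(T)$ and multiplying both sides by $p_\text{det}(T)/\eta$ clears the normalization, leaving
\begin{equation}
p_\text{ph-dc}(T,\tau)\frac{p_\text{det}(T)}{\eta} = \frac{1}{\eta T}\iint_{\substack{|t_1-t_2|\leq\tau\\ 0\leq t_1,t_2\leq T}} dt_1\,dt_2\, p(t_1),
\end{equation}
with $p(t_1)$ the subnormalized density of Lemma \ref{lem:detection_time_pdf}.

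Because the integrand no longer depends on $t_2$, the decisive simplification is to perform the inner $t_2$ integral first. For fixed $t_1$ the admissible $t_2$ range is $[\max(0,t_1-\tau),\min(T,t_1+\tau)]$, whose length is the coincidence width
\begin{equation}
w(t_1)=\min(T,t_1+\tau)-\max(0,t_1-\tau)=
\begin{cases}
t_1+\tau & 0\leq t_1\leq\tau,\\
2\tau & \tau\leq t_1\leq T-\tau,\\
T-t_1+\tau & T-\tau\leq t_1\leq T,
\end{cases}
\end{equation}
valid for $0<\tau<T$ (the degenerate case $\tau\geq T$ forces coincidence probability one and is handled separately). This collapses the problem to the single integral $\tfrac{1}{\eta T}\int_0^T w(t_1)\,p(t_1)\,dt_1$, exactly the analogue of the step in Theorem \ref{thm:coin_prob_ph_ph} where the symmetric region was split, except that here the weight against which $p$ is integrated is the piecewise-linear $w$ rather than a second copy of $p$.

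The remaining step is purely computational: insert $p(t_1)=\tfrac{2ab\eta}{a-2b}(e^{-2bt_1}-e^{-at_1})$ and evaluate the three pieces of $\int w(t_1)p(t_1)\,dt_1$, which are elementary integrals of $e^{-2bt_1}$ and $e^{-at_1}$ against the linear-in-$t_1$ weight. The $\eta$ cancels, and the integration against $w$ supplies factors $1/(2b)^2$ and $1/a^2$ respectively, so that the overall coefficient $\tfrac{2ab}{a-2b}$ becomes $\tfrac{a}{2b(a-2b)}$ on the $e^{-2bt_1}$ contribution and $\tfrac{2b}{a(a-2b)}$ on the $e^{-at_1}$ contribution, with the relative sign reproducing the minus sign of Eq.~\eqref{eq:coin_prob_ph_dc}. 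The main obstacle is the bookkeeping of collecting the resulting exponentials into the compact closed form --- in particular grouping the terms proportional to $e^{-2bT}$ and $e^{-aT}$ and checking that the pieces constant in $T$ assemble into the bracketed factors $1+2b\tau-e^{-2b\tau}$ and $1+a\tau-e^{-a\tau}$ --- but no conceptual difficulty arises beyond this careful term collection.
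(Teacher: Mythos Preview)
Your proposal is correct and follows essentially the same route as the paper: integrate out the uniform dark-count variable first to obtain the piecewise-linear coincidence width $w(t_1)$, then integrate $w(t_1)$ against $p(t_1)$ over the three subintervals and collect terms. One small caveat: your three-piece breakdown of $w$ tacitly assumes $\tau\le T/2$ (otherwise the middle interval $[\tau,T-\tau]$ is empty and the edge pieces overlap); the paper makes this assumption explicit and then checks that repeating the computation for $T/2\le\tau\le T$ yields the same closed form.
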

\begin{proof}
For the photon, we again have the probability density function $p_T(t)$ as given by Lemma \ref{lem:detection_time_pdf_conditional},
while for the dark count we have the probability density function $d_T(t)$ as given by Equation \eqref{eq:pdf_dark_count}.
The coincidence probability is then given by
\begin{equation}
p_\textnormal{ph-dc}(T, \tau) = \iint_{|t_1 - t_2| \leq \tau} dt_1 dt_2 p_T(t_1) d_T(t_2).
\end{equation}
When calculating the other coincidence probabilities, we were able to use a symmetry argument to simplify the integral.
However, because $p_T(t) \neq d_T(t)$, we are unable to do so here.
Assuming for the moment that $\tau \leq \tfrac 1 2 T$ and the fact that both probability density functions are proportional to $\Theta(t) \Theta(T-t)$,
we can write \begin{equation}
p_\textnormal{ph-dc}(T, \tau) = \left(\int_0^{\tau} dt_1 \int_0^{t_1 + \tau} dt_2 + \int_\tau^{T-\tau} dt_1 \int_{t_1 - \tau}^{t_1 + \tau} dt_2 + \int_{T-\tau}^T dt_1 \int_{t_1-\tau}^T dt_2 \right) p_T(t_1) d_T(t_2).
\end{equation}
This becomes
\begin{equation}
\begin{aligned}
&p_\textnormal{ph-dc}(T, \tau) \frac{p_\textnormal{det}(T)T}{\eta} \frac{a-2b}{2ab}\\
=& \left(\int_0^{\tau} dt_1 \int_0^{t_1 + \tau} dt_2 + \int_\tau^{T-\tau} dt_1 \int_{t_1 - \tau}^{t_1 + \tau} dt_2 + \int_{T-\tau}^T dt_1 \int_{t_1-\tau}^T dt_2 \right) (\textnormal{e}^{-2bt_1} - \textnormal{e}^{-at_1})\\
=& \left(\int_0^{\tau} dt_1 (t_1 + \tau)  + \int_\tau^{T-\tau} dt_1 2\tau  + \int_{T-\tau}^T dt_1 (T + \tau -t_1) \right) (\textnormal{e}^{-2bt_1} - \textnormal{e}^{-at_1}).
\end{aligned}
\end{equation}
We calculate these three terms individually.
Before doing this, we note that we can use integration by parts to calculate
\begin{equation}
\int_x^y dt \textnormal{e}^{-zt} t =  - \frac 1 z \left[ t \textnormal{e}^{-zt}\right]^{t=y}_{t=x} + \frac 1 z \int_x^y \textnormal{e}^{-zt}dt = \left[ \frac {\textnormal{e}^{-zt}}{z} (t + \frac 1 z) \right]^{t=x}_{t=y}.
\end{equation}
Then, the first term becomes
\begin{equation}
\begin{aligned}
\int_0^{\tau} dt_1 (t_1 + \tau)  (\textnormal{e}^{-2bt_1} - \textnormal{e}^{-at_1}) &= \left[ \left(\tau + \frac 1 {2b} + t_1\right) \frac{ \textnormal{e}^{-2bt_1} }{2b} \right]^\tau_0 - \left[ \left(\tau + \frac 1 {a} + t_1\right) \frac{ \textnormal{e}^{-at_1} }{a} \right]^\tau_0 \\
&= \frac 1 {4b^2} + \frac \tau {2b} - \frac 1 {a^2} - \frac \tau a - \left( \frac 1 {2b} + 2 \tau \right) \frac{\textnormal{e}^{-2b\tau}}{2b} + \left( \frac 1 a + 2 \tau \right) \frac{\textnormal{e}^{-a\tau}}{a}.
\end{aligned}
\end{equation}
The second yields
\begin{equation}
2\tau \int_\tau^{T-\tau} dt_1 (\textnormal{e}^{-2bt_1} - \textnormal{e}^{-at_1}) = 2\tau \frac{\textnormal{e}^{-2b\tau}}{2b} - 2\tau \frac{\textnormal{e}^{2b(\tau - T)}}{2b} - 2\tau \frac{\textnormal{e}^{-a\tau}}{a} + 2\tau \frac{\textnormal{e}^{a(\tau-T)}}{a}.
\end{equation}
The final one yields
\begin{equation}
\begin{aligned}
\int_{T-\tau}^T dt_1 (T + \tau -t_1) (\textnormal{e}^{-2bt_1} - \textnormal{e}^{-at_1}) &= \left[ \left( T + \tau - \frac 1 {2b} - t_1 \right) \frac{\textnormal{e}^{-2bt_1}}{2b} \right]^{T-\tau}_T - \left[ \left( T + \tau - \frac 1 a - t_1 \right) \frac{\textnormal{e}^{-at}}{a}  \right]^{T-\tau}_T \\
&= \left[ \frac 1 {2b} - \tau + \textnormal{e}^{2b\tau} \left( 2 \tau - \frac 1 {2b} \right) \right] \frac{\textnormal{e}^{-2bT}}{2b} - \left[ \frac 1 a - \tau + \textnormal{e}^{a\tau} \left(2\tau - \frac 1 a \right) \right] \frac{\textnormal{e}^{-aT}}{a}.
\end{aligned}
\end{equation}
We note that the second term cancels fully against the first and the third.
When adding all together, we find
\begin{equation}
\begin{aligned}
p_\textnormal{ph-dc}(T, \tau) \frac{p_\textnormal{det}(T)T}{\eta} \frac{a-2b}{2ab} = \frac 1 {2b} &\Bigg[ \frac 1 {2b} + \tau - \frac{\textnormal{e}^{-2b\tau}}{2b} + \textnormal{e}^{-2bT} \left( \frac 1 {2b} - \tau - \frac{\textnormal{e}^{2b\tau}}{2b} \right) \Bigg] \\
- \frac 1 a &\Bigg[ \frac 1 a + \tau - \frac{\textnormal{e}^{-a\tau}} a + \textnormal{e}^{-aT} \left( \frac 1 a - \tau - \frac{\textnormal{e}^{a\tau}}{a} \right) \Bigg]
\end{aligned}
\end{equation}
and thus
\begin{equation}
\begin{aligned}
p_\textnormal{ph-dc}(T, \tau) \frac{p_\textnormal{det}(T)}{\eta} = \frac a {2b(a-2b)T} &\Bigg[ 1 + 2b \tau - \textnormal{e}^{-2b\tau} + \textnormal{e}^{-2bT} \left(  1 - 2b \tau - \textnormal{e}^{2b\tau} \right) \Bigg] \\
- \frac {2b} {a(a-2b)T} &\Bigg[ 1 + a \tau - \textnormal{e}^{-a\tau} + \textnormal{e}^{-aT} \left(1 - a \tau - \textnormal{e}^{a\tau} \right) \Bigg].
\end{aligned}
\end{equation}
This procedure can be repeated when making the assumption $\tau \geq \tfrac T 2$.
In that case, the exact same formula is found.
Therefore, Equation \eqref{eq:coin_prob_ph_dc} is valid for any $0 \leq \tau \leq T$.
\end{proof}

\begin{definition} \label{def:visibility}
Visibility.
When using a detection time window $T$ and coincidence time window of $\tau$ in the double-click protocol,
the Hong-Ou-Mandel visibility is defined as
\begin{equation}
\label{eq:visibility_def}
V(T, \tau) = 1 - \frac{ P(\textnormal{two photons detected at different detectors $|$ same mode})} { P(\textnormal{two photons detected at different detectors $|$ different modes}) }.
\end{equation}
Here,  $P(\textnormal{two photons detected at different detectors $|$ same mode})$ is the probability that if both incoming photons are in the same mode, they will be both be detected, and these detection events occur at different detectors.
On the other hand, $P(\textnormal{two photons detected at different detectors $|$ different modes})$ is the probability that if both incoming photons are in different modes (e.g., different polarizations), they will both be detected, and these detection events occur at different detectors.
Dark counts are not considered photon detections for the definitions of these probabilities.
\end{definition}
Note that when the two photons are in the same mode, they are able to interfere.
Then, if the two photons are pure and have the same temporal profile, they are perfectly indistinguishable and will never be detected at different detectors because of the Hong-Ou-Mandel effect \cite{hongMeasurementSubpicosecondTime1987}.
On the other hand, if the two photons are in different modes (e.g., different polarizations), they are not able to interfere.
We note that the definition here given is in line with the definition given for the Hong-Ou-Mandel visibility in the main text.

\begin{theorem} \label{thm:visibility}
Visibility. 
The Hong-Ou-Mandel visibility for a double-click protocol with two photons that are both in the double-exponential state $(a, b)$
is given by
\begin{equation} \label{eq:visibility}
\begin{aligned}
V(T, \tau) \left( \frac{ p_\textnormal{det}(T) } {\eta} \right) ^2 p_\textnormal{ph-ph}(T, \tau) =& \frac a {a + 2b} (1 - \textnormal{e}^{-2b\tau}) + \frac{2ab^2}{(a-2b)^2(a - b)}(1 - \textnormal{e}^{2(a-b)\tau})  \textnormal{e}^{-2aT} \\
&+ \frac{a^2}{(a - 2b)^2}(1 - \textnormal{e}^{2b\tau})  \textnormal{e}^{-4bT} - \frac{16ab^2}{(a-2b)^2(a+2b)} (1 - \textnormal{e}^{a\tau}) \textnormal{e}^{-(a + 2b)T} .
\end{aligned}
\end{equation}
\end{theorem}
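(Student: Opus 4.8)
The plan is to reduce the visibility of Definition~\ref{def:visibility} to a single double integral of an ``exchange kernel'' and then evaluate that integral for the double-exponential state using the region decomposition already used in the proof of Theorem~\ref{thm:coin_prob_ph_ph}.

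First I would compute the time-resolved two-photon coincidence at the non-polarizing beam splitter. Writing two incoming single photons with real wavefunctions $\psi_{s_1}$ and $\psi_{s_2}$ (emitted at times $s_1,s_2$) in the two input ports and applying the balanced relations $a_t^\dagger\to\tfrac1{\sqrt2}(c_t^\dagger+d_t^\dagger)$, $b_t^\dagger\to\tfrac1{\sqrt2}(c_t^\dagger-d_t^\dagger)$, the amplitude for one click in output $c$ at time $t_1$ and one in output $d$ at time $t_2$ is $\tfrac12\big(\psi_{s_1}(t_2)\psi_{s_2}(t_1)-\psi_{s_1}(t_1)\psi_{s_2}(t_2)\big)$. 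Taking the modulus squared, averaging over the emission times with $p_\text{em}$, and using reality of the wavefunctions, the coincidence probability density when both photons are in the \emph{same} mode equals the density for \emph{different} modes (no interference) minus the cross term $\tfrac12 K(t_1,t_2)^2$, where
\[
K(t_1,t_2)\equiv\int_0^\infty ds\,p_\text{em}(s)\,\psi_s(t_1)\,\psi_s(t_2).
\]
The non-interfering density factorizes as $\tilde p(t_1)\tilde p(t_2)$ with $\tilde p(t)=\int ds\,p_\text{em}(s)\,\psi_s(t)^2=p(t)/\eta$ the single-photon detection density of Lemma~\ref{lem:detection_time_pdf}.

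Forming the ratio in Definition~\ref{def:visibility}, the identical mode-routing factors of numerator and denominator cancel; and since $P(\text{same})=P(\text{different})-(\text{cross})$, the visibility reduces to $V=(\text{cross})/P(\text{different})$, isolating the cross term. Integrating the non-interfering density over the region $\mathcal R=\{0\le t_1,t_2\le T,\ |t_1-t_2|\le\tau\}$ reproduces exactly $(p_\text{det}(T)/\eta)^2\,p_\text{ph-ph}(T,\tau)$ by Definition~\ref{def:coin_prob} and the identity $\iint_{\mathcal R} p(t_1)p(t_2)\,dt_1dt_2=p_\text{det}(T)^2\,p_\text{ph-ph}(T,\tau)$ established inside the proof of Theorem~\ref{thm:coin_prob_ph_ph}. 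This yields the clean identity
\[
V(T,\tau)\Big(\tfrac{p_\text{det}(T)}{\eta}\Big)^2 p_\text{ph-ph}(T,\tau)=\iint_{\mathcal R}dt_1\,dt_2\,K(t_1,t_2)^2,
\]
which I would sanity-check against the $T\to\infty$ limit, where it should collapse to the $T$-independent term $\tfrac{a}{a+2b}(1-e^{-2b\tau})$ of Eq.~\eqref{eq:visibility}.

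It then remains to evaluate the double integral for the double-exponential state. I would compute $K$ in closed form as $K(t_1,t_2)=\tfrac{2ab}{a-2b}e^{-b(t_1+t_2)}\big(1-e^{-(a-2b)\min(t_1,t_2)}\big)$, so that $K^2$ carries the prefactor $\tfrac{4a^2b^2}{(a-2b)^2}$ (the same prefactor, without $\eta$, that appears in $p(t)^2$) together with $e^{-2b(t_1+t_2)}$ and the squared bracket. Using the symmetry $t_1\leftrightarrow t_2$ to restrict to $t_1\ge t_2$ (hence $\min=t_2$) and splitting $\mathcal R$ exactly as in Theorem~\ref{thm:coin_prob_ph_ph} into the strip $t_1-t_2\in[0,\tau]$ with $t_1\in[0,T-\tau]$ and the boundary region $t_1\in[T-\tau,T]$, the elementary one-dimensional exponential integrals can be carried out and collected by the powers $e^{-4bT}$, $e^{-2aT}$ and $e^{-(a+2b)T}$. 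The main obstacle I anticipate is purely the bookkeeping: expanding $\big(1-e^{-(a-2b)t_2}\big)^2=1-2e^{-(a-2b)t_2}+e^{-2(a-2b)t_2}$ produces three exponential families rather than the single family handled in Theorem~\ref{thm:coin_prob_ph_ph}, and matching the coefficients $\tfrac{a}{a+2b}$, $\tfrac{2ab^2}{(a-2b)^2(a-b)}$, $\tfrac{a^2}{(a-2b)^2}$ and $\tfrac{16ab^2}{(a-2b)^2(a+2b)}$ in Eq.~\eqref{eq:visibility}---together with verifying that the middle contribution of the boundary strip cancels as it did in Theorem~\ref{thm:coin_prob_ph_ph}---is where the care lies. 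The conceptual step, reducing the visibility to $\iint_{\mathcal R} K^2$, is settled by the beam-splitter argument above.
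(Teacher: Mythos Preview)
Your proposal is correct and follows essentially the same route as the paper: your identity $V(T,\tau)\big(p_\text{det}(T)/\eta\big)^2 p_\text{ph-ph}(T,\tau)=\iint_{\mathcal R}K(t_1,t_2)^2\,dt_1dt_2$ is exactly the paper's four-fold cross integral after performing the emission-time integrations (the paper cites the same-mode coincidence density from \cite{meraner_indistinguishable_2020} rather than deriving it via the beam-splitter transformation as you do), and the evaluation then proceeds identically via the symmetry reduction, the split at $T-\tau$, and collecting by the exponents $e^{-4bT}$, $e^{-2aT}$, $e^{-(a+2b)T}$. One minor remark: with your convention $t_1\geq t_2$ the complicated factor $\big(1-e^{-(a-2b)t_2}\big)^2$ sits on the \emph{inner} variable, so it is cleaner to integrate the outer variable $t_1$ (carrying only $e^{-2bt_1}$) first, exactly as the paper does with the roles of $t_1'$ and $t_2'$ reversed.
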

\begin{proof}

First, we evaluate $P(\textnormal{two photons detected at different detectors | different modes})$.
Because the photons do not interfere, this probability is just the probability that both photons are detected within the detection time window and within one coincidence time window,
multiplied by a factor of $\frac 1 2$ as the probability for both photons going to different detectors is the same as the probability for both photons going to the same detector.
The probability for a single photon falling within the detection time window is the detection probability (Theorem \ref{thm:prob_photon_in_detection_time_window}),
and the probability of both photons being detected within a single coincidence time window is the photon-photon coincidence probability (Theorem \ref{thm:coin_prob_ph_ph}).
Thus,
\begin{equation}
P(\textnormal{two photons detected at different detectors | different modes}) = \frac 1 2 p_\textnormal{det}(T)^2 p_\textnormal{ph-ph}(T, \tau).
\end{equation}
The second probability can be evaluated as \cite{meraner_indistinguishable_2020}
\begin{equation}
\begin{aligned}
P(&\textnormal{two photons detected at different detectors | same mode}) \\
&= \frac {\eta^2} 4 \int_0^\infty dt_1 \int_0^\infty dt_2 \iint_{|t_1' - t_2'| \leq \tau} dt_1' dt_2' p_\textnormal{em}(t_1) p_\textnormal{em}(t_2) \left| \psi_{t_1}(t_1') \psi_{t_2}(t_2') -  \psi_{t_1}(t_2') \psi_{t_2}(t_1') \right|^2.
\end{aligned}
\end{equation}
From combining Equations \eqref{eq:detection_time_pdf_for_general_state}, \eqref{eq:detection_time_pdf_conditional} and \eqref{eq:coin_prob_ph_ph_def}, we see that
\begin{equation}
\eta^2 \int_0^\infty dt_1 \int_0^\infty dt_2 \iint_{|t_1'-t_2'| \leq \tau} p_\textnormal{em}(t_2) p_\textnormal{em}(t_2) |\psi_{t_1}(t_1')|^2 |\psi_{t_2}(t_2')|^2 = p_\textnormal{det}(T)^2 p_\textnormal{ph-ph}(T, \tau).
\end{equation}
We use this, together with the fact that for double-exponential photons it holds that $\psi^*_{t_0}(t) = \psi_{t_0}(t)$, to find
\begin{equation}
\begin{aligned}
P(&\textnormal{two photons detected at different detectors | same mode}) = \frac 1 2 p_\textnormal{det}(T)^2 p_\textnormal{ph-ph}(T, \tau) \\
&- \frac {\eta^2} 2 \int_0^\infty dt_1 \int_0^\infty dt_2 \iint_{|t_1' - t_2'| \leq \tau} dt_1' dt_2' p_\textnormal{em}(t_1) p_\textnormal{em}(t_2) \psi_{t_1}(t_1') \psi_{t_1}(t_2') \psi_{t_2}(t_2') \psi_{t_2}(t_1').
\end{aligned}
\end{equation}
We can then work out Equation \eqref{eq:visibility_def} to find
\begin{equation}
\begin{aligned}
&V(T, \tau) \left(\frac{p_\textnormal{det}(T)}{\eta}\right)^2 p_\textnormal{ph-ph}(T, \tau) \\
&= \int_0^\infty dt_1 \int_0^\infty dt_2 \int_{0}^T dt_1' \int_{0}^T dt_2' \Theta(|t_1' - t_2'| - \tau)  p_\textnormal{em}(t_1) p_\textnormal{em}(t_2) \psi_{t_1}(t_1') \psi_{t_1}(t_2') \psi_{t_2}(t_2') \psi_{t_2}(t_1').
\end{aligned}
\end{equation}
The integrand is symmetric under interchange of $t_1'$ and $t_2'$.
This allows us to consider only the region $0 \leq t_2' - t_1' \leq \tau$, giving
\begin{equation}
\begin{aligned}
&V(T, \tau) \left(\frac{p_\textnormal{det}(T)}{\eta}\right)^2 p_\textnormal{ph-ph}(T, \tau) \\
&= 2 \int_0^\infty dt_1 \int_0^\infty dt_2 \int_{0}^T dt_1' \int_{t_1'}^{\textnormal{min}(t_1'+\tau, T)} dt_2' p_\textnormal{em}(t_1) p_\textnormal{em}(t_2) \psi_{t_1}(t_1') \psi_{t_1}(t_2') \psi_{t_2}(t_2') \psi_{t_2}(t_1').
\end{aligned}
\end{equation}
Now, we notice that each $\psi_{t_0}(t)$ is proportional to $\Theta(t-t_0)$.
This can be absorbed into the limit of integration for $t_1$ and $t_2$, yielding
\begin{equation}
\begin{aligned}
&V(T, \tau) \left(\frac{p_\textnormal{det}(T)}{\eta}\right)^2 p_\textnormal{ph-ph}(T, \tau) \\
&= 2 \int_{0}^T dt_1' \int_0^{t_1'} dt_1 \int_0^{t_1'} dt_2 \int_{t_1'}^{\textnormal{min}(t_1'+\tau, T)}dt_2' p_\textnormal{em}(t_1) p_\textnormal{em}(t_2) \psi_{t_1}(t_1') \psi_{t_1}(t_2') \psi_{t_2}(t_2') \psi_{t_2}(t_1') \\
&= 8a^2b^2 \int_0^T dt_1' \textnormal{e}^{-2bt_1'} \int_0^{t_1'} dt_1 \textnormal{e}^{-(a-2b)t_1} \int_0^{t_1'} dt_2 \textnormal{e}^{-(a-2b)t_2} \int_{t_1'}^{\textnormal{min}(t_1'+\tau, T)} dt_2' \textnormal{e}^{-2bt_2'} \\
&= \frac{4a^2b}{(a-2b)^2} \int_0^T dt_1' \textnormal{e}^{-2bt_1'} (\textnormal{e}^{-2bt_1'} - \textnormal{e}^{-2b\textnormal{min}(t_1'+\tau, T)}) (1 - \textnormal{e}^{-(a-2b)t_1'})^2 \\
&= \frac{4a^2b}{(a-2b)^2} \left[ \int_0^T \textnormal{e}^{-2bt} - \textnormal{e}^{-2b\tau} \int_0^{T-\tau} \textnormal{e}^{-2bt} - \textnormal{e}^{-2bT} \int_{T-\tau}^T \right] \left( \textnormal{e}^{-2bt} - 2 \textnormal{e}^{-at} + \textnormal{e}^{-2(a-b)t} \right) dt.
\end{aligned}
\end{equation}
In the last step, we split up the integration region into a part where $t_1' + \tau$ is smaller and a part where $T$ is smaller.
Furthermore, for brevity, we renamed $t_1'$ to $t$.
We first calculate the first integral to find
\begin{equation}
\begin{aligned}
\int_0^T dt \left( \textnormal{e}^{-4bt} - 2\textnormal{e}^{-(a+2b)t} + \textnormal{e}^{-2at} \right) &= \frac 1 {4b} (1 - \textnormal{e}^{-4bT}) - \frac 2 {a+2b} (1 - \textnormal{e}^{-(a+2b)T}) + \frac 1 {2a} (1 - \textnormal{e}^{-2aT})\\
&= \frac 1 {4b} - \frac 2 {a + 2b} + \frac 1 {2a} - \frac 1 {2a} \textnormal{e}^{-2aT} - \frac 1 {4b} \textnormal{e}^{-4bT} + \frac 2 {a+2b} \textnormal{e}^{-(a+2b)T}.
\end{aligned}
\end{equation}
The second yields
\begin{equation}
\begin{aligned}
\textnormal{e}^{-2b\tau}& \int_0^{T-\tau} dt \left(- \textnormal{e}^{-4bt} + 2\textnormal{e}^{-(a+2b)t} - \textnormal{e}^{-2at} \right) \\
&= \textnormal{e}^{-2b\tau} \left(- \frac 1 {4b} (1 - \textnormal{e}^{-4b(T-\tau)}) + \frac 2 {a+2b} (1 - \textnormal{e}^{-(a+2b)(T-\tau)}) - \frac 1 {2a} (1 - \textnormal{e}^{-2a(T -\tau)}) \right) \\
&= \left( - \frac 1 {4b} + \frac 2 {a + 2b} - \frac 1 {2a} \right) \textnormal{e}^{-2b\tau} + \frac {\textnormal{e}^{2(a-b)\tau}} {2a} \textnormal{e}^{-2aT} + \frac {\textnormal{e}^{2b\tau}} {4b} \textnormal{e}^{-4bT} - \frac {2\textnormal{e}^{a\tau}} {a+2b} \textnormal{e}^{-(a+2b)T}.
\end{aligned}
\end{equation}
The final one yields
\begin{equation}
\begin{aligned}
\textnormal{e}^{-2bT}& \int_{T-\tau}^T dt \left(- \textnormal{e}^{-2bt} + 2\textnormal{e}^{-at} - \textnormal{e}^{-2(a-b)t} \right) \\
&= - \frac{\textnormal{e}^{2b\tau} - 1}{2b} \textnormal{e}^{-4bT} + \frac {2(\textnormal{e}^{a\tau} - 1)}{a} \textnormal{e}^{-(a + 2b)T} - \frac{\textnormal{e}^{2(a-b)\tau} - 1}{2(a - b)} \textnormal{e}^{-2aT}.
\end{aligned}
\end{equation}
Now, it is just a matter of adding these three terms together and taking the prefactor into account.
We collect terms separately for each different exponent with a $T$.
The part of the expression that is independent of $T$ yields
\begin{equation}
\begin{aligned}
(1 - \textnormal{e}^{-2b\tau}) \frac{4a^2b}{(a-2b)^2} \left(\frac 1 {4b} - \frac 2 {a + 2b} + \frac 1 {2a} \right) &= \frac a {(a-2b)^2} \left(a - \frac{8ab}{a+2b} + 2b\right) \\
&=(1 - \textnormal{e}^{-2b\tau}) \frac{a}{(a-2b)^2 (a+2b)} \left( a (a+2b) - 8ab + 2b(a+2b) \right) \\
&= (1 - \textnormal{e}^{-2b\tau}) \frac a {a+2b}.
\end{aligned}
\end{equation}
For the terms proportional to $\textnormal{e}^{-2aT}$ we find
\begin{equation}
\begin{aligned}
\frac{4a^2b}{(a-2b)^2} (1 - \textnormal{e}^{2(a-b)\tau}) \left( \frac 1 {2a} - \frac 1 {2(a-b)} \right) \textnormal{e}^{-2aT} &= \frac{2ab}{(a-2b)^2} (1 - \textnormal{e}^{2(a-b)\tau}) \left( \frac a {a-b} - 1 \right) \textnormal{e}^{-2aT} \\
&= \frac{2ab}{(a-2b)^2} (1 - \textnormal{e}^{2(a-b)\tau}) \frac b {a-b} \textnormal{e}^{-2aT} \\
&= \frac{2ab^2}{(a-2b)^2 (a-b)} (1 - \textnormal{e}^{2(a-b)\tau}) \frac b {a-b} \textnormal{e}^{-2aT}.
\end{aligned}
\end{equation}
For the terms proportional to $\textnormal{e}^{-4bT}$ we find
\begin{equation}
\frac{4a^2b}{(a-2b)^2}  (1 - \textnormal{e}^{2b\tau}) \left( \frac 1 {4b} - \frac 1 {2b} \right) \textnormal{e}^{-4bT} = \frac{a^2}{2(a-2b)^2} (1 - \textnormal{e}^{2b\tau}) \textnormal{e}^{-4bT}.
\end{equation}
Finally, for the terms proportional to $\textnormal{e}^{-(a+2b)T}$ we find
\begin{equation}
\begin{aligned}
\frac{4a^2b}{(a-2b)^2} (1 - \textnormal{e}^{a\tau}) \left(\frac 2 {a+2b} - \frac 2 a \right) \textnormal{e}^{-(a+2b)T} &= \frac{8ab}{(a-2b)^2(a+2b)} (1 - \textnormal{e}^{a\tau}) \left(a - (a + 2b) \right) \textnormal{e}^{-(a+2b)T} \\
&=  \frac{-16ab^2}{(a-2b)^2(a+2b)} (1 - \textnormal{e}^{a\tau}) \textnormal{e}^{-(a+2b)T}
\end{aligned}
\end{equation}
Adding these four different contributions together then yields Equation \eqref{eq:visibility}.
\end{proof}

We note that Lemma \ref{lem:detection_time_pdf_conditional} and Theorems \ref{thm:coin_prob_ph_ph} and \ref{thm:visibility} are compared to experimental results obtained with a trapped-ion device in Figure \ref{fig:ti_coincidence_time}.

\section{Single-click model}
\label{appendix:sec_single_click_model}
In this appendix, we present an analytical model for the entangled states created on elementary links when using a single-click entanglement generation protocol~\cite{cabrillo1999creation}.
This model is used as one of the building blocks of our NetSquid simulations, as mentioned in Appendix~\ref{appendix:protocols}, and based on the models previously introduced in~\cite{kalb2017entanglement, humphreys2018deterministic, pompili2021realization}.
The novelty of the model presented here lies in combining features of the three previous models~\cite{kalb2017entanglement, humphreys2018deterministic, pompili2021realization}
and in additionally considering the possibility that non-number-resolving detectors may be used (the three cited papers assume the use of number-resolving detectors).

\subsection{Model assumptions}
We model a single-click protocol for entanglement generation on an elementary link between two nodes, which we designate A and B.
The protocol starts with the preparation of an optically-active matter qubit at each of the nodes in the following state:
\begin{equation}
    \ket{\psi_\text{m}} = \sqrt{\alpha}\ket{\uparrow} + \sqrt{1-\alpha}\ket{\downarrow},
\end{equation}
where the subscript $\text{m}$ stands for matter, $\ket{\uparrow}$ is a bright state, i.e., a state that rapidly decays via photon emission after being excited, and $\alpha$ is the bright-state parameter, i.e. the fraction of the matter qubit's state that is in $\ket{\uparrow}$.
Excitation and subsequent radiative decay of $\ket{\uparrow}$ entangles the state of the matter qubit with the presence $\ket{1}$ or absence $\ket{0}$ of a photon (subscript $\text{p}$):
\begin{equation}
    \ket{\psi_\text{m}, \psi_\text{p}} = \sqrt{\alpha} \ket{\uparrow}\ket{1} + \sqrt{1 - \alpha} \ket{\downarrow}\ket{0}
\end{equation}
The photons are then sent to a heralding station where they are interfered.
Detection of a single photon heralds the generation of a matter-matter entangled state.
\\

In our analytical model, we account for the following imperfections when computing the success probability and entangled state generated with the protocol:
\begin{itemize}
    \item Double excitation of the matter qubit.
    Resonant laser light incident on the optically-active matter qubit triggers its excitation.
    It is possible that this excitation happens two times as the laser shines on the matter qubit, leading to the emission of two photons.
    This happens with probability $p_{\text{dexc}}$.
    We note that the excitation could in theory also happen multiple times, but, as detailed in Appendix~\ref{sec:single_click_results}, the effect this would have on the state would be the same as if two photons were emitted, so we can absorb the probability of more than one excitation into one quantity.
    \item Photon phase uncertainty.
    The photons interefering at the midpoint acquire a phase during transmission over the fiber, and the difference between the phases of the two interefering photons influences the entangled state that is generated~\cite{kalb2017entanglement}.
    The dephasing probability of the state $p_{\text{phase}}$ can be computed from the standard deviation of the difference between the acquired phases $\sigma_{\text{phase}}$~\cite{humphreys2018deterministic}:
        \begin{equation}
            p_{\text{phase}} = \frac{1}{2}\left(1 - \text{e}^{-\sigma^2_{\text{phase}} / 2}\right).
        \end{equation}
\end{itemize}
Furthermore, we also account for photon loss, imperfect indistinguishability, non-photon-number-resolving detectors and detector dark counts, as described in~\ref{appendix:sec_double_click_model}.
Finally, we account for the possibility of asymmetry in the placement of the heralding station, the attenuation of the fibers connecting the nodes to the heralding station and the bright-state parameters of the nodes.
\subsection{Results}
\label{sec:single_click_results}
Here we present the derivation of the entangled matter-matter state generated in our model of the single-click protocol.
We split this derivation into four situations, as done in~\cite{kalb2017entanglement, humphreys2018deterministic, pompili2021realization}.
Each of them corresponds to one of the different configurations of the states of the matter qubits that can result in a heralded success.
\begin{enumerate}
    \item Both matter qubits are in the bright state.
        In this case, both matter qubits emit a photon, so this situation is heralded as a success if:
        \begin{enumerate}
            \item Only one of the emitted photons survives.
            For NR detectors, there cannot be dark counts in either of the detectors
            (as otherwise two photons would be detected in the detector which also saw the actual emitted photon, and the event would be heralded as a failure).
            For NNR detectors, the requirement is just that there is no dark count in the detector that did not detect the emitted photon.
            The probability of this case happening is:
            \begin{equation}
                p_{1 \text{a}} = 
                \begin{cases}
                \alpha_\text{A} \alpha_\text{B}(1 - p_\text{dc})^2 (p_\text{A} (1 - p_\text{B}) + p_\text{B} (1 - p_\text{A})) \qquad &\text{if NR,}\\
                \alpha_\text{A} \alpha_\text{B}(1 - p_\text{dc}) (p_\text{A} (1 - p_\text{B}) + p_\text{B} (1 - p_\text{A})) \qquad &\text{if NNR.}\\
                \end{cases}
            \end{equation}
            \item No emitted photon is detected, and there is a dark count in one of the detectors.
            This case is the same irrespective of whether or not the detectors are NR.
            There is a factor of two because this can happen in either detector.
            The probability of this case happening is:
            \begin{equation}
                p_{1 \text{b}} = 2\alpha_\text{A} \alpha_\text{B} (1-p_\text{A})(1-p_\text{B})(1-p_{\text{dc}})p_{\text{dc}}.
            \end{equation}
            \item Both emitted photons make it to the midpoint and are detected, but they bunch and go into the same detector.
            Furthermore, there is no dark count in the other detector. There is a factor of two because this can happen in either detector.
            This is heralded as a failure if the detectors are NR.
            The probability of this case happening is:
            \begin{equation}
                p_{1 \text{c}} = 
                \begin{cases}
                0 \qquad &\text{if NR,}\\
                \alpha_\text{A} \alpha_\text{B} p_\text{A} p_\text{B} p_{\text{same dets}}(1 - p_{\text{dc}}) \qquad &\text{if NNR,}\\
                \end{cases}
            \end{equation}
            where $p_{\text{same dets}}$ is the probability that the two photons go to the same detector, as derived in case F2 of Appendix~\ref{appendix:sec_double_click_model}:
            \begin{equation}
                p_{\text{same dets}} = 1 - \frac{1 - V}{2}.
            \end{equation}
        \end{enumerate}
    \item Matter qubit A is in the bright state, matter qubit B is not.
        In this case, only matter qubit A emits a photon, so this situation is heralded as a success if:
        \begin{enumerate}
            \item The emitted photon survives.
            For NR detectors, there cannot be dark counts in either of the detectors
            (as otherwise two photons would be detected in the detector which also saw the actual emitted photon, and the event would be heralded as a failure). For NNR detectors, the requirement is just that there is no dark count in the detector that did not detect the emitted photon.
            \begin{equation}
                p_{2 \text{a}} = 
                \begin{cases}
                \alpha_\text{A} (1 - \alpha_\text{B})(1 - p_\text{dc})^2 p_\text{A} \qquad &\text{if NR,}\\
                \alpha_\text{A} (1 - \alpha_\text{B})(1 - p_\text{dc}) p_\text{A} \qquad &\text{if NNR.}\\
                \end{cases}
            \end{equation}
            \item The emitted photon does not survive, and there is a dark count in one of the detectors.
            This case is the same irrespective of whether the detectors are NR.
            There is a factor of two because this can happen in either detector.
            \begin{equation}
                p_{2 \text{b}} = 2 \alpha_\text{A} (1 - \alpha_\text{B})(1-p_\text{A})(1-p_{\text{dc}})p_{\text{dc}}.
            \end{equation}
        \end{enumerate}
    \item Matter qubit B is in the bright state, matter qubit A is not.
        In this case, only matter qubit B emits a photon.
        This identical to case 2, interchanging A and B.
    \item Neither of the matter qubits are in the bright state.
        No photon is emitted.
        In this case, we only get a success if there is a dark count in one of the detectors, but not the other.
        This case is the same irrespective of whether the detectors are NR.
        There is a factor of two because this can happen in either detector.
        \begin{equation}
            p_4 = 2 (1-\alpha_\text{A})(1-\alpha_\text{B})(1-p_{\text{dc}})p_{\text{dc}}.
        \end{equation}
\end{enumerate}

The overall success probability of the protocol $p_{\text{suc}}$ is then given by adding up the probability that each of the cases above happens, $p_{\text{suc}} = p_1 + p_2 + p_3 + p_4$, with $p_1 = p_{1 \text{a}} + p_{1 \text{b}} + p_{1 \text{c}}$, $p_2 = p_{2 \text{a}} + p_{2 \text{b}}$ and $p_3 = p_{3 \text{a}} + p_{3 \text{b}}$.
\\

The unnormalized density matrix of the generated state $\rho$ can then be obtained by taking the model introduced in~\cite{humphreys2018deterministic} and replacing the probabilities appropriately.
The result is the following:
\begin{equation}
    \rho = 
    \begin{pmatrix}
    p_1 & 0 & 0 & 0 \\
    0 & p_2 & \pm \sqrt{V  p_2  p_3} & 0 \\
    0 & \pm \sqrt{V  p_2  p_3} & p_3 & 0 \\
    0 & 0 & 0 & p_4
    \end{pmatrix},
\end{equation}
where the sign depends on which detector clicked.
\\

Two more dephasing channels are then applied in succession to the state in order to account for the effect of double photon excitation and photonic phase drift.
\\

The first channel, corresponding to double excitation, is applied to both matter qubits, with probability $p_{\text{dexc}}/2$.
The light pulse used to excite the bright state to a short-lived excited state is not instantaneous, so there is a chance that the matter qubit decays back down to the original state and be re-excited before the pulse is complete.
The first emitted photon will be lost to the environment because it is impossible to distinguish it from the laser light used to excite the matter qubit~\cite{humphreys2018deterministic}.
It must then be traced out, resulting in a loss of coherence between the two matter qubit states.
However, detection of the second emitted photon will falsely herald entanglement, so we apply a dephasing channel with probability $p_{\text{dexc}}/2$ to account for the possibility that more than one photon is emitted.
\\

The second one, corresponding to the photonic phase drift, is applied to only one of them, with probability $p_{\text{phase}}$.
The difference in the phases acquired by the two interfering photons results in a phase difference between the two components of the resulting Bell state~\cite{humphreys2018deterministic}.
Applying a dephasing channel to only one of the matter qubits, with the correct probability given by $p_{\text{phase}}$, has the same effect.

\section{Optimization method}
\label{sec:appendix_optimization}

In this appendix we provide more details regarding our optimization methodology.
As mentioned in the main text, this methodology is based on genetic algorithms, which come in several different flavors.
Our particular implementation is heavily based on the one introduced in~\cite{da2021optimizing}, to which we refer the interested reader.
The only novelty introduced here is the use of a different termination criterion, which is explained in detail in the following section.
We note also that the code for our implementation, together with the tools required for integration with NetSquid simulations, is publicly accessible at~\cite{smart-stopos}.
\\

\subsection{Termination criteria for genetic algorithms}
The matter of choosing termination criteria for genetic algorithms (and, more generally, evolutionary algorithms) has been the object of some study (see, e.g.,~\cite{jain2001termination} for a review).
If the algorithm is terminated too soon, good solutions might remain undiscovered.
On the other hand, running the algorithm for too long in case good solutions have already been found leads to wasting computational resources.
Typically-used termination criteria can be split into two groups~\cite{jain2001termination}:
\begin{enumerate}
    \item Direct termination criteria: these can be obtained directly from the optimization, without any extra data analysis.
    Examples include setting a maximum number of generations for the optimization or imposing a threshold value on the value of the best solution's cost;
    \item Derived termination criteria: these are \textit{a posteriori} criteria, requiring that some data analysis be performed on the outcome of the optimization.
    Examples include setting a threshold on the standard deviation of the costs of the individuals in the population or on the gap between the best and worst individuals in a given generation.
\end{enumerate}
The authors of~\cite{jain2001termination} applied an evolutionary algorithm to a particular cost function with different termination criteria.
They found that the only reliable termination criteria fitting into the groups above were one in which the algorithm terminated after a fixed, predetermined number of generations, which we name GEN, and one in which the best solution had not varied by more than a predetermined value after a predetermined number of generations, which we name VAR.
For all the other criteria tested, the algorithm did not terminate even though the optimal solution had already been found.
GEN and VAR both have the drawback of depending on hyperparameters for which a good choice can only be made with knowledge of the problem at hand.
By this we mean that the number of generations or accepted variation in the best solution per generation that guarantee termination are problem-dependent.
\\ 

With this in mind, we opted to employ VAR as the termination criterion for our optimization runs.
We made this choice as using VAR results in a more systematic, performance-dependent process for the decision of terminating the optimization.
By this we mean that even though GEN guarantees termination (by definition) it does so in an arbitrary way by deciding to stop the optimization without any regard for its evolution.
As suggested in~\cite{jain2001termination} we terminated the algorithm if the best solution's cost averaged over the past fifteen generations had not varied by more than a given value.
In contrast with the work of~\cite{jain2001termination}, we measured the variation in percentual terms.
For each setup, we ran the optimization process ten times, each for two hundred generations.
Then, to determine what the tolerance for the variation should be, we swept across its values, starting at $1\%$ and with a step of $1\%$.
The chosen tolerance was the one that guaranteed termination for all ten of the optimization runs, and the best solution (i.e., the ones showed in this work), was then the best cost found across the ten different runs, up until termination.
We note that the tolerance can be different for different setups.
We further note that this offline implementation of VAR is not good for saving computational resources, as the optimization must anyway be run for a large number of generations, with some of them being discarded.
It was however simpler to integrate into our workflow, which weighed heavily since we were more constrained in working hours than in computing hours.
\subsection{Cost function}
Our goal with this work was to find the minimal requirements for a quantum repeater enabling Verifiable Blind Quantum Computation between two nodes separated by fiber of length 226.5 km.
This implies solving a multi-objective optimization problem, as we want to minimize hardware-parameter improvement while simultaneously ensuring that performance targets are met.
There are various ways of approaching such problems, one of them being scalarization.
This consists of adding the cost functions corresponding to different objectives together, so that effectively only one scalar quantity has to be optimized.
Through this process, we arrive at the cost function $C$ introduced in Section~\ref{sec:methods}, which we reproduce in Equation~\eqref{eq:tot_cost_appendix}.
\begin{equation}
    C = w_1 \Big(1 + \left(F_{\text{min}} - F\right)^2\Big)\Theta\left(F_{\text{min}} - F\right)
    + w_2\Big(1 + \left(R_{\text{min}} - R\right)^2\Big)\Theta\left(R_{\text{min}} - R\right)
    + w_3H_{\text{C}}\left(x_{1_\text{c}},...,x_{N_\text{c}}\right)
    \label{eq:tot_cost_appendix}
\end{equation}
We recall that $H_{\text{C}}$ is the hardware cost, $w_i$ are the weights of the objectives, $\Theta$ is the Heaviside function and $F$ and $R$ are the average teleportation fidelity and entanglement generation rate achieved by the parameter set, respectively.
$F_{\text{min}}$ and $R_{\text{min}}$ are the minimal performance requirements.
Scalarization conveniently transforms multi-objective optimization into their much simpler single-objective counterparts, but it does so by stowing away the problem in defining the weights $w_1$, $w_2$ and $w_3$ assigned to each of the objectives.
Different choices in the weights can lead to different outcomes from the optimization procedure.
In this work, just as in~\cite{da2021optimizing}, we wanted the performance targets to be hard requirements, i.e., a set of hardware parameters that did not fulfill them should not be assigned a low cost.
To ensure this, we picked $w_1$, $w_2 \gg w_3$, such that $w_1 \Big(1 + \left(F_{\text{min}} - F\right)^2\Big)\Theta\left(F_{\text{min}} - F\right)$, $w_2\Big(1 + \left(R_{\text{min}} - R\right)^2\Big)\Theta\left(R_{\text{min}} - R\right) \gg w_3H_{\text{C}}\left(x_{1_\text{c}},...,x_{N_\text{c}}\right)$.
We set $w_1 = w_2 = 1 \times 10^{20}$ and $w_3 = 1$.
No particular heuristic was used to select these numbers.
They were picked because they ensure that the cost assigned to not meeting the performance targets is much higher than the hardware cost, effectively making the performance targets hard requirements. 
\\ 

As mentioned in the main text, we picked the hardware cost function because it reflects the concept of progressive hardness, i.e., that parameters become harder to improve as they approach their perfect value.
Furthermore, it satisfies a composability property regarding the probability of no-imperfection.
To see this, consider a parameter's probability of no-imperfection $p$ that can be expressed as the product of two other parameters' probabilities of no-imperfection, $p = p_a p_b$.
$p$ could for example be the probability that a photon emitted in the correct mode is collected into a fiber, while $p_a$ and $p_b$ are the probabilities that the photon is emitted with the right wavelength and collected into the fiber, respectively.
Improving $p$ by a factor of $k$ takes it to $\sqrt[k]{p} = \sqrt[k]{p_a p_b} = \sqrt[k]{p_a}\sqrt[k]{p_b}$, which is equivalent to improving $p_a$ and $p_b$ separately by the same factor $k$.
Therefore, hardware improvement as measured by this function is invariant to the granularity at which parameters are considered.
\\ 

The last aspect we would like to highlight regarding the cost function is its squared difference terms, i.e., $1 + (F_{\text{min}} - F)^2$ and $1 + (R_{\text{min}} - R)^2$.
These were introduced in~\cite{labay2021genetic} and are used to steer the algorithm towards sets of hardware parameters that are more likely to meet the performance targets.
They do this by ensuring that parameter sets which fail to meet the targets by a large margin are assigned a higher cost, being therefore less likely to progress into further generations.
\\ 

\subsection{Probabilities of no-imperfection}
For some parameters, such as the probability that a photon is not lost when coupling to a fiber, the conversion to probability of no-imperfection is obvious.
For others, such as coherence times, this is not so.
Therefore, we show in Table~\ref{tab:probabilities_no_error} the probability of no-imperfection for all parameters considered in our hardware models.
\begin{table}[!ht]
\begin{tabular}{|c|c|}
\hline
Parameter                                                             & Probability of no-imperfection                      \\ \hline
Photon detection probability excluding attenuation losses $p_{\text{det}}$                         & $p_{\text{det}}$                             \\ \hline
Probability of double excitation $p_{\text{dexc}}$                    & $1 - p_{\text{dexc}}$                        \\ \hline
Gate depolarizing probability $p_{\text{dep}}$                        & $1 - p_{\text{dep}}$                         \\ \hline
Number of entanglement generation attempts before dephasing $N_{1/\text{e}}$ & $(1 + \text{e}^{-1/N_{1/\text{e}}})/2$                     \\ \hline
$T_1$                                                                 & $e^{-t/T_1}$                                 \\ \hline
$T_2$                                                                 & $e^{-t/T_2}$                                 \\ \hline
Ion coherence time $T_\text{c}$                                              & $e^{- t/T_\text{c}^2}$                              \\ \hline
Emission fidelity $F_{\text{em}}$                                     & $1/3\left(4F_{\text{em}} - 1\right)$         \\ \hline
Swap quality $s_\text{q}$                                                    & $s_\text{q}$                                        \\ \hline
Visibility $V$                                                        & $V$                                          \\ \hline
Dark count probability $p_{\text{dc}}$                                       & $1 - p_{\text{dc}}$                                 \\ \hline
\end{tabular}
\caption{Probabilities of no-imperfection for parameters we optimized over in this work.
Some parameters were merged for brevity, e.g. the probability of no-imperfection presented for $T_2$ holds for the abstract model $T_2$, the carbon spin $T_2$ and the electron spin $T_2$.
In the probability of no-imperfection for each of the coherence times, $t$ is the time spent in memory.}
\label{tab:probabilities_no_error}
\end{table}
We proceed with the derivation of the probability of no-imperfection for some of the less obvious cases.
\\ 

As mentioned in the main text, and explained in detail in Supplementary Note 4 A c of~\cite{coopmans2021netsquid}, the initialization of an color center's electron spin state induces dephasing of its carbon spin states through their hyperfine coupling.
This is typically modelled as the carbon spin states dephasing with some probability each time entanglement generation is attempted ~\cite{kalb2018dephasing}.
This probability can be related to $N_{1/\text{e}}$ as $p = 1/2 \left(1 - \text{e}^{-1/N_{1/\text{e}}}\right)$.
The corresponding probability of no-imperfection is then $p_{\text{ne}} = 1 - p = (1 + \text{e}^{-1/N_{1/\text{e}}})/2$.
\\ 

$T_1$ represents the timescale over which qubit relaxation occurs, with the probability of amplitude damping occurring over a period of time $t$ being given by $p_{\text{ad}} = 1 - \text{e}^{-t/T_1}$.
The associated probability of no-imperfection is $\text{e}^{-t/T_1}$.
Improving $T_1$ by a factor of $k$ then corresponds to improving the probability of no-imperfection to $\sqrt[k]{\text{e}^{-t/T_1}}$.
Some algebra reveals that this is equivalent to multiplying $T_1$ by a factor of $k$, and that this holds irrespective of the chosen timescale.
\\ 

$T_2$ represents the timescale over which qubit dephasing occurs, with the probability of a $Z$ error occurring over a period of time $t$ being given by $p_\text{Z} = (1 - \text{e}^{-t/T_2})/2$.
The associated probability of no-imperfection is $p_{\text{ne}} = \frac{1+\text{e}^{-t/T_2}}{2}$.
To first order, this can be written as $p_{\text{ne}} = \text{e}^{-t/2T_2}$, and some algebra again reveals that with this approximation improving $T_2$ by a factor of $k$ is equivalent to multiplying it by the same factor.
\\ 

The ion coherence time $T_\text{c}$ also represents a timescale for dephasing, but in this the case the probability of a $Z$ error occurring is given by $1 - \frac{1}{2}\left(1 + \text{e}^{-2t^2/T^2}\right)$.
To first order, the probability of no-imperfection can thus be written as $p_{\text{ne}} = \text{e}^{-t^2/T^2}$.
In this case, improving $T_\text{c}$ by a factor of $k$ is equivalent to multiplying it by $\sqrt{k}$.
\\ 

We model noise in photon emission as a depolarizing channel of fidelity $F_{\text{em}}$.
The action of the depolarizing channel on a perfect Bell state $\ket{\Phi^+}$ can be written as follows:
\begin{equation*}
\ket{\Phi^+}\bra{\Phi^+} \rightarrow \left(1 - p_{\text{dep}}\right)\ket{\Phi^+}\bra{\Phi^+} + p_{\text{dep}} \frac{\mathbb{I}}{4},
\end{equation*}
where $p_{\text{dep}}$ is the associated depolarizing probability and $\mathbb{I}$ is the identity matrix.
This can be rewritten as:
\begin{equation*}
\ket{\Phi^+}\bra{\Phi^+} \rightarrow \left(1 - p_{\text{dep}}\right)\ket{\Phi^+}\bra{\Phi^+} + p_{\text{dep}} \frac{1}{4} \left(\ket{\Phi^+}\bra{\Phi^+} + \ket{\Phi^-}\bra{\Phi^-} + \ket{\Psi^+}\bra{\Psi^+} + \ket{\Psi^-}\bra{\Psi^-}\right).
\end{equation*}
Since the Bell states are orthogonal to each other, it follows that $p = \tfrac 4 3 (1 - F_{\text{em}})$ and that the corresponding probability of no-imperfection is $\tfrac 1 3 \left(4F_{\text{em}} - 1\right)$.
\\ 

The derivation of the probability of no-imperfection for the remaining parameters should be self-evident and is therefore omitted here.

\subsection{Optimizing over tunable parameters}
As discussed in Section~\ref{sec:methods}, the entanglement generation and distribution protocols employed in our simulations include parameters that can be freely varied.
We name these \textit{tunable} parameters.
They affect the behavior and performance of the setups we investigated, and as a consequence also the minimal hardware requirements.
The tunable parameters should thus be chosen such that the best possible performance is extracted from a given set of hardware parameters, minimizing the cost function.
The values of the tunable parameters that allow for this are the optimal values.
This is however not trivial, as different sets of hardware parameters perform best with different tunable parameters.
To illustrate this, we again go over the tunable parameters considered in our simulations.
\\

We start with the \textit{cut-off time}.
This is the maximum duration for which a state can be held in memory before being discarded.
For details on the implementation of a cut-off timer in our simulations, see Appendix~\ref{appendix:protocols}.
If the cut-off time is very short, states will not be held in memory for long, and therefore the end-to-end fidelity will be high.
On the other hand, states will also be frequently discarded and regenerated, which means that establishment of end-to-end entanglement will take longer.
In contrast, a very long cut-off is equivalent to no cut-off, in the sense that states are never discarded.
This maximizes the entanglement generation rate at the expense of lower state fidelity.
\\ 

The second tunable parameter is the \textit{bright-state parameter}, which is relevant in the single-click entanglement generation protocol.
This is the fraction of the superposition that is in the optically-active state, and therefore corresponds to the probability that a photon is emitted.
A larger bright-state parameter corresponds to a higher probability of entanglement generation, but at the expense of a lower fidelity, as it also introduces a component orthogonal to the Bell basis in the generated entangled state.
For more details on single-click entanglement generation see Appendix~\ref{appendix:sec_single_click_model}.
\\ 

The final tunable parameter is the \textit{coincidence time window}, which is part of our trapped ion double-click entanglement generation model.
Two detection events arising from the correct detectors are only heralded as a success if the time elapsed between the events is smaller than the coincidence time window.
It acts as a temporal filter, lowering the protocol's success probability but increasing the visibility and hence the fidelity of the generated entangled states.
For more details on our modeling of a coincidence time window, see Appendix~\ref{app:time_windows}.
\\ 

These three parameters can be used to trade-off rate against fidelity, and their optimal values are different for different sets of hardware parameters.
For example, if the coherence time is short and the detection probability is high, it will likely be beneficial to have a short cut-off time.
The opposite is true if the coherence time is long and the detection probability is low.
\\ 

In order to find good values for the tunable parameters, we included them as parameters to be optimized by the genetic-algorithm-based optimization machinery.
We imposed that the values the cut-off time can take are in the interval between $0.1 \, T_\text{C}$ and $T_\text{C}$, where $T_\text{C}$ is the coherence time (collective dephasing coherence time for trapped ions, $T_2$ for abstract nodes and carbon $T_2$ for NV centers).
The expected entanglement generation time grows exponentially as the cut-off time is reduced, so the lower bound was imposed to prevent the simulation taking unreasonably long to run.
Furthermore, we anyway expect that a too low cut-off time would not allow the rate target to be met, so we can be reasonably sure that no cheap hardware requirements are missed by imposing this constraint.
The upper bound is imposed as we observed that not imposing it made it hard for the algorithm to converge, due to the reduced sensitivity of the target metrics to high values of the cut-off time.
As discussed above, employing a very long cut-off time is effectively equivalent to not employing one at all.
Therefore, in that regime the choice of cut-off time becomes irrelevant, and the set of parameters minimizing the cost function is chosen independently of it.
We have empirically observed that the cut-off time tends to converge to around $65$\% of the relevant coherence time, which is fairly distant from both bounds we imposed.
A back-of-the-envelope calculation can also be performed to argue that it is unlikely that allowing for cut-off times which are larger than the memory's coherence time would be useful.
We do this by computing the end-to-end fidelity in a single sequential-repeater setup under the following assumptions:
\begin{itemize}
    \item The cut-off time is equal to the memory dephasing time;
    \item There are no other noise sources.
\end{itemize}
The worst case scenario in this setup in terms of fidelity occurs when the second entangled state takes exactly cut-off time seconds to be generated, resulting in both qubits of the first entangled pair to dephase for a time equal to their dephasing time.
The dephasing probability is in this case given by $p_Z = \frac{1 - \text{e}^{-2}}{2}$.
Assuming that the state that had been generated was $\ket{\Phi^+}$, the post-dephasing state is a mixture of $\ket{\Phi^+}$ and $\ket{\Phi^-}$:
\begin{equation}
\rho = (1 - p_Z) \ket{\Phi^+}\bra{\Phi^+} + p_Z \ket{\Phi^-}\bra{\Phi^-}.
\end{equation} 
This has a fidelity of $0.57$ with the target Bell state $\ket{\Phi^+}$, corresponding to a teleportation fidelity of $0.71$.
This value is much lower than our lowest teleportation fidelity target, 0.8571, even with no noise sources besides dephasing noise on the memory.
It is then unlikely that picking even higher cut-off times would lead to finding better solutions to our optimization problem.  
\\ 

In the single-repeater setup we investigated, there are four bright-state parameters to be chosen, corresponding to the four different fiber segments between processing nodes and heralding stations.
We imposed that $\alpha p_{\text{det}}$ had to be equal for all of them, with $\alpha$ is the bright-state parameter and $p_{\text{det}}$ the probability that a photon is not lost in the fiber connecting the node to the midpoint station.
This condition guarantees balanced entanglement-generation success probabilities across all segments, which is a good heuristic for segments connecting to the same heralding station, as it maximizes the fidelity of the generated states~\cite{pompili2021realization}.
Imposing it also for segments connecting to different heralding stations was done in order to reduce the size of the search space.
\\ 

There are also two coincidence time window parameters to be chosen, corresponding to the two elementary links.
We imposed that they must have the same value in order to make the search space smaller.

\section{Simulation performance}
\label{sec:appendix_simulation_performance}
Each execution of our quantum-network simulations simulates the delivery of $n$ end-to-end entangled states.
When the protocols running on the end nodes learn through classical communication between nodes that $n$ states were successfully distributed, they abort and the simulation terminates.
In Figure~\ref{fig:runtime_scaling_nruns}, we show how the runtime of our simulation of the Delft - Eindhoven setup scales with the number $n$.
\begin{figure}[!ht]
\centering
\includegraphics[width=0.75\columnwidth]{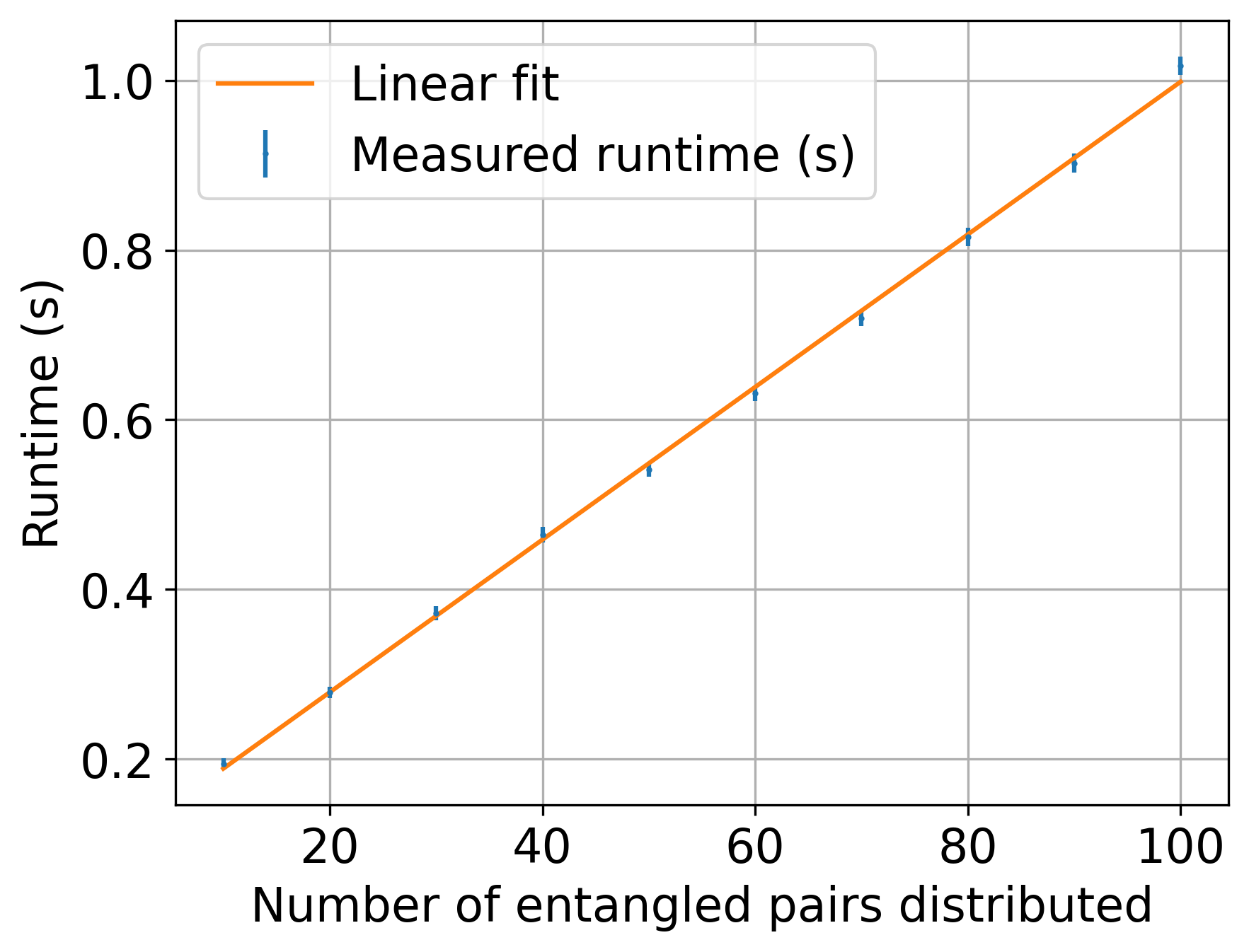}
\caption{Performance of our simulation of the Delft - Eindhoven setup with abstract model nodes and a cut-off timer using a machine running 40 Intel Xeon Gold cores at 2.1 GHz and 192 GB of RAM.
The runtime scales linearly with the number of entangled pairs being distributed.
Distributing 100 times, which we have empirically determined is enough to evaluate the performance of a given parameter set with reasonable accuracy, takes roughly one second.
The data point corresponding to $n$ pairs was obtained by running the corresponding simulation 500 times.
The error bars represent the standard error of the mean.}
\label{fig:runtime_scaling_nruns}
\end{figure}
As expected, the scaling is linear.
A simulation with $n=100$, which we have empirically determined is enough to evaluate the performance of a given parameter set with reasonable accuracy, takes roughly 1 s.
To be more concrete, when running a color-center double-click simulation using the minimal hardware parameters presented in Section~\ref{sec:discussion}, we find that after distributing 100 pairs a teleportation fidelity $F_\text{tel}$ of 0.8774 $\pm$ 0.0035 and a rate of 0.106 Hz $\pm$ 0.003 are obtained.
\\

We note that Figure~\ref{fig:runtime_scaling_nruns} was obtained by running the simulation without a cut-off.
Although the runtime still grows linearly with the number of distributed entagled pairs with a cut-off, its inclusion does mean that the simulation runtime grows exponentially as the the cut-off time becomes shorter.
This is because the expected number of necessary entanglement generation attempts also grows exponentially, as seen in Figure~\ref{fig:runtime_scaling_cutoff}.
\\

\begin{figure}[!ht]
\centering
\includegraphics[width=0.75\columnwidth]{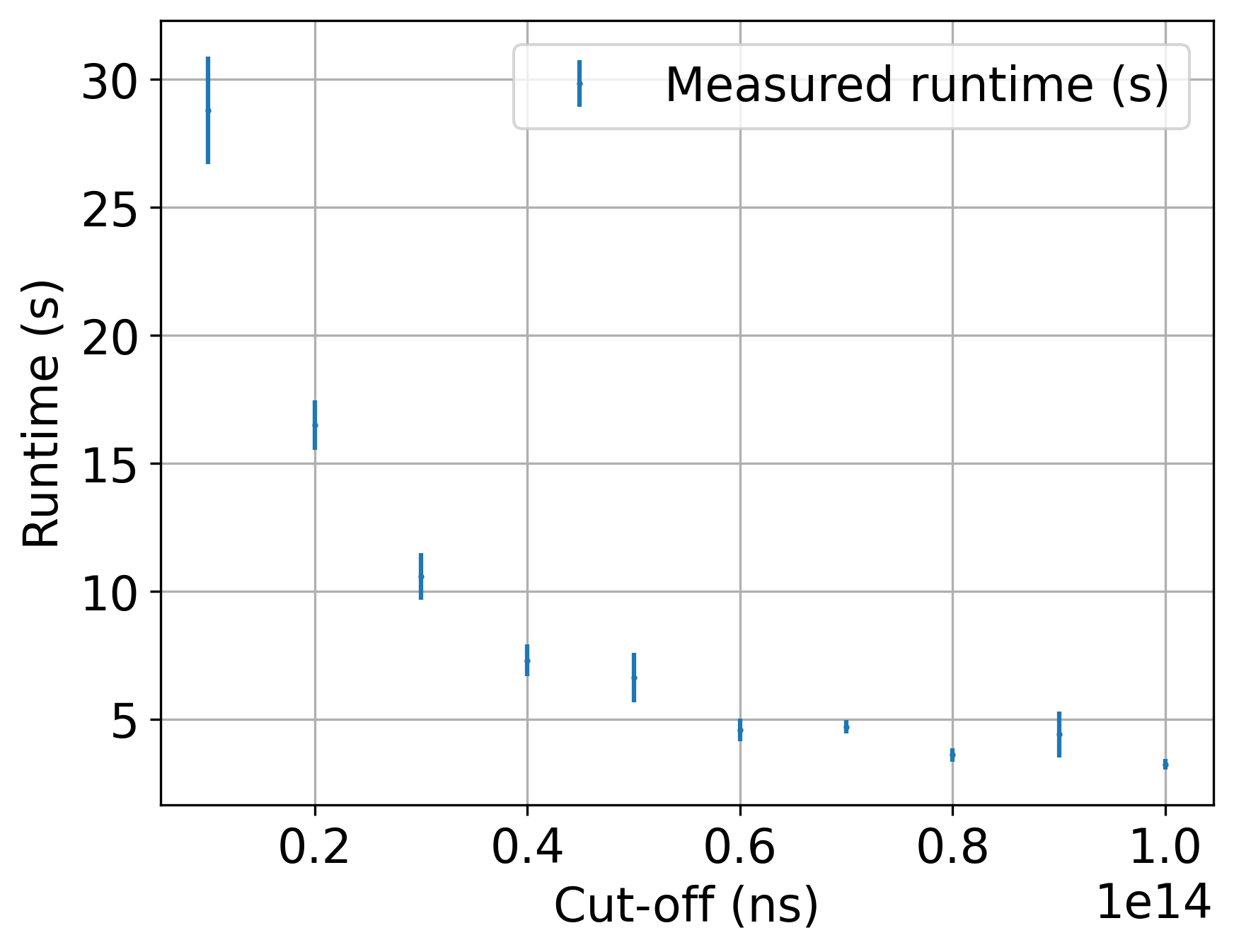}
\caption{Performance of our simulation of the Delft - Eindhoven setup with abstract model nodes and a cut-off timer using a laptop running a quad-core Intel i7-8665U processor at 1.9 GHz and 8 GB of RAM.
The runtime scales exponentially as the cut-off time is reduced.
The data point corresponding to $n$ pairs was obtained by running the corresponding simulation 20 times.
The error bars represent the standard error of the mean.}
\label{fig:runtime_scaling_cutoff}
\end{figure}

As discussed in Section~\ref{sec:methods} of the main text and in Appendix~\ref{sec:appendix_optimization}, the optimization methodology we employ requires running our simulation for many different sets of parameters.
We now estimate a lower bound on the time required to perform optimization in one setup.
We run the optimization algorithm for 200 iterations.
In each of these, there are 200 different parameter sets, and the distribution of 100 entanled pairs is simulated for each.
The computing nodes in the high-performance-computing cluster we use have 128 cores, which means that the simulation for 128 of the 200 parameter sets can be executed in parallel.
Assuming that there is no cut-off, or that it is large enough not to significantly impact the simulation runtime, this means that we can expect 1 generation to be run in roughly 2.5 seconds.
The data processing and file input and output required to generate new sets of parameters take a comparable amount of time, making $T = 200 \times 5$ s, roughly seventeen minutes, a good estimate for the time required to perform optimization for one setup.
We must however stress that this is a very optimistic lower bound, because as Figure~\ref{fig:runtime_scaling_cutoff} makes clear, the use of a cut-off has a huge impact on the runtime of the simulation.
We have observed that optmization of most of the setups we studied required 10 to 20 hours to terminate. 

\section{Framework for simulating quantum repeaters}
\label{appendix:protocols}

In this appendix, we discuss the framework that we use to evaluate the performance of quantum repeaters.
This framework is presented in this work for the first time.
\\

The code that we have used to simulate all the quantum networks in this paper is publicly available~\cite{delft_eindhoven_code}.
The repository contains code that has a much broader applicability than simulating the networks of up to three nodes presented here.
In fact, the simulations can be used to assess the performance of quantum-repeater chains with any number of nodes, and any spacing between nodes.
The currently supported types of nodes are those containing NV centers, ion traps or abstract quantum processors,
and the currently supported types of entanglement generation between neighboring nodes are the single-click and double-click protocols.
The simulation code depends on a number of other public repositories \cite{netsquid-netconf, netsquid-magic, netsquid-nv, netsquid-physlayer, netsquid-trappedions, netsquid-simulationtools}, all of which were developed in tandem with the code for this paper and will be explained in more detail below.
\\

\subsection{Services}

The primary functional unit of our quantum-network simulations is the ``service'', which is defined by an input, an output, and its intended function.
An example of a service that can be defined on a node is the measurement service.
It takes as input a request to measure a qubit, and the intended function is that the qubit is measured.
As output, the service returns the measurement outcome.
\\

A service is distinct from its implementation, which is a protocol.
Protocols make sure the intended function is fulfilled and generate the appropriate output.
Different protocols can fulfill the same function.
For example, in the case of the measurement service, a protocol that simulates a direct measurement of the required qubit (e.g., a fluorescence measurement) could be activated.
Another possible implementation would be a protocol that first swaps the quantum state of the required qubit to some different physical qubit (that perhaps allows for higher-fidelity measurements), and then simulates a measurement of that qubit.
The distinction between service and its implementation is illustrated in Figure~\ref{fig:services_figure}, which emphasizes that the same high-level functionality can be implemented using different physical systems.
\begin{figure}[!htpb]
    \centering
    \includegraphics[width=0.5\columnwidth]{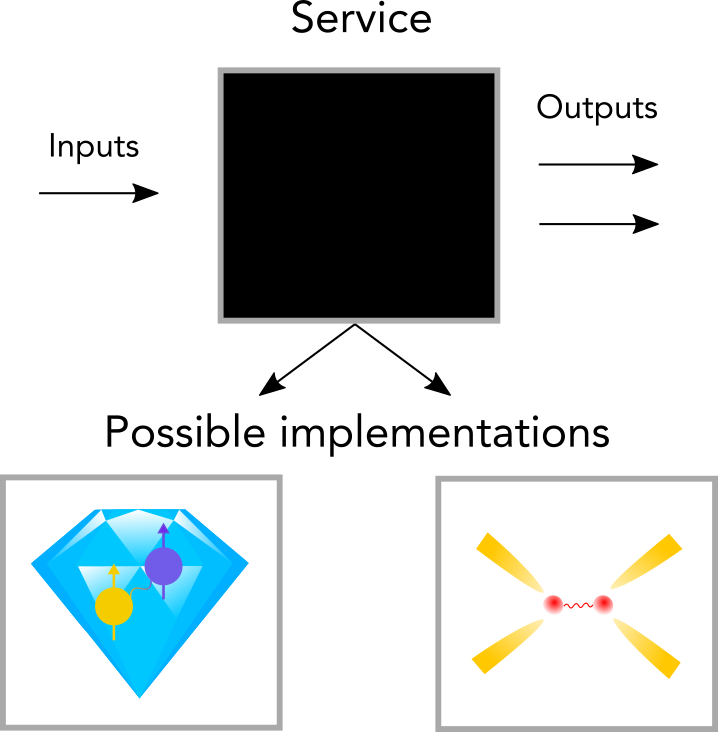}
    \caption{The black box represents a service, defined by a set of inputs, a set of outputs and some promised functionality.
    The protocols interacting with the service need not know how this functionality is implemented.
    Therefore, different implementations can be swapped in and out.
    In the figure, color centers and trapped ions are depicted to emphasize that the same high-level functionality can be executed by different physical systems.}
    \label{fig:services_figure}
\end{figure}
\\

Treating services and their implementation separately has two distinct advantages for our simulations.
First, it allows us to easily run the same protocols on different types of simulated hardware.
Take as example performing an entanglement swap in the broader context of a repeater chain.
To do so, the repeater protocol will place a request with the local entanglement-swap service.
The repeater protocol does not need to know how the swap is implemented.
On an abstract quantum processor, it can be implemented using a CNOT gate, while on an ion trap, it can be implemented using a Mølmer–Sørensen gate.
Second, it allows for a modular stack of protocols, where protocols implementing a specific service can easily be interchanged.
In the example of the repeater protocol, requests are made of an entanglement-generation service before the swap can be performed.
If the protocol runs on an NV node, entanglement could either be generated using a single-click or double-click protocol.
Switching between these two modes is easily realized by changing the protocol that implements the entanglement-generation service.
Again, the repeater-node protocol does not need to be adapted.
\\

The main interface of the repeater chain itself is also defined by a service.
The service implemented by the repeater chain is a link-layer service \cite{dahlberg2019link, pompiliExperimentalDemonstrationEntanglement2021},
which provides robust entanglement generation between the end nodes of the chain.
These requests should be put on the end nodes of the chain, which activates a protocol that uses a messaging service to put requests on the SWAP-ASAP repeater services defined on the repeater nodes of the network.
When the end-node protocols confirm they share entanglement (using a protocol that tracks entanglement in the network based on the classical communication shared by nodes),
an appropriate output message is returned by the service.
This is the cue that we use in our simulations to collect the density matrix of the created state and the time it took to create it.
\\

\subsection{SWAP-ASAP protocol}
A SWAP-ASAP repeater chain is one in which repeater nodes perform an entanglement swap as soon as they hold two entangled qubits that were generated with different neighbors.
This is in contrast to e.g. nested repeater schemes \cite{briegelQuantumRepeatersRole1998, duanLongdistanceQuantumCommunication2001}.
We have implemented two different SWAP-ASAP repeater protocols.
The first is suitable for repeater chains of any length and node spacing,
and for repeater nodes that can generate entanglement with either one or both neighbors at the same time.
The second, on the other hand, has been tailored more specifically to the one-repeater scenario studied in this paper.
It assumes that entanglement generation is limited to a single neighbor at a time.
First a request is issued to generate entanglement over a single connection.
Once that has finished, a request is issued for the second link,
and a swap is executed as soon as entanglement is confirmed.
In case the connections are not of equal length, entanglement generation takes place on the longer link first.
The reason for this is that the longer connection is expected to be the connection on which entanglement generation takes longer.
By finishing the longer link first, the total time that entanglement needs to be stored in quantum memory is minimized.
This protocol is the one used to generate the results reported in this paper.
\\

To generate entanglement over elementary links, the repeater protocols issue requests with the entanglement service.
In the protocol that we use to implement this service, these requests are queued.
The number of requests that are processed simultaneously is hardware-dependent, and is a free parameter in our simulations.
When handling a request for entanglement the protocol will, before doing anything else, issue a request to an agreement service.
This service is in charge of synchronizing neighboring nodes that want to generate entanglement together.
This is needed as typically both nodes need to be actively involved in generating entanglement for a state to be created between the two.
In our simulations, we use an implementation of the agreement service where even-numbered nodes in the chain always initiate entanglement generation.
These nodes will send a classical message to their neighbors when a request for agreement is made,
and then wait for those nodes to send a classical reply indicating readiness, after which entanglement generation can start.
On the other hand, when a request is made on an odd node, it will check whether a classical message has been received by the neighboring even node in the past.
If so, it will reply indicating readiness.
Otherwise, the request for agreement will be rejected.
In that case, the entanglement service can try to process the next request in the queue,
and see if agreement can be reached with this node again at some later time.
\\

In case agreement is reached between two nodes, the entanglement protocols of the nodes will start entanglement generation.
In our simulations, this is done using analytical models that decide after how much time an entangled state should be created between the nodes,
and what this state should look like.
This process is known as \textit{magic}~\cite{netsquid-magic} and is further discussed in Appendix \ref{sec:setup_ent_gen}.
\\

Finally, there is a cut-off protocol active on repeater nodes.
It discards qubits that have been stored in quantum memories for too long.
The exact amount of time after which states are discarded is called the cut-off time,
and is a tunable parameter that allows for a trade-off between end-to-end entangling rate and fidelity.
Every node runs an entanglement-tracking protocol that keeps track of both any local entangled qubits,
and what entangled states currently exist in the network at large.
Whenever the entanglement service registers a new qubit at the entanglement tracker,
the cut-off protocol starts a timer.
When the timer goes off, the cut-off protocol checks whether the entangled qubit still exists in local memory.
If so, the entanglement tracker is told to discard the qubit.
The entanglement tracker will also communicate classically with the entanglement trackers of other nodes in the network to inform them that the qubit has been discarded.
If an entanglement tracker learns that a qubit has been discarded that was entangled with one of its local qubits,
it responds by discarding that qubit as well.
Links corresponding to discarded qubits must be regenerated.
We note that the cut-off protocol does not run on the end nodes of the repeater chain.
This is to prevent the possibility of one end node believing end-to-end entanglement has been achieved,
while the other end node has in actuality discarded its qubit (but the classical message has not yet reached the first end node).

\subsection{Configuring quantum networks}

In our simulations, quantum networks are made up of nodes.
Each node represents a single physical location, and contains an object that we refer to as ``driver''.
This object provides a mapping between services and protocols that implement those services.
The driver allows access to services without knowledge of their implementations.
Each node has its own driver.
Apart from drivers, nodes hold components that represent quantum hardware,
which allow for the storage and/or manipulation of quantum states.
The protocols running on the node can use this quantum hardware to implement specific services.
The nodes in our simulations are ready-made packages with both driver and hardware included.
In order to use them in a quantum network, they just need to be initialized (thereby specifying their parameters) and connected to other nodes.
\\

The simulations performed for this paper contain three different types of nodes.
The first is the NV node.
It holds an NV quantum processor, which is imported from the Python package NetSquid-NV \cite{netsquid-nv}.
The second is the ion-trap node.
This node holds an ion-trap quantum processor, imported from the Python package NetSquid-TrappedIons \cite{netsquid-trappedions}.
Finally, there is the abstract node, which contains an abstract quantum processor imported from the Python package NetSquid-AbstractModel \cite{netsquid-abstractmodel}.
On initialization, each of these takes hardware parameters specific to the type of hardware being simulated,
and a number of parameters used to configure the protocols used at the node.
For example, the cut-off time needs to be specified,
and in case of single-click heralded entanglement generation, the bright-state parameter as well.
\\

Nodes are connected by two types of connections.
These connections are themselves also ready-made packages, and can be found in the Python package NetSquid-PhysLayer \cite{netsquid-physlayer}.
The first type is the classical connection, which represents optical fiber that can be used to send classical messages.
The second type is the heralded connection.
It represents a midpoint station connected to two nodes by optical fiber,
where optical Bell-state measurements can be performed on incoming photons.
Such a connection can be used to perform heralded entanglement generation.
As discussed above, we do not simulate the process of heralded entanglement generation itself,
but instead use analytical models to magically create entangled states.
However, the heralded connections still perform an important role as placeholders.
Parameters passed to the heralded connection when configuring the network
are later retrieved by the analytical models to decide how long it should take before a state is created,
and what that state should be exactly.
One key parameter specified in the heralded connection
is whether single-click or double-click heralded entanglement distribution is used.
In the simulations presented in this paper
neighboring nodes are always connected by both a classical connection and a heralded connection.
\\

To put together nodes and connections for the creation of quantum networks,
and to configure their parameters,
we make use of the Python package NetSquid-NetConf \cite{netsquid-netconf}.
The tools provided in this package allow for the writing of human-readable configuration files.
These configuration files contain entries for all the different nodes and connections in the network.
Their type is specified (such as ``NV node'' or ``heralded connection''),
as well as their parameters and how they are connected.
These configuration files can also be used to vary some of the parameters,
allowing to e.g. perform a parameter scan over one of them and observe its effect on the network performance.
\\

\section{Extra optimization results}
\label{sec:appendix_optimization_results}
In this appendix, we present results of optimizations we performed that were not presented in the main text, but might still be of interest.

\subsection{To move or not to move}
\label{appendix:sec_move_no_move}
As mentioned in the main text, the communication qubit of color centers has typically shorter coherence times than the memory qubits.
For the baseline hardware parameters we investigated, the communication qubit had $T_1 = 1$ hours~\cite{abobeih2018one} and $T_2 = 0.5$ s~\cite{hermans2022qubit}, whereas the memory qubits had $T_1 = 10$ hours and $T_2 = 1$ s~\cite{bradley2019ten}.
It might then be worthwhile for the end node that generates entanglement with the repeater first, i.e., the Eindhoven node, to move its half of the entangled state to memory while waiting for end-to-end entanglement to be established, even though that comes at the cost of more noise being introduced in this operation.
A diagram of the circuit used for this operation can be found in Supplementary Note 5 B of~\cite{coopmans2021netsquid}.
To investigate this, we applied our methodology to two single-repeater color-center setups performing double-click entanglement generation.
In one of the setups, which we name "move scenario", once the first elementary link is established, the end node performs the move operation while the waiting for the second link to be established.
In the other setup, which we name "no-move scenario", the state is kept in the electron spin until end-to-end entanglement is established.
The hardware requirements for these two scenarios are shown in Figure~\ref{fig:circular_plot_move_no_move}.
\begin{figure}[!htpb]
\centering
\includegraphics[width=0.5\columnwidth]{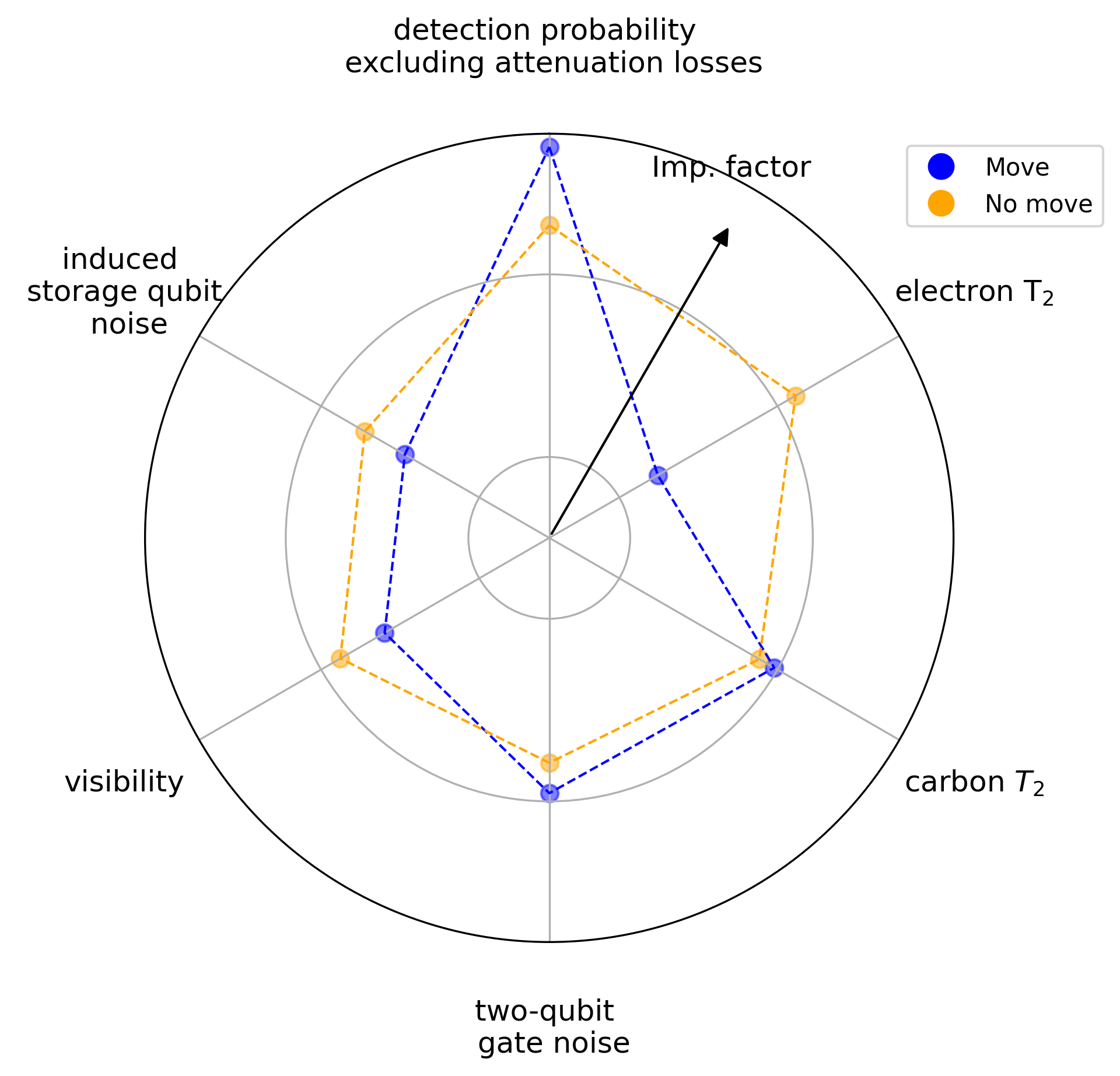}
\caption{Directions along which color-center hardware must be improved to achieve entanglement generation rate $R = 0.1$ Hz and teleportation fidelity $F_\text{T} = 0.8717$, enabling VBQC between Delft and Eindhoven, assuming that a double-click entanglement generation protocol is used.
The blue (orange) line corresponds to the direction of hardware improvement in case the Eindhoven end node (does not) move their half of the entangled state to the memory qubit.
Note the use of a logarithmic scale.
}
\label{fig:circular_plot_move_no_move}
\end{figure}
The move scenario requires that the two-qubit gate be significantly improved, which is to be expected as the move operation requires the application of two of these gates~\cite{coopmans2021netsquid}.
On the other hand, the move scenario does not require an improvement on the electron spin's coherence time, in contrast with the no-move scenario.
This is also not surprising, as in the move scenario entanglement is not stored in the electron spin for a significant amount of time.
\\

The overall cost associated to the no-move scenario is slightly lower than the cost of the move scenario, so all the NV center results presented in the main text were obtained in the no-move scenario.
We stress that this finding, although relevant for our particular case study, is not general.
It might be that different baselines, different goals or different setups would lead to laxer hardware requirements for the move scenario.

\subsection{Architecture comparison}\label{appendix:sec_alt_architecture}
As discussed in detail in Appendix~\ref{sec:appendix_setup}, the fiber network we study contains four nodes in the shortest path connecting the Dutch cities of Delft and Eindhoven.
This means that there is some freedom in how to place the two heralding stations and repeater node required for a single-repeater setup, as shown in Figure~\ref{fig:alternative_d_e_paths}.
In order to decide how to make this placement, we determined the minimal hardware requirements for achieving an entanglement generation rate $R = 0.1$ Hz and a teleportation fidelity $F_\text{T} = 0.8717$, enabling VBQC between Delft and Eindhoven, for both possibilites.
These requirements are shown in Figure~\ref{fig:circular_plot_architecture_comparison}.
\begin{figure}[!h]
\centering
\includegraphics[width=0.5\columnwidth]{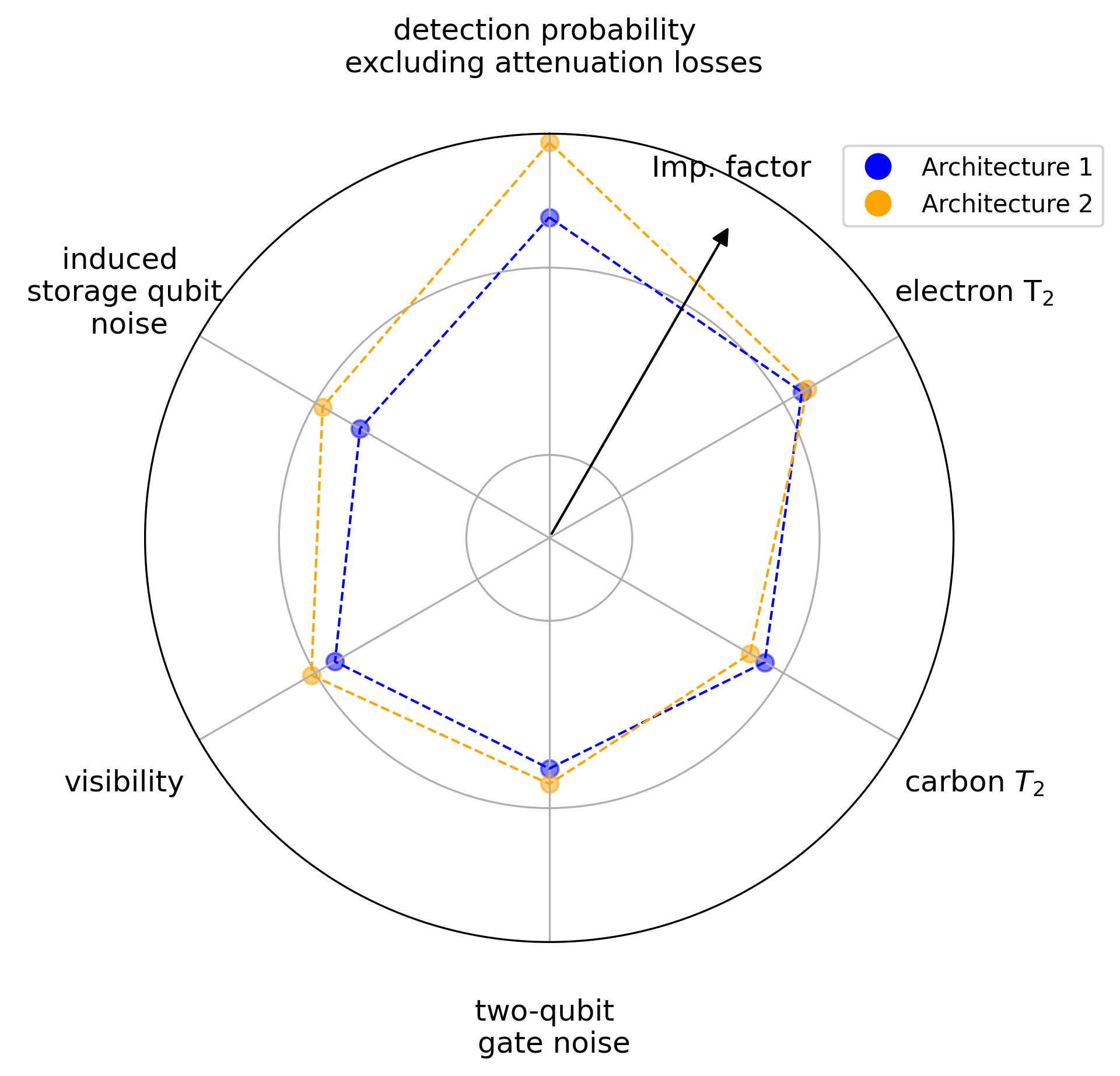}
\caption{Directions along which color-center hardware must be improved to achieve entanglement generation rate $R = 0.1$ Hz and teleportation fidelity $F_\text{T} = 0.8717$, enabling VBQC between Delft and Eindhoven, assuming that a double-click entanglement generation protocol is used.
The blue (orange) line corresponds to the direction of hardware improvement for the architecture shown on the left (right) in Figure~\ref{fig:alternative_d_e_paths}.
Note the use of a logarithmic scale.
}
\label{fig:circular_plot_architecture_comparison}
\end{figure}
The requirements are qualitatively similar for both architectures, with the photon detection probability excluding attenuation losses and induced noise on memory qubits (see Appendix~\ref{sec:appendix_setup} for details on our modeling of color-center based repeaters) being the parameteres requiring the most improvement.
The architecture on the left in Figure~\ref{fig:alternative_d_e_paths} required more modest improvements overall, so this was the architecture considered in our work.

\subsection{Connecting Delft and Eindhoven without a repeater}\label{appendix:sec_repeaterless}
The main contribution of this work was the investigation of the hardware requirements for enabling 2-qubit VBQC between two cities separated by 226.5 km of optical fiber using a single repeater node.
We investigated two sets of performance targets compatible with this goal, namely (i) $R = 0.1$ Hz, $F_\text{T} = 0.8717$ and (ii) $R = 0.5$ Hz, $F_\text{T} = 0.8571$.
While (ii) is impossible to achieve via direct transmission, i.e., without a repeater, due to fiber loss, this is not the case for (i) if a single-click entanglement generation protocol is employed.
In Figure~\ref{fig:circular_plot_double_click_vs_repeaterless} we show directions along which color-center hardware would have to be improved to meet (i) without using a repeater.
For comparison, we also reproduce the improvement directions for color-center hardware to meet the same targets with a repeater employing double-click entanglement generation, because this was the repeater setup requiring smallest improvements as measured by our cost function.
\\

\begin{figure}[!h]
\centering
\includegraphics[width=0.5\columnwidth]{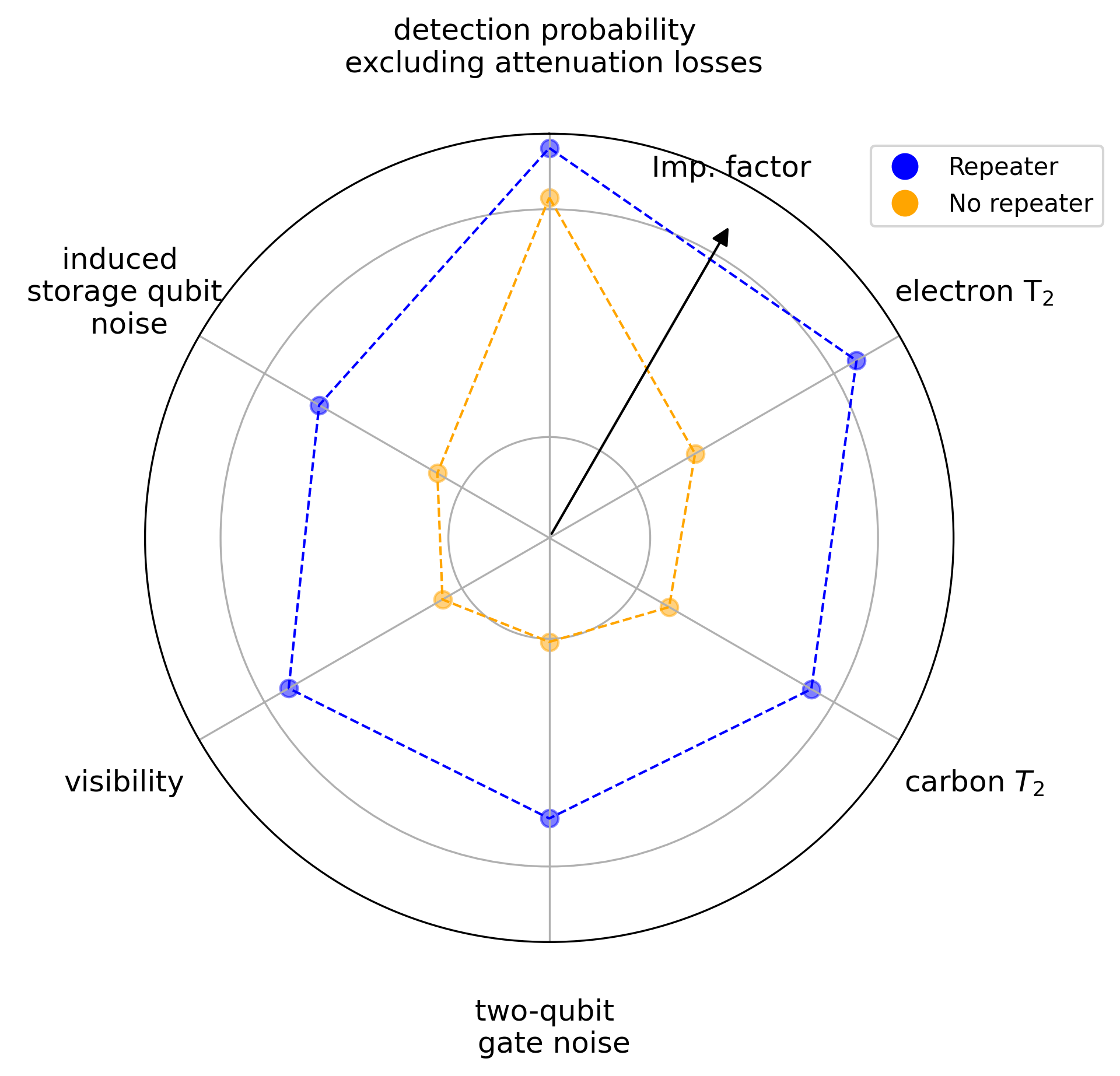}
\caption{Directions along which hardware must be improved to achieve entanglement generation rate $R = 0.1$ Hz and teleportation fidelity $F_\text{T} = 0.8717$, enabling VBQC between Delft and Eindhoven.
The blue (orange) line corresponds to the direction of hardware improvement for the case in which a repeater is (is not) used.
The repeater scenario employs a double-click entanglement generation protocol, whereas in the direct transmission case single-click entanglement generation is employed.
Note the use of a logarithmic scale.
}
\label{fig:circular_plot_double_click_vs_repeaterless}
\end{figure}

The direct transmission setup requires less improvement in all parameters.
In fact, the only parameter that requires significant improvement is photon detection probability excluding attenuation losses, although still less than what is required for the repeater setup.
The reason for this is that the elementary link state generated with the single-click protocol and state-of-the-art parameters already has high enough fidelity, so the only constraint is that these states are generated fast enough.
The required value for the photon detection probability excluding attenuation losses is 0.39, less than the 0.73 required for the repeater with double-click entanglement generation case, but still above the limit imposed by the zero-phonon line.
\\

These results indicate that performing VBQC over this particular setup might best be done without a repeater, but nevertheless do not detract from the main goal of the paper, which was to investigate hardware requirements if a repeater were to be used.

\subsection{Hardware requirements for repeaters with single and double-click entanglement generation}
\label{sec:appendix_single_double}
We investigated also how the hardware requirements for color centers running single and double-click entanglement generation protocols differ.
We considered a rate target of $R = 0.1$ Hz and an average teleportation fidelity target of $F_\text{T} = 0.8717$.
These two sets of hardware requirements are presented in Figure~\ref{fig:single_vs_double_surf_siv}.
\\

\begin{figure}[!htpb]
\centering
\includegraphics[width=0.5\columnwidth]{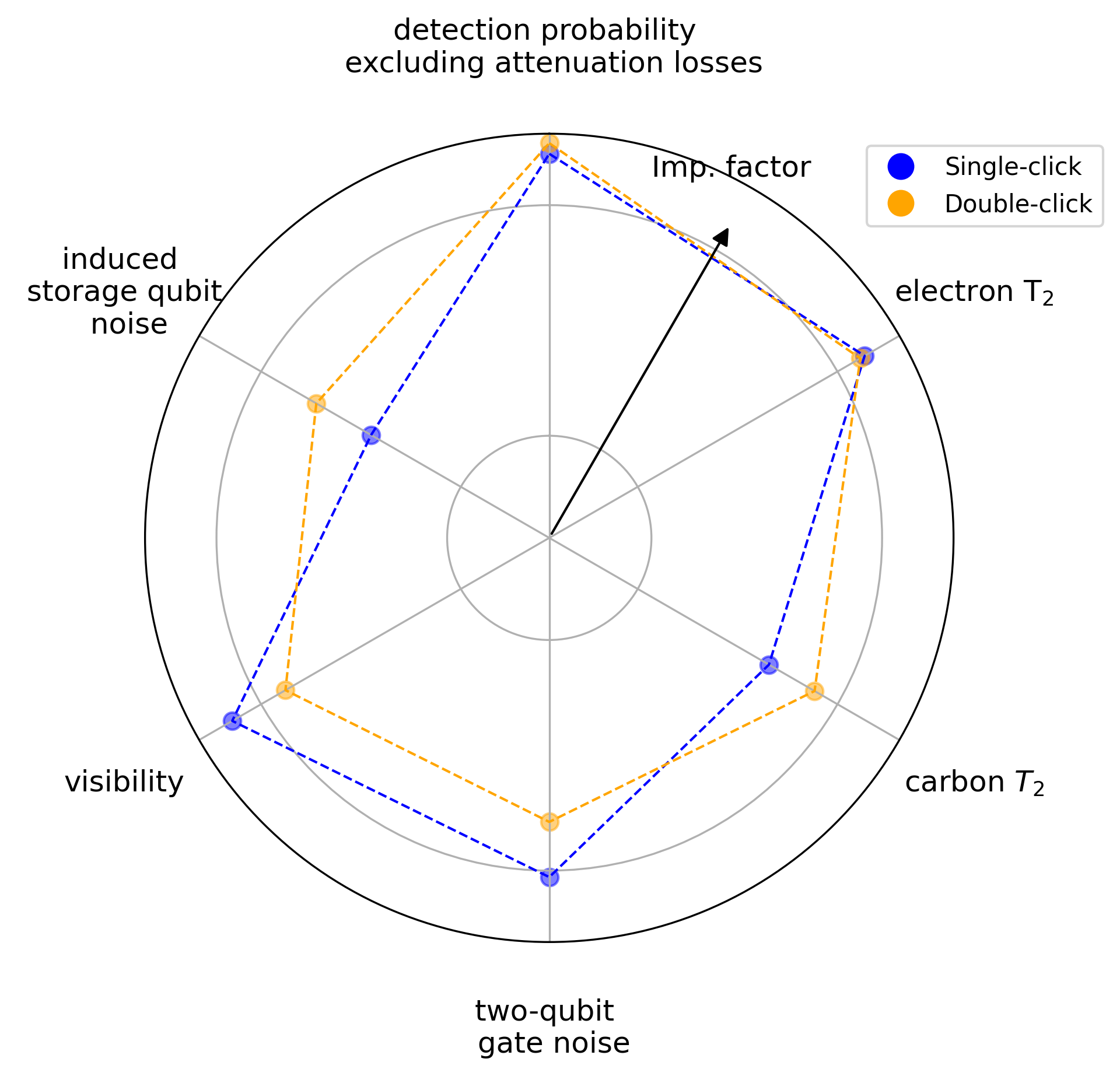}
\caption{Hardware requirements for executing 2-qubit VBQC using a color-center repeater performing double-click (orange) and single-click entanglement generation (blue).
These are the requirements for achieving an entanglement generation rate of $R = 0.1$ Hz and an average teleportation fidelity of $F_\text{T} = 0.8717$.
}
\label{fig:single_vs_double_surf_siv}
\end{figure}

The hardware requirements are more stringent for a color-center repeater performing single-click entanglement generation.
This is due to the fairly demanding fidelity target, which does not leave much room for noise in a protocol that inherently generates imperfect entangled states.
We must however stress that this conclusion is specific to this particular setup and these performance targets, and does not imply that double-click should in general be chosen over single-click.
In fact, one need only look at the second set of performance targets we considered in section~\ref{sec:results} to understand this point.
These targets are impossible to achieve using a color-center repeater performing double-click entanglement generation, but are feasible if single-click is employed.

\subsection{Hardware improvement costs}
In Table~\ref{tab:setups_costs} we present the cost of hardware improvement associated with the minimal hardware requirements found for every setup we investigated.
\label{sec:appendix_hardware_improvement_costs}
\\

\begin{table}[!ht]
\begin{tabular}{|c|c|c|c|c|}
\hline
Platform                      & Target                      & Setup                                      & Protocol                      & Cost  \\ \hline
\multirow{9}{*}{Color center} & \multirow{7}{*}{R = 0.1 Hz} & Standard (see Section~\ref{sec:discussion})                                   & Double-click, no-move         & 26.2  \\ \cline{3-5} 
                              &                             & \multirow{4}{*}{Fiber network (see Section~\ref{sec:results})}             & Single-click, no-move         & 82.6  \\ \cline{4-5} 
                              &                             &                                            & Single-click, move            & 165.5 \\ \cline{4-5} 
                              &                             &                                            & Double-click, no-move         & 59.8  \\ \cline{4-5} 
                              &                             &                                            & Double-click, move            & 100.1 \\ \cline{3-5} 
                              &                             & Fiber network (repeaterless, see Appendix~\ref{appendix:sec_repeaterless})               & Single-click                  & 20.5  \\ \cline{3-5} 
                              &                             & Alternative fiber network (see Appendix~\ref{appendix:sec_alt_architecture})                 & Double-click, no-move         & 116.1 \\ \cline{2-5} 
                              & \multirow{2}{*}{R = 0.5 Hz} & \multirow{2}{*}{Fiber network (see Section~\ref{sec:results})}             & Single-click, no-move         & 153.3 \\ \cline{4-5} 
                              &                             &                                            & Single-click, move            & 227.3 \\ \hline
Trapped ions                  & R = 0.1 Hz                  & Fiber network (see Section~\ref{sec:results})                             & Double-click                  & 171.1 \\ \hline
\multirow{4}{*}{Abstract}     & \multirow{4}{*}{R = 0.1 Hz} & Fiber network, color center baseline (see Section~\ref{sec:discussion})      & \multirow{4}{*}{Double-click} & 40.7  \\ \cline{3-3} \cline{5-5} 
                              &                             & Fiber network, trapped ion baseline (see Section~\ref{sec:discussion})       &                               & 50.1  \\ \cline{3-3} \cline{5-5} 
                              &                             & Fiber network, converted from color center (see Section~\ref{sec:discussion}) &                               & 37.2  \\ \cline{3-3} \cline{5-5} 
                              &                             & Fiber network, converted from trapped ion (see Section~\ref{sec:discussion})  &                               & 121.0 \\ \hline
\end{tabular}
\caption{Improvement cost, as defined in Appendix~\ref{sec:appendix_optimization}, of minimal hardware requirements for all setups we investigated.}
\label{tab:setups_costs}
\end{table}

\end{document}